\def\doi{9(1:11)2013}
\tikzstyle{every picture}=[>=stealth']
\tikzstyle{vertex}=[draw,circle,minimum size=4ex,inner sep=1pt]
\tikzstyle{vertex2}=[draw,rounded corners,inner sep=1pt,minimum size=3ex]
\tikzstyle{vertex3}=[draw,circle,minimum size=3ex,inner sep=1pt]
\tikzstyle{vertex4}=[draw,circle,line width=0.2mm,inner sep=0mm,minimum size=4mm]
\tikzstyle{dotvertex}=[fill,circle,inner sep=3pt]
\newcommand{\alias}[2]{%
  \expandafter\let\csname #1\expandafter\endcsname\csname #2\endcsname
  \expandafter\let\csname end#1\expandafter\endcsname\csname end#2\endcsname
}
\newcounter{claimcnt}[thm]
\newtheorem{sclaim}[claimcnt]{Claim}
\newenvironment{proofofclaim}
  {\begin{trivlist}\item\textit{Proof.}}
  {\end{trivlist}}
\newcommand{\varqedsymbol}{\qedsymbol}
\newcommand{\varqed}{\hfill\varqedsymbol}
\newcommand{\varqedeq}{\tag*{\varqedsymbol}}
\newcommand{\ad}[1]{\par\addvspace{\topsep}\noindent\textit{Ad #1:}~}
\newcommand{\isdef}{\mathrel{\mathop:}=}
\newcommand{\bigmid}{\;\big\vert\;}
\newcommand{\ring}[1]{\mathring{#1}}
\newcommand{\Cc}{\mathcal{C}}
\newcommand{\Ic}{\mathcal{I}}
\newcommand{\Kc}{\mathcal{K}}
\newcommand{\Lc}{\mathcal{L}}
\newcommand{\Qc}{\mathcal{Q}}
\newcommand{\nat}{\mathbb{N}}
\newcommand{\graphG}{\mathtt{G}}
\newcommand{\graphC}{\mathtt{C}}
\newcommand{\graphV}{\mathtt{V}}
\newcommand{\graphE}{\mathtt{E}}
\newcommand{\set}[1]{\{#1\}}
\newcommand{\Set}[1]{\left\{#1\right\}}
\newcommand{\card}[1]{\lvert{#1}\rvert}
\newcommand{\Card}[1]{\left\lvert{#1}\right\rvert}
\newcommand{\union}{\cup}
\newcommand{\disjunion}{\uplus}
\newcommand{\intersect}{\cap}
\newcommand{\tup}[1]{\bar{#1}}
\newcommand{\len}[1]{\lvert{#1}\rvert}
\newcommand{\quot}[2]{#1/_{#2}}
\newcommand{\eclass}[2]{#1/_{#2}}
\newcommand{\isomorphic}{\cong}
\newcommand{\dist}{\operatorname{dist}}
\newcommand{\size}{\operatorname{size}}
\newcommand{\childcount}{\operatorname{\#}}
\renewcommand{\phi}{\varphi}
\newcommand{\limplies}{\rightarrow}
\newcommand{\lequiv}{\leftrightarrow}
\newcommand{\bigland}{\bigwedge}
\newcommand{\biglor}{\bigvee}
\newcommand{\ar}{\operatorname{ar}}
\newcommand{\free}{\operatorname{free}}
\newcommand{\rank}{\operatorname{rk}}
\newcommand{\Dom}[2]{#1^{#2}}
\newcommand{\dtc}{\textsf{dtc}}
\newcommand{\dtcx}[3]{[\dtc_{\,{#1},{#2}}\;{#3}]}
\newcommand{\lrec}{\textsf{lrec}}
\newcommand{\lrecx}[5]{[\lrec_{#1,#2,#3}\;#4,\;#5]}
\newcommand{\lreceq}[6]{[\lrec_{#1,#2,#3}\;#4,\;#5,\;#6]}
\newcommand{\logic}[1]{\textup{\textsf{#1}}}
\newcommand{\FO}{\logic{FO}}
\newcommand{\FOC}{\logic{FO{$+$}C}}
\newcommand{\DTC}{\logic{DTC}}
\newcommand{\DTCC}{\logic{DTC{$+$}C}}
\newcommand{\STC}{\logic{STC}}
\newcommand{\STCC}{\logic{STC{$+$}C}}
\newcommand{\TC}{\logic{TC}}
\newcommand{\TCC}{\logic{TC{$+$}C}}
\newcommand{\LREC}{\logic{LREC}}
\newcommand{\CInfB}{\Lc_{\infty\omega}^*(\textbf{C})}
\newcommand{\CInf}{\Lc_{\infty\omega}(\textbf{C})}
\newcommand{\FP}{\logic{FP}}
\newcommand{\FPC}{\logic{FP{$+$}C}}
\newlength{\tmplenx}
\newlength{\tmpleny}
\newcommand{\problem}[3]{%
  \begin{center}
    \fbox{
      \parbox[t]{0.95\linewidth}{%
        \noindent%
        \ifthenelse{\equal{#1}{}}{}{{#1}\\[1ex]}%
        \settowidth{\tmplenx}{\textit{Eingabe:}}%
        \setlength{\tmpleny}{\linewidth}%
        \addtolength{\tmpleny}{-1.15\tmplenx}%
        \parbox[t]{\tmplenx}{\textit{Eingabe:}}\hfill\parbox[t]{\tmpleny}{#2}
        \\[1ex]
        \parbox[t]{\tmplenx}{\textit{Frage:}}\hfill\parbox[t]{\tmpleny}{#3}
      }%
    }
  \end{center}}
\newcommand{\cclassname}[1]{\textup{\textsf{#1}}}
\newcommand{\PTIME}{\cclassname{PTIME}}
\newcommand{\NP}{\cclassname{NP}}
\newcommand{\LOGSPACE}{\cclassname{LOGSPACE}}
\newcommand{\NL}{\cclassname{NL}}
\newcommand{\NC}[1][]{\ensuremath{\cclassname{NC}^{#1}}}
\newcommand{\AC}[1][]{\ensuremath{\cclassname{AC}^{#1}}}
\newcommand{\num}[2][]{\left\langle#2\right\rangle_{#1}}
\newcommand{\fail}{\textit{fail}}
\newcommand{\pre}{\operatorname{pre}}
\newcommand{\decr}{\operatorname{decr}}
\newcommand{\child}{\operatorname{child}}
\newcommand{\fiso}{\phi_{\isomorphic}}
\newcommand{\ford}{\phi_{\prec}}
\newcommand{\spn}{\operatorname{span}}
\newcommand{\ngh}{\operatorname{N^c}}
\begin{document}

\title[L-Recursion and a new Logic for Logarithmic Space]
  {L-Recursion and a new Logic\\for Logarithmic Space\rsuper*}

\author[M.~Grohe]{Martin Grohe\rsuper a}
\address{{\lsuper a}RWTH Aachen University, Germany}
\email{grohe@informatik.rwth-aachen.de}

\author[B.~Grußien]{Berit Grußien\rsuper b}
\address{{\lsuper{b,c,d}}Humboldt-Universität, Berlin, Germany}
\email{\{grussien,hernich,laubner\}@informatik.hu-berlin.de}
\thanks{{\lsuper{b,d}}Work by B.G.\ and B.L.\ was supported by the Deutsche Forschungsgemeinschaft (DFG)
  within the research training group ‘Methods for Discrete Structures’
  (GrK 1408).}

\author[A.~Hernich]{Andr\'e Hernich\rsuper c}
\address{\vskip-6 pt}

\author[B.~Laubner]{Bastian Laubner\rsuper d}
\address{\vskip-6 pt}

\keywords{descriptive complexity, logarithmic space, fixed-point logics}
\ACMCCS{[{\bf  Theory of computation}]: Computational complexity and
  cryptography---Complexity classes \& Logic---Finite Model Theory} 
\subjclass{F.1.3, F.4.1}
\titlecomment{{\lsuper*}This article is an extended version of \cite{GGH+11}.}

\begin{abstract}
  We extend first-order logic with counting by a new operator that
  allows it to formalise a limited form of recursion which can be
  evaluated in logarithmic space.  The resulting logic $\LREC$ has a
  data complexity in $\LOGSPACE$, and it defines $\LOGSPACE$-complete
  problems like deterministic reachability and Boolean formula
  evaluation.  We prove that $\LREC$ is strictly more expressive than
  deterministic transitive closure logic with counting and
  incomparable in expressive power with symmetric transitive closure
  logic $\STC$ and transitive closure logic (with or without
  counting).  $\LREC$ is strictly contained in fixed-point
  logic with counting $\FPC$. We also study an extension $\LREC_=$ of
  $\LREC$ that has nicer closure properties and is more expressive
  than both $\LREC$ and $\STC$, but is still contained in $\FPC$ and
  has a data complexity in $\LOGSPACE$.

  Our main results are that $\LREC$ captures $\LOGSPACE$ on the class
  of directed trees and that $\LREC_=$ captures $\LOGSPACE$ on the
  class of interval graphs.
\end{abstract}

\maketitle

\section{Introduction}
\label{sec:intro}

Descriptive complexity theory gives logical characterisations for most of the
standard complexity classes. For example, Fagin's Theorem~\cite{fag74} states
that a property of finite structures is decidable in \NP\ if and only
if it is definable in existential second-order logic $\Sigma_1^1$. More
concisely, we say that $\Sigma_1^1$ \emph{captures} \NP. Similarly,
Immerman~\cite{imm86} and Vardi~\cite{var82} proved that fixed-point
logic \FP\ captures \PTIME,\footnote{More precisely, Immerman and Vardi's
  theorem holds for \emph{least fixed-point logic} and the equally expressive
  \emph{inflationary fixed-point logic}. Our indeterminate \FP\ refers to
  either of the two logics. For the counting extension \FPC\ considered below,
  it is most convenient to use an inflationary fixed-point operator. See any
  of the textbooks \cite{ebbflu95,gklmsvvw07,imm99,lib04} for details.}
and Immerman~\cite{imm87} proved
that deterministic transitive closure logic \DTC\ captures
\LOGSPACE. However, these and all other known logical
characterisations of \PTIME\ and \LOGSPACE\ and all other
complexity classes below \NP\ have a serious drawback --- they only
hold on ordered structures. (An \emph{ordered structure} is a structure that
has a distinguished binary relation which is a linear order of the elements of
the structure.) The question of whether there are logical characterisations of
these complexity classes on arbitrary, not necessarily ordered structures, is
viewed as the most important open problem in descriptive complexity
theory. For the class \PTIME\, this problem goes back to Chandra and Harel's
fundamental article \cite{chahar82} on query languages for relational databases.

For \PTIME, at least partial positive results are known. The strongest
of these say that fixed-point logic with counting \FPC\ captures
\PTIME\ on all classes of graphs with excluded minors~\cite{gro10} and
on the class of interval graphs~\cite{lau10}. It is well-known that
fixed-point logic \FP\ (without counting) is too weak to capture
\PTIME\ on any natural class of structures that are not ordered. The
idea that the extension \FPC\ by counting operators might remedy the
weakness of \FP\ goes back to
Immerman~\cite{imm87a}. Together with Lander he proved that \FPC\
captures \PTIME\ on the class of trees~\cite{immlan90}. Later, Cai,
F\"urer, and Immerman~\cite{caifurimm92} proved that \FPC\ does not
capture \PTIME\ on all finite structures.

Much less is known for \LOGSPACE. In view of the results described so far, an
obvious idea is to try to capture \LOGSPACE\ with the extension \DTCC\ of
deterministic transitive closure logic \DTC\ by counting operators. However,
Etessami and Immerman~\cite{eteimm00} proved that (directed) tree isomorphism is not
definable in \DTCC, not even in the stronger transitive closure logic with
counting \TCC. Since Lindell~\cite{Lindell:Tree-Canon} proved that tree
isomorphism is decidable in \LOGSPACE, this shows that \DTCC\ does not capture
\LOGSPACE.

We introduce a new logic $\LREC$ and prove that it captures \LOGSPACE\ on
directed trees. An extension $\LREC_=$ captures $\LOGSPACE$ on the class of
interval graphs (and on the class of undirected trees). The logic
\LREC\ is an extension of first-order logic with counting by a ``limited
recursion operator''. The logic is more complicated than the transitive
closure and fixed-point logics commonly studied in descriptive complexity, and
it may look rather artificial at first sight. To explain the motivation for
this logic, recall that fixed-point logics may be viewed as extensions of
first-order logic by fixed-point operators that allow it to formalise
recursive definitions in the logics. \LREC\ is based on an analysis of the
amount of recursion allowed in logarithmic space computations. The idea of the
limited recursion operator is to control the depth of the recursion by a
``resource term'', thereby making sure that we can evaluate the recursive definition in
logarithmic space. Another way to arrive at the logic is based on an analysis
of the classes of Boolean circuits that can be evaluated in \LOGSPACE. We will
take this route when we introduce the logic in Section~\ref{sec:lrec}.

$\LREC$ is easily seen to be (semantically) contained in \FPC. We show that \LREC\
contains \DTCC, and as \LREC\ captures \LOGSPACE\ on directed trees, this containment
is strict. Moreover, \LREC\ is not contained in \TCC. Then we prove that
undirected graph reachability is not definable in \LREC. Hence \LREC\ does not
contain transitive closure logic \TC, not even in its symmetric variant \STC,
and therefore \LREC\ is strictly contained in \FPC.

It can be argued that our proof of the inability of \LREC\ to express
graph reachability reveals a weakness in our definition of the logic
rather than a weakness of the limited recursion operator underlying
the logic: \LREC\ is not closed under (first-order) logical
reductions. To remedy this weakness, we introduce an extension $\LREC_=$
of \LREC. It turns out that undirected graph reachability is definable
in $\LREC_=$ (this is a convenient side effect of the definition and
not a deep result). Thus $\LREC_=$ strictly contains symmetric transitive
closure logic with counting. We prove that $\LREC_=$
captures \LOGSPACE\ on the class of interval graphs. To complete the
picture, we prove that plain $\LREC$, even if extended by a symmetric
transitive closure operator, does not capture \LOGSPACE\ on the class
of interval graphs.

The paper is organised as follows: After giving the necessary preliminaries in
Section~\ref{sec:basics}, in Section~\ref{sec:lrec} we introduce the logic
\LREC\ and prove that its data complexity is in \LOGSPACE. Then in
Section~\ref{sec:tree-canon}, we prove that directed tree isomorphism and canonisation
are definable in \LREC. As a consequence, \LREC\ captures \LOGSPACE\ on
directed trees. In Section~\ref{sec:reach-nondef}, we study the expressive power of
\LREC\ and prove that undirected graph reachability is not definable in
\LREC. The extension $\LREC_=$ is introduced in
Section~\ref{sec:lrec-eq}. Finally, our results on interval graphs are
presented in Section~\ref{sec:interval}. We close with a few concluding
remarks and open problems.

\section{Basic Definitions}
\label{sec:basics}

$\nat$ denotes the set of all non-negative integers.
For all $m,n \in \nat$,
we let $[m,n] \isdef \{p \in \nat \mid m \leq p \leq n\}$
and $[n] \isdef [1,n]$.
Mappings $f\colon A \to B$
are extended to tuples $\tup{a} = (a_1,\dotsc,a_k)$ over $A$
via $f(\tup{a}) \isdef (f(a_1),\dotsc,f(a_k))$.
Given a tuple $\tup{a} = (a_1,\dotsc,a_k)$,
let $\tilde{a} \isdef \set{a_1,\dotsc,a_k}$.
If $\sim$ is an equivalence relation on a set $A$,
we denote by $a/_\sim$ the equivalence class of an element $a$
with respect to $\sim$,
and by $A/_\sim$ the quotient of $A$ with respect to $\sim$.

A \emph{vocabulary} is a finite set $\tau$ of relation symbols,
where each $R \in \tau$ has a fixed arity $\ar(R)$.
A \emph{$\tau$-structure} $A$ consists of a non-empty finite set $V(A)$,
its \emph{universe},
and for each $R \in \tau$ a relation $R(A) \subseteq V(A)^{\ar(R)}$.
For logics $\logic{L},\logic{L}'$ we write $\logic{L} \leq \logic{L}'$
if $\logic{L}$ is semantically contained in $\logic{L}'$,
and $\logic{L} < \logic{L}'$ if this containment is strict.

All logics considered in this paper are extensions of
\emph{first-order logic with counting ($\FOC$)};
see, e.g., \cite{ebbflu95,gklmsvvw07,imm99,lib04,imm87}
for a detailed discussion of $\FOC$ and its extensions.
$\FOC$ extends first-order logic by a counting operator
that allows for counting the cardinality of $\FOC$-definable relations.
It lives in a two-sorted context,
where structures $A$ are equipped with a \emph{number sort}
$N(A) \isdef [0,\card{V(A)}]$.
$\FOC$-variables are either \emph{structure variables}
that range over the universe of a structure,
or \emph{number variables} that range over the number sort.
For each variable $u$,
let $\Dom{A}{u} \isdef V(A)$ if $u$ is a structure variable,
and $\Dom{A}{u} \isdef N(A)$ if $u$ is a number variable.
Tuples $(u_1,\dotsc,u_k)$ and $(v_1,\dotsc,v_\ell)$ of variables
are \emph{compatible} if $k = \ell$,
and for every $i \in [k]$ the variables $u_i$ and $v_i$ have the same type.
Let
$
  \Dom{A}{(u_1,\dotsc,u_k)} \isdef
  \Dom{A}{u_1} \times \dotsb \times \Dom{A}{u_k}.
$
An \emph{assignment in $A$} is a mapping $\alpha$
from the set of variables to $V(A) \union N(A)$,
where for each variable $u$ we have $\alpha(u) \in \Dom{A}{u}$.
For tuples $\tup{u} = (u_1,\dotsc,u_k)$ of variables
and $\tup{a} = (a_1,\dotsc,a_k) \in \Dom{A}{\tup{u}}$,
the assignment $\alpha[\tup{a}/\tup{u}]$
maps $u_i$ to $a_i$ for each $i \in [k]$,
and each variable $v \not \in \tilde{u}$ to $\alpha(v)$.

$\FOC$ is obtained by extending first-order logic with the following
formula formation rules:
$p \leq q$ is a formula for all number variables $p,q$;
and $\#\tup{u}\,\psi = \tup{p}$ is a formula
for all tuples $\tup{u}$ of variables,
all tuples $\tup{p}$ of number variables,
and all formulae $\psi$.
Free variables are defined in the obvious way,
with
$
  \free(\#\tup{u}\,\psi = \tup{p})
  \isdef (\free(\psi) \setminus \tilde{u}) \union \tilde{p}.
$
Formulae $\#\tup{u}\,\psi = \tup{p}$ hold in a structure $A$
under an assignment $\alpha$ in $A$ if
$
  \card{
    \set{\tup{a} \in \Dom{A}{\tup{u}} \mid
      (A,\alpha[\tup{a}/\tup{u}]) \models \psi}
  }
  =
  \num[A]{\alpha(\tup{p})},
$
where for tuples $\tup{n} = (n_1,\dotsc,n_k) \in N(A)^k$
we let $\num[A]{\tup{n}}$ be the number
\[
  \num[A]{\tup{n}}\ \isdef\
  \sum_{i=1}^k\, n_i \cdot (\card{V(A)}+1)^{i-1}.
\]
If $A$ is understood from the context,
we write $\num{\tup{n}}$ instead of $\num[A]{\tup{n}}$.

We write $\phi(u_1,\dotsc,u_k)$ to denote a formula $\phi$
with $\free(\phi) \subseteq \set{u_1,\dotsc,u_k}$.
Given a formula $\phi(u_1,\dotsc,u_k)$, a structure $A$
and $a_1,\dotsc,a_k \in A^{(u_1,\dotsc,u_k)}$,
we write $A \models \phi[a_1,\dotsc,a_k]$ if $\phi$ holds in $A$
with $u_i$ assigned to the element $a_i$, for each $i \in [k]$.
We use similar notation for substitution:
For a tuple $(v_1,\dotsc,v_k)$ of variables that is compatible
with $(u_1,\dotsc,u_k)$,
we let $\phi(v_1,\dotsc,v_k)$ be the result of substituting $v_i$
for $u_i$ for every $i \in [k]$.
We write $\phi[A,\alpha;\tup{u}]$
for the set of all tuples $\tup{a} \in \Dom{A}{\tup{u}}$
with $(A,\alpha[\tup{a}/\tup{u}]) \models \phi$.

In many places throughout this paper we refer to various transitive
closure and fixed-point logics (all mentioned in the
introduction). Our results and remarks about the relation between
these logics and our new logics \LREC\ and $\LREC_=$ are relevant for
a reader familiar with descriptive complexity theory to put our
results in context, but they are not essential to follow the technical
core of this paper. Therefore, we omit the definitions and refer the
reader to the textbooks \cite{ebbflu95,gklmsvvw07,imm99,lib04}
and the paper \cite{imm87}.

\section{The Logic \texorpdfstring{$\LREC$}{LREC}}
\label{sec:lrec}

In this section,
we introduce $\LREC$ as a first step towards the logic $\LREC_=$,
to be introduced in Section~\ref{sec:lrec-eq}.
$\LREC$ is already expressive enough to capture $\LOGSPACE$
on directed trees,
but still lacks several important properties.
For example, it is unable to capture $\LOGSPACE$ on undirected trees
and interval graphs (cf.\ Remark~\ref{rem:interval-canon-and-LREC}),
and is not closed under first-order reductions (Section~\ref{sec:lrec-eq}).
On the other hand,
although $\LREC_=$ could have been introduced without the detour via $\LREC$,
its definition is much easier to grasp by developing an understanding
of $\LREC$ first.

Let us start our development of $\LREC$
by looking at how certain kinds of Boolean circuits
can be evaluated in $\LOGSPACE$.


\begin{wrapfigure}[9]{r}{4.25cm}
  \centering
  \begin{tikzpicture}[scale=0.75]
    \tikzstyle{every node}=
      [draw,circle,line width=0.2mm,inner sep=0mm,minimum size=4mm]
    \node {$\wedge$} [level distance=12mm,sibling distance=14mm]
    child[->] {node {$\vee$}
      child[->] {node[rectangle] {$1$} }
      child[->] {node[rectangle] {$0$} }
    }
    child[->] {node[rectangle] {$1$}}
    child[->] {node {$\neg$}
      child[->] {node {$\wedge$}[sibling distance=9mm]
        child[->] {node[rectangle] {$1$}}
        child[->] {node[rectangle] {$0$}}
        child[->] {node[rectangle] {$1$}}
        child[->] {node[rectangle] {$1$}}
      }
    };
  \end{tikzpicture}
\end{wrapfigure}
The figure on the right shows a \emph{Boolean formula},
i.e., a Boolean circuit whose underlying graph is a \emph{tree}.
It is easy to evaluate such circuits in $\LOGSPACE$:
Start at the output node,
determine the value of the first child recursively,
then determine the value of the second child,
and so on.
We only have to store the current node and its value
(if it has been determined already),
since the parent node and the next child of the parent (if any)
are uniquely determined by the current node.
It is known that Boolean formula evaluation is complete for $\LOGSPACE$
under $\NC[1]$-reductions \cite{formula-eval}.%
\footnote{Boolean formula evaluation is only complete for $\LOGSPACE$
  if input formulae are represented as graphs
  (e.g., by the list of all edges plus gate types).
  It was however shown in \cite{Buss90anoptimal}
  that the problem is complete for $\NC[1]$ under $\AC[0]$-reductions
  if input formulae are given by their natural string encoding.}
In contrast, Boolean \emph{circuit} evaluation is $\PTIME$-complete.

\begin{wrapfigure}[10]{l}{4.5cm}
  \centering
  \begin{tikzpicture}[scale=0.8]
    \tikzstyle{every node}=
      [draw,ellipse,line width=0.2mm,inner sep=0.5mm,minimum size=4mm]
    \node {$\ge 2$} [level distance=12mm,sibling distance=13mm]
      child[->] {node {$\ge 1$}
        child[->] {node[rectangle] {$1$} }
        child[->] {node[rectangle] {$0$} }
      }
      child[->] {node[rectangle] {$1$}}
      child[->] {node {$\neg$}
        child[->] {node {$\ge 2$}[sibling distance=9mm]
          child[->] {node[rectangle] {$1$}}
          child[->] {node[rectangle] {$0$}}
          child[->] {node[rectangle] {$1$}}
          child[->] {node[rectangle] {$1$}}
        }
      };
  \end{tikzpicture}
\end{wrapfigure}
Let us now turn to formulae with \emph{threshold gates}, which,
in addition to Boolean gates,
may contain gates of the form ``$\geq i$'' for a number $i$;
such a gate outputs 1 if, and only if, at least $i$ input gates are set to 1.
An example is shown on the left.
To evaluate such formulae in $\LOGSPACE$,
we again start at the root and evaluate the values of the children recursively.
For each node we count how many 1-values we have seen already.
To this end, when evaluating the values of the children of a node $v$,
we begin with the child with the largest subtree and proceed to children
with smaller subtrees.
Note that the $i$th child of $v$ in this order
has a subtree of size at most $s/i$,
where $s$ is the size of the subtree of $v$.
So, we can store a counter of up to $\log_2 i$ bits
for the number of 1-values seen so far.
It is easy to extend the algorithm to formulae
with other \emph{arithmetic gates} such as \emph{modulo-gates}.

\begin{wrapfigure}[14]{r}{4cm}
  \centering
  \begin{tikzpicture}[scale=0.75]
    \tikzstyle{every node}=
      [draw,ellipse,line width=0.2mm,inner sep=0.5mm,minimum size=4mm]
    \tikzstyle{every edge}=
      [draw,->,line width=0.2mm]
          \draw (1.8,6.5) node(1) {$\wedge$}
            (2.4,5.5) node(2) {$\vee$}
            (0.6,4.5) node(3) {$\vee$}
            (3,4.5) node(4) {$\wedge$}
            (0,3) node(5) {$\ge 2$}
            (1.2,3) node(6) {$\wedge$}
            (2.4,3) node(7) {$\neg$}
            (3.6,3) node(8) {$\vee$}
            (0,1.5) node(9) {$\neg$}
            (1.2,1.5) node(10) {$\wedge$}
            (2.4,1.5) node(11) {$\ge 2$}
            (3.6,1.5) node(12) {$\wedge$}
            (0,0) node(13) [rectangle] {$0$}
            (1.3,0) node(14) [rectangle] {$1$}
            (2.4,0) node(15) [rectangle] {$1$}
            (3.6,0) node(16) [rectangle] {$1$};

      \path (1) edge (2) edge (3)
            (2) edge (3) edge (4)
            (3) edge (5) edge (6)
            (4) edge (7) edge (8)
            (5) edge (9) edge (10) edge (11)
            (6) edge (11) edge (12)
            (7) edge (11)
            (8) edge (11) edge (12)
            (9) edge (13)
            (10) edge (13) edge (14)
            (11) edge (14) edge (15) edge (16)
            (12) edge (15) edge (16);
  \end{tikzpicture}
\end{wrapfigure}
As a more complicated example, let us consider the following type of circuit.
A circuit $C$ has the \emph{$m$-path property} if for all paths $P$ in $C$
the product of the in-degrees of all but the first node on $P$ is at most $m$.
For example, formulae have the 1-path property,
whereas the circuit on the right has the 16-path property.
It is not hard to see that for every $k \geq 1$,
circuits $C$ having the $\card{C}^k$-path property
can be evaluated in $\LOGSPACE$.
The idea here is very similar to the one for evaluating circuits
with threshold gates.
We start at the root node and evaluate the children recursively.
After ``entering'' a node $v$ from one of its parent nodes, say $p(v)$,
we check whether $v$ evaluates to 1
by counting the number of children that evaluate to one
using the above-mentioned strategy,
and return with this information to $p(v)$.
In order to return to $p(v)$, we need to remember $p(v)$,
which we do by storing the index of $p(v)$ among all the in-neighbours of $v$.
This requires only $\log_2 d^-(v)$ bits of storage,
where $d^-(v)$ denotes the in-degree of $v$.
The space for writing down the index of the predecessor $p(v)$
for each vertex $v$ on the path from the root to the currently visited vertex
is thus bounded by the sum of the logarithms of the in-degrees
of the vertices $v$ on that path.
Since $C$ has the $\card{C}^k$-path property,
this sum is bounded by $\log_2 \card{C}^k$,
and thus logarithmic in the size of $C$.
Another way of evaluating the circuit is to first ``unravel'' the circuit
to a tree (i.e., a formula)
which can be done in $\LOGSPACE$ due to the $\card{C}^k$-path property,
and then to evaluate the formula as above.

The logic $\LREC$ allows it to recursively define sets $X$ of tuples
based on graphs $G$
that have the $\card{G}^k$-path property for some $k \geq 1$.

We turn to the formal definition of the logic $\LREC$.
To define the syntax, let $\tau$ be a vocabulary.
The set of all $\LREC[\tau]$-formulae
is obtained by extending the formula formation rules of $\FOC[\tau]$
by the following rule:
If $\tup{u},\tup{v},\tup{w}$ are compatible tuples of variables,
$\tup{p},\tup{r}$ are non-empty tuples of number variables,
and $\phi_{\graphE}$ and $\phi_{\graphC}$
are $\LREC[\tau]$-formulae,
then
\begin{equation}
  \label{eq:lrec}
  \phi \,\isdef\,
  \lrecx{\tup{u}}{\tup{v}}{\tup{p}}{\phi_{\graphE}}{\phi_{\graphC}}(\tup{w},\tup{r})
\end{equation}
is an $\LREC[\tau]$-formula,
and we let
$
  \free(\phi) \isdef (\free(\phi_{\graphE}) \setminus (\tilde{u} \union \tilde{v}))
  \union (\free(\phi_{\graphC}) \setminus (\tilde{u} \union \tilde{p}))
  \union \tilde{w} \union \tilde{r}.
$

To define the semantics of $\LREC[\tau]$-formulae,
let $A$ be a $\tau$-structure
and $\alpha$ an assignment in $A$.
The semantics of $\LREC[\tau]$-formulae that are not of the form
\eqref{eq:lrec} is defined as usual.

Let $\phi$ be an $\LREC[\tau]$-formula of the form \eqref{eq:lrec}.
We define a set $X \subseteq \Dom{A}{\tup{u}} \times \nat$
recursively as follows.
We consider
$
  \graphE \isdef \phi_{\graphE}[A,\alpha;\tup{u},\tup{v}]
$
as the edge relation of a directed graph $\graphG$
with vertex set $\graphV \isdef \Dom{A}{\tup{u}}$.
Moreover, for each vertex $\tup{a} \in \graphV$ we think of the set
$
  \graphC({\tup{a}}) \isdef
  \set{\num{\tup{n}} \mid \tup{n} \in \phi_{\graphC}[A,\alpha[\tup{a}/\tup{u}];\tup{p}]}
$
of integers as the label of $\tup{a}$.
Let
$
  \tup{a} \graphE \isdef \set{\tup{b} \in \graphV \mid \tup{a}\tup{b} \in \graphE}
$
and
$
  \graphE \tup{b} \isdef \set{\tup{a} \in \graphV \mid \tup{a}\tup{b} \in \graphE}.
$
Then, for all $\tup{a} \in \graphV$ and $\ell \in \nat$,
\begin{align*}
  (\tup{a},\ell) \in X
  \ :\Longleftrightarrow\
  \ell > 0
  \ \, \text{and}\ \,
  \Card{
    \Set{
      \tup{b} \in \tup{a} \graphE
      \ \bigg\vert\
      \left(
        \tup{b},
        \left\lfloor\frac{\ell-1}{\card{\graphE \tup{b}}}\right\rfloor
      \right)
      \in X
    }
  } \in \graphC(\tup{a}).
\end{align*}
Notice that $X$ contains only elements $(\tup{a},\ell)$ with $\ell > 0$.
Hence, the recursion eventually stops at $\ell = 0$.
We call $X$ the \emph{relation defined by $\phi$ in $(A,\alpha)$}.
Finally, we let
\[
  (A,\alpha) \models \phi
  \ :\Longleftrightarrow\
  \bigl(\alpha(\tup{w}),\num{\alpha(\tup{r})}\bigr) \in X.
\]

\begin{example}[Boolean circuit evaluation]
  \label{ex:circuits}
  Let $\sigma \isdef \set{E,P_\land,P_\lor,P_\lnot,P_0,P_1}$.
  A Boolean circuit $C$ may be viewed as a $\sigma$-structure,
  where $E(C)$ is the edge relation of $C$,
  and $P_\star(C)$ contains all $\star$-gates
  for $\star \in \set{\land,\lor,\lnot,0,1}$.
  Suppose $C$ has the $\card{C}$-path-property.
  Then,
  \[
    \phi(z) :=
    \exists r_1,r_2\, (\lrecx{x}{y}{p}{\phi_{\graphE}}{\phi_{\graphC}}(z,(r_1,r_2)) \land
    \forall r (r \leq r_1 \land r \leq r_2))
  \]
  with $\phi_{\graphE}(x,y) \isdef E(x,y)$ and
  \[
    \phi_{\graphC}(x,p)
    \isdef
    (P_\land(x) \land \# y\, E(x,y) = p)
    \lor
    (P_\lor(x) \land \text{``$p > 0$''})
    \lor
    (P_\lnot(x) \land \text{``$p = 0$''})
    \lor
    P_1(x)
  \]
  states that gate $z$ evaluates to 1.

  For example,
  let $C$ be the first circuit at the beginning of this section,
  and let $\alpha$ be the assignment in $C$ mapping $z$ to the root of $C$,
  $r_1$ to $4$, and $r_2$ to 0.
  Figure~\ref{fig:circuits-example-graph} shows the graph $\graphG = (\graphV,\graphE)$
  with $\graphV := \Dom{C}{x}$, $\graphE := \phi_{\graphE}[C,\alpha;x,y]$,
  and labels defined by $\phi_{\graphC}$.
  \begin{figure}
    \centering
    \begin{tikzpicture}[yscale=0.7,xscale=1.25]
      \tikzstyle{vertex2}=
        [draw,circle,line width=0.2mm,inner sep=0mm,minimum size=4mm]
      \node[vertex2,label={right:$\set{3}$}] {$a$}
        [level distance=12mm,sibling distance=14mm]
        child[->] {
          node[vertex2,label={left:$[11]$}] {$b$}
            child[->] {node[vertex2] (a) {$e$} }
            child[->] {node[vertex2] (b) {$f$} }
        }
        child[->] {
          node[vertex2] (c) {$c$}
        }
      child[->] {
        node[vertex2,label={right:$\set{0}$}] {$d$}
        child[->] {
          node[vertex2,label={right:$\set{4}$}] {$g$} [sibling distance=9mm]
          child[->] {node[vertex2] (d) {$h$}}
          child[->] {node[vertex2] (e) {$i$}}
          child[->] {node[vertex2] (f) {$j$}}
          child[->] {node[vertex2] (g) {$k$}}
        }
      };
      \foreach \name in {a,c,d,f,g}
        \node[below=5pt] at (\name) {$[0,11]$};
      \foreach \name in {b,e}
        \node[below=5pt] at (\name) {$\emptyset$};
    \end{tikzpicture}
    \caption{The graph $\graphG$ from Example~\ref{ex:circuits}.
      Each vertex is labelled with a subset of $[0,11]$.}
    \label{fig:circuits-example-graph}
  \end{figure}
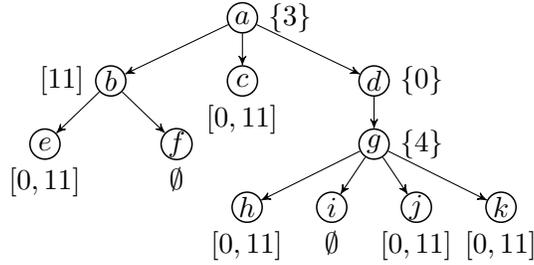
  The vertices $a$--$k$ of $\graphG$ are precisely the vertices of $C$,
  and each vertex is labelled with a subset of $N(C) = [0,11]$.
  Let $X$ be the relation
  defined by $\lrecx{x}{y}{p}{\phi_{\graphE}}{\phi_{\graphC}}(z,(r_1,r_2))$ in $(C,\alpha)$.
  For a leaf $v$ of $\graphG$,
  we have $(v,1) \in X$ (and, in fact, $(v,\ell) \in X$ for any $\ell > 0$)
  if and only if $0$ occurs in the label of $v$.
  Hence, $(v,1) \in X$ for $v \in \set{c,e,h,j,k}$,
  but $(f,1) \notin X$ and $(i,1) \notin X$.
  Since $(e,1) \in X$ and $1$ occurs in the label of $b$,
  we also have $(b,2) \in X$;
  as for the leaves,
  we also have $(b,\ell) \in X$ for any $\ell \geq 2$.
  However, note that $(g,2) \notin X$
  (and, in fact, $(g,\ell) \notin X$ for all $\ell > 0$),
  because there are only three children $v$ of $g$ with $(v,1) \in X$,
  but 3 does not appear in the label of $g$.
  Consequently, $(d,3) \in X$.
  Since we now have $(b,3) \in X$, $(c,3) \in X$, and $(d,3) \in X$,
  we have $(a,4) \in X$,
  and therefore $(C,\alpha) \models \phi$.

  While for the circuit $C$ above,
  we could have replaced the tuple $(r_1,r_2)$ in the formula $\phi$
  by a single number variable $r$,
  it is not hard to construct circuits $C$
  which have the $\card{C}$-path property,
  but the single number variable $r$ does not suffice.
  \varqed
\end{example}

\begin{example}[Deterministic transitive closure]
  \label{ex:dtc}
  Let $G=(V,E)$ be a directed graph and $a,b\in V$.
  Then there is a \emph{deterministic path} from $a$ to $b$ in $G$
  if there exists a path $v_1,\dotsc,v_n$ from $a=v_1$ to $b=v_n$ in $G$
  such that for every $i \in [n-1]$,
  $v_{i+1}$ is the unique out-neighbour of $v_i$.
  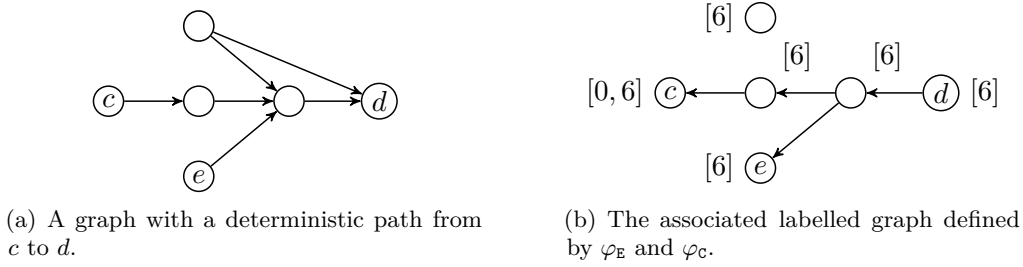
\begin{figure}
    \subfigure[A graph with a deterministic path from $c$ to $d$.]
    {
      \label{fig:det-path}
      \hspace{2.25em}
      \begin{tikzpicture}[xscale=1.2,rotate=90]
        \tikzstyle{place}=
          [draw,circle,line width=0.2mm,inner sep=0.5mm,minimum size=4.0mm]
        \tikzstyle{every edge}=[draw,->,line width=0.2mm]
        \tikzstyle{every label}=[blue]
        \draw(1,2) [color=black!20] node(6)[place] {};
        \draw (-1,2) node(5)[place] {$e$}
              (0,3) node(4) [place]{$c$}
              (0,2) node(3) [place]{}
              (0,1) node(2)[place] {}
              (0,0) node(1) [place]{$d$}
              (1,2) node(6)[place] {};
        \path (4) edge (3)
              (3) edge (2)
              (2) edge (1)
              (5) edge (2)
              (6) edge (2) edge (1);
      \end{tikzpicture}
      \hspace{2.25em}
    }
    \hspace{2.25em}
    \subfigure[The associated labelled graph defined by $\phi_{\graphE}$ and $\phi_{\graphC}$.]
    {
      \label{fig:det-path-transformed}
      \begin{tikzpicture}[xscale=1.2,rotate=90]
        \tikzstyle{place}=
          [draw,circle,line width=0.2mm,inner sep=0.5mm,minimum size=4.0mm]
        \tikzstyle{every edge}=[draw,->,line width=0.2mm]

        \draw (1,2) node(6)[place,label={left:$[6]$}] {};
        \draw (-1,2) node(5)[place,label={left:$[6]$}] {$e$}
              (0,3) node(4) [place,label={left:$[0,6]$}]{$c$}
              (0,2) node(3) [place,label={above right:$[6]$}]{}
              (0,1) node(2)[place,label={above right:$[6]$}] {}
              (0,0) node(1) [place,label={right:$[6]$}]{$d$};

	    \path (3) edge (4)
              (2) edge (3)
              (1) edge (2);
	    \path (2) edge (5);
      \end{tikzpicture}
    }
    \caption{A graph with a deterministic path,
      and the labelled graph defined by the formulae $\phi_{\graphE}$ and $\phi_{\graphC}$
      in Example~\ref{ex:dtc} from that graph.}
  \end{figure}
  Figure~\ref{fig:det-path} shows a directed graph
  with a deterministic path from $c$ to $d$.

  Let $\psi(\tup{u},\tup{v})$ be an $\LREC[\tau]$-formula,
  and let $\tup{s},\tup{t}$ be tuples of variables
  such that $\tup{u},\tup{v},\tup{s},\tup{t}$ are pairwise compatible.
  We devise a formula $\phi(\tup{s},\tup{t})$ such that
  for any $\tau$-structure $A$ and assignment $\alpha$ in $A$,
  we have $(A,\alpha) \models \phi(\tup{s},\tup{t})$
  iff in the graph $G = (V,E)$ defined by $V \isdef \Dom{A}{\tup{u}}$
  and $E \isdef \psi[A,\alpha;\tup{u},\tup{v}]$
  there is a deterministic path
  from $\alpha(\tup{s})$ to $\alpha(\tup{t})$.
  Note that there is such a path
  precisely if, in the graph obtained from $G$ by reversing the edges,
  there is a path $v_n,\dotsc,v_1$ from $\alpha(\tup{t})$ to $\alpha(\tup{s})$
  such that for every $i \in [n-1]$,
  $v_{i+1}$ is the unique in-neighbour of $v_i$.
  Therefore, we can choose $\phi$ like this:
  \begin{align}
    \label{eq:dtc}
    \phi \ \isdef\
    \exists \tup{r}\,
    \lrecx{\tup{v}}{\tup{u}}{\tup{p}}
      {\phi_{\graphE}(\tup{v},\tup{u})}{\phi_{\graphC}(\tup{v},\tup{p})}(\tup{t},\tup{r}),
  \end{align}
  where $\tup{p}$ and $\tup{r}$ are $\len{\tup{u}}$-tuples
  of number variables, and
  \begin{align*}
    \phi_{\graphE}(\tup{v},\tup{u})
    & \,\isdef\,
      \psi(\tup{u},\tup{v}) \land
      \forall \tup{v}' (\psi(\tup{u},\tup{v}') \limplies \tup{v}' = \tup{v}),
    &
    \phi_{\graphC}(\tup{v},\tup{p})
    & \,\isdef\,
      \tup{v}=\tup{s} \lor (\tup{v} \neq \tup{s} \land \tup{p} \neq \tup{0}).
  \end{align*}
  Informally, $\phi_{\graphE}(\tup{v},\tup{u})$ removes all edges $\tup{a}\tup{b}$
  of $G$, where $\tup{a}$ has more than one out-neighbour,
  and reverses the remaining edges.
  All that remains is to check whether there is a path from $\alpha(\tup{t})$
  to $\alpha(\tup{s})$ in the graph defined by $\phi_{\graphE}$.
  The node labelling formula $\phi_{\graphC}$ is chosen in such a way
  that the latter is true iff $(\alpha(\tup{t}),\ell)$,
  for an $\ell \leq \card{V}$,
  appears in the relation $X$ defined by $\phi$ in $(A,\alpha)$.
  If, for example, $G$ is the graph in Figure~\ref{fig:det-path},
  and if $\alpha(\tup{s}) = c$ and $\alpha(\tup{t}) = d$,
  then the labelled graph defined by $\phi_{\graphE}$ and $\phi_{\graphC}$
  is as shown in Figure~\ref{fig:det-path-transformed},
  and it is easy to see that $(d,4) \in X$,
  while, for example, $(e,\ell) \notin X$ for all $\ell > 0$.
  \varqed
\end{example}

As from now, we use
\begin{align}
  \label{eq:dtc-op}
  \dtcx{\tup{u}}{\tup{v}}{\psi}(\tup{s},\tup{t})
\end{align}
as an abbreviation for the $\LREC$-formula in \eqref{eq:dtc}.

\begin{remark}
  In the preceding two examples,
  the set $X$ turned out to possess a certain monotonicity property:
  If $(\tup{a},\ell) \in X$ for some $\ell$,
  then $(\tup{a},\ell') \in X$ for all $\ell' \geq \ell$.
  In general, however, the relation $X$ defined by an $\lrec$ operator
  does not possess this property.
  For example, consider the formula
  $
    \phi \isdef \lrecx{u}{v}{p}{E(u,v)}{\text{``$p = 0$''}}(u,p).
  $
  Now let $G$ be the graph consisting of a single edge $(a,b)$,
  and let $\alpha$ be the assignment mapping $u$ to $a$ and $p$ to 2.
  Then the relation $X$ defined by $\phi$ in $(G,\alpha)$
  contains $(a,1)$, but not $(a,2)$.
\end{remark}

The following theorem
shows that the data complexity of $\LREC$ is in $\LOGSPACE$.

\begin{theorem}
  \label{theo:lrec-complexity}
  For every vocabulary $\tau$, and every $\LREC[\tau]$-formula $\phi$
  there is a deterministic logspace Turing machine that,
  given a $\tau$-structure $A$ and an assignment $\alpha$ in $A$,
  decides whether $(A,\alpha) \models \phi$.
\end{theorem}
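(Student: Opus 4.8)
The plan is to induct on the structure of $\phi$. When $\phi$ is not of the form~\eqref{eq:lrec} it is an atom or an $\FOC$-combination of smaller $\LREC[\tau]$-formulae, and I would use the standard logspace evaluation of $\FOC$: given, by the induction hypothesis, logspace procedures for the immediate subformulae of $\phi$, Boolean connectives and first-order quantifiers are handled directly, and a counting subformula $\#\tup{u}\,\psi = \tup{p}$ by looping over all $\tup{a} \in \Dom{A}{\tup{u}}$, invoking the procedure for $\psi$, keeping a counter, and comparing it with $\num{\alpha(\tup{p})}$.

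The real case is $\phi = \lrecx{\tup{u}}{\tup{v}}{\tup{p}}{\phi_{\graphE}}{\phi_{\graphC}}(\tup{w},\tup{r})$. By induction there are logspace procedures for $\phi_{\graphE}$ and $\phi_{\graphC}$, and composing them with counting loops yields logspace procedures that decide ``$\tup{a}\tup{b} \in \graphE$'', compute $\card{\graphE\tup{b}}$, and decide ``$k \in \graphC(\tup{a})$'' for given $\tup{a},\tup{b} \in \Dom{A}{\tup{u}}$ and $k \in \nat$ (so that the edge relation and node labels of $\graphG$ are available in logspace). Put $n \isdef \card{V(A)}$ and $\ell_0 \isdef \num{\alpha(\tup{r})}$; since $\tup{r}$ has fixed length, $\ell_0 \leq (n{+}1)^{\len{\tup{r}}}$ is polynomial in $n$. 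Because $\floor{(\ell{-}1)/\card{\graphE\tup{b}}} < \ell$, only pairs $(\tup{a},\ell)$ with $\ell \leq \ell_0$ are relevant, and the recursion defining $X$ is described by a finite \emph{recursion tree} $T$ rooted at $(\alpha(\tup{w}),\ell_0)$, in which a node $(\tup{a},\ell)$ with $\ell > 0$ has for each $\tup{b} \in \tup{a}\graphE$ the child $(\tup{b},\floor{(\ell{-}1)/\card{\graphE\tup{b}}})$, while nodes $(\tup{a},0)$ and nodes $(\tup{a},\ell)$ with $\tup{a}\graphE = \emptyset$ are leaves. Calling $(\tup{a},\ell)$ \emph{true} when $\ell > 0$ and the number of its true children lies in $\graphC(\tup{a})$, we have $(A,\alpha) \models \phi$ iff the root of $T$ is true; so it suffices to evaluate the root in logarithmic space.

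I would then exploit two points. \emph{(i) The resource term imposes a path property.} Along any path $(\tup{a}_0,\ell_0),\dotsc,(\tup{a}_d,\ell_d)$ of $T$ we have $\ell_j \leq \ell_0 / \prod_{i=1}^{j}\card{\graphE\tup{a}_i}$, hence $\prod_{i=1}^{d-1}\card{\graphE\tup{a}_i} \leq \ell_0$ (since $\ell_{d-1} \geq 1$); consequently $\sum_{i=1}^{d}\ceil{\log_2(\card{\graphE\tup{a}_i}{+}1)} = O(\log n)$, only $O(\log n)$ of the $\tup{a}_i$ have in-degree $\geq 2$, and --- a depth-$d$ node being a length-$d$ walk in $\graphG$, of which there are polynomially many by the above and $d < \ell_0$ --- $T$ has only polynomially many nodes. \emph{(ii) $T$ can be navigated in logarithmic space.} I would keep the current node as its path from the root, stored compactly: the root vertex, the current vertex, and, for each proper ancestor $\tup{a}_i$ with $\card{\graphE\tup{a}_i} \geq 2$, the index of $\tup{a}_{i-1}$ in $\graphE\tup{a}_i$ (nothing is stored for in-degree-$1$ ancestors, whose predecessor is forced). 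By~(i) this takes $O(\log n)$ bits, and from it one recomputes $\ell_1,\dotsc,\ell_d$ and realises the primitives ``parent'', ``first child'', ``next child'' in logspace. This is the circuit-evaluation idea from the introduction, applied to the unravelling $T$ of $\graphG$.

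The one delicate point --- and the main obstacle --- is evaluating truth values without storing a counter at each level of the simulated recursion, of which there may be polynomially many. The fix is to recompute: to evaluate a node, loop over its children, recursively evaluate each, and count the true ones; everything else needed per level (which child is in progress, where to resume) is recovered from the encoded path plus one register holding the most recently completed child, so the only per-level datum is the current partial count. Processing children in order of non-increasing subtree size makes these counts telescope: while inside the subtree of the $i$-th child of a node $\tup{a}_j$, the count stored at level $\tup{a}_j$ is $< i \leq s_j/s_{j+1}$, where $s_j$ and $s_{j+1}$ are the subtree sizes of $\tup{a}_j$ and of the descended child, so $\sum_j\ceil{\log_2 i} \leq \log_2(\text{number of nodes of }T) = O(\log n)$. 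Combining this with the $O(\log n)$-bit path encoding gives a logspace evaluation of the root of $T$, and hence of $\phi$. I expect most of the work to lie in making this precise: fixing a genuinely logspace-computable child ordering that realises such a telescoping estimate (the exact subtree sizes satisfy a recursion of the same shape as $X$, so one must either establish their logspace computability separately or use a logspace-computable surrogate), and checking that no further per-level overhead accumulates beyond $O(\log n)$.
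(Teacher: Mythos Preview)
Your proposal is correct and matches the paper's approach essentially step for step: the same unravelling tree $T$, the same compact path encoding via predecessor indices (with the same $\prod_i \card{\graphE\tup{a}_i} \leq \ell_0$ bound), and the same subtree-size-ordering trick to telescope the per-level counters. The concern you flag about computing subtree sizes is resolved in the paper by a clean two-stage decomposition: a first logspace machine $M_1$ writes out $T$ explicitly (its nodes are polynomially many, and your path encoding suffices to enumerate them), and a second logspace machine $M_2$ takes $T$ as \emph{input} and evaluates the root --- with $T$ on the tape, subtree sizes are trivially logspace-computable by iterating over all nodes and walking up to the root, so no surrogate ordering is needed.
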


\begin{proof}
  We proceed by induction on the structure of $\phi$.
  The case where $\phi$ is \emph{not} of the form \eqref{eq:lrec} is easy.
  Let $\phi$ be of the form \eqref{eq:lrec}, i.e., let
  \[
    \phi \,=\,
    \lrecx{\tup{u}}{\tup{v}}{\tup{p}}{\phi_{\graphE}}{\phi_{\graphC}}(\tup{w},\tup{r}).
  \]
  Let $\graphG = (\graphV,{\graphE})$ be the graph with $\graphV = \Dom{A}{\tup{u}}$
  and $\graphE = \phi_{\graphE}[A,\alpha;\tup{u},\tup{v}]$,
  let
  $
    \graphC({\tup{a}}) \isdef
    \set{\num{\tup{n}} \mid
      \tup{n} \in \phi_{\graphC}[A,\alpha[\tup{u}/\tup{a}];\tup{p}]}
  $
  for all $\tup{a} \in \graphV$,
  and let $X \subseteq \graphV \times \nat$ be the relation defined by $\phi$
  in $(A,\alpha)$.
  We construct a deterministic logspace Turing machine
  that decides whether $(\alpha(\tup{w}),\num{\alpha(\tup{r})}) \in X$.

  The machine is constructed in two steps.
  The first step consists of
  constructing a deterministic logspace Turing machine $M_1$ that,
  given $A$ and $\alpha$ as input,
  computes a labelled directed tree $T$
  that is obtained basically from ``unravelling'' $\graphG$
  starting at $\alpha(\tup{w})$ with ``resource'' $\num{\alpha(\tup{r})}$.
  The second step is to devise a deterministic logspace Turing machine $M_2$
  that takes $T$ as input and decides
  whether its root, $(\alpha(\tup{w}),\num{\alpha(\tup{r})})$, belongs to $X$.
  The composition of $M_1$ and $M_2$ finally yields the desired machine.

  Let $k \isdef \len{\tup{r}}$.
  We define a labelled directed tree $T$
  whose set $W$ of vertices consists of all the sequences
  $
    ((\tup{a}_0,\ell_0),\dotsc,(\tup{a}_m,\ell_m))
  $
  of pairs from $\graphV \times \nat$ for some $m \in \nat$
  such that
  \begin{enumerate}[leftmargin=*]
  \item
    $(\tup{a}_0,\ell_0) = (\alpha(\tup{w}),\num{\alpha(\tup{r})})$,
  \item
    $\tup{a}_{i+1} \in \tup{a}_i \graphE$ for all $i < m$, and
  \item
    $
      \ell_{i+1} =
      \left\lfloor
        \frac{\ell_i-1}{\card{\graphE \tup{a}_{i+1}}}
      \right\rfloor
    $
    for all $i < m$.
  \end{enumerate}
  There is an edge from $((\tup{a}_0,\ell_0),\dotsc,(\tup{a}_m,\ell_m))$
  to $((\tup{a}'_0,\ell'_0),\dotsc,(\tup{a}'_{m'},\ell'_{m'}))$ in $T$
  if $m' = m+1$,
  and $(\tup{a}'_i,\ell'_i) = (\tup{a}_i,\ell_i)$ for all $i \leq m$.
  We label each vertex
  $v = ((\tup{a}_0,\ell_0),\dotsc,(\tup{a}_m,\ell_m)) \in W$
  with the set $\graphC(v) \isdef \graphC(\tup{a}_m)$,
  and with the number $\fail(v) \in \set{0,1}$
  such that $\fail(v) = 1$ iff $\ell_m = 0$.
  Note that $\fail(v) = 1$ only if $v$ is a leaf in $T$.
  Clearly, $T$ is a labelled directed tree
  rooted at $(\alpha(\tup{w}),\num{\alpha(\tup{r})})$.

  Define $Y \subseteq W$ such that
  \begin{align*}
    v \in Y
    \iff
    \card{\set{w \in Y \mid \text{$w$ is a child of $v$}}} \in \graphC(v)
    \ \ \text{and}\ \
    \fail(v) = 0
    \quad
    \text{(for every $v \in W$).}
  \end{align*}

  \begin{sclaim}
    \label{cl:lrec-complexity/X-prop}
    For every $v = ((\tup{a}_0,\ell_0),\dotsc,(\tup{a}_m,\ell_m)) \in W$
    we have $v \in Y$ if and only if $(\tup{a}_m,\ell_m) \in X$.
    In particular, $(\alpha(\tup{w}),\num{\alpha(\tup{r})}) \in X$
    if and only if $(\alpha(\tup{w}),\num{\alpha(\tup{r})}) \in Y$.
  \end{sclaim}

  \begin{proofofclaim}
    The proof is by induction on the \emph{rank} $r_v$ of $v$ in $T$:
    if $v$ is a leaf in $T$, then $r_v = 0$;
    and if $v$ is not a leaf in $T$, then $r_v$ is one more than the maximum
    of the ranks of $v$'s children.
    For every $v = ((\tup{a}_0,\ell_0),\dotsc,(\tup{a}_m,\ell_m)) \in W$,
    let $\lambda(v) \isdef (\tup{a}_m,\ell_m)$.

    Suppose that $r_v = 0$, that is, $v$ is a leaf in $T$.
    Consider $(\tup{a},\ell) = \lambda(v)$.
    Then $\tup{a} \graphE$ is the empty set or $\ell = 0$.
    First consider the case that $\ell = 0$.
    In this case, $(\tup{a},\ell) \notin X$ by the definition of $X$.
    But we also have $\fail(v) = 1$, which implies $v \notin Y$.
    Next consider the case that $\tup{a} \graphE$ is the empty set and $\ell > 0$.
    In this case,
    \[
      v \in Y
      \iff
      0 \in \graphC(v) = \graphC(\tup{a})
      \iff
      (\tup{a},\ell) \in X,
    \]
    as desired.

    Suppose now that $r_v = r+1$,
    and that the claim is true for vertices $w$ with $r_w \leq r$.
    In particular, since $v$ is not a leaf we must have $\fail(v) = 0$.
    This implies $\ell > 0$, and
    \begin{align}
      \notag
      v \in Y
      & \iff
        \card{\set{w \in Y \mid \text{$w$ is a child of $v$}}} \in \graphC(v) \\
      \label{eq:lrec-complexity/X-prop/1}
      & \iff
        \card{
          \set{\lambda(w) \in X \mid
            \text{$w$ is a child of $v$}}}
        \in \graphC(v)
        \quad
        \text{by the induction hypothesis.}
    \end{align}
    Let $W'$ be the set of all children $w$ of $v$
    such that $\lambda(w) \in X$,
    and let $f\colon W' \to \Dom{A}{\tup{u}}$ be such that for all $w \in W'$,
    $f(w)$ is the first component of $\lambda(w)$.
    Then $f$ is a bijection from $W'$
    to the set of all tuples $\tup{b} \in \tup{a} \graphE$ with
    \begin{align}
      \label{eq:lrec-complexity/X-prop/2}
      \left(
        \tup{b},
        \left\lfloor
          \frac{\ell-1}{\card{\graphE \tup{b}}}
        \right\rfloor
      \right)
      \in X.
    \end{align}
    As a consequence,
    the number of all tuples $\tup{b} \in \tup{a} \graphE$
    with \eqref{eq:lrec-complexity/X-prop/2}
    is precisely $\card{W'}$.
    Hence, by \eqref{eq:lrec-complexity/X-prop/1} and $\ell > 0$,
    \[
      v \in Y
      \iff
      \card{W'} \in \graphC(v) = \graphC(\tup{a})
      \iff
      \lambda(v) = (\tup{a},\ell) \in X.
      \varqedeq
    \]
  \end{proofofclaim}

  By Claim~\ref{cl:lrec-complexity/X-prop},
  it suffices to compute $T$,
  and use $T$ to decide whether its root,
  $(\alpha(\tup{w}),\num{\alpha(\tup{r})})$, belongs to $Y$.
  This is precisely what the two machines $M_1$ and $M_2$
  mentioned at the beginning of this proof do.
  We now prove the existence of such machines.

  \begin{sclaim}
    \label{cl:lrec-complexity/M1}
    There is a deterministic logspace Turing machine
    that takes $A$ and $\alpha$ as input and outputs $T$.
  \end{sclaim}

  \begin{proofofclaim}
    We first construct a deterministic logspace Turing machine $M$
    that takes $A$ and $\alpha$ as input and outputs the vertices of $T$
    (represented as sequences $((\tup{a}_0,\ell_0),\dotsc,(\tup{a}_m,\ell_m))$
     as above).
    This machine makes use of a deterministic logspace Turing machine $M_{\graphE}$
    that takes $A$, $\alpha$ and a pair $(\tup{a},\tup{b}) \in \graphV^2$ as input
    and decides whether $\tup{a}\tup{b} \in \graphE$.
    Such a machine exists by the induction hypothesis.
    Once $M$ is constructed,
    we can easily compute the edges and the labels of $T$,
    using a deterministic logspace Turing machine
    for computing the labels $\graphC(\tup{a})$ for each $\tup{a} \in \graphV$
    as guaranteed by the induction hypothesis.

    In what follows, we describe how $M$ computes the vertices of $T$
    from $A$ and $\alpha$.
    We basically do a depth-first search in $\graphG$
    starting in $\alpha(\tup{w})$ with ``resources'' $\num{\alpha(\tup{r})}$.
    In each step, we visit some vertex $\tup{a} \in \graphV$.
    We also maintain a number $\ell < \card{N(A)}^k$,
    the length $m$ of the path $P = (\tup{a}_0,\dotsc,\tup{a}_m)$
    on which $\tup{a}$ was reached from $\alpha(\tup{w})$,
    and for each $i \in [m]$ a number $e_i \in [0,\card{\graphE \tup{a}_i}-1]$
    with the following property.
    For each $\tup{b} \in \Dom{A}{\tup{u}}$
    let $\tup{b}_0,\dotsc,\tup{b}_p$ be the elements of $\graphE \tup{b}$
    ordered lexicographically according to their representation
    in the input string;
    let $\pre(\tup{b},i) \isdef \tup{b}_i$.
    Then the number $e_i$ will have the property that
    $\tup{a}_{i-1} = \pre(\tup{a}_i,e_i)$.
    When we move from $\tup{a}$ to some vertex $\tup{b} \in \tup{a} \graphE$
    we update $\ell$ to be
    \begin{align*}
      \decr(\ell,\tup{b}) \,\isdef\,
      \left\lfloor
        \frac{\ell-1}{\card{\graphE \tup{b}}}
      \right\rfloor.
    \end{align*}
    This ensures that the space needed to store the numbers $e_1,\dotsc,e_m$
    is logarithmic in $\card{W}$ (which we shall prove later).
    Finally, upon visiting $\tup{a}$ for the first time,
    we write the sequence $(\tup{a}_0,\ell_0),\dotsc,(\tup{a}_m,\ell_m)$
    to the output tape,
    where the $\ell_i$ are the values for $\ell$ maintained along the path $P$.

    More precisely, we proceed as follows.
    In the first step, we let $\tup{a} \isdef \alpha(\tup{w})$,
    $\ell \isdef \num{\alpha(\tup{r})}$ and $m \isdef 0$.
    Let $\tup{a} \in \graphV$, $\ell < \card{N(A)}^k$, $m \in \nat$
    and numbers $e_1,\dotsc,e_m$ be given.
    Furthermore,
    let $\tup{a}_0,\dotsc,\tup{a}_m$ be such that $\tup{a}_m = \tup{a}$,
    and for each $i \in [m]$, $\tup{a}_{i-1} = \pre(\tup{a}_i,e_i)$;
    and let $\ell_0,\dotsc,\ell_m$
    be such that $\ell_0 = \num{\alpha(\tup{r})}$ and for each $i \in [m]$,
    $\ell_i = \decr(\ell_{i-1},\tup{a}_i)$.
    Notice that each of the $\tup{a}_i$ and $\ell_i$
    can be computed in logarithmic space
    given $\tup{a}$, $m$, $e_1,\dotsc,e_m$ and $i$ as input.
    Let $\preceq$ be some fixed ordering on $\tup{a} \graphE$.
    There are now two possible cases:
    \begin{enumerate}[leftmargin=*]
    \item\label{case:lrec-complexity/M1/first-visit}
      \emph{$m$ was increased in the last move,
        or there was no last move.}
      This corresponds to a first visit of the vertex $\tup{a}$ with $\ell$
      on the current path.
      Therefore we write the sequence
      $(\tup{a}_0,\ell_0),\dotsc,(\tup{a}_m,\ell_m)$ to the output tape.
      We then let $j \isdef 0$ be the index of the child of $\tup{a}$
      to be visited next.
    \item\label{case:lrec-complexity/M1/return}
      \emph{$m$ was decreased in the last move.}
      This corresponds to a return from a child $\tup{b}$ of $\tup{a}$.
      Therefore, we do not write anything to the output tape.
      Let $\tup{b}$ be the vertex visited in the last step,
      let $j'$ be its rank in $\tup{a} \graphE$ with respect to $\preceq$
      (i.e., the number of elements in $\tup{a} \graphE$ that precede $\tup{b}$
       with respect to $\preceq$),
      and let $j \isdef j'+1$.
    \end{enumerate}
    If $\ell > 0$ and $j \leq \card{\tup{a} \graphE} - 1$,
    we update $\tup{a}$ to be the element of rank $j$
    in $\tup{a} \graphE$ with respect to $\preceq$;
    we also update $\ell$ to be $\decr(\ell,\tup{a})$,
    increase $m$ by one,
    and let $e_m$ be such that $\tup{a}_m = \pre(\tup{a},e_m)$.
    Otherwise, if $\ell = 0$ or $j = \card{\tup{a} \graphE}$,
    we do the following.
    If $m = 0$, we stop;
    and if $m > 0$ we update $\tup{a}$ to be $\tup{a}_{m-1}$,
    set $\ell$ to $\ell_{m-1}$,
    and decrease $m$ by one.
    It is not hard to see that this procedure outputs all the vertices of $T$.

    Maintaining the vertex $\tup{a} \in \graphV$
    and the vertex from the respective last step
    needs space $O(\log\, \card{V(A)})$.
    Notice that
    \begin{align*}
      \ell_0
      \,=\,
      \num{\alpha(\tup{r})}
      \,\leq\,
      (\card{V(A)}+1)^k-1.
    \end{align*}
    Since $\ell_i = \decr(\ell_{i-1},\tup{a}_i)$ for every $i \in [m]$,
    this implies
    \begin{align}
      \label{eq:lrec-complexity/M1/est}
      m \,\leq\, \ell_0 \,<\, (\card{V(A)}+1)^k
      \qquad\text{and}\qquad
      \prod_{i=1}^m\, \card{\graphE \tup{a}_i}
      \,<\, \frac{\ell_0}{\ell_m}
      \,<\, (\card{V(A)}+1)^k.
    \end{align}
    In particular, $m$ together with a bit indicating
    whether $m$ was increased or decreased in the last move
    can be maintained in space $O(\log\, \card{V(A)})$.
    Furthermore, each of the numbers $e_i$ needs space
    $\eta_i \isdef \lceil \log_2 \card{\graphE \tup{a}_i} \rceil$.
    Let $\Ic$ be the set of all $i \in [m]$ with $\card{\graphE \tup{a}_i} \geq 2$.
    By \eqref{eq:lrec-complexity/M1/est}
    we have $\card{\Ic} \leq \log_2 (\card{V(A)}+1)^k$.
    Hence,
    \begin{align*}
      \sum_{i=1}^m \eta_i
      \,=\,
      \sum_{i \in \Ic} \lceil \log_2 \card{\graphE \tup{a}_i} \rceil
      \,\leq\,
      \card{\Ic} + \log_2 \prod_{i \in \Ic} \card{\graphE \tup{a}_i}
      \,\stackrel{\eqref{eq:lrec-complexity/M1/est}}{\leq}\,
      2 \log_2 (\card{V(A)}+1)^k.
    \end{align*}
    In particular, we can store $e_1,\dotsc,e_m$ as a single number $e$
    with $\eta \isdef 2 \log_2 (\card{V(A)}+1)^k-1)$ bits,
    reserving $\eta_i$ bits in $e$ for the number $e_i$.
    To extract $e_i$ from $e$,
    we start by computing $\eta_m$ from $\tup{a} = \tup{a}_m$,
    let $e_m$ be the number represented by the last $\eta_m$ bits of $e$,
    and let $\tup{a}_{m-1} \isdef \pre(\tup{a}_m,e_m)$.
    We then compute $\eta_{m-1}$ from $\tup{a}_{m-1}$,
    let $e_{m-1}$ be the number corresponding to bit $\eta_{m-1}$
    to $\eta-\eta_m$ of $e$,
    and let $\tup{a}_{m-2} \isdef \pre(\tup{a}_{m-1},e_{m-1})$.
    We continue this way until $e_i$ is found.
    \varqed
  \end{proofofclaim}

  \begin{sclaim}
    \label{cl:lrec-complexity/M2}
    There is a deterministic logspace Turing machine
    that takes $T$ as input and decides
    whether the root $(\alpha(\tup{w}),\num{\alpha(\tup{r})})$ of $T$
    belongs to $Y$.
  \end{sclaim}

  \begin{proofofclaim}
    Let $v_0 \isdef (\alpha(\tup{w}),\num{\alpha(\tup{r})})$.
    On input $T$, a deterministic logspace Turing machine
    can decide whether $v_0 \in Y$ as follows.
    The idea is to visit the vertices in a depth-first fashion,
    starting in $v_0$,
    and count, for each node that is visited,
    the number of children that belong to $Y$.
    To implement this in logarithmic space,
    we proceed in steps as follows.

    In each step, we are in a vertex $v$ of $T$,
    which is $v_0$ in the first step.
    With each vertex $v_i$ on the path $v_0,v_1,\dotsc,v_m$ from $v_0$ to $v$
    we associate $2 \cdot \ell_v(i)$ bits of memory
    for counters $t(i),c(i)$ from $0$ to $2^{\ell_v(i)}-1$,
    where $\ell_v(i)$ will be specified below.
    The counter $t(i)$ simply counts the number of children of $v_i$
    that have already been processed
    (excluding the vertex in whose subtree we are currently in),
    while $c(i)$ counts the number of children of $v_i$
    that have already been processed and belong to $Y$.
    We guarantee that the sum of the numbers $2\cdot\ell_v(i)$
    over $i \in [0,m]$ is bounded by $6 \cdot \log_2 \card{W}$.
    Moreover, it will be easy to determine $\ell_v(i)$ from $v$ and $i$
    in logspace;
    so we can store the counters in a bit string of length
    at most $6 \cdot \log_2 \card{W}$,
    and identify the bits that belong to $t(i)$ and $c(i)$
    from that bit string in logspace, given $v$ and $i$.
    By visiting the children of each vertex
    in decreasing order of the number of vertices in the children's subtrees,
    we ensure that there is always enough space to keep the counters
    in memory until all children have been processed.

    We now give a more detailed description of a single step.
    In the initial step,
    we set $v \isdef v_0$ and $t(0) \isdef c(0) \isdef 0$.
    For the other steps, we need the following definitions:
    \begin{itemize}[leftmargin=*]
    \item
      The \emph{size $s(v)$} of a vertex $v \in W$ is the number of vertices
      in the subtree of $T$ rooted at $v$.
      It is easy to compute this number in logarithmic space:
      all we need to do is to initialise a counter,
      iterate over all vertices of $T$,
      and for each such vertex move upwards and increment the counter by 1
      if $v$ is reached.
    \item
      Let $v \in W$,
      and let $w_1,\dotsc,w_p$ be the children of $v$
      such that $s(w_1) \geq s(w_2) \geq \dotsb \geq s(w_p)$;
      children of the same size are ordered in lexicographic order
      based on their representation in the input string.
      For every $j \in [p]$, let $\child(v,j) \isdef w_j$.
      The vertex $\child(v,j)$ is easy to compute in logarithmic space,
      given $v$ and $j$.
    \item
      Let $v \in W$, let $v_0,v_1,\dotsc,v_m$ be the path from $v_0$ to $v$,
      and let $i \in [0,m]$.
      Then
      \[
        \ell_v(i) \,\isdef\,
        \begin{cases}
          \lceil \log_2 j\rceil,
          & \text{if $i < m$ and $\child(v_i,j) = v_{i+1}$}, \\
          \lceil \log_2 \card{W}\rceil,
          & \text{if $i = m$.}
        \end{cases}
      \]
      This number is easy to compute in logspace given $v$ and $i$ as input.
    \end{itemize}
    Suppose that $v$ is the current vertex,
    and that $v_0,v_1,\dotsc,v_m$ is the path from $v_0$ to $v$.
    If $t(m)$ is smaller than the number of children of $v$,
    then we set $v \isdef \child(v,t(m)+1)$
    and $t(m+1) \isdef c(m+1) \isdef 0$,
    and continue with the next step.
    Otherwise, we check whether $c(m) \in \graphC(v)$ and $\fail(v) = 0$.
    If this is the case, we say that $v$ \emph{succeeds}.
    In any case, whether $v$ succeeds or not, we do the following:
    \begin{enumerate}[leftmargin=*]
    \item
      If $m = 0$, then we accept $T$ iff $v$ succeeds.
    \item
      If $m > 0$, then we increase $t(m-1)$ by one,
      and if $v$ succeeds we also increase $c(m-1)$ by one.
      Afterwards, we let $v$ be the parent of $v$,
      and continue with the next step.
      Note that with the updated $v$,
      $2 \ell_v(m-1)$ bits suffice to store $t(m-1)$ and $c(m-1)$.
    \end{enumerate}
    It should be clear that this procedure correctly decides
    whether $v_0 \in Y$.

    Concerning the space for the counters,
    let $j_0,j_1,\dotsc,j_{m-1}$ be such that $\child(v_i,j_i) = v_{i+1}$
    for every $i < m$.
    Then
    \begin{align}
      \label{eq:lrec-complexity/sum}
      \sum_{i<m} \ell_v(i)
      & \,=\,    \sum_{\substack{i<m\\j_i \geq 2}} \lceil \log_2 j_i\rceil
        \,\leq\, \sum_{\substack{i<m\\j_i \geq 2}} \bigl(1 + \log_2 j_i\bigr)
        \,=\,    \card{\set{i<m \mid j_i \geq 2}}
                          + \log_2 \prod_{i<m} j_i.
    \end{align}
    Now observe that
    \begin{align}
      \label{eq:lrec-complexity/s}
      s(v_{i+1}) < \frac{s(v_i)}{j_i} \qquad \text{for every $i \in [0,m-1]$}.
    \end{align}
    To see this, consider $w_j \isdef \child(v_i,j)$ for every $j \leq j_i$.
    By the choice of $\child(\cdot,\cdot)$,
    we have $s(w_1) \geq \dotsb \geq s(w_{j_i})$.
    Hence, if $s(w_{j_i}) = s(v_{i+1}) \geq s(v_i)/j_i$,
    then $s(w_1) + \dotsb + s(w_{j_i}) \geq s(v_i)$, which is impossible.
    As a consequence of \eqref{eq:lrec-complexity/s}, we have
    \begin{align}
      \label{eq:lrec-complexity/est}
      \card{\set{i<m \mid j_i \geq 2}} \,<\, \log_2 \card{W}
      \qquad
      \text{and}
      \qquad
      \prod_{i<m} j_i
      \,\stackrel{\eqref{eq:lrec-complexity/s}}{<}\,
      \prod_{i<m} \frac{s(v_i)}{s(v_{i+1})}
      \,=\, \frac{s(v_0)}{s(v_m)}
      \,\leq\, \card{W}.
    \end{align}
    Altogether, this yields
    \begin{align*}
      \sum_{i \leq m} \ell_v(i)
      \,\stackrel{\eqref{eq:lrec-complexity/sum}}{\leq}\,
      \card{\set{i<m \mid j_i \geq 2}} + \log_2 \prod_{i<m} j_i
      + \log_2 \card{W} + 1
      \,\stackrel{\eqref{eq:lrec-complexity/est}}{<}\,
        3 \log_2 \card{W} + 1,
    \end{align*}
    which implies
    $\sum_{i \leq m} \ell_v(i) \leq 3 \log_2 \card{W}$,
    and therefore $\sum_{i \leq m} 2 \ell_v(i) \leq 6 \log_2 \card{W}$.
    \varqed
  \end{proofofclaim}

  Altogether, this concludes the proof of Theorem~\ref{theo:lrec-complexity}.
\end{proof}

\begin{remark}
  It follows from Example~\ref{ex:dtc} that $\DTCC \leq \LREC$.
  This containment is strict as directed tree isomorphism
  is definable in $\LREC$ (we will show this in the next section),
  but not in $\DTCC$.
  On the other hand,
  it is easy to see that
  the relation $X$ defined by an $\LREC$-formula
  of the form \eqref{eq:lrec} in an interpretation $(A,\alpha)$
  can be defined in fixed point logic with counting $\FPC$.
  Hence,
  $\LREC \leq \FPC$,
  and this containment is strict
  since we show in Section~\ref{sec:reach-nondef} that
  undirected graph reachability is not $\LREC$-definable.
\end{remark}

\section{Capturing Logspace on Directed Trees}
\label{sec:tree-canon}

In this section we show that $\LREC$ captures $\LOGSPACE$
on the class of all directed trees.
Our construction is based on Lindell's
\LOGSPACE\ tree canonisation algorithm \cite{Lindell:Tree-Canon}.
Note, however, that Lindell's algorithm makes essential use
of a linear order on the tree's vertices
that is given implicitly by the encoding of the tree.
Here we do not have such a linear order,
so we cannot directly translate Lindell's algorithm to an $\LREC$-formula.
We show that we can circumvent using the linear order
if we have a formula for directed tree isomorphism.
Hence, our first task is to construct such a formula.

\subsection{Directed Tree Isomorphism}
\label{sec:tree-iso}

Let $T$ be a directed tree.
For every $v \in V(T)$
let $T_v$ be the subtree of $T$ rooted at $v$,
let $\size(v) \isdef \card{V(T_v)}$ be the \emph{size} of $v$,
and let $\childcount_s(v)$ be the number of children of $v$ of size $s$.
We construct an $\LREC[\set{E}]$-formula $\fiso(x,y)$
that is true in a directed tree $T$
with interpretations $v,w \in V(T)$ for $x,y$
if and only if $T_v \isomorphic T_w$.
We assume that $\card{V(T)} \geq 4$,
but it is easy to adapt the construction to directed trees
with less than 4 vertices.

We implement the following recursive procedure
to check whether $T_v \isomorphic T_w$:
\begin{enumerate}[leftmargin=*]
\item
  If $\size(v) \neq \size(w)$ or if $\childcount_s(v) \neq \childcount_s(w)$
  for some $s \in [0,\card{V(T_v)}-1]$,
  then return ``$T_v \not\isomorphic T_w$''.
\item
  If for all children $\hat{v}$ of $v$
  there is a child $\hat{w}$ of $w$ and a number $k$ such that
  \begin{enumerate}[leftmargin=*]
  \item
    $T_{\hat{v}} \isomorphic T_{\hat{w}}$,
  \item
    there are exactly $k$ children $\ring{w}$ of $w$
    with $T_{\hat{v}} \isomorphic T_{\ring{w}}$,
    and
  \item
    there are exactly $k$ children $\ring{v}$ of $v$
    with $T_{\ring{v}} \isomorphic T_{\hat{w}}$,
  \end{enumerate}
  then return ``$T_v \isomorphic T_w$''.
\item
  Return ``$T_v \not\isomorphic T_w$''.
\end{enumerate}
Clearly, this procedure outputs ``$T_v \isomorphic T_w$'' if and only if
$T_v \isomorphic T_w$.

To simplify the presentation we fix a directed tree $T$
and an assignment $\alpha$ in $T$,
but the construction will be uniform in $T$ and $\alpha$.

We construct a directed graph $\graphG = (\graphV,\graphE)$
with labels $\graphC(v) \subseteq \nat$ for each $v \in \graphV$ as follows.
Let $\graphV \isdef N(T) \times V(T)^4 \times N(T)$.
The first component of each vertex is its \emph{type};
the meaning of the other components will become clear soon.
Although $\graphG$ will not be a tree,
it is helpful to think of it as a \emph{decision tree}
for deciding $T_v \isomorphic T_w$.
For each pair $(v,w) \in V(T)^2$,
we designate the vertex $\tup{a}_{v,w} = (0,v,w,v,w,0)$
to stand for ``$T_v \isomorphic T_w$''.
Let us call $(v,w)$ \emph{easy}
if $v,w$ satisfy the condition in line 1 of the procedure
(i.e., $\size(v) \neq \size(w)$, or $\childcount_s(v) \neq \childcount_s(w)$
 for some $s \in [0,\card{V(T_v)}-1]$).
Note that the set of all such easy pairs is $\LREC$-definable.%
\footnote{Using the $\dtc$-operator \eqref{eq:dtc-op} from Example~\ref{ex:dtc}
  we can construct an $\LREC[\set{E}]$-formula
  defining the descendant relation between vertices in a directed tree,
  and using this formula it is easy to determine the size
  and the number of children of size $s$ of a vertex.}
If $(v,w)$ is easy,
then $\tup{a}_{v,w}$ has no outgoing edges and $\graphC(\tup{a}_{v,w}) = \emptyset$.
On the other hand, if $(v,w)$ is not easy,
then $\graphG$ contains the following edges and labels
(see Figure~\ref{fig:iso} for an illustration):%
\begin{figure}
  \centering
  \begin{tikzpicture}[scale=0.9]
    \tikzstyle{vertex}=
      [draw,inner sep=2pt,minimum width=5em,minimum height=3.75ex]


    \node[vertex,label=right:{$n = \#$ children of $v$}] (type0)
      at (0,0) {$\tup{a}_{v,w}$};
    \node[vertex,label=right:{$n > 0$}] (type1)
      at (0,-1) {$\tup{a}_{v,w,\hat{v}}$};
    \node[vertex] (type2)
      at (0,-2) {$\tup{a}_{v,w,\hat{v},\hat{w},k}$};
    \node[right] at (type2.east)
      {$n = 1$ if $\childcount_{\size(\hat{v})}(v) = 1$;
       $n = 3$ otherwise};
    \node[vertex,label=left:{$n = k$}] (type3)
      at (-2.5,-3) {$\tup{a}_{v,w,\hat{v},\hat{w},k}^0$};
    \node[vertex,label=right:{$n = k$}] (type4)
      at (2.5,-3) {$\tup{a}_{v,w,\hat{v},\hat{w},k}^1$};

    \node[vertex,densely dotted] (type2-child)
      at (0,-3) {$\tup{a}_{\hat{v},\hat{w}}$};
    \node[vertex,densely dotted] (type3-child)
      at (-2.5,-4) {$\tup{a}_{\hat{v},\ring{w}}$};
    \node[vertex,densely dotted] (type4-child)
      at (2.5,-4) {$\tup{a}_{\ring{v},\hat{w}}$};


    \path[->] (type0) edge (type1);
    \path[->] (type1) edge (type2);
    \path[->] (type2) edge (type2-child) edge (type3) edge (type4);
    \path[->] (type3) edge (type3-child);
    \path[->] (type4) edge (type4-child);
  \end{tikzpicture}
  \caption{Sketch of ``decision tree'' for deciding $T_v \isomorphic T_w$.
    Here, $\hat{v},\ring{v}$ range over the children of $v$;
    $\hat{w},\ring{w}$ range over the children of $w$;
    and $k \in [\childcount_{\size(\hat{v})}(v)]$.
    Moreover, $\hat{v},\ring{v},\hat{w},\ring{w}$ all have the same size.
    Labels indicate which integers $n$ belong to the set $\graphC(\tup{a})$
    labelling each vertex $\tup{a}$.
    If $\hat{v}$ is the only child of $v$ of size $\size(\hat{v})$,
    then $\tup{a}_{\hat{v},\hat{w}}$ is the only child
    of $\tup{a}_{v,w,\hat{v},\hat{w},k}$.}
  \label{fig:iso}
\end{figure}
\begin{itemize}[leftmargin=*]
\item
  The vertex $\tup{a}_{v,w}$ has an outgoing edge
  to $\tup{a}_{v,w,\hat{v}} \isdef (1,v,w,\hat{v},w,0)$,
  for each child $\hat{v}$ of $v$.
  Furthermore, $\graphC(\tup{a}_{v,w}) = \set{\text{$\#$ of children of $v$}}$.
  This corresponds to ``for all children $\hat{v}$ of $v$\ldots''
  in the above procedure's step 2.
\item
  The vertex $\tup{a}_{v,w,\hat{v}}$ has an outgoing edge
  to $\tup{a}_{v,w,\hat{v},\hat{w},k} \isdef (2,v,w,\hat{v},\hat{w},k)$,
  for each child $\hat{w}$ of $w$ with $\size(\hat{w}) = \size(\hat{v})$
  and each $k \in [\childcount_{\size(\hat{v})}(v)]$.
  Furthermore, $\graphC(\tup{a}_{v,w,\hat{v}}) = N(T) \setminus \set{0}$.
  This branching corresponds to
  ``\ldots there is a child $\hat{w}$ of $w$ and a number $k$ such that\ldots''.
\item
  The vertex $\tup{a}_{v,w,\hat{v},\hat{w},k}$ has an outgoing edge
  to $\tup{a}_{\hat{v},\hat{w}}$.
  If $\hat{v}$ is the only child of $v$ of size $\size(\hat{v})$,
  then this is the only outgoing edge,
  and we let $\graphC(\tup{a}_{v,w,\hat{v},\hat{w},k}) = \set{1}$.
  Otherwise, there are additional outgoing edges to
  $\tup{a}_{v,w,\hat{v},\hat{w},k}^i = (3+i,v,w,\hat{v},\hat{w},k)$
  for $i \in \set{0,1}$,
  and we let $\graphC(\tup{a}_{v,w,\hat{v},\hat{w},k}) = \set{3}$.
  This corresponds to conditions 2a--2c.
\item
  The vertex $\tup{a}_{v,w,\hat{v},\hat{w},k}^0$
  has outgoing edges to $\tup{a}_{\hat{v},\ring{w}}$ for each child $\ring{w}$
  of $w$ of size $\size(\hat{v})$,
  and $\tup{a}_{v,w,\hat{v},\hat{w},k}^1$
  has outgoing edges to $\tup{a}_{\ring{v},\hat{w}}$ for each child $\ring{v}$
  of $v$ of size $\size(\hat{w}) = \size(\hat{v})$.
  Furthermore, $\graphC(\tup{a}_{v,w,\hat{v},\hat{w},k}^i) = \set{k}$.
  The vertex $\tup{a}_{v,w,\hat{v},\hat{w},k}^i$
  corresponds to condition 2b for $i = 0$,
  and to 2c for $i = 1$.
\end{itemize}
From the above description it should be easy
to construct $\LREC[\set{E}]$-formulae $\phi_{\graphE}(\tup{u},\tup{u}')$
and $\phi_{\graphC}(\tup{u},p)$,
where $\tup{u} = (q_t,x,y,\hat{x},\hat{y},q_k)$
and $\tup{u}' = (q_t',x',y',\hat{x}',\hat{y}',q_k')$,
such that $\phi_{\graphE}[T,\alpha;\tup{u},\tup{u}'] = \graphE$,
and $\set{\num{n} \mid n \in \phi_{\graphC}[T,\alpha[\tup{a}/\tup{u}];p]} = \graphC(\tup{a})$
for each $\tup{a} \in \graphV$.

Let
\[
  \fiso(x,y)\, \isdef\,
  \exists \tup{r}\,
  \lrecx{\tup{u}}{\tup{u}'}{p}{\phi_{\graphE}}{\phi_{\graphC}}((0,x,y,x,y,0),\tup{r}),
\]
where $\tup{r}$ is a 5-tuple of number variables.%
\footnote{We use 0 as a constant,
  but clearly we can modify $\fiso$ to a formula that does not use
  the constant 0.}
Let $X$ be the relation defined by $\fiso$ in $(T,\alpha)$.
Then:

\begin{lemma}
  \label{lem:tree-iso}
  Let $v,w \in V(T)$.
  \begin{enumerate}[label=\emph{(\arabic*)},leftmargin=*]
  \item
    \label{lem:tree-iso-correctness}
    If $(\tup{a}_{v,w},\ell) \in X$ for some $\ell \in \nat$,
    then $T_v \isomorphic T_w$.
  \item
    \label{lem:tree-iso-completeness}
    If $T_v \isomorphic T_w$,
    then for all $\ell \geq \size(v)^5$ we have $(\tup{a}_{v,w},\ell) \in X$.
  \end{enumerate}
\end{lemma}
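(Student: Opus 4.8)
I would prove statements~\ref{lem:tree-iso-correctness} and~\ref{lem:tree-iso-completeness} by induction on $\size(v)$, preceded by one structural observation about $\graphG$: the in-degrees in $\graphG$ are ``local''. Every vertex of type $1$ through $4$ (of the form $\tup{a}_{v,w,\hat v}$, $\tup{a}_{v,w,\hat v,\hat w,k}$, $\tup{a}^0_{\ldots}$ or $\tup{a}^1_{\ldots}$) has a unique possible predecessor, hence in-degree $1$, while the in-degree of a type-$0$ vertex $\tup{a}_{p,q}$ — assuming $p,q$ have parents in $T$, $\size(p)=\size(q)$, and $(\operatorname{par}(p),\operatorname{par}(q))$ is not easy — equals $c(1+2c)$ with $c=\childcount_{\size(p)}(\operatorname{par}(p))$ (and is $0$ otherwise). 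From this I would draw two consequences: first, all children $\ring w$ of a fixed vertex having a fixed size give vertices $\tup{a}_{p,\ring w}$ of equal in-degree; second — since $\graphG$ is acyclic (following edges strictly increases the depth of the first structure component whenever one returns to type $0$) and the vertices reachable from $\tup{a}_{p,q}$ all have their structure components ranging over $T_p$ and $T_q$ — whether $(\tup{a}_{p,q},\ell)\in X$ depends only on $\ell$ and the isomorphism types of $T_p$ and $T_q$. I will call this last fact \emph{isomorphism-invariance of $X$}.

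For~\ref{lem:tree-iso-correctness} I induct on $\size(v)$. The case $\size(v)=1$ is immediate, since $(\tup{a}_{v,w},\ell)\in X$ forces the label of $\tup{a}_{v,w}$ to be nonempty, i.e.\ $(v,w)$ not easy, hence $\size(w)=1$. For $\size(v)\ge 2$, membership again forces $(v,w)$ not easy, so $\size(v)=\size(w)$ and $\childcount_s(v)=\childcount_s(w)$ for all $s$; and because the label of $\tup{a}_{v,w}$ is the singleton ``number of children of $v$'' and $\tup{a}_{v,w}$ has exactly that many out-neighbours, every $\tup{a}_{v,w,\hat v}$ must be in $X$ at the reduced resource. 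Tracing a branch down to $\tup{a}_{v,w,\hat v,\hat w,k}$, its edge to $\tup{a}_{\hat v,\hat w}$ together with the induction hypothesis gives $T_{\hat v}\isomorphic T_{\hat w}$; and when $\hat v$ is not the only child of $v$ of its size, $\tup{a}^0_{v,w,\hat v,\hat w,k}$ and $\tup{a}^1_{v,w,\hat v,\hat w,k}$ lie in $X$ too. Here is the step I expect to be the crux: the out-neighbours $\tup{a}_{\hat v,\ring w}$ of $\tup{a}^0_{v,w,\hat v,\hat w,k}$ — one per child $\ring w$ of $w$ of size $\size(\hat v)$ — all share an in-degree, so they receive the \emph{same} resource; by the induction hypothesis those of them in $X$ satisfy $T_{\ring w}\isomorphic T_{\hat v}$, and by isomorphism-invariance \emph{either all or none} of the $\tup{a}_{\hat v,\ring w}$ with $T_{\ring w}\isomorphic T_{\hat v}$ lie in $X$; since the label $\{k\}$ is a nonzero singleton the count is $k\ge 1$, so it is ``all'', forcing $k$ to be the number $n_{\hat v}$ of children of $w$ isomorphic to $T_{\hat v}$. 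The symmetric argument at $\tup{a}^1_{v,w,\hat v,\hat w,k}$ forces $k$ to be the number $m_{\hat v}$ of children of $v$ isomorphic to $T_{\hat v}$. Hence $m_{\hat v}=n_{\hat v}$ for every child $\hat v$ of $v$ (trivially so when $\hat v$ is unique of its size); together with $\childcount_s(v)=\childcount_s(w)$ for all $s$, a short counting argument shows $v$ and $w$ have the same multiset of child-isomorphism-types, and gluing the child isomorphisms yields $T_v\isomorphic T_w$.

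For~\ref{lem:tree-iso-completeness} I again induct on $\size(v)$, the case $\size(v)=1$ being immediate. Assuming $T_v\isomorphic T_w$ (hence $(v,w)$ not easy) and $\ell\ge\size(v)^5$, for each child $\hat v$ of $v$ I pick a child $\hat w$ of $w$ with $T_{\hat w}\isomorphic T_{\hat v}$ and take $k\isdef n_{\hat v}$; using~\ref{lem:tree-iso-correctness} on proper subtrees to exclude non-isomorphic siblings and the induction hypothesis of~\ref{lem:tree-iso-completeness} to certify the isomorphic ones, I get that $\tup{a}^0_{v,w,\hat v,\hat w,k}$ and $\tup{a}^1_{v,w,\hat v,\hat w,k}$ each have exactly $k$ of their out-neighbours in $X$, that $\tup{a}_{\hat v,\hat w}\in X$, hence $\tup{a}_{v,w,\hat v,\hat w,k}\in X$ (its label $\{1\}$ or $\{3\}$ matching the detected count), hence $\tup{a}_{v,w,\hat v}\in X$ (label ``$>0$''), hence $\tup{a}_{v,w}\in X$. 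The single genuine calculation is that the resource surviving from $\tup{a}_{v,w}$ down to any $\tup{a}_{\hat v,\ring w}$ or $\tup{a}_{\ring v,\hat w}$ stays $\ge\size(\hat v)^5$, so that the induction hypothesis applies to that subtree pair: this resource is at least $\lfloor(\ell-4)/d\rfloor$, where $d$ is the common in-degree of these type-$0$ vertices, and the in-degree bounds from the observation ($d=1$ if $\childcount_{\size(\hat v)}(v)=1$; otherwise $d\le 3\,\childcount_{\size(\hat v)}(v)^2$, with $\childcount_{\size(\hat v)}(v)\cdot\size(\hat v)<\size(v)$ and $\size(\hat v)<\size(v)/2$) give $d\cdot\size(\hat v)^5<\size(v)^5$ with room to spare for the additive losses at the in-degree-$1$ steps, so $\ell\ge\size(v)^5$ suffices.

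I expect the main obstacle to be exactly the step in~\ref{lem:tree-iso-correctness} highlighted above: because only finitely many resource levels are available, a direct induction yields merely $k\le n_{\hat v}$ and $k\le m_{\hat v}$ (the detected counts may undershoot the true ones), which is too weak to conclude $T_v\isomorphic T_w$. Upgrading these to equalities is precisely where one needs the isomorphism-invariance of $X$ and the equal-in-degree property of the sibling vertices — ``isomorphic siblings are detected or missed together''. Establishing isomorphism-invariance (which hinges on the local-in-degree computation) and checking the resource inequality in~\ref{lem:tree-iso-completeness} are the two parts that will require the most care.
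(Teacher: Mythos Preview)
Your proposal is correct and follows the same inductive structure as the paper's proof (induction on $\size(v)$, unwinding the decision tree, and in part~\ref{lem:tree-iso-completeness} the resource estimate $\lfloor(\ell-4)/d\rfloor\ge\size(\hat v)^5$ with $d\le 3\,\childcount_{\size(\hat v)}(v)^2$). You are in fact more careful than the paper at the step you flag as the crux: the paper writes only that ``there are exactly $k$ children $\ring{w}$ of $w$ such that $(\tup{a}_{\hat{v},\ring{w}},\ell')\in X$ for some $\ell'$'' and then invokes the induction hypothesis to match step~2 of the procedure, whereas your isomorphism-invariance observation (isomorphic siblings share in-degree, hence receive the same resource, hence are detected or missed together) is exactly what makes that ``exactly~$k$'' rigorous.
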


\begin{proof}
  \ad{\ref{lem:tree-iso-correctness}}
  The proof is by induction on $\size(v)$.
  If $\size(v) = 1$ and $(\tup{a}_{v,w},\ell) \in X$,
  then $(v,w)$ is not easy,
  which implies $\size(w) = 1$ and hence $T_v \isomorphic T_w$.

  Now let $\size(v) = s+1$ for some $s \geq 1$.
  If $(\tup{a}_{v,w},\ell) \in X$,
  then $(v,w)$ is not easy, implying $\size(w) = s+1$
  and $\childcount_t(v) = \childcount_t(w)$ for all $t \in \nat$.
  It is then easy to see that for all children $\hat{v}$ of $v$ in $T$
  there is a child $\hat{w}$ of $w$ in $T$
  and a number $k \in [1,\#_{\size(\hat{v})} v]$ such that
  \begin{itemize}[leftmargin=*]
  \item
    $(\tup{a}_{\hat{v},\hat{w}},\ell') \in X$
    for some $\ell' \in \nat$,
  \item
    there are exactly $k$ children $\ring{w}$ of $w$
    such that $(\tup{a}_{\hat{v},\ring{w}},\ell') \in X$
    for some $\ell' \in \nat$,
    and
  \item
    there are exactly $k$ children $\ring{v}$ of $v$
    such that $(\tup{a}_{\ring{v},\hat{w}},\ell') \in X$
    for some $\ell' \in \nat$.
  \end{itemize}
  By the induction hypothesis,
  this corresponds to step~2 of the procedure
  given at the beginning of Section~\ref{sec:tree-iso},
  and therefore implies $T_v \isomorphic T_w$.

  \ad{\ref{lem:tree-iso-completeness}}
  The proof is by induction on $\size(v)$.
  Suppose that $\size(v) = 1$ and $T_v \isomorphic T_w$.
  Then $\size(w) = 1$ which implies that $(v,w)$ is not easy.
  Furthermore, as $v$ has no children in $T$,
  we know that $\tup{a}_{v,w}$ has no children in $\graphG$
  and $\graphC(\tup{a}_{v,w}) = \set{0}$.
  Hence, $(\tup{a}_{v,w},\ell) \in X$ for all $\ell \geq 1 = \size(v)^5$.

  Now suppose that $\size(v) = s+1$ for some $s \geq 1$,
  and $T_v \isomorphic T_w$.
  First note that $(v,w)$ is not easy.
  Let $\ell \geq (s+1)^5$.
  We show that $(\tup{a}_{v,w,\hat{v}},\ell-1) \in X$
  for all children $\hat{v}$ of $v$,
  which implies $(\tup{a}_{v,w},\ell) \in X$.
  Let $\hat{v}$ be a child of $v$ in $T$.
  Since $T_v \isomorphic T_w$,
  there is a child $\hat{w}$ of $w$ of size $s' \isdef \size(\hat{v})$
  and a number $k \in [\childcount_{s'}(v)]$ such that
  \begin{itemize}[leftmargin=*]
  \item
    $T_{\hat{v}} \isomorphic T_{\hat{w}}$,
  \item
    there are exactly $k$ children $\ring{w}$ of $w$ of size $s'$
    such that $T_{\hat{v}} \isomorphic T_{\ring{w}}$, and
  \item
    there are exactly $k$ children $\ring{v}$ of $v$ of size $s'$
    such that $T_{\ring{v}} \isomorphic T_{\hat{w}}$.
  \end{itemize}
  Pick such $\hat{w}$ and $k$.

  Let us deal with the case $\childcount_{s'}(v) = 1$ first.
  In this case,
  $\tup{a}_{\hat{v},\hat{w}}$ is the only child of
  $\tup{a}_{v,w,\hat{v},\hat{w},k}$;
  moreover, $\tup{a}_{v,w,\hat{v},\hat{w},k}$ and $\tup{a}_{\hat{v},\hat{w}}$
  have exactly one incoming edge each.
  Since $T_{\hat{v}} \isomorphic T_{\hat{w}}$ and $\ell-3 \geq (s')^5$,
  the induction hypothesis implies $(\tup{a}_{\hat{v},\hat{w}},\ell-3) \in X$.
  Consequently $(\tup{a}_{v,w,\hat{v}},\ell-1) \in X$.

  In the following we assume $\childcount_{s'}(v) \geq 2$.
  Let $d \isdef 3 \cdot \childcount_{s'}(v)^2$.
  Note that all vertices in Figure~\ref{fig:iso}
  except the type 0-vertices have exactly one incoming edge,
  and that the in-degree $d'$ of a type 0-vertex $\tup{a}_{v',w'}$,
  where $v',w'$ are children of $v$ and $w$, respectively, of size $s'$
  is at most $d$,
  because it has incoming edges from
  \begin{itemize}[leftmargin=*]
  \item
    vertices $\tup{a}_{v,w,v',w',k}$,
    where $v$ and $w$ are the (unique) parents of $v'$ and $w'$, respectively,
    and $k \in [\childcount_{s'}(v)]$;
  \item
    vertices $\tup{a}_{v,w,v',\hat{w},k}^0$,
    where $v,w,k$ are as above and $\hat{w}$ is a child of $w$ of size $s'$;
    and
  \item
    vertices $\tup{a}_{v,w,\hat{v},w',k}^1$,
    where $v,w,k$ are as above and $\hat{v}$ is a child of $v$ of size $s'$.
  \end{itemize}
  Let $\ell' \isdef \lfloor(\ell-4)/d\rfloor$.
  Then
  \[
    \ell'
    \,\geq\, \frac{\ell-d-3}{d}
    \,\geq\, \frac{s^5}{d} + \frac{s^4}{d} - 2
    \,\geq\, \frac{\childcount_{s'}(v)^5 \cdot (s')^5}
                  {3 \cdot \childcount_{s'}(v)^2}
             + \frac{\childcount_{s'}(v)^4}
                  {3 \cdot \childcount_{s'}(v)^2} - 2
    \,\geq\, 2 (s')^5 - 1
    \,\geq\, (s')^5,
  \]
  where for the second inequality we use $(s+1)^5 \geq s^5 + s^4$,
  for the third one we use
  $\childcount_{s'}(v) \cdot s' \leq s$,
  and for the fourth one we use $\childcount_{s'}(v) \geq 2$.
  Hence, by the induction hypothesis we have:
  \begin{itemize}[leftmargin=*]
  \item
    $(\tup{a}_{\hat{v},\hat{w}},\lfloor (\ell-3)/d' \rfloor) \in X$
    (note that $\lfloor (\ell-3)/d' \rfloor \geq \ell'$).
  \item
    There are exactly $k$ children $\ring{w}$ of $w$ of size $s'$
    with $(\tup{a}_{\hat{v},\ring{w}},\lfloor (\ell-4)/d' \rfloor) \in X$
    (note that $\lfloor (\ell-4)/d' \rfloor \geq \ell'$),
    which implies $(\tup{a}_{v,w,\hat{v},\hat{w},k}^0,\ell-3) \in X$.
  \item
    There are exactly $k$ children $\ring{v}$ of $v$ of size $s'$
    with $(\tup{a}_{\ring{v},\hat{w}},\lfloor (\ell-4)/d' \rfloor) \in X$,
    which implies that $(\tup{a}_{v,w,\hat{v},\hat{w},k}^1,\ell-3) \in X$.
  \end{itemize}
  It follows immediately that $(\tup{a}_{v,w,\hat{v},\hat{w},k},\ell-2) \in X$,
  and therefore $(\tup{a}_{v,w,\hat{v}},\ell-1) \in X$.
\end{proof}

\begin{corollary}
  Let $v,w \in V(T)^2$.
  Then, $T \models \fiso[v,w]$ if and only if $T_v \isomorphic T_w$.
\end{corollary}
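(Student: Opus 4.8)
The plan is to read the corollary off directly from Lemma~\ref{lem:tree-iso} and the semantics of the $\lrec$ operator, so the proof will be short. First I would unwind what $T \models \fiso[v,w]$ means. Since
\[
  \fiso(x,y) \,=\, \exists \tup{r}\,
  \lrecx{\tup{u}}{\tup{u}'}{p}{\phi_{\graphE}}{\phi_{\graphC}}((0,x,y,x,y,0),\tup{r})
\]
with $\tup{r}$ a $5$-tuple of number variables, and since the relation $X$ defined by the $\lrec$-subformula of $\fiso$ depends only on $T$ (the formulae $\phi_{\graphE}$ and $\phi_{\graphC}$ only describe the graph $\graphG$ and its labelling $\graphC$, which are built from $T$ alone), evaluating the existential quantifier and applying the defining clause for $\lrec$ yields
\[
  T \models \fiso[v,w]
  \quad\Longleftrightarrow\quad
  (\tup{a}_{v,w},\ell) \in X \text{ for some } \ell \in \nat \text{ with } \ell \leq (\card{V(T)}+1)^5-1,
\]
because $\tup{a}_{v,w} = (0,v,w,v,w,0)$ and the numbers $\num{\tup{n}}$ with $\tup{n} \in N(T)^5$ are exactly the integers in $[0,(\card{V(T)}+1)^5-1]$.

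Next I would handle the two directions. The ``only if'' direction is immediate from Lemma~\ref{lem:tree-iso}\,\ref{lem:tree-iso-correctness}: if $T \models \fiso[v,w]$ then $(\tup{a}_{v,w},\ell) \in X$ for some $\ell$, hence $T_v \isomorphic T_w$. For the ``if'' direction, assume $T_v \isomorphic T_w$. The only thing to verify is that the bound $\size(v)^5$ provided by Lemma~\ref{lem:tree-iso}\,\ref{lem:tree-iso-completeness} can be named by the resource tuple $\tup{r}$: since $\size(v) = \card{V(T_v)} \leq \card{V(T)}$ we have $\size(v)^5 \leq \card{V(T)}^5 < (\card{V(T)}+1)^5$, so $\size(v)^5 \leq (\card{V(T)}+1)^5-1$ and thus $\size(v)^5 = \num{\tup{n}}$ for some $\tup{n} \in N(T)^5$. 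Lemma~\ref{lem:tree-iso}\,\ref{lem:tree-iso-completeness} then gives $(\tup{a}_{v,w},\size(v)^5) \in X$, so $T \models \fiso[v,w]$.

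Finally I would remark that this argument is carried out under the standing assumption $\card{V(T)} \geq 4$ used in building $\fiso$; for directed trees on fewer than four vertices one invokes the easy adaptation of the construction mentioned earlier, or checks the finitely many cases by hand. I do not expect a genuine obstacle: the entire content of the proof is matching the range of the resource tuple $\tup{r}$ with the bound coming from Lemma~\ref{lem:tree-iso}, i.e.\ the inequality $\size(v)^5 < (\card{V(T)}+1)^5$ above.
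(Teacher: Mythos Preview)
Your proof is correct and follows essentially the same approach as the paper: both arguments read the corollary off directly from the two parts of Lemma~\ref{lem:tree-iso}, verifying that the resource tuple $\tup{r}$ is long enough to name a value $\ell \geq \size(v)^5$. The only cosmetic difference is that the paper chooses the witness $\ell = \card{N(T)}^{5}-1$ (the maximum representable value) rather than your $\ell = \size(v)^5$, and phrases the biconditional slightly more elliptically; your version is a bit more explicit about unwinding the existential quantifier.
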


\begin{proof}
  $T \models \fiso[v,w]$ holds precisely
  when $(\tup{a}_{v,w},\card{N(T)}^{\len{\tup{r}}}-1) \in X$.
  Furthermore,
  $\card{N(T)}^{\len{\tup{r}}}-1 \geq \card{V(T)}^5 \geq \size(v)^5$.
  Therefore, by the preceding lemma,
  $(\tup{a}_{v,w},\card{N(T)}^{\len{\tup{r}}}-1) \in X$
  is equivalent to $T_v \isomorphic T_w$,
  and the claim follows.
\end{proof}

\subsection{Defining an Order on Directed Trees}
\label{sec:tree-order}

Lindell's tree canonisation algorithm is based on a logspace-computable
linear order on isomorphism classes of directed trees.
We show that a slightly refined version of this order is $\LREC$-definable.

Let $T$ be a directed tree.
For each $v \in V(T)$ let
$
  \pi(v) \isdef
  \bigl(\size(v),\childcount_1(v),\dotsc,\childcount_{\size(v)-1}(v)\bigr)
$
be the \emph{profile} of $v$.%
\footnote{Lindell's order 
  can be obtained by replacing $\pi(v)$
  with $\pi'(v) \isdef \bigl(\size(v),\#\text{children of $v$}\bigr)$.}
Let $\preceq$ be the total preorder on $V(T)$,%
\footnote{That is, $\preceq$ is a preorder on $V(T)$
  such that for all $v,w \in V(T)$ we have $v \preceq w$ or $w \preceq v$.}
where $v \prec w$ whenever
\begin{enumerate}[leftmargin=*]
\item
  $\pi(v) < \pi(w)$ lexicographically, or
\item
  $\pi(v) = \pi(w)$ and the following is true:
  Let $v_1,\dotsc,v_k$ and $w_1,\dotsc,w_k$ be the children of $v$ and $w$,
  respectively,
  ordered such that $v_1 \preceq \dotsb \preceq v_k$
  and $w_1 \preceq \dotsb \preceq w_k$.
  Then there is an $i \in [k]$ with $v_i \prec w_i$,
  and for all $j < i$ we have $v_j \preceq w_j$ and $w_j \preceq v_j$.
\end{enumerate}
Note that $v \preceq w$ and $w \preceq v$ imply $T_v \isomorphic T_w$.
We show that $\preceq$ is $\LREC$-definable.

To simplify the presentation, we again fix a directed tree $T$
and an assignment $\alpha$,
and we assume that $\card{V(T)} \geq 4$.

We apply the $\lrec$-operator to the following graph $\graphG = (\graphV,\graphE)$
with labels $\graphC(v) \subseteq \nat$ for each $v \in \graphV$.
Let $\graphV \isdef N(T) \times V(T)^4 \times N(T)$.
For each $(v,w) \in V(T)^2$,
the vertex $\tup{a}_{v,w} = (0,v,w,v,w,0)$ represents ``$v \prec w$''.
If $\pi(v) < \pi(w)$,
then $\tup{a}_{v,w}$ has no outgoing edges and $\graphC(\tup{a}_{v,w}) = \set{0}$.
If $\pi(v) > \pi(w)$,
then $\tup{a}_{v,w}$ has no outgoing edges and $\graphC(\tup{a}_{v,w}) = \emptyset$.
Note that the relation ``$\pi(v) \leq \pi(w)$'' is $\LREC$-definable.

Suppose that $\pi(v) = \pi(w)$.
For all $t,u \in V(T)$
let $\theta_u(t)$ be the number of children $u'$ of $u$
with $T_{u'} \isomorphic T_t$.
Call a child $\hat{v}$ of $v$ \emph{good}
if $\theta_v(\hat{v}) > \theta_w(\hat{v})$
and for all children $v'$ of $v$ with $\size(v') < \size(\hat{v})$
we have $\theta_v(v') = \theta_w(v')$.
Then it is not hard to see that $v \prec w$
precisely if there is a good child $\hat{v}$ of $v$,
a child $\hat{w}$ of $w$ of size $s \isdef \size(\hat{v})$
and a $k \in [\childcount_s(v)]$ such that:
\begin{itemize}[leftmargin=*]
\item
  $\hat{v} \prec \hat{w}$;
\item
  there are exactly $k$ children $\ring{w}$ of $w$ of size $s$
  with $\ring{w} \prec \hat{v}$;
\item
  there are exactly $k$ children $\ring{v}$ of $v$ of size $s$
  with $\ring{v} \prec \hat{w}$ and $T_{\ring{v}} \not\isomorphic T_{\hat{v}}$;
\item
  and for all $k$ children $w'$ of $w$ of size $s$ with $w' \prec \hat{v}$
  we have $\theta_v(w') = \theta_w(w')$.
\end{itemize}
The ``decision tree'' in Figure~\ref{fig:ord} checks precisely these
conditions.%
\begin{figure}
  \centering
  \begin{tikzpicture}[xscale=0.8,yscale=0.9]
    \tikzstyle{vertex}=
      [draw,inner sep=2pt,minimum width=5em,minimum height=3.75ex]

    \node[vertex,label=right:{$n > 0$}] (root)
      at (0,0) {$\tup{a}_{v,w}$};

    \node[vertex] (find)
      at (0,-1) {$(1,v,w,\hat{v},\hat{w},k)$};
    \node[right] at (find.east)
      {$n = 1$ if $\childcount_{\size(\hat{v})}(v) = 1$;
       $n = 4$ otherwise};

    \node[vertex,densely dotted] (check1)
      at (-7,-2) {$\tup{a}_{\hat{v},\hat{w}}$};

    \node[vertex,label=right:{$n = k$}] (count_a)
      at (-2.5,-2) {$(2,v,w,\hat{v},\hat{w},k)$};
    \node[vertex,densely dotted] (check2)
      at (-2.5,-3) {$\tup{a}_{\ring{w},\hat{v}}$};

    \node[vertex,label=right:{$n = k$}] (count_b)
      at (2.5,-2) {$(3,v,w,\hat{v},\hat{w},k)$};
    \node[vertex,densely dotted] (check3)
      at (2.5,-3) {$\tup{a}_{\ring{v},\hat{w}}$};

    \node[vertex,label=right:{$n = k$}] (count_c)
      at (7.5,-2) {$(4,v,w,\hat{v},\hat{w},k)$};
    \node[vertex,densely dotted] (check4)
      at (7.5,-3) {$\tup{a}_{w',\hat{v}}$};

    \path[->] (root) edge (find);
    \path[->] (find) edge (check1) edge (count_a) edge (count_b) edge (count_c);
    \path[->] (count_a) edge (check2);
    \path[->] (count_b) edge (check3);
    \path[->] (count_c) edge (check4);
  \end{tikzpicture}
  \caption{Gadget for deciding $v \prec w$ when $\pi(v) = \pi(w)$.
    Here, $\hat{v}$ ranges over good children of $v$;
    $\ring{v}$ ranges over children of $v$ of size $s \isdef \size(v)$
    and $T_{\ring{v}} \not\isomorphic T_{\hat{v}}$;
    $\hat{w},\ring{w}$ range over children of $w$ of size $s$;
    $w'$ ranges over children of $w$ of size $s$
    with $\theta_v(w') = \theta_w(w')$;
    and $k \in [\childcount_s(v)]$.
    The edges from $(2,v,w,\hat{v},\hat{w},k)$ to $(t,\dotsc)$
    for $t \in \set{2,3,4}$
    exist only if $\childcount_s(v) > 1$.
    Labels indicate which integers $n$ belong to the set $\graphC(\tup{a})$
    labelling each vertex $\tup{a}$.}
  \label{fig:ord}
\end{figure}

Using the formula $\fiso$ from the previous section
it is now straightforward to construct $\LREC[\set{E}]$-formulae
$\phi_{\graphE}(\tup{u},\tup{u}')$ and $\phi_{\graphC}(\tup{u},p)$
that define the edge relation $\graphE$ of $\graphG$
and the sets $\graphC(\tup{a})$ for each $\tup{a} \in \graphV$,
where $\tup{u}$ and $\tup{u}'$ are as in the definition of $\fiso$.
Let
\[
  \ford(x,y)
  \, \isdef\,
  \exists \tup{r}\,
  \lrecx{\tup{u}}{\tup{u}'}{p}{\phi_{\graphE}}{\phi_{\graphC}}((0,x,y,x,y,0),\tup{r}),
\]
where $\tup{r}$ is a 5-tuple of number variables.
Let $X$ be the relation defined by $\ford$ in $(T,\alpha)$.
We then have:

\begin{lemma}
  \label{lem:tree-ord}
  Let $v,w \in V(T)$.
  \begin{enumerate}[label=\emph{(\arabic*)},leftmargin=*]
  \item
    \label{lem:tree-ord-correctness}
    If $(\tup{a}_{v,w},\ell) \in X$ for some $\ell \in \nat$, then $v \prec w$.
  \item
    \label{lem:tree-ord-completeness}
    If $v \prec w$,
    then for all $\ell \geq \size(v)^5$ we have $(\tup{a}_{v,w},\ell) \in X$.
  \end{enumerate}
\end{lemma}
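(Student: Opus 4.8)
The plan is to mirror, almost line for line, the proof of Lemma~\ref{lem:tree-iso}. The graph $\graphG = (\graphV,\graphE)$ underlying $\ford$ is again a ``decision tree'' whose type-$0$ vertices $\tup{a}_{v,w}$ stand for ``$v \prec w$'' and whose remaining vertices (the types $1$--$4$ of Figure~\ref{fig:ord}) implement the quantifier structure of the recursive test; the auxiliary predicates occurring in the construction (``$\hat v$ is a good child of $v$'', ``$T_{\ring v} \not\isomorphic T_{\hat v}$'', ``$\theta_v(w') = \theta_w(w')$'') are $\LREC$-definable because $\fiso$ is, and they are computed correctly by the corollary at the end of Section~\ref{sec:tree-iso}. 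Hence I would prove both parts by induction on $\size(v)$, applying the induction hypothesis to the recursive type-$0$ vertices $\tup{a}_{\hat v,\hat w}$, $\tup{a}_{\ring w,\hat v}$, $\tup{a}_{\ring v,\hat w}$, $\tup{a}_{w',\hat v}$ (all of which refer to children of $v$ and $w$, hence to strictly smaller subtrees) and using the correctness of $\fiso$ for everything else. The base cases $\size(v) = 1$ are immediate: then $\pi(v)$ is minimal, so $\tup{a}_{v,w}$ has no outgoing edges and its label is $\set 0$ when $\pi(v) < \pi(w)$ and $N(T)\setminus\set 0$ when $\pi(v) = \pi(w)$, matching the fact that $v \prec w$ iff $\size(w) > 1$.

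For part~\ref{lem:tree-ord-correctness}, suppose $(\tup{a}_{v,w},\ell) \in X$. If $\pi(v) < \pi(w)$ then $v \prec w$ by definition, while $\pi(v) > \pi(w)$ is impossible since then $\graphC(\tup{a}_{v,w}) = \emptyset$. If $\pi(v) = \pi(w)$, I would chase the edges of Figure~\ref{fig:ord} downward: the label $N(T)\setminus\set 0$ at $\tup{a}_{v,w}$ forces some $(1,v,w,\hat v,\hat w,k) \in X$, and descending through the type-$1$ vertex (label $\set 1$ or $\set 4$) and the type-$2$, $3$, $4$ vertices (all labelled $\set k$), together with the induction hypothesis applied to their type-$0$ children, recovers exactly the four conditions characterising ``$v \prec w$'' stated before Figure~\ref{fig:ord} (namely: $\hat v$ is good, $\hat v \prec \hat w$, exactly $k$ children $\ring w$ of $w$ of size $s$ with $\ring w \prec \hat v$, exactly $k$ children $\ring v$ of $v$ of size $s$ with $T_{\ring v}\not\isomorphic T_{\hat v}$ and $\ring v \prec \hat w$, and $\theta_v(w')=\theta_w(w')$ for those $k$ children $w'$). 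Hence $v \prec w$. This is precisely the argument of part~\ref{lem:tree-iso-correctness} of Lemma~\ref{lem:tree-iso}.

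For part~\ref{lem:tree-ord-completeness}, assume $v \prec w$ and $\ell \geq \size(v)^5$, and argue by induction on $\size(v) = s+1$. If $\pi(v) < \pi(w)$, then $\tup{a}_{v,w}$ has label $\set 0$ and no outgoing edges, so $(\tup{a}_{v,w},\ell) \in X$ whenever $\ell \geq 1$. If $\pi(v) = \pi(w)$, fix a good child $\hat v$, a child $\hat w$ of $w$ of size $s' \isdef \size(\hat v)$, and $k \in [\childcount_{s'}(v)]$ realising the four conditions, and aim to show $(1,v,w,\hat v,\hat w,k) \in X$ at its resource, which then yields $(\tup{a}_{v,w},\ell) \in X$. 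As in Lemma~\ref{lem:tree-iso}, every vertex of Figure~\ref{fig:ord} other than a type-$0$ vertex has in-degree $1$, while a type-$0$ vertex $\tup{a}_{v',w'}$ with $v',w'$ children of $v,w$ of size $s'$ has in-degree at most $d \isdef c\cdot\childcount_{s'}(v)^2$ for a small absolute constant $c$ (accounting for the incoming edges from the type-$1$, $2$, $3$, $4$ vertices and the at most $\childcount_{s'}(v)$ choices of $k$), and this bound is attained uniformly, so the resource delivered to all such type-$0$ grandchildren is the same single value $\ell' \isdef \lfloor(\ell - O(1))/d\rfloor$. The resource arithmetic is then identical to that in part~\ref{lem:tree-iso-completeness} of Lemma~\ref{lem:tree-iso}: using $(s+1)^5 \geq s^5 + s^4$, $\childcount_{s'}(v)\cdot s' \leq s$ and $\childcount_{s'}(v) \geq 2$ one gets $\ell' \geq (s')^5$, so the induction hypothesis places the relevant type-$0$ grandchildren in $X$ with the required counts, and propagating back up through the type-$4$, $3$, $2$, $1$ vertices puts $(1,v,w,\hat v,\hat w,k)$ in $X$ at resource $\ell-1$, hence $\tup{a}_{v,w}$ in $X$ at resource $\ell$.

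The main obstacle is the same as in Lemma~\ref{lem:tree-iso}: the careful accounting of in-degrees in $\graphG$ and the resulting repeated division of the resource term along the paths of the gadget, so that the resource stays at least $\size(\cdot)^5$ down to every recursive call. Compared with Lemma~\ref{lem:tree-iso}, the top of the gadget is here purely existential (a single type-$1$ layer carrying the label $N(T)\setminus\set 0$, rather than the ``for all $\hat v$, there is $\hat w,k$'' two-layer structure of Figure~\ref{fig:iso}), which shortens the paths and, if anything, makes the arithmetic easier; the only genuinely new feature is the extra type-$4$ branch enforcing ``$\theta_v(w') = \theta_w(w')$'', which merely enlarges the constant $c$ in $d$ and requires one more appeal to the correctness of $\fiso$. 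Everything else is a direct transcription of the earlier argument.
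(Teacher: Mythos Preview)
Your proposal is correct and follows essentially the same approach as the paper: induction on $\size(v)$, the same case split on $\pi(v)$ versus $\pi(w)$, the same separate treatment of $\childcount_{s'}(v)=1$ versus $\childcount_{s'}(v)\geq 2$, and the same in-degree bound $d = c\cdot\childcount_{s'}(v)^2$ (the paper takes $c=4$) feeding into the identical resource arithmetic $(s+1)^5 \geq s^5+s^4$, $\childcount_{s'}(v)\cdot s' \leq s$. The only minor slips are cosmetic (it is the type-$0$ vertex $\tup a_{v,w}$, not the type-$1$ layer, that carries the label $N(T)\setminus\set 0$; and you should also note the symmetric in-degree bound for the swapped-role vertices $\tup a_{w',v'}$, since types $2$ and $4$ feed into those), but neither affects the argument.
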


\begin{proof}
  The proof is similar to the proof of Lemma~\ref{lem:tree-iso}.

  \ad{\ref{lem:tree-ord-correctness}}
  The proof is by induction on $\size(v)$.
  Suppose $\size(v) = 1$.
  If $(\tup{a}_{v,w},\ell) \in X$, then $\pi(v) \leq \pi(w)$.
  We cannot have $\pi(v) = \pi(w)$,
  since otherwise $0 \notin \graphC(\tup{a}_{v,w})$ (see Figure~\ref{fig:ord}),
  so that $X$ would contain at least one tuple of the form
  $\bigl((1,v,w,\hat{v},\cdot,\cdot),\ell-1)$ with $\hat{v}$ a child of $v$.
  But such a tuple does not exist, since $v$ has no children.
  It follows that $\pi(v) < \pi(w)$ which implies $v \prec w$.

  Now let $\size(v) = s+1$ for some $s \geq 1$.
  If $(\tup{a}_{v,w},\ell) \in X$,
  then as above we have $\pi(v) \leq \pi(w)$.
  If $\pi(v) < \pi(w)$, we have $v \prec w$.
  So, suppose that $\pi(v) = \pi(w)$,
  that is, $\size(w) = s+1$
  and $\childcount_t(v) = \childcount_t(w)$ for all $t \in \nat$.
  It is then easy to see that there is a good child $\hat{v}$ of $v$,
  a child $\hat{w}$ of $w$ of size $s \isdef \size(\hat{v})$,
  and a $k \in [\childcount_s(v)]$ such that
  \begin{itemize}[leftmargin=*]
  \item
    $(\tup{a}_{\hat{v},\hat{w}},\ell') \in X$
    for some $\ell' \in \nat$,
  \item
    there are exactly $k$ children $\ring{w}$ of $w$ of size $s$
    such that $(\tup{a}_{\ring{w},\hat{v}},\ell') \in X$
    for some $\ell' \in \nat$,
  \item
    there are exactly $k$ children $\ring{v}$ of $v$ of size $s$
    with $T_{\ring{v}} \not\isomorphic T_{\hat{v}}$
    such that $(\tup{a}_{\ring{v},\hat{w}},\ell') \in X$
    for some $\ell' \in \nat$,
    and
  \item
    all $k$ children $w'$ of $w$ of size $s$
    with $(\tup{a}_{\ring{w},\hat{v}},\ell') \in X$
    for some $\ell' \in \nat$
    satisfy $\theta_v(w') = \theta_w(w')$.
  \end{itemize}
  By the induction hypothesis,
  this means that
  \begin{itemize}[leftmargin=*]
  \item
    $\hat{v} \prec \hat{w}$,
  \item
    there are exactly $k$ children $\ring{w}$ of $w$ of size $s$
    such that $\ring{w} \prec \hat{v}$,
  \item
    there are exactly $k$ children $\ring{v}$ of $v$ of size $s$
    with $T_{\ring{v}} \not\isomorphic T_{\hat{v}}$
    such that $\ring{v} \prec \hat{w}$,
    and
  \item
    all $k$ children $w'$ of $w$ of size $s$ with $w' \prec \hat{v}$
    satisfy $\theta_v(w') = \theta_w(w')$.
  \end{itemize}
  As pointed out in Section~\ref{sec:tree-order},
  this implies $v \prec w$.

  \ad{\ref{lem:tree-ord-completeness}}
  The proof is by induction on $\size(v)$.
  If $\size(v) = 1$ and $v \prec w$, then $\pi(v) < \pi(w)$.
  By the construction of $\graphG$ this immediately implies
  $(\tup{a}_{v,w},\ell) \in X$ for all $\ell \geq 1 = \size(v)^5$.

  Now suppose that $\size(v) = s+1$ for some $s \geq 1$,
  and $v \prec w$.
  First note that $\pi(v) \leq \pi(w)$.
  If $\pi(v) < \pi(w)$,
  then $(\tup{a}_{v,w},\ell) \in X$ for all $\ell \geq 1$,
  and in particular, for all $\ell \geq \size(v)^5$.
  So, assume that $\pi(v) = \pi(w)$.

  Since $v \prec w$,
  there is a good child $\hat{v}$ of $v$,
  a child $\hat{w}$ of $w$ of size $s' \isdef \size(\hat{v})$
  and a $k \in [\childcount_{s'}(v)]$ such that
  \begin{itemize}[leftmargin=*]
  \item
    $\hat{v} \prec \hat{w}$,
  \item
    there are exactly $k$ children $\ring{w}$ of $w$ of size $s'$
    with $\ring{w} \prec \hat{v}$,
  \item
    there are exactly $k$ children $\ring{v}$ of $v$ of size $s'$
    with $\ring{v} \prec \hat{w}$ and $T_{\ring{v}} \not\isomorphic T_{\hat{v}}$,
    and
  \item
    for all $k$ children $w'$ of $w$ of size $s'$ with $w' \prec \hat{v}$
    we have $\theta_v(w') = \theta_w(w')$.
  \end{itemize}
  Pick such $\hat{v}$, $\hat{w}$ and $k$.

  If $\childcount_{s'}(v) = 1$,
  then $\tup{a}_{\hat{v},\hat{w}}$ is the only child of
  $(1,v,w,\hat{v},\hat{w},k)$,
  and $(1,v,w,\hat{v},\hat{w},k)$ and $\tup{a}_{\hat{v},\hat{w}}$
  each have exactly one incoming edge.
  Since $\hat{v} \prec \hat{w}$ and $\ell-2 \geq (s')^5$,
  the induction hypothesis implies $(\tup{a}_{\hat{v},\hat{w}},\ell-2) \in X$,
  and consequently, $(\tup{a}_{v,w},\ell) \in X$.

  In the following we assume $\childcount_{s'}(v) \geq 2$.
  Let $d \isdef 4\childcount_{s'}(v)^2$.
  Note that all vertices in Figure~\ref{fig:ord}
  except the type 0-vertices have exactly one incoming edge.
  The type 0-vertices $\tup{a}_{v',w'}$,
  where $v',w'$ are children of $v$ and $w$, respectively, of size $s'$,
  have incoming edges from
  \begin{itemize}[leftmargin=*]
  \item
    vertices $(1,v,w,v',w',k)$, where $k \in [\childcount_{s'}(v)]$;
  \item
    vertices $(2,w,v,w',v'',k)$,
    where $k$ is as above and $v''$ is a child of $v$ of size $s'$;
  \item
    vertices $(3,v,w,v'',w',k)$,
    where $k$ and $v''$ is a good child of $v$; and
  \item
    vertices $(4,w,v,w',v'',k)$,
    where $k$ and $v''$ is a child of $v$ of size $s'$.
  \end{itemize}
  Hence, the in-degree of $\tup{a}_{v',w'}$ is at most $d$.
  For the type 0-vertices $\tup{a}_{w',v'}$,
  where $v',w'$ are children of $v$ and $w$, respectively, of size $s'$,
  this is symmetric.

  Let $\ell \geq (s+1)^5$ and $\ell' \isdef \lfloor(\ell-3)/d\rfloor$.
  Then
  \[
    \ell'
    \,\geq\, \frac{\ell-d-2}{d}
    \,\geq\, \frac{s^5}{d} + \frac{s^4}{d} - 2
    \,\geq\, \frac{\childcount_{s'}(v)^5 \cdot (s')^5}
                  {4 \cdot \childcount_{s'}(v)^2}
             + \frac{\childcount_{s'}(v)^4}
                  {4 \cdot \childcount_{s'}(v)^2} - 2
    \,\geq\, 2 (s')^5 - 1
    \,\geq\, (s')^5,
  \]
  where for the second inequality we use $(s+1)^5 \geq s^5 + s^4$,
  for the third one we use
  $\childcount_{s'}(v) \cdot s' \leq s$,
  and for the fourth one we use $\childcount_{s'}(v) \geq 2$
  Hence, by the induction hypothesis we have:
  \begin{itemize}[leftmargin=*]
  \item
    $(\tup{a}_{\hat{v},\hat{w}},\lfloor(\ell-2)/d_1\rfloor) \in X$,
    where $d_1 \leq d$ is the in-degree of $\tup{a}_{\hat{v},\hat{w}}$,
  \item
    there are exactly $k$ children $\ring{w}$ of $w$ of size $s'$
    with $(\tup{a}_{\ring{w},\hat{v}},\lfloor(\ell-3)/d_2\rfloor) \in X$,
    where $d_2 \leq d$ is the in-degree of the vertices
    $\tup{a}_{\ring{w},\hat{v}}$,
  \item
    there are exactly $k$ children $\ring{v}$ of $v$ of size $s'$
    with $T_{\ring{v}} \not\isomorphic T_{\hat{v}}$
    and $(\tup{a}_{\ring{v},\hat{w}},\lfloor(\ell-3)/d_3\rfloor) \in X$,
    where $d_3 \leq d$ is the in-degree of the vertices
    $\tup{a}_{\ring{v},\hat{w}}$,
    and
  \item
    for all $k$ children $w'$ of $w$ of size $s'$
    with $(\tup{a}_{w',\hat{v}},\lfloor(\ell-3)/d_2\rfloor) \in X$
    we have $\theta_v(w') = \theta_w(w')$.
  \end{itemize}
  It follows immediately that $(\tup{a}_{v,w},\ell) \in X$.
\end{proof}

\begin{corollary}
  Let $v,w \in V(T)$.
  Then, $T \models \ford[v,w]$ if and only if $v \prec w$.
\end{corollary}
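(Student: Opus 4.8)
The plan is to follow the proof of the corollary to Lemma~\ref{lem:tree-iso} at the end of Section~\ref{sec:tree-iso}, since the formula $\ford$ is built from an $\lrec$-operator of exactly the same shape. First I would unwind the semantics of the outermost $\exists\tup{r}$ and of the $\lrec$-operator: because $\tup{r}$ is a $5$-tuple of number variables, the values $\num{\alpha(\tup{r})}$ range exactly over $[0,\card{N(T)}^5-1]$, so $T \models \ford[v,w]$ holds if and only if there is an $\ell \in \nat$ with $\ell \leq \card{N(T)}^5 - 1$ and $(\tup{a}_{v,w},\ell) \in X$. Since $X$ contains only pairs with positive second component, and since by the completeness part~\ref{lem:tree-ord-completeness} of Lemma~\ref{lem:tree-ord} membership $(\tup{a}_{v,w},\ell) \in X$ implies $(\tup{a}_{v,w},\ell') \in X$ for every $\ell' \geq \size(v)^5$, this is in turn equivalent to $(\tup{a}_{v,w},\card{N(T)}^5 - 1) \in X$.

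The second ingredient is the numerical bound $\card{N(T)}^5 - 1 \geq \card{V(T)}^5 \geq \size(v)^5$, which holds because $\card{N(T)} = \card{V(T)} + 1$ and $\size(v) = \card{V(T_v)} \leq \card{V(T)}$. Given this, the two directions of the corollary are immediate applications of Lemma~\ref{lem:tree-ord}: if $v \prec w$, then part~\ref{lem:tree-ord-completeness} yields $(\tup{a}_{v,w},\card{N(T)}^5 - 1) \in X$, hence $T \models \ford[v,w]$; conversely, if $T \models \ford[v,w]$, then $(\tup{a}_{v,w},\ell) \in X$ for some $\ell$, and part~\ref{lem:tree-ord-correctness} gives $v \prec w$.

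I do not expect a genuine obstacle: all the mathematical content has been delivered by Lemma~\ref{lem:tree-ord}, and this corollary is just the bookkeeping step that ties the formula $\ford$ to the relation $X$ it defines and checks that the available ``resource'' $\card{N(T)}^5-1$ is large enough. The only minor points to keep in mind are that the two parts of Lemma~\ref{lem:tree-ord} are genuinely complementary precisely because the admissible values of $\ell$ are capped at $\card{N(T)}^5-1$, and that the standing assumption $\card{V(T)} \geq 4$ of this section is what makes the construction of $\ford$ (and of $\fiso$ used inside it) applicable to the tree at hand.
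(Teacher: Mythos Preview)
Your proposal is correct and follows exactly the approach the paper takes for the analogous corollary after Lemma~\ref{lem:tree-iso}: unwind the $\exists\tup{r}$ to a range over $[0,\card{N(T)}^5-1]$, use the bound $\card{N(T)}^5-1 \geq \card{V(T)}^5 \geq \size(v)^5$, and invoke the two parts of Lemma~\ref{lem:tree-ord}. One small attribution slip: the implication ``$(\tup{a}_{v,w},\ell)\in X$ for some $\ell$ implies $(\tup{a}_{v,w},\ell')\in X$ for all $\ell'\geq\size(v)^5$'' uses \emph{both} parts of the lemma (part~\ref{lem:tree-ord-correctness} to get $v\prec w$, then part~\ref{lem:tree-ord-completeness}), not just the completeness part---but your final paragraph spells this out correctly, so the argument stands.
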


\subsection{Canonising Directed Trees}

We now construct an $\LREC$-formula $\gamma(p,q)$
such that for every directed tree $T$
we have $T \isomorphic ([\card{V(T)}],\gamma[T;p,q])$.
Since $\DTC$ captures $\LOGSPACE$ on ordered structures \cite{imm87}
and a linear order is available on the number sort,
we immediately obtain:

\begin{theorem}\label{thm:LREC-and-dtrees}
  $\LREC$ captures $\LOGSPACE$ on the class of directed trees.
\end{theorem}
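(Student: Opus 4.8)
The plan is to follow the route announced just above the statement: build the canonisation formula $\gamma(p,q)$ and then invoke Immerman's theorem.

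\emph{From the canonisation to the theorem.} Suppose we have an $\LREC[\{E\}]$-formula $\gamma(p,q)$ with $p,q$ number variables such that $T\isomorphic([\card{V(T)}],\gamma[T;p,q])$ for every directed tree $T$. As $p,q$ are number variables and isomorphisms fix the number sort, $\gamma[T;p,q]$ depends only on the isomorphism type of $T$, so $\gamma$ produces an isomorphism-invariant ordered copy of $T$ on the initial segment $\{1,\dots,\card{V(T)}\}$ of the number sort (together with the order of that sort). Given an isomorphism-closed, $\LOGSPACE$-decidable class $Q$ of directed trees, Immerman's theorem provides a $\DTC[\{E,{\le}\}]$-sentence $\chi_Q$ defining $Q$ on ordered copies of trees; replacing in $\chi_Q$ the symbol $E$ by $\gamma$, the symbol $\le$ by the number-sort order, and relativising all first-order quantifiers to $\{1,\dots,\card{V(T)}\}$ (an $\FOC$-definable set) turns $\chi_Q$ into an $\LREC[\{E\}]$-sentence equivalent to ``$T\in Q$''. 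Here we only use that the $\dtc$-operator is an $\LREC$-abbreviation (Example~\ref{ex:dtc}, \eqref{eq:dtc-op}) and that $\LREC$ is closed under substituting $\LREC$-formulas for relation symbols and under relativisation, both immediate from the syntax. Together with Theorem~\ref{theo:lrec-complexity} this shows that $\LREC$ captures $\LOGSPACE$ on directed trees.

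\emph{The canonisation formula.} I imitate Lindell's $\LOGSPACE$ tree canonisation, replacing its use of the implicit input order by the formulas $\fiso$ and $\ford$ of Sections~\ref{sec:tree-iso} and~\ref{sec:tree-order}. The canonical copy of $T$ is the tree laid out on $\{1,\dots,n\}$, $n\isdef\card{V(T)}$, as follows: the root occupies the block $\{1,\dots,n\}$; a vertex occupying a block $\{l,\dots,r\}$ is named $l$, and $\{l{+}1,\dots,r\}$ is split into consecutive sub-blocks, one per child, with $\ford$-smaller subtrees first and children with isomorphic subtrees getting consecutive equal-length sub-blocks \emph{in any order}. The freedom in the last clause is harmless, since filling an interval of a given length with the canonical copy of a fixed subtree type depends only on the length; hence permuting the sub-blocks of isomorphic siblings does not change the edge relation, and no (non-existent) isomorphism-invariant ordering of such siblings is needed. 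Using $\fiso$, $\ford$ and the descendant relation — hence $\size$ and the number of children of each size, all $\LREC$-definable by Example~\ref{ex:dtc} and Lemmas~\ref{lem:tree-iso} and~\ref{lem:tree-ord} with counting — one lets $\gamma(p,q)$ say: there are $v$ and a child $w$ of $v$ in $T$ with $p$ a \emph{possible name} of $v$ and $q-p-1$ in the set $\{\,\Sigma_w+k\cdot\size(w)\mid 0\le k<c_w\,\}$ of \emph{possible offsets} of $w$'s sub-block inside $v$'s block, where $\Sigma_w$ is the size-sum of the children of $v$ whose subtree is $\ford$-strictly smaller than $T_w$ and $c_w$ is the number of children of $v$ whose subtree is isomorphic to $T_w$; this part is plain $\FOC$ over the listed relations (using tuples of number variables where products overflow the number sort).

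\emph{The main obstacle.} The one non-routine point is to express ``$p$ is a possible name of $v$'' in $\LREC$ and to verify correctness. Writing $v=z_0,z_1,\dots,z_d$ for the path from $v$ to the root, the possible names of $v$ are the numbers $1+d+\sum_{i=1}^d\Sigma_i+\sum_{i=1}^d k_i\,\size(z_{i-1})$ with $0\le k_i<c_i$, where $c_i$ is the number of children of $z_i$ isomorphic to $z_{i-1}$ and $\Sigma_i$ is the size-sum of the children of $z_i$ whose subtree is $\ford$-strictly smaller than $T_{z_{i-1}}$. This set is in general not an interval, so membership is a bounded subset-sum along the ancestor chain of $v$. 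It is nonetheless $\LREC$-definable: apply an $\lrec$-operator with $v$ as a parameter whose graph $\graphG$ walks up the ancestor chain of $v$ maintaining the remaining target, subtracting at level $i$ a multiple $k\cdot\size(z_{i-1})$ with $0\le k<c_i$; its vertices are pairs (distance from $v$, remaining value). This $\graphG$ is acyclic, and since the subtree sizes strictly increase towards the root and $c_i\cdot\size(z_{i-1})<\size(z_i)$, the product of in-degrees along any path telescopes to at most $\size(z_d)/\size(z_0)\le n$; thus $\graphG$ has the $\card{\graphG}$-path property, and reachability in it (from the start pair to the ``distance $d$, value $0$'' pair) is $\LREC$-definable by the mechanism of Example~\ref{ex:circuits}. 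For correctness, fix any ordering of the isomorphic siblings throughout $T$, let $\nu\colon V(T)\to\{1,\dots,n\}$ be the induced naming, and check that $\nu$ is a bijection, that $(\nu(v),\nu(w))\in\gamma[T]$ whenever $E(v,w)$, and that conversely every pair of $\gamma[T]$ equals $(\nu'(v),\nu'(w))$ for some edge $(v,w)$ and some such ordering $\nu'$; by the order-independence noted above, $\nu$ is then an isomorphism $T\to(\{1,\dots,n\},\gamma[T])$, which finishes the construction. The bulk of the work of the section is exactly this construction of $\gamma$ and the verification of correctness; the passage from canonisation to capturing is routine.
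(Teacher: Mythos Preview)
Your proof is correct but takes a different route to the canonisation formula than the paper. The paper uses a single $\lrec$-operator whose graph has vertices $(v,m,n)\in V(T)\times N(T)^2$ encoding ``$(m,n)$ is an edge in the canon of $T_v$'': for each child $w$ of $v$ and each $i<e_w$ (the number of children of $v$ isomorphic to $w$) there is an edge from $(v,p_{w,i}{-}1{+}m,\,p_{w,i}{-}1{+}n)$ to $(w,m,n)$, the root-edge vertices $(v,1,p_{w,i})$ get label $\{0\}$, and the translated vertices get label $\{e_w\}$. This computes all the sub-canons $X_v$ simultaneously, bottom-up along the tree; the in-degree of $(w,m,n)$ is $e_w$, and since $e_w\cdot\size(w)<\size(v)$ a single number variable of resource suffices, so correctness (Lemma~\ref{lem:tree-canon}) is a short induction on $\size(v)$. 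You instead characterise canon edges $(p,q)$ directly as ``$p$ is a possible name of some $v$, and $q=p+1+\Sigma_w+k\cdot\size(w)$ for some child $w$ and some $k<c_w$'', then isolate the possible-name predicate as a bounded subset-sum along the ancestor chain and express it by a separate parameterised $\lrec$-reachability with path-product $\prod_i c_i\le n$. Both constructions are sound; the paper's is more uniform---one operator whose recursion literally follows the tree---and has the shorter correctness argument, while yours turns the canon's edge relation into a closed-form condition at the price of an auxiliary reachability subroutine, more arithmetic bookkeeping inside $\phi_{\graphE}$ (defining the $j$th ancestor of $v$, $\Sigma_j$, $c_j$ from number-sort data), and a slightly larger resource tuple. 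The reduction from canonisation to capturing is identical in both.
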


Since directed tree isomorphism is in $\LOGSPACE$
by Lindell's tree canonisation algorithm,
but not $\TCC$-definable \cite{eteimm00},
we obtain:

\begin{corollary}\label{cor:LREC-vs-TCC}
  $\LREC \not\leq \TCC$ on the class of all directed trees.
\end{corollary}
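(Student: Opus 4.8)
The plan is to pin down a single property of directed trees that $\LREC$ can express but $\TCC$ cannot, using the isomorphism formula $\fiso$ from Section~\ref{sec:tree-iso}. Let $\Cc$ be the class of directed trees whose root has exactly two children, and let $P$ be the property ``the two subtrees rooted at the children of the root are isomorphic''. Since the root of a directed tree is the unique vertex of in-degree $0$ and hence $\FO$-definable, and since $\fiso(x,y)\in\LREC[\set{E}]$ defines $T_x\isomorphic T_y$ on directed trees, the property $P$ is defined on $\Cc$ by the $\LREC$-sentence
\[
  \psi \,\isdef\, \exists z\,\bigl(\lnot\exists u\,E(u,z)\;\land\;\exists x\,\exists y\,(E(z,x)\land E(z,y)\land x\neq y\land\fiso(x,y))\bigr).
\]
(One could equally well avoid $\fiso$ altogether: $P$ is decidable in $\LOGSPACE$ by running Lindell's algorithm \cite{Lindell:Tree-Canon} on the two subtrees, so by Theorem~\ref{thm:LREC-and-dtrees} it is $\LREC$-definable on directed trees.)

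It then suffices to show that $\psi$ is \emph{not} equivalent, over directed trees, to any $\TCC$-sentence. First I would suppose it is, say $\psi\equiv\chi$ with $\chi\in\TCC$ on all directed trees, so that for every $T\in\Cc$ we have $T\models\chi$ iff the two subtrees of the root of $T$ are isomorphic. From this I would construct a $\TCC$-formula for directed tree isomorphism, contradicting \cite{eteimm00}. Take a structure $S$ encoding a pair of directed trees $(S_1,S_2)$ (as in \cite{eteimm00}, e.g.\ by their disjoint union with the two roots marked). A simple first-order interpretation $I$ adds one fresh vertex and makes it the common root above $\mathrm{root}(S_1)$ and $\mathrm{root}(S_2)$, yielding a tree $I(S)\in\Cc$ with $S_1\isomorphic S_2$ iff $I(S)\models\chi$; here we use that isomorphism is symmetric, so a single direction of interpretation is enough. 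Since $\TCC$, like all logics considered here, is closed under first-order interpretations, the pull-back of $\chi$ along $I$ is a $\TCC$-formula defining isomorphism of pairs of directed trees --- contradicting the Etessami--Immerman theorem. Hence $\psi$ has no $\TCC$-equivalent on directed trees, which is exactly $\LREC\not\leq\TCC$ on the class of directed trees.

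The $\FO$-definability of the root and the construction of the interpretation $I$ are routine. The one point that deserves care is matching the specific encoding of ``pair of directed trees'' for which \cite{eteimm00} proves its lower bound with the joined-tree encoding produced by $I$; this is exactly what the interpretation $I$ (together with closure of $\TCC$ under first-order interpretations) is for, and, since we only ever need to transport the lower bound in one direction, no inverse interpretation --- and in particular no way of distinguishing the two subtrees without an order --- is required.
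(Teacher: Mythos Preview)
Your argument is correct and follows the same line as the paper's one-sentence justification (``directed tree isomorphism is in \LOGSPACE\ by Lindell but not $\TCC$-definable by Etessami--Immerman''); you have simply made explicit the encoding step that the paper leaves implicit, namely packaging a pair of trees as a single directed tree with a fresh common root and invoking closure of $\TCC$ under first-order interpretations. The paper appeals to Theorem~\ref{thm:LREC-and-dtrees} (capturing) together with Lindell for the $\LREC$ side, whereas you use $\fiso$ directly --- both are fine and you note the alternative yourself. The only technical wrinkle you flag, producing the fresh root inside an interpretation, is standard (e.g.\ via a two-dimensional interpretation) and does not affect the argument.
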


We use l-recursion to define a set $X \subseteq V(T) \times N(T)^2$
(for simplicity, we omit the ``resources'' in the description)
such that for every $v \in V(T)$
the set $X_v \isdef \set{(m,n) \in N(T)^2 \mid (v,m,n) \in X}$
is the edge relation of an isomorphic copy $([\card{V(T_v)}],X_v)$ of $T_v$.
Each vertex of $T$ is numbered by its position
in the preorder traversal sequence,
e.g., the root is numbered 1, its first child $v_1$ is numbered 2,
its second child $v_2$ is numbered $2+\size(v_1)$, and so on.

To apply the $\lrec$ operator,
we define a graph $\graphG = (\graphV,\graphE)$ with labels $\graphC(v) \subseteq \nat$
for each $v \in \graphV$ as follows.
Let $\graphV \isdef V(T) \times N(T)^2$,
where $(v,m,n) \in \graphV$ stands for ``$(m,n) \in X_v$?''.
If $v$ is a leaf, then $X_v$ should be empty,
so for all $m,n \in N(T)$ we let $(v,m,n)$ have no outgoing edges
and define $\graphC((v,m,n)) \isdef \emptyset$.
Suppose that $v$ is not a leaf and $w$ is a child of $v$.
Let $D_{w}$ be the set of all children $w'$ of $v$ with $w' \prec w$,
and let $e_w$ be the number of children $w'$ of $v$
with $T_w \isomorphic T_{w'}$.
For each $i \in [0,e_w-1]$, the set $X_v$ will contain an edge from $1$
to $p_{w,i} \isdef 2 + \sum_{w' \in D_{w}} \size(w') + i \cdot \size(w)$,
and the edges in $\{(p_{w,i}-1+m,p_{w,i}-1+n) \mid (m,n) \in X_w\}$.
Hence we let $(v,1,p_{w,i})$ have no outgoing edges and define
$\graphC((v,1,p_{w,i})) \isdef \set{0}$.
Furthermore, for all $m,n \in N(T)$ and all $i < e_w$,
we let $\tup{a} \isdef (v,p_{w,i}-1+m,p_{w,i}-1+n)$ have an edge to $(w,m,n)$
and define $\graphC(\tup{a}) \isdef \set{e_w}$.

It is now easy to construct $\LREC$-formulae
$\phi_{\graphE}(x_1,p_1,p'_1,x_2,p_2,p'_2)$ and $\phi_{\graphC}(x_1,p_1,p_1',q)$
that define the graph $\graphG$ and the labels $\graphC(\cdot)$.
Let
\[
  \gamma(p_1,p_2)\, \isdef\,
  \exists x \exists r \bigl(
    \text{``$x$ is the root''}
    \land
    \lrecx{(x_1,p_1,p'_1)}{(x_2,p_2,p'_2)}{q}
          {\phi_{\graphE}}{\phi_{\graphC}}((x,p_1,p_2),r)
  \bigr).
\]
Noting that the in-degree of each vertex $(v,m,n)$ is at most $e_v$,
it is straightforward to show that $\gamma$ defines an isomorphic copy
of a directed tree:

\begin{lemma}
  \label{lem:tree-canon}
  Let $X$ be the relation defined by $\gamma$ in $T$,
  let $v \in V(T)$
  and let
  $
    X_v \isdef
    \set{(m,n) \mid ((v,m,n),\ell) \in X\ \text{for some $\ell \geq \size(v)$}}.
  $
  Then $T_v \isomorphic ([\card{V(T_v)}],X_v)$.
\end{lemma}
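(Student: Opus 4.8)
The plan is to prove, by induction on $\size(v)$, the following strengthening. Let $Z_v$ be the edge set that $X_v$ is \emph{meant} to be: $Z_v\isdef\emptyset$ if $v$ is a leaf, and otherwise $Z_v\isdef\bigcup_w\bigl(\set{(1,p_{w,i})\mid i<e_w}\cup\set{(p_{w,i}-1+m',p_{w,i}-1+n')\mid i<e_w,\ (m',n')\in Z_w}\bigr)$, the union ranging over the children $w$ of $v$. Then for all $m,n\in N(T)$: (a) if $(m,n)\notin Z_v$, then $((v,m,n),\ell)\notin X$ for every $\ell\in\nat$; and (b) if $(m,n)\in Z_v$, then $((v,m,n),\ell)\in X$ for every $\ell\ge\size(v)$. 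This yields the lemma at once: $X_v=Z_v$, and since $\size(v)=1+\sum_w\size(w)$ and $\preceq$-equivalent children of $v$ have isomorphic, hence equal-size, subtrees, the intervals $[p_{w,i},p_{w,i}+\size(w)-1]$ (over children $w$ and $i<e_w$) partition $[2,\card{V(T_v)}]$; sending $v$ to $1$ and mapping each subtree $T_w$ into the block starting at $p_{w,i}$ by a fixed isomorphism $T_w\isomorphic([\size(w)],Z_w)$ then gives an isomorphism $T_v\isomorphic([\card{V(T_v)}],Z_v)$.

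The base case $\size(v)=1$ is immediate, since every vertex $(v,m,n)$ is a sink with label $\emptyset$ and $Z_v=\emptyset$. For the inductive step, fix a non-leaf $v$ with children $w_1,\dots,w_c$ and assume (a) and (b) for each $w_j$. Two observations guide the argument. First, everything reachable in $\graphG$ from a vertex $(w_j,\cdot,\cdot)$ stays within $\set{w_j}\times N(T)^2$ and the analogous sets for descendants of $w_j$, and the edges, in-degrees and labels there are determined by the abstract tree $T_{w_j}$ alone; hence $Z_{w_j}$ depends only on $T_{w_j}$, so $Z_{w_j}=Z_{w_{j'}}$ whenever $T_{w_j}\isomorphic T_{w_{j'}}$ --- which is exactly what makes the $e_w$ blocks belonging to a single $\preceq$-class coincide --- and, moreover, $Z_w\subseteq[1,\size(w)]^2$. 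Second, one classifies the vertices $(v,a_2,a_3)$: the sinks $(v,1,p_{w,i})$ carry the label $\set{0}$ and no out-edge, hence lie in $X$ for every $\ell>0$ (giving (b) for the root-edges); the vertices matching no rule of the construction are sinks with empty label (giving (a) for them); and a ``block'' vertex $\tup a=(v,p_{w,i}-1+m',p_{w,i}-1+n')$ carries the label $\set{e_w}$, and among its out-neighbours exactly $e_w$ are ``real'' --- namely $(w',m',n')$ for each of the $e_w$ children $w'$ of $v$ with $T_{w'}\isomorphic T_w$ (here one uses $p_{w',i}=p_{w,i}$), each of in-degree exactly $e_w$ --- while every other out-neighbour has a coordinate outside $[1,\size(w')]$. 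The $\lrec$-rule says $(\tup a,\ell)\in X$ iff $\ell>0$ and the number of out-neighbours $\tup b$ with $(\tup b,\lfloor(\ell-1)/\card{\graphE\tup b}\rfloor)\in X$ lies in $\set{e_w}$. By hypothesis (a) the non-real out-neighbours never contribute; and the resource handed to a real out-neighbour is $\lfloor(\ell-1)/e_w\rfloor$, which for $\ell\ge\size(v)$ is at least $\size(w')$ because $e_w\cdot\size(w)\le\size(v)-1$ (the $e_w$ isomorphic copies of $w$ are vertex-disjoint subtrees of $T_v$ avoiding $v$), so hypothesis (b) applies and each real out-neighbour $(w',m',n')$ contributes iff $(m',n')\in Z_{w'}=Z_w$. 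Hence the count is $e_w$ if $(m',n')\in Z_w$ and $0$ otherwise, which proves both halves for block vertices (for (a) at small $\ell$ the same count argument using only hypothesis (a) gives count $0\notin\set{e_w}$). The inequality $e_w\cdot\size(w)\le\size(v)-1$ here plays the role that $\childcount_{s'}(v)\cdot s'\le s$ plays in the completeness parts of Lemmas~\ref{lem:tree-iso} and~\ref{lem:tree-ord}.

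Finally, the statement about $\gamma$ is immediate once the characterisation of $X$ is in hand, since the resource tuple $r$ is evaluated at the maximal value $\card{N(T)}^{\len{r}}-1\ge\card{V(T)}\ge\size(v)$. I expect the main obstacle to be the bookkeeping in the inductive step: isolating precisely the $e_w$ real out-neighbours of a block vertex and verifying their common in-degree is $e_w$, checking that all remaining (``junk'') out-neighbours are killed by hypothesis~(a), fixing a priority convention for the few vertices $(v,1,p_{w,i})$ that formally also match the block rule, and carrying the resource inequality through the recursion --- all routine, but delicate in the same way as the proofs of Lemmas~\ref{lem:tree-iso} and~\ref{lem:tree-ord}.
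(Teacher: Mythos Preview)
Your proposal is correct and follows the same inductive strategy as the paper, hinging on the resource inequality $e_w\cdot\size(w)\le\size(v)-1$ to push enough resource down to the children. You are in fact more careful than the paper's own proof, which only spells out the ``$Z_v\subseteq X_v$'' direction and dismisses the converse with ``and there are no more edges''; your two-sided induction hypothesis (a)/(b) and the explicit handling of junk out-neighbours fill exactly that gap.
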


\begin{proof}
  The proof is by induction on $\size(v)$.
  Clearly, the lemma is true if $\size(v) = 1$.
  Suppose that $\size(v) = s+1$.
  By the induction hypothesis,
  for each child $w$ of $v$ we have $T_w \isomorphic ([\card{V(T_w)}],X_w)$.

  Let $\ell \geq \size(v)$.
  Since for all children $w$ of $v$ and all $m,n \in N(T)$,
  the in-degree of $(w,m,n)$ in $\graphG$ is at most $e_w$
  and $e_w \cdot \size(w) < \size(v)$
  (which implies
   $\lfloor (\ell-1)/e_w \rfloor \geq
    \lfloor (\size(v)-1)/e_w \rfloor \geq \size(w)$),
  \[
    \set{(p_{w,i}-1+m,p_{w,i}-1+m) \mid (m,n) \in X_w} \subseteq X_v
    \quad
    \text{for each child $w$ of $v$ and $i < e_w$}.
  \]
  Furthermore, by construction,
  we have $(1,p_{w,i}) \in X_v$ for each child $w$ of $v$ and $i < e_w$,
  and there are no more edges.
  It is easy to see that $T_v \isomorphic ([\card{V(T_v)}],X_v)$.
\end{proof}

\begin{remark}
  The results of this section extend to \emph{coloured directed trees}
  with a linear order on the colours.
  To be precise, consider a directed tree $T$
  and a total preorder $\trianglelefteq$ on $V(T)$.
  Let $\preceq$ be as in Section~\ref{sec:tree-order}.
  We define a refinement $\preceq'$ of $\preceq$
  by letting $v \prec' w$ whenever $v \vartriangleleft w$,
  or: $v \trianglelefteq w$ and $w \trianglelefteq v$ and $v \prec w$.
  It should be obvious how to modify $\ford(x,y)$
  to an $\LREC[\set{E,\trianglelefteq}]$-formula $\ford'(x,y)$
  defining $\prec'$.
  Using this formula, we then obtain a formula $\gamma'(p,q)$
  such that
  $
    (V(T),E(T),\trianglelefteq) \isomorphic
    ([\card{V(T)}],\gamma'[T;p,q],\trianglelefteq'),
  $
  where $m \trianglelefteq' n$
  iff for the vertices $v,w$ that correspond to $m,n$
  we have $v \trianglelefteq w$.
\end{remark}

\section{Inexpressibility of Reachability in Undirected Graphs}
\label{sec:reach-nondef}

While $\LREC$ captures $\LOGSPACE$ on directed trees,
its expressive power still lacks the ability
to define certain important problems on undirected graphs
that can be defined easily in other logics such as $\STC$
with $\LOGSPACE$ data complexity.
As an example,
we show in this section that $\LREC$ cannot define reachability
in undirected graphs:

\begin{theorem}
  \label{thm:reach-nondef}
  There is no $\LREC[\set{E}]$-formula $\phi(x,y)$
  such that for all undirected graphs $G$ and all $v,w \in V(G)$,
  $G \models \phi[v,w]$ iff there is a path from $v$ to $w$ in $G$.
\end{theorem}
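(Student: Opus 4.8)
The plan is to prove that no $\LREC[\set{E}]$-formula defines undirected reachability by exhibiting, for each candidate formula $\phi(x,y)$, a pair of undirected graphs that $\phi$ cannot tell apart while reachability can. A word on why the obvious shortcut fails: every $\LREC$-formula translates into a fixed-point-with-counting formula whose number of variables is bounded by (a function of) the formula, so $\LREC$ sits inside $k$-variable infinitary counting logic for some fixed $k$; but undirected reachability is already expressible in the three-variable fragment of infinitary logic, so \emph{no} two structures that agree in $k$-variable infinitary counting logic (for any $k\ge 3$) can differ in reachability. Hence a containment-in-$C^k_{\infty\omega}$ argument cannot possibly separate $\LREC$ from reachability; the proof must genuinely exploit the resource mechanism of the $\lrec$ operator. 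Concretely, I would fix $\phi$ and, for $n$ large relative to $\phi$, build graphs $G$ (connected, with $s\leadsto t$) and $G'$ (with $s,t$ in distinct connected components) of size about $n$ such that: (i) $G$ and $G'$ are isomorphic on every ball of radius $\omega(\log n)$ around every vertex, so the ``connectivity defect'' of $G'$ is not witnessed by any local feature (no dead ends, no degree anomalies) — a Cai--Für\-er--Immerman-style global-but-not-local difference controlling a parity that flips whether $s$ reaches $t$; (ii) every vertex on the relevant paths has degree at least $3$; and (iii) $G,G'$ agree in bounded-quantifier-rank counting logic, which is arranged by using a base graph (grid or expander) of treewidth exceeding the width of $\phi$.

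Next I would prove, by induction on the structure of the subformulae $\psi$ of $\phi$, that $\psi$ has the same truth value on $(G,\tup a)$ and $(G',\tup a')$ whenever $\tup a,\tup a'$ are ``corresponding'' parameter tuples (related by a partial isomorphism of large radius consistent with a long play of the bijective pebble game). For the atomic, Boolean, first-order-quantifier and counting cases this is the standard locality/pebble-game argument, using (i) and the matching cardinalities and local statistics of $G,G'$. The heart of the proof is the case $\psi=\lrecx{\tup u}{\tup v}{\tup p}{\phi_{\graphE}}{\phi_{\graphC}}(\tup w,\tup r)$, where I must show that the relations $X,X'$ defined by this operator over $G,G'$ agree on corresponding inputs. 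The key point is that the $\lrec$ operator has \emph{logarithmically bounded reach} into the host graph: the resource $\card{\tup r}$-tuple represents a value at most $(\card{V(G)}+1)^{\card{\tup r}}$, and each move to a child $\tup b$ in the auxiliary graph divides the resource by the in-degree $\card{\graphE\tup b}$, so any branch of the recursion that survives can traverse only $O(\log\card{V(G)})$ tuples of in-degree $\ge 2$. Since, by induction, $\phi_{\graphE}$ defines the same (bounded-distance-in-$G$) auxiliary edge relation over $G$ and $G'$, and since the degree-$\ge3$ design forces every long-range movement in the auxiliary graph through in-degree-$\ge2$ tuples, the part of the auxiliary graph that actually influences $X$ at a relevant start lies over an $o(n)$-radius ball of the host, on which $G$ and $G'$ are isomorphic; hence $X$ and $X'$ agree there. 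Applying the induction to $\phi(x,y)$ itself with parameters $s,t$ gives $G\models\phi[s,t]\iff G'\models\phi[s,t]$, contradicting the fact that $s\leadsto t$ holds in $G$ but not in $G'$.

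The main obstacle is precisely the $\lrec$ case, and within it two points need care. First, I must rule out that the auxiliary graph contains long ``thin'' chains (in-degree exactly $1$), along which the resource decreases only by one per step so that the recursion could crawl a linear distance through the host; this has to be forced out by the construction, arranging the gadgets so that any $\phi_{\graphE}$-definable relation that moves a bounded distance in $G$ must have in-degree $\ge 2$ at the gadget-crossing tuples, bounding thin chains to length $O(1)$. Second, the auxiliary graph lives on tuples that may contain number-sort components and $\phi_{\graphE},\phi_{\graphC}$ are themselves $\LREC$-formulae available only through the induction hypothesis, so ``corresponding auxiliary vertex'' and ``radius of the influencing region'' must be defined carefully and shown to be preserved, and one must verify that the bound on the resource — hence on the reach — is genuinely independent of $n$ up to the logarithmic factor. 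Everything else, namely the pebble-game bookkeeping for the $\FOC$ cases and the verification that reachability separates $G$ from $G'$, is routine once the two graphs are in place.
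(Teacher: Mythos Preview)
Your plan has a genuine gap at exactly the point you flag as ``the main obstacle''. The step ``the degree-$\ge 3$ design forces every long-range movement in the auxiliary graph through in-degree-$\ge 2$ tuples'' is not justified, and in fact host-graph degree has no direct bearing on in-degrees in the auxiliary graph $\graphG$ defined by $\phi_{\graphE}$. The vertices of $\graphG$ are tuples in $\Dom{A}{\tup u}$ and its edges are whatever the (arbitrary) $\LREC$-formula $\phi_{\graphE}$ defines; nothing about vertex degrees in $G$ prevents $\phi_{\graphE}$ from defining, say, a functional relation with in-degree $1$ everywhere. Your proposed fix---``arrange the gadgets so that any $\phi_{\graphE}$-definable relation that moves a bounded distance in $G$ must have in-degree $\ge 2$''---is a statement about \emph{all} definable binary relations on tuples, and a CFI-style construction gives you no handle on that. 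CFI graphs have automorphism group of the form $(\mathbb Z/2)^k$ acting by simultaneous twists; they do not have automorphisms that swap two vertices and fix all others, which is precisely what one needs to force many in-neighbours of a tuple $\tup b$ once a structure element leaves $\tup a$ when passing to $\tup b$.

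The paper's proof works on a different class of graphs chosen for exactly this reason: each $G_n$ is a disjoint union of two ``layered'' graphs with $n$ layers of $n$ vertices and complete bipartite edges between consecutive layers, so that any two vertices in the same layer can be swapped by an automorphism fixing everything else. Then, whenever a branch $(\tup a_0,\ell_0),\dots,(\tup a_m,\ell_m)$ of the unravelling has $(\tilde a_{i-1}\cup\alpha(\tilde z))\cap V(G_n)\neq(\tilde a_i\cup\alpha(\tilde z))\cap V(G_n)$ with an element $b$ dropped, the $n-O(1)$ siblings of $b$ in its layer each yield a distinct in-neighbour of $\tup a_i$, so $\card{\graphE\tup a_i}\ge n-O(1)$; combined with the resource bound $\prod_i\card{\graphE\tup a_i}\le(2n)^{2r}$, this caps the number of such indices by a constant depending only on $\phi$. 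The rest of the proof translates each $\LREC$-formula on this class into a bounded-rank formula of $\CInfB$, and then invokes Gaifman locality of $\CInfB$. If you want to rescue your outline, you must replace the CFI idea by a construction with this kind of rich transposition-automorphism structure; the locality/pebble-game reasoning alone will not control the thin chains.
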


As an immediate corollary we obtain:

\begin{corollary}
  $\STC \not\leq \LREC$
\end{corollary}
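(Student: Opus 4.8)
The plan is to assume for contradiction that such a formula $\phi(x,y)$ exists, and to exhibit for every sufficiently large $n$ two undirected graphs that $\phi$ cannot distinguish but that differ in $s$-$t$ reachability. The gadgets I would use are cycles: let $G_n \isdef C_{2n}$ be the cycle on $2n$ vertices with distinguished vertices $s,t$ at distance $n$, and let $H_n \isdef C_n \disjunion C_n$ be the disjoint union of two $n$-cycles, with $s$ in the first copy and $t$ in the second. Then there is a path from $s$ to $t$ in $G_n$ but not in $H_n$, so $G_n \models \phi[s,t]$ while $H_n \not\models \phi[s,t]$. Both graphs have exactly $2n$ vertices, so the number sorts coincide; and for every fixed radius $\rho < n/2$ the radius-$\rho$ Gaifman neighbourhood of the pair $(s,t)$ is the same in $G_n$ and in $H_n$ — a disjoint pair of paths of length $2\rho$, one centred at $s$ and one at $t$ — and the global multiset of radius-$\rho$ neighbourhoods agrees as well (all $2n$ vertices have the same local type in both graphs). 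So it suffices to show that $(G_n,s,t)$ and $(H_n,s,t)$ satisfy the same $\LREC[\set{E}]$-formulae in the free variables $x,y$ once $n$ is large relative to $\phi$.

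The core of the argument is an induction on the structure of $\LREC$-formulae establishing a \emph{locality} statement over the class of finite disjoint unions of long cycles: the truth of an $\LREC$-formula $\psi$ at a tuple depends only on the isomorphism type of a Gaifman neighbourhood of that tuple of radius $\rho_{\psi}$, where $\rho_{\psi}$ depends only on $\psi$. For subformulae not involving the $\lrec$-operator this is the usual Gaifman/Hanf locality of first-order logic with counting (using crucially that the ordered number sort is literally the same in $G_n$ and $H_n$, so counting over it yields nothing that distinguishes them). The case $\psi = \lrecx{\tup u}{\tup v}{\tup p}{\phi_{\graphE}}{\phi_{\graphC}}(\tup w,\tup r)$ is the heart of the matter. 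Here I would use the unravelling tree $T$ and the auxiliary set $Y$ from the proof of Theorem~\ref{theo:lrec-complexity}, so that membership of $(\tup a,\ell)$ in the relation $X$ defined by $\psi$ is a property of the subtree of $T$ rooted at $(\tup a,\ell)$. By the induction hypothesis the auxiliary graph $\graphG = \phi_{\graphE}[A,\alpha;\tup u,\tup v]$ and its labels $\graphC(\cdot)$ are local, hence \emph{locally homogeneous} on a cycle-union and invariant under the (tuple-extension of the) large automorphism group of each cycle. Two facts should then combine. First, the resource $\num{\alpha(\tup r)}$ is only polynomially bounded, namely at most $(\card{V(A)}+1)^{\len{\tup r}}$, so along every root-to-leaf path of $T$ the product of $\graphG$-in-degrees is polynomially bounded; in particular any such path passes through at most $O(\log \card{V(A)})$ vertices of $\graphG$-in-degree $\geq 2$, so the "branching skeleton" of $T$ stays within a bounded Gaifman ball and is determined by local structure. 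Second, the remaining length of a path in $T$ is consumed by long chains of in-degree-$1$ vertices of $\graphG$; but local homogeneity forces such chains either to be short or to be (unrollings of) $\graphG$-cycles that wind around a cycle of the underlying graph, and along such a $\graphG$-cycle the labels $\graphC(\cdot)$ are constant except for bounded "blips" near $s$ and $t$ — so the recursion threaded through the chain either terminates quickly at a resetting blip, with a locally determined answer, or evaluates to a value that is periodic in the resource with a period controlled by the local types involved. Putting this together, the rooted subtree of $T$ at $(\tup a,\ell)$, and hence membership in $X$, is determined by a bounded Gaifman neighbourhood of $\tup a$, which gives locality for $\psi$ with $\rho_{\psi}$ bounded in terms of $\phi_{\graphE}$, $\phi_{\graphC}$ and $\len{\tup r}$.

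Granting the locality statement the theorem follows: for $n > 2\rho_{\phi}$ the radius-$\rho_{\phi}$ neighbourhood of $(s,t)$ in $G_n$ and in $H_n$ are isomorphic, and the global profiles of local types agree, so $G_n \models \phi[s,t]$ iff $H_n \models \phi[s,t]$, contradicting that $\phi$ defines reachability. The corollary $\STC \not\leq \LREC$ is then immediate, since undirected $s$-$t$ reachability is defined by a single symmetric transitive closure operator applied to $E$.

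I expect the main obstacle to lie in the second part of the $\lrec$-case, and in pushing it through nested $\lrec$-operators. The auxiliary graph of an \emph{outer} operator is built not from a plain $\FOC$-formula but from the relations defined by \emph{inner} operators, and an inner operator evaluated at bulk tuples can in principle compute bounded-modulus ``distance information'' relative to $s$ and $t$; this makes the outer auxiliary graph only \emph{quasi-periodic} around a cycle rather than strictly locally homogeneous. The induction hypothesis therefore has to be strengthened to track this — essentially a locality statement modulated by bounded-period functions of position relative to the $O(1)$ designated vertices — and one must verify that the polynomially bounded resource, together with the dihedral symmetry of $C_{2n}$ (which is ``too symmetric'' to support the oriented chains one would need to traverse the cycle), still prevents this quasi-periodic structure from being leveraged to tell $C_{2n}$ apart from $C_n \disjunion C_n$ at the tuples $\phi$ actually inspects. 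Making the period bound and the Gaifman radius uniform across arbitrary nesting depth, rather than letting them blow up, is the delicate point.
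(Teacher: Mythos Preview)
The corollary itself is immediate from Theorem~\ref{thm:reach-nondef}, so what you are really proposing is a proof of that theorem. Your route is genuinely different from the paper's, and the gap is precisely your choice of gadget.

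The paper does \emph{not} use cycles. It uses graphs $G_n$ consisting of two disjoint layered components, each with $n$ layers of $n$ vertices and complete bipartite connections between consecutive layers. The decisive property is an abundance of \emph{pointwise} automorphisms: any two vertices in the same layer can be swapped while every other vertex is fixed. This yields Lemma~\ref{lem:elem-depth}: whenever, along a path in the unravelling tree, the set of structure-elements in the current tuple loses a vertex $b$, there are $\sim n$ automorphic images of the predecessor tuple all mapping to the same successor, forcing the successor's in-degree in $\graphG$ to be at least $n-O(1)$. Since the product of in-degrees along any path is bounded by the polynomial resource, the number of indices where the structure-element set changes is bounded by a \emph{constant} depending only on $\phi$ --- not $O(\log n)$. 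That constant bound is exactly what allows a translation into $\CInfB$ of bounded rank (Lemma~\ref{lem:LREC-Inf-translation}), after which Gaifman locality of $\CInfB$ finishes the job.

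On cycles this mechanism is unavailable. The dihedral group has no automorphism swapping two vertices and fixing the rest, so losing a structure-element costs nothing in in-degree; your ``branching skeleton'' bound is $O(\log n)$ at best, and your inference from that to a \emph{bounded} Gaifman ball is a non sequitur. Your fallback --- periodicity of in-degree-$1$ chains --- might be salvageable for a single unnested $\lrec$, but, as you yourself note, an inner $\lrec$ can manufacture mod-$k$ distance information along the cycle, destroying local homogeneity of the outer auxiliary graph. You offer no mechanism to keep the period and the locality radius bounded under nesting, and I do not see one; the paper's whole point is to pick gadgets on which this difficulty never arises.

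Concretely: replace $C_{2n}$ versus $C_n \disjunion C_n$ by the layered graphs $G_n$ of Section~\ref{sec:reach-nondef}. Their pointwise-swap automorphisms turn every structure-element change into a large in-degree, giving a constant (not logarithmic) bound on such changes along any unravelling path; from there a clean translation to bounded-rank $\CInfB$ and an appeal to its locality complete the proof.
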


To prove Theorem~\ref{thm:reach-nondef},
we show that reachability is not $\LREC$-definable
on a certain class of directed graphs.
This class, called $\Cc$ throughout this section,
is defined in terms of the following family of graphs $G_n$,
for $n \geq 1$.
Here, each graph $G_n$ consists of $2 \cdot n^2$ vertices,
which are partitioned into \emph{layers} $V_1^1,\dotsc,V_n^1,V_1^2,\dotsc,V_n^2$
with $\card{V_i^j} = n$.
Any two vertices in consecutive layers $V_i^j$ and $V_{i+1}^j$
are connected by an edge.
That is, the set $E(G_n)$ of edges of $G_n$ is
$\{(v,w) \in V_i^j \times V_{i+1}^j \mid i \in [n-1],\, j \in [2]\}$.
For example, the graph $G_3$ is shown in Figure~\ref{fig:G_n}.
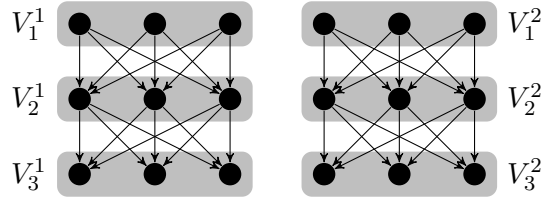
\begin{figure}
  \centering
  \begin{tikzpicture}
    \begin{scope}
      \foreach \i in {1,2,3} {
        \fill[rounded corners,lightgray] (0.7,-\i-.3) rectangle (3.3,-\i+0.3);
        \node[left] at (0.7,-\i) {$V_\i^1$};
      }
      \foreach \i in {1,2,3}
        \foreach \j in {1,2,3}
          \node[dotvertex] (\i\j) at (\i,-\j) {};
      \foreach \i in {1,2,3}
        \foreach \j in {1,2,3} {
          \path[->] (\i1) edge (\j2) (\i2) edge (\j3);
      }
    \end{scope}
    \begin{scope}[xshift=3.25cm]
      \foreach \i in {1,2,3} {
        \fill[rounded corners,lightgray] (0.7,-\i-.3) rectangle (3.3,-\i+0.3);
        \node[right] at (3.3,-\i) {$V_\i^2$};
      }
      \foreach \i in {1,2,3}
        \foreach \j in {1,2,3}
          \node[dotvertex] (\i\j) at (\i,-\j) {};
      \foreach \i in {1,2,3}
        \foreach \j in {1,2,3} {
          \path[->] (\i1) edge (\j2) (\i2) edge (\j3);
      }
    \end{scope}
  \end{tikzpicture}
  \caption{The graph $G_3$.
    The gray areas highlight the different layers of $G_3$.}
  \label{fig:G_n}
\end{figure}
Now, the class $\Cc$ is defined as:
\[
  \Cc :=
  \set{G \mid \text{$G$ is a graph such that $G \isomorphic G_n$
      for some $n \geq 1$}}.
\]

The key property of the graphs in $\Cc$
that enables us to show that reachability on $\Cc$ is not $\LREC$-definable
is that they are rich in a certain kind of automorphisms.
Indeed, let $v$ and $w$ be nodes occurring in the same layer of $G_n$.
Then there is an automorphism of $G_n$ swapping $v$ and $w$,
and fixing the remaining vertices point-wise.
To see why this could be useful at all,
consider an $\LREC$-formula $\phi$ of the form
$
  \lrecx{\tup{u}_1}{\tup{u}_2}{\tup{p}}{\phi_{\graphE}}{\phi_{\graphC}}(\tup{w},\tup{r}),
$
and suppose we want to decide membership of a tuple $(\tup{a}_0,\ell_0)$
in the relation $X$ defined by $\phi$ in $(G_n,\alpha)$,
for an assignment $\alpha$.
First, we would compute the graph $\graphG$ with vertex set $\Dom{G_n}{\tup{u}_1}$
and edge set $\graphE$ defined by $\phi_{\graphE}$,
and then we would recurse to decide which of the tuples $(\tup{a}_1,\ell_1)$,
for successor nodes $\tup{a}_1$ of $\tup{a}_0$ in $\graphG$
and $\ell_1 = \lfloor (\ell_0-1)/\card{\graphE\tup{a}_1} \rfloor$,
belong to $X$.
To decide membership of each of the tuples $(\tup{a}_1,\ell_1)$ in $X$,
we again have to recurse to decide which of the tuples $(\tup{a}_2,\ell_2)$,
for successor nodes $\tup{a}_2$ of $\tup{a}_1$ in $\graphG$
and $\ell_2 = \lfloor (\ell_1-1)/\card{\graphE\tup{a}_2} \rfloor$,
belong to $X$, and so on.
Exploiting the above-mentioned automorphisms
enables us to show that along each branch
$(\tup{a}_0,\ell_0),(\tup{a}_1,\ell_1),(\tup{a}_2,\ell_2),\dotsc$
of the ``recursion tree'',
we see only a constant number of tuples $(\tup{a}_{i+1},\ell_{i+1})$,
where $\tup{a}_{i+1}$ does not contain all the vertices of $G_n$
that occur in $\tup{a}_i$, or vice versa.
Thus, we are left with finitely many sub-branches ``in between'' those tuples
that contain the same vertices of $G_n$.
If all those sub-branches had constant length,
then the whole ``recursion tree'' would have constant depth,
so that we could easily find an $\FOC$-formula
that is equivalent to $\phi$ on $\Cc$
(provided $\phi_{\graphE}$ and $\phi_{\graphC}$ are equivalent to $\FOC$-formulae).
Since reachability is not $\FOC$ definable on $\Cc$,
this would immediately imply Theorem~\ref{thm:reach-nondef}.
In general, the sub-branches do not have constant length
(due to number variables that may occur in $\tup{u}_1$ and $\tup{u}_2$),
so that we move to a logic that is more expressive than $\FOC$,
but still lacks the ability to define reachability on $\Cc$.

More precisely, we show that on $\Cc$,
every $\LREC[\set{E}]$-formula is equivalent to a formula
in the \emph{infinitary counting logic} $\CInfB$,
introduced in \cite{Libkin:TOCL00} (see also \cite[Section~8.2]{lib04}).
The fact that $\CInfB$-formulae without free number variables
are Gaifman-local \cite{Libkin:TOCL00}
then yields that reachability is not $\CInfB$-definable,
and hence not $\LREC$-definable, on $\Cc$.

\subsection{The Logic \texorpdfstring{$\CInfB$}{L*\_inf(C)}}

Before delving into the details of translating $\LREC$-formulae
into $\CInfB$-formulae,
we give here a brief review of the logic $\CInfB$.
For a detailed account,
we refer the reader to \cite{Libkin:TOCL00}, or \cite[Section~8.2]{lib04}.

$\CInfB$ on the one hand extends $\FOC$
by allowing for infinite disjunctions and conjunctions,
and on the other hand imposes restrictions so as to make the resulting logic
not too powerful.
While in the context of $\FOC$,
we equipped structures $A$ with a counting sort $N(A) = [0,\card{V(A)}]$,
in the context of $\CInfB$ we extend this counting sort
to the set of all natural numbers.
Furthermore, $\CInfB$-formulae may use any natural number $n \in \nat$
as a constant, which is always interpreted as $n$.

$\CInfB$ is a restriction of the extremely powerful logic $\CInf$,
which is defined as follows.
A \emph{term} $t$ is a structure variable, a number variable,
or a non-negative integer;
if $t$ is a structure variable, we call $t$ \emph{structure term},
and otherwise \emph{number term}.
The atomic formulae of $\CInf[\tau]$ have the form $R(x_1,\dotsc,x_r)$,
where $R \in \tau$, $r$ is the arity of $R$,
and $x_1,\dotsc,x_r$ are structure variables;
or $t = u$,
where $t$ and $u$ are either structure terms or number terms;
or $t \leq u$,
where $t$ and $u$ are number terms.
The set of all $\CInf[\tau]$-formulae is the smallest set
that contains all atomic formulae,
and is closed under the following formula formation rules:
\begin{enumerate}[leftmargin=*]
\item\label{rule:CInf-neg}
  If $\phi \in \CInf[\tau]$, then $\lnot \phi \in \CInf[\tau]$.
\item\label{rule:CInf-inf}
  If $\Phi \subseteq \CInf[\tau]$,
  then $\biglor \Phi$ and $\bigland \Phi$ belong to $\CInf[\tau]$.
\item\label{rule:CInf-sq}
  If $\phi \in \CInf[\tau]$ and $x$ is a variable,
  then $\exists x \phi$ and $\forall x \phi$ belong to $\CInf[\tau]$.
\item\label{rule:CInf-cq}
  If $\phi \in \CInf[\tau]$, $x$ is a structure variable, and $n \in \nat$,
  then $\exists^{\geq n} x \phi\in \CInf[\tau]$.
\item\label{rule:CInf-cf}
  If $\phi \in \CInf[\tau]$, $\tup{x}$ is a tuple of structure variables,
  and $\tup{p}$ is a tuple of number terms,
  then $\#\tup{x}\,\phi = \tup{p}$ belongs to $\CInf[\tau]$.
\end{enumerate}
Note that, in contrast to $\FOC$,
$\CInf$ restricts us to tuples of structure variables in counting formulae
$\#\tup{x}\,\phi = \tup{p}$.
The semantics of $\CInf[\tau]$-formulae constructed as in \ref{rule:CInf-neg},
\ref{rule:CInf-sq}, and \ref{rule:CInf-cf} is as usual.
The semantics of formulae of the form $\biglor \Phi$ or $\bigland \Phi$
is ``at least one $\phi \in \Phi$ is satisfied'' and
``all $\phi \in \Phi$ are satisfied'', respectively.
Formulae of the form $\exists^{\geq n} x \phi$ have the meaning
``there are at least $n$ assignments to $x$ for which $\phi$ is satisfied''.

$\CInfB[\tau]$-formulae are those $\CInf[\tau]$-formulae
whose \emph{rank} is bounded.
Here, the \emph{rank} $\rank(\phi)$ of a $\CInf[\tau]$-formula $\phi$
is defined as follows.
For atomic formulae $\phi$ we have $\rank(\phi) = 0$.
Furthermore,
$
  \rank(\lnot \phi) = \rank(\phi),
$
$
  \rank(\biglor \Phi) = \rank(\bigland \Phi) =
  \sup_{\phi \in \Phi} \rank(\phi),
$
$
  \rank(\exists x \phi) = \rank(\forall x \phi) =
  \rank(\exists^{\geq n} x \phi) = 1 + \rank(\phi)
$
if $x$ is a structure variable,
$
  \rank(\exists x \phi) = \rank(\forall x \phi) = \rank(\phi)
$
if $x$ is a number variable,
and
$
  \rank(\#\tup{x}\,\phi = \tup{p}) = \len{\tup{x}} + \rank(\phi).
$
Now, a $\CInf[\tau]$-formula $\phi$ belongs to $\CInfB[\tau]$
if there is a number $n \in \nat$ with $\rank(\phi) \leq n$.

As shown in \cite{Libkin:TOCL00}, every $\CInfB$ formula without free
number variables is Gaifman local.  To make this precise, we need some
more notation.  Given a graph $G$ and vertices $v,w \in V(G)$, let
$\dist^G(v,w)$ denote the length of a shortest path from $v$ to $w$ in
the undirected graph obtained from $G$ by adding edges $(w',v')$ for
every edge $(v',w') \in E(G)$, or $\infty$ if there is no such path.
For all $k \geq 1$, all tuples $\tup{v} = (v_1,\dotsc,v_k) \in V(G)^k$
and all $r \in \nat$, let $ B_r^G(\tup{v}) \isdef \set{w \in V(G) \mid
  \exists i \in [k]\colon \dist^G(v_i,w) \leq r}, $ and define
$N_r^G(\tup{v})$ to be the subgraph of $G$ induced by
$B_r^G(\tup{v})$.  The following theorem is stated in
\cite{Libkin:TOCL00} for arbitrary vocabularies:

\begin{theorem}[Restricted form of a theorem in \cite{Libkin:TOCL00}]
  \label{thm:locality}
  For every $\CInfB[\set{E}]$-formula $\phi(\tup{x})$
  without free number variables,
  there is an $r \in \nat$ such that for all graphs $G$
  and all $\tup{a},\tup{b} \in V(G)^{\len{\tup{x}}}$
  with $(N_r^G(\tup{a}),\tup{a}) \isomorphic (N_r^G(\tup{b}),\tup{b})$
  we have: $G \models \phi[\tup{a}] \iff G \models \phi[\tup{b}]$.
\end{theorem}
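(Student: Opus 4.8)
The plan is to prove the statement by induction on the structure of the formula, after strengthening it so that the induction hypothesis also applies to subformulae with free number variables. Precisely, I would show: for every $k \in \nat$ there is a radius $r_k$ depending \emph{only} on $k$ (with $r_0 = 0$ and, say, $r_{k+1} = 3 r_k + 1$) such that for every $\CInfB[\set{E}]$-formula $\phi$ with $\rank(\phi) \le k$, free structure variables $\tup{x}$ and free number variables $\tup{p}$, for every graph $G$, every assignment $\beta$ of natural numbers to $\tup{p}$, and all $\tup{a},\tup{b} \in V(G)^{\len{\tup{x}}}$ with $(N_{r_k}^G(\tup{a}),\tup{a}) \isomorphic (N_{r_k}^G(\tup{b}),\tup{b})$, we have $(G,\beta) \models \phi[\tup{a}] \iff (G,\beta) \models \phi[\tup{b}]$; the theorem is the case $\tup{p} = ()$. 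Tying the radius to the \emph{rank} rather than to the size of the formula is exactly what makes the infinitary connectives harmless.

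The routine cases I would dispatch first. Atomic formulae have rank $0$, and $(N_0^G(\tup{a}),\tup{a}) \isomorphic (N_0^G(\tup{b}),\tup{b})$ is just an isomorphism of the induced subgraphs on the components of $\tup{a}$ and $\tup{b}$ carrying $\tup{a}$ to $\tup{b}$, which decides $E(x_i,x_j)$ and $x_i = x_j$ directly; atomic formulae over number terms depend only on $\beta$. Negation preserves both rank and radius. For $\biglor \Phi$ and $\bigland \Phi$ every member of $\Phi$ has rank at most $k = \rank(\biglor \Phi)$, hence is $r_k$-local by the induction hypothesis, and the connective inherits $r_k$-locality. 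A number quantifier $\exists p\, \psi$ or $\forall p\, \psi$ leaves the rank unchanged, and we just apply the induction hypothesis to $\psi$ for every value of $p$.

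The substantive cases are the structure quantifiers $\exists x$, $\forall x$, the counting quantifiers $\exists^{\ge n} x$, and the counting term $\#\tup{x}\,\psi = \tup{q}$, all handled by the usual Gaifman split into ``close'' and ``far'' witnesses, using crucially that the \emph{same} graph $G$ occurs on both sides. Consider $\exists x\,\psi(\tup{y},x)$ with $\psi$ being $r'$-local, take $r := 3r'+1$, and an isomorphism $f\colon (N_r^G(\tup{a}),\tup{a}) \to (N_r^G(\tup{b}),\tup{b})$; note $f$ preserves $G$-distances from the free tuple. Given a witness $c$ for $\tup{a}$: if $\dist^G(c,\tup{a}) \le 2r'+1$, then the $r'$-ball around $\tup{a}c$ lies inside $N_r^G(\tup{a})$, so $f$ restricts to an isomorphism onto the $r'$-ball around $\tup{b}f(c)$ and $r'$-locality of $\psi$ makes $f(c)$ a witness for $\tup{b}$; if $\dist^G(c,\tup{a}) > 2r'+1$, then $N_{r'}^G(c)$ is disjoint from $N_{r'}^G(\tup{a})$, and because $f$ induces a type-preserving bijection between $B_{2r'+1}^G(\tup{a})$ and $B_{2r'+1}^G(\tup{b})$, the number of vertices of $G$ having $c$'s $r'$-type and lying \emph{outside} $B_{2r'+1}^G(\tup{b})$ equals the corresponding number for $\tup{a}$ and is therefore positive, so we may pick such a $c'$ far from $\tup{b}$; then $N_{r'}^G(\tup{b}c') \isomorphic N_{r'}^G(\tup{b}) \disjunion N_{r'}^G(c') \isomorphic N_{r'}^G(\tup{a}) \disjunion N_{r'}^G(c) \isomorphic N_{r'}^G(\tup{a}c)$, and $r'$-locality again makes $c'$ a witness for $\tup{b}$. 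The case $\exists^{\ge n}x$ is the same argument carried out with multiplicities: close witnesses are matched bijectively by $f$, and for each $r'$-type the number of far witnesses of that type is the same on both sides, since it is the total count of that type in $G$ minus the count inside the $(2r'+1)$-ball of the free tuple, and the latter is matched by $f$.

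The hardest case, which I expect to be the main obstacle, is the counting term $\#\tup{x}\,\psi = \tup{q}$, where one must show the two \emph{counts} are literally equal rather than merely compare them with a threshold, and where a counted tuple of length $\len{\tup{x}} > 1$ can straddle several ``regions'' (components within $r'$ of the free tuple, components clustering among themselves, isolated components). Here I would decompose $V(G)^{\len{\tup{x}}}$ according to this cluster pattern of a tuple relative to $\tup{a}$: the contribution of tuples meeting $B_R^G(\tup{a})$ for a suitable $R$ depending only on $r'$ and $\len{\tup{x}}$ is determined by $(N_R^G(\tup{a}),\tup{a})$ and so is transported to the $\tup{b}$-side by $f$, while the contribution of tuples avoiding that ball depends only on the census of $r'$-isomorphism types realised in $G$ (via $r'$-locality of $\psi$), which is the \emph{same} on both sides; summing gives equality of the counts. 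Since the rank increases by $\len{\tup{x}}$ here, this corresponds to $\len{\tup{x}}$ successive quantifier steps and the radius still grows only as a function of the rank, as required. This same-$G$ observation is what keeps the argument lighter than a full Hanf-locality proof; the whole development is that of \cite{Libkin:TOCL00}, specialised to the vocabulary $\set{E}$.
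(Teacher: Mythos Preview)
The paper does not give its own proof of this theorem: it is stated as a citation from \cite{Libkin:TOCL00} (``The following theorem is stated in \cite{Libkin:TOCL00} for arbitrary vocabularies'') and is used as a black box to derive Corollary~\ref{cor:locality}. So there is no ``paper's proof'' to compare against; you are supplying what the paper deliberately omits.

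Your proposal is sound and follows the standard route to Gaifman locality for counting logics. The two points that make it work are exactly the ones you single out: tying the radius to the \emph{rank} (so that an infinitary $\biglor\Phi$ or $\bigland\Phi$ inherits the radius of its constituents, all of which have rank at most that of the connective), and exploiting that both tuples live in the \emph{same} graph $G$ (so that global type-counts cancel and only the near/far split around the free tuple matters). Your treatment of $\exists x$, $\exists^{\ge n}x$ via close witnesses transported by $f$ and far witnesses counted by type is correct; the check that $f$ preserves $r'$-neighbourhood types of vertices in $B_{2r'+1}^G(\tup{a})$ because $r=3r'+1$ is precisely what is needed. For $\#\tup{x}\,\psi=\tup{q}$ your cluster-pattern decomposition is the right idea; one clean way to execute it is to observe that $\#\tup{x}\,\psi=\tup{q}$ is equivalent, for each fixed value of $\tup{q}$, to a Boolean combination of iterated $\exists^{\ge n_i}x_i$ quantifiers of total depth $\len{\tup{x}}$, which lets you reuse the single-variable case $\len{\tup{x}}$ times and matches the rank bookkeeping exactly. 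Libkin's original argument proceeds along these lines (via a bijective-game characterisation), so your direct proof is essentially a specialisation of the same mechanism.
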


Using Theorem~\ref{thm:locality}, it is straightforward to show that:

\begin{corollary}
  \label{cor:locality}
  There is no $\CInfB[\set{E}]$-formula $\phi(x,y)$
  such that for all $G \in \Cc$ and all $v,w \in V(G)$
  we have $G \models \phi[v,w]$ iff there is a path from $v$ to $w$ in $G$.
\end{corollary}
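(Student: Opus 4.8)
The plan is to reach a contradiction via the Gaifman locality of $\CInfB$ (Theorem~\ref{thm:locality}): I will produce, for any alleged $\CInfB[\set{E}]$-formula $\phi(x,y)$ defining reachability on $\Cc$, a graph $G_n \in \Cc$ together with two pairs of vertices that have isomorphic pointed $r$-neighbourhoods but opposite reachability status.

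Concretely, suppose such a $\phi(x,y)$ exists. Its only free variables are the structure variables $x,y$, so Theorem~\ref{thm:locality} yields a radius $r \in \nat$ with the stated property. Put $n \isdef 2r+3$ and work inside $G \isdef G_n$. Pick $v \in V_1^1$, $w \in V_n^1$ and $w' \in V_n^2$. Then $v$ and $w$ lie in the same connected component of $G_n$ --- the sequence $v = v_1, v_2, \dots, v_n = w$ with $v_i \in V_i^1$ is a path from $v$ to $w$ --- so $G_n \models \phi[v,w]$; whereas $w'$ lies in the other component, so there is no path from $v$ to $w'$ and hence $G_n \not\models \phi[v,w']$.

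The heart of the argument is to check that $(N_r^{G_n}(v,w),(v,w)) \isomorphic (N_r^{G_n}(v,w'),(v,w'))$. Since $n = 2r+3$, one verifies that $B_r^{G_n}(v) \subseteq V_1^1 \cup \dots \cup V_{r+1}^1$, that $B_r^{G_n}(w) \subseteq V_{n-r}^1 \cup \dots \cup V_n^1$, and that $B_r^{G_n}(w') \subseteq V_{n-r}^2 \cup \dots \cup V_n^2$; the first two of these sets are disjoint and have no edge of $G_n$ between them, and the third lies in the component of $G_n$ not containing $v$. Hence $N_r^{G_n}(v,w)$ is the disjoint union of the subgraph $H_v$ induced on $B_r^{G_n}(v)$ and the subgraph $H_w$ induced on $B_r^{G_n}(w)$, and likewise $N_r^{G_n}(v,w')$ is the disjoint union of $H_v$ and the subgraph $H_{w'}$ induced on $B_r^{G_n}(w')$. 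Now $G_n$ has an automorphism $\sigma$ that exchanges its two components by a layer-preserving bijection, and since any two vertices in a common layer are interchangeable we may choose $\sigma$ so that $\sigma(w) = w'$; then $\sigma$ maps $B_r^{G_n}(w)$ onto $B_r^{G_n}(w')$ and gives an isomorphism $(H_w,w) \isomorphic (H_{w'},w')$. Gluing this with the identity on $H_v$ produces the desired isomorphism of pointed neighbourhoods.

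Finally, Theorem~\ref{thm:locality} then forces $G_n \models \phi[v,w] \iff G_n \models \phi[v,w']$, contradicting the previous paragraph; so no such $\phi$ can exist. The only step requiring real care is the neighbourhood isomorphism, and within it the decision to place $w$ and $w'$ at the \emph{same} (``top'') end of their respective components: in a directed setting a vertex in the source block of a blown-up directed path is not isomorphism-equivalent to one in its sink block, so pairing $w \in V_n^1$ with a vertex of $V_1^2$ rather than of $V_n^2$ would break the argument. The remaining points --- the distance computations in $G_n$ and the symmetry of $G_n$ under exchanging its two components --- are routine.
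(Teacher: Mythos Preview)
Your proof is correct and follows the same approach as the paper: derive a contradiction from Gaifman locality (Theorem~\ref{thm:locality}) by exhibiting two pairs of vertices with isomorphic pointed $r$-neighbourhoods but opposite reachability status. The paper's proof is terser---it simply asserts that suitable vertices exist in $G_{r+2}$ without spelling out the neighbourhood isomorphism---whereas your choice $n = 2r+3$ makes the $r$-balls around $v$ and $w$ disjoint with no edges between them, so the verification is clean and explicit.
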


\begin{proof}
  For a contradiction, suppose that $\phi(x,y)$ is an $\CInfB[\set{E}]$-formula
  such that for all $G \in \Cc$ and all $v,w \in V(G)$
  we have $G \models \phi[v,w]$ iff there is a path from $v$ to $w$ in $G$.
  Let $r \in \nat$ be as guaranteed by Theorem~\ref{thm:locality}.
  We can now pick vertices $v,w_1,w_2 \in G_{r+2}$
  with $N_r^{G_{r+2}}(v,w_1) \isomorphic N_r^{G_{r+2}}(v,w_2)$
  such that $w_1$ is reachable from $v$,
  but $w_2$ is not reachable from $v$.
  Since $G_{r+2} \models \phi[v,w_1]$,
  we then have $G_{r+2} \models \phi[v,w_2]$,
  a contradiction.
\end{proof}

\subsection{Translation of \texorpdfstring{$\LREC$}{LREC}-Formulae Into \texorpdfstring{$\CInfB$}{L*\_inf(C)}-Formulae}

We now describe the translation of an $\LREC$-formula $\phi$
into an $\CInfB$-formula $\tilde{\phi}$ that is equivalent to $\phi$ on $\Cc$.
The translation proceeds by induction on the structure of $\phi$,
where the only interesting case is that of $\LREC$-formulae $\phi$ of the form
\[
  \lrecx{\tup{u}_1}{\tup{u}_2}{\tup{p}}{\phi_{\graphE}}{\phi_{\graphC}}(\tup{w},\tup{r}).
\]
To decide whether $\phi$ holds in a given graph $G_n$
under an assignment $\alpha$,
$\tilde{\phi}$ needs to check whether the tuple $(\tup{a}_0,\ell_0)$,
for $\tup{a}_0 := \alpha(\tup{w})$ and $\ell_0 := \num{\alpha(\tup{r})}$,
belongs to the relation $X$ defined by $\phi$ in $(G_n,\alpha)$.
To this end,
it looks at the graph $\graphG$ with vertex set $\Dom{G_n}{\tup{u}_1}$
and edge set $\phi_{\graphE}[G_n,\alpha;\tup{u}_1,\tup{u}_2]$,
or rather at its \emph{$\ell_0$-unravelling} $\graphG^{(\tup{a}_0,\ell_0)}$
at $\tup{a}_0$:

\begin{definition}
  The \emph{$\ell$-unravelling} of a graph $\graphG = (\graphV,\graphE)$ at a vertex $v \in \graphV$
  is the tree $\graphG^{(v,\ell)}$ defined as follows:
  \begin{enumerate}[leftmargin=*]
  \item
    The nodes of $\graphG^{(v,\ell)}$ are all finite sequences
    $((v_0,\ell_0),\dotsc,(v_n,\ell_n))$,
    where $(v_0,\ell_0) = (v,\ell)$, $(v_0,\dotsc,v_n)$ is a path in $\graphG$,
    and $\ell_i = \lfloor (\ell_{i-1}-1)/\card{\graphE v_i} \rfloor$
    for every $i \in [n]$.
  \item
    There is an edge from a node $((v_0,\ell_0),\dotsc,(v_m,\ell_m))$
    to a node $((v'_0,\ell'_0),\dotsc,(v'_n,\ell'_n))$
    whenever $n = m+1$, and $(v'_i,\ell'_i) = (v_i,\ell_i)$
    for every $i \leq m$.
  \item
    Each node $((v_0,\ell_0),\dotsc,(v_m,\ell_m))$ is labelled
    with $(v_m,\ell_m)$.
  \end{enumerate}
\end{definition}

\noindent
For each node of $\graphG^{(\tup{a}_0,\ell_0)}$,
$\tilde{\phi}$ checks whether its label belongs to $X$.
Clearly, this suffices to decide whether $(\tup{a}_0,\ell_0) \in X$.

Our construction is based on the following property of $\graphG^{(\tup{a}_0,\ell_0)}$:

\begin{lemma}
  \label{lem:elem-depth}
  Let $\phi_{\graphE}(\tup{x},\tup{y},\tup{z})$ be a formula,
  where $\tup{x},\tup{y}$ are compatible,
  let $n > \len{\tup{x}} + \len{\tup{z}} + 2$,
  let $\alpha$ be an assignment for $\phi_{\graphE}$ in $G_n$,
  and let $\graphG = (\graphV,\graphE)$ be the graph with
  $\graphV := G_n^{\tup{x}}$ and $\graphE := \phi_{\graphE}[G_n,\alpha;\tup{x},\tup{y}]$.
  Consider a node $((\tup{a}_0,\ell_0),\dotsc,(\tup{a}_m,\ell_m))$
  in $\graphG^{(\tup{a},\ell)}$,
  where $\ell \leq \card{N(G_n)}^r-1$.
  Then, the size of
  \[
    \Ic := \set{i \in [m] \mid
      (\tilde{a}_{i-1} \union \alpha(\tilde{z})) \intersect V(G_n)
      \neq
      (\tilde{a}_i \union \alpha(\tilde{z})) \intersect V(G_n)
    }
  \]
  is bounded by a constant that depends only on $\phi_{\graphE}$ and $r$.
\end{lemma}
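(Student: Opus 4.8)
\emph{Proof plan.}
Write $B \isdef \len{\tup x} + \len{\tup z}$ and, for $i \in [0,m]$, put $S_i \isdef (\tilde a_i \union \alpha(\tilde z)) \intersect V(G_n)$, so that $\Ic = \set{i \in [m] \mid S_{i-1} \neq S_i}$, each $\card{S_i} \leq B$, and $\alpha(\tilde z) \intersect V(G_n) \subseteq S_i$. The plan is to split $\Ic = \Ic_1 \disjunion \Ic_2$, where $\Ic_1 \isdef \set{i \in [m] \mid S_{i-1} \not\subseteq S_i}$ consists of the steps at which $S$ \emph{loses} a vertex and $\Ic_2 \isdef \Ic \setminus \Ic_1$ consists of the remaining steps of $\Ic$, at which $S$ grows strictly; I then bound $\card{\Ic_1}$ and $\card{\Ic_2}$ separately.

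For $\Ic_1$ the key point is that every automorphism of $G_n$ that transposes two vertices of a common layer $V_p^j$ and fixes all other vertices exists, and more generally that $G_n$ has a rich automorphism group acting within each layer. First I would observe that such an automorphism $\rho$, if it fixes $\alpha(\tilde z) \intersect V(G_n)$ pointwise, induces an automorphism of the digraph $\graphG$ — extend $\rho$ to the number sort by the identity and use the isomorphism invariance of $\FOC$ together with $\free(\phi_{\graphE}) \subseteq \tilde x \union \tilde y \union \tilde z$ — so that, if $\rho$ moreover fixes the tuple $\tup a_i$, it maps the in-neighbour set $\graphE \tup a_i$ onto itself. Now fix $i \in \Ic_1$ and pick $v_0 \in S_{i-1} \setminus S_i$; then $v_0 \in \tilde a_{i-1}$ while $v_0 \notin \tilde a_i$ and $v_0 \notin \alpha(\tilde z)$. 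Let $L$ be the layer of $G_n$ containing $v_0$ and put $F \isdef L \setminus (\tilde a_i \union \alpha(\tilde z))$; then $v_0 \in F$ and $\card{F} \geq n - B$, since $\card{\tilde a_i} + \card{\alpha(\tilde z)} \leq B$. For each $v' \in F$ the automorphism $\rho_{v'}$ transposing $v_0$ and $v'$ (the identity if $v' = v_0$) fixes both $\tup a_i$ and $\alpha(\tilde z)$ pointwise, so $\rho_{v'}(\tup a_{i-1}) \in \graphE \tup a_i$; and since all the tuples $\rho_{v'}(\tup a_{i-1})$ agree away from the fixed, non-empty set of positions at which $\tup a_{i-1}$ equals $v_0$ and carry the value $v'$ at those positions, the map $v' \mapsto \rho_{v'}(\tup a_{i-1})$ is injective. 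Hence $\card{\graphE \tup a_i} \geq \card{F} \geq n - B$ for every $i \in \Ic_1$. On the other hand, arguing exactly as in the proof of Theorem~\ref{theo:lrec-complexity} (cf.\ \eqref{eq:lrec-complexity/M1/est}) one gets $\prod_{i=1}^m \card{\graphE \tup a_i} < \card{N(G_n)}^{r'}$ with $r' \isdef r + \len{\tup x}$ (the extra $\len{\tup x}$ accommodating a possible last factor of size up to $\card{\graphV}$); so, since each factor indexed by $\Ic_1$ is at least $n - B \geq 3$ (this is where the hypothesis $n > \len{\tup x} + \len{\tup z} + 2$ is used), $\card{\Ic_1} \leq r' \cdot \log(2n^2{+}1)/\log(n-B)$, and over all integers $n \geq B+3$ this right-hand side is bounded by a constant depending only on $B$ and $r'$, hence only on $\phi_{\graphE}$ and $r$.

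For $\Ic_2$ I would count via the sizes $\card{S_i}$: for $i \notin \Ic_1$ we have $S_{i-1} \subseteq S_i$, so $\card{S_i} - \card{S_{i-1}} \geq 0$, and in fact $\geq 1$ when in addition $i \in \Ic_2$; for $i \in \Ic_1$, $\card{S_i} - \card{S_{i-1}} \geq -\card{S_{i-1}} \geq -B$. Summing over $i \in [m]$ and using $\card{S_m} - \card{S_0} \leq B$ yields $B \geq \card{\Ic_2} - B\card{\Ic_1}$, i.e.\ $\card{\Ic_2} \leq B(\card{\Ic_1}+1)$, and therefore $\card{\Ic} = \card{\Ic_1} + \card{\Ic_2} \leq (B{+}1)\card{\Ic_1} + B$, which is a constant depending only on $\phi_{\graphE}$ and $r$, as required.

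The step I expect to be the crux is the in-degree estimate $\card{\graphE \tup a_i} \geq n - B$ for $i \in \Ic_1$: it is exactly here that the special automorphisms of the graphs in $\Cc$ enter, and one has to take care both that the layer-transposition automorphisms genuinely act on $\graphG$ (which is what forces $\alpha(\tilde z)$ to comprise the only vertices of $G_n$, besides those named by $\tup x,\tup y$, to which $\phi_{\graphE}$ can refer) and that the hypothesis on $n$ leaves a whole layer's worth of vertices free to be moved. The remaining estimates are elementary bookkeeping.
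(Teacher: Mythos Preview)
Your proof is correct and follows essentially the same route as the paper: your set $\Ic_1$ coincides with the paper's $\Kc$, the layer-transposition automorphism argument giving $\card{\graphE\tup a_i}\geq n-B$ is the same, and the product/resource bound on $\card{\Ic_1}$ is identical. The only cosmetic difference is in bounding the remaining indices: the paper partitions $\Ic\setminus\Kc$ into blocks between consecutive $\Kc$-elements and bounds each block by $\len{\tup x}$, whereas your global telescoping sum on $\card{S_i}$ achieves the same thing more directly; also, your extra care with the possible last factor (taking $r'=r+\len{\tup x}$) is a harmless strengthening of a step the paper states slightly loosely.
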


\begin{proof}
  We first show that the size of
  \[
    \Kc \isdef
    \set{i \in \Ic \mid
      \tilde{a}_{i-1} \intersect V(G_n)
      \nsubseteq
      (\tilde{a}_i \union \alpha(\tilde{z})) \intersect V(G_n)}
  \]
  is bounded by a constant that only depends on $\phi_{\graphE}$ and $r$.
  To this end,
  consider an $i \in \Kc$ and a $b \in \tilde{a}_{i-1} \intersect V(G_n)$
  such that $b \notin \tilde{a}_i \union \alpha(\tilde{z})$.
  Let us call an element $b' \in V(G_n)$ a \emph{sibling} of $b$
  if $b$ and $b'$ belong to the same layer in $G_n$.
  There are at least
  \[
    n - \card{\tilde{a}_i \union \alpha(\tilde{z})} - 1
    \geq
    n - (\len{\tup{x}} + \len{\tup{z}} + 1)
  \]
  siblings of $b$ in $G_n$
  that do not occur in $\tilde{a}_i \union \alpha(\tilde{z}) \union \set{b}$.
  Each such sibling $b'$ gives rise to an \emph{automorphism}
  $f_{b'}\colon V(G_n) \to V(G_n)$ of $G_n$
  that fixes all the vertices in $V(G_n) \setminus \set{b,b'}$ point-wise,
  maps $b$ to $b'$, and maps $b'$ to $b$.
  As a consequence, for each such sibling $b'$
  we have $f_{b'}(\tup{a}_{i-1}) \tup{a}_i \in \graphE$,
  where $f_{b'}(\tup{a}_{i-1})$ is the tuple obtained from $\tup{a}_{i-1}$
  by replacing each element $b''$ in $\tup{a}_{i-1}$ that belongs to $V(G_n)$
  with $f_{b'}(b'')$.
  This implies
  \[
    \card{\graphE \tup{a}_i} \geq n - d_1,
  \]
  where $d_1 \isdef \len{\tup{x}} + \len{\tup{z}} + 1$ depends only on $\phi_{\graphE}$.

  Observe that, by the definition of $\graphG^{(\tup{a},\ell)}$,
  we have $\ell_0 = \ell \leq \card{N(G_n)}^r-1 \leq (2n)^{2r}$
  and $\ell_0 \geq \prod_{i=1}^m\, \card{\graphE \tup{a}_i}$.
  Hence,
  \[
    (2n)^{2r}
    \ \geq\
    \prod_{i=1}^m\, \card{\graphE \tup{a}_i}
    \ \geq\
    \prod_{i \in \Kc}\, \card{\graphE \tup{a}_i}
    \ \geq\
    \prod_{i \in \Kc}\, (n - d_1)
    \ =\
    (n - d_1)^{\card{\Kc}}.
  \]
  For $n > d_1 + 1$ this implies
  $
    \card{\Kc}
    \leq \log_{n-d_1} (2n)^{2r}
    \leq 2 r (1 + \log_{n-d_1} n),
  $
  which is bounded by a constant $d_2$
  that only depends on $\phi_{\graphE}$ and $r$.

  To conclude the proof,
  consider a maximal set $\Ic' \subseteq \Ic$
  such that there are no $i,i' \in \Ic'$ and $k \in \Kc$
  with $i \leq k \leq i'$.
  We show that $\card{\Ic'}$ is bounded by a constant $d_3$
  that depends only on $\phi_{\graphE}$.
  This then implies the lemma as
  \[
    \card{\Ic} \leq (\card{\Kc}+1) \cdot (d_3+1) \leq (d_2+1) \cdot (d_3+1).
  \]
  Let $i_{\min} \isdef \min \Ic'$ and $i_{\max} \isdef \max \Ic'$,
  and notice that
  \[
    \bigl(\tilde{a}_{i_{\min}-1} \union \alpha(\tilde{z})\bigr) \intersect V(G_n)
    \ \subseteq\
    \bigl(\tilde{a}_{i_{\min}} \union \alpha(\tilde{z})\bigr) \intersect V(G_n)
    \ \subseteq\
    \dotsb
    \ \subseteq\
    \bigl(\tilde{a}_{i_{\max}} \union \alpha(\tilde{z})\bigr) \intersect V(G_n).
  \]
  Since $(\tilde{a}_{i_{\max}} \union \alpha(\tilde{z})) \intersect V(G_n)$
  contains at most $d_3 \isdef \len{\tup{x}}$ elements that do not belong to
  $(\tilde{a}_{i_{\min}-1} \union \alpha(\tilde{z})) \intersect V(G_n)$,
  there are at most $d_3$ indices $i \in [i_{\min},i_{\max}]$ with
  $
    \bigl(\tilde{a}_{i-1} \union \alpha(\tilde{z})\bigr) \intersect V(G_n)
    \subsetneq
    \bigl(\tilde{a}_{i} \union \alpha(\tilde{z})\bigr) \intersect V(G_n).
  $
  Hence, $\card{\Ic'} \leq d_3$, as desired.
\end{proof}

We are now ready to prove that on $\Cc$,
every $\LREC[\set{E}]$-formula is equivalent to a $\CInfB[\set{E}]$-formula.

\begin{lemma}
  \label{lem:LREC-Inf-translation}
  For every $\LREC[\set{E}]$-formula $\phi(\tup{x})$,
  there is a $\CInfB[\set{E}]$-formula $\tilde{\phi}(\tup{x})$
  such that for all $G \in \Cc$ and all $\tup{a} \in \Dom{G}{\tup{x}}$,
  we have: $G \models \phi[\tup{a}] \iff G \models \tilde{\phi}[\tup{a}]$.
\end{lemma}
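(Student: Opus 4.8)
The plan is to induct on the structure of $\phi$. For $\phi$ not of the form \eqref{eq:lrec} the claim is routine: $\FOC$ embeds into $\CInfB$ with bounded rank (counting over number variables is simulated by relativising to the $\CInfB$-definable number sort $[0,\card{V(G)}]$), and $\CInfB$ is closed under the Boolean connectives, first-order quantification, and counting over structure variables, each of which raises the rank by an amount fixed once $\phi$ is fixed. So assume $\phi=\lrecx{\tup u_1}{\tup u_2}{\tup p}{\phi_{\graphE}}{\phi_{\graphC}}(\tup w,\tup r)$ and, by the induction hypothesis, fix $\CInfB[\set E]$-formulae $\tilde\phi_{\graphE},\tilde\phi_{\graphC}$ that agree with $\phi_{\graphE},\phi_{\graphC}$ on $\Cc$. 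Two properties of these translated formulae will be used on $\Cc$: being of bounded rank they are Gaifman-local (Theorem~\ref{thm:locality}, applied after substituting constants for their free number variables, which does not change the rank); and Lemma~\ref{lem:elem-depth} applies to $\phi_{\graphE}$ directly.

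Fix $G_n\in\Cc$ and an assignment $\alpha$; set $\tup a_0:=\alpha(\tup w)$ and $\ell_0:=\num{\alpha(\tup r)}$, let $P\subseteq V(G_n)$ be the bounded set of structure-parameter values occurring in $\alpha$, and let $X$ be the relation defined by $\phi$ in $(G_n,\alpha)$, built from the graph $\graphG$ on $\Dom{G_n}{\tup u_1}$ with edge relation $\phi_{\graphE}[G_n,\alpha;\tup u_1,\tup u_2]$. In effect $\tilde\phi$ should check, for every node of the $\ell_0$-unravelling $\graphG^{(\tup a_0,\ell_0)}$, whether the pair labelling it belongs to $X$. By Lemma~\ref{lem:elem-depth} the set of $G_n$-vertices occurring in the tuples changes at most $K$ times along any branch, for a constant $K=K(\phi_{\graphE},\len{\tup r})$; hence a branch splits into at most $K+1$ \emph{segments} along which that vertex-set is constant and of bounded size, and the structure-components seen along a whole branch lie in a bounded-size subset of $V(G_n)$. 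Combined with the facts that (i) every automorphism of $G_n$ fixing $P$ pointwise preserves $X$ (being an automorphism of the labelled graph $\graphG$), and (ii) the bounded-rank, hence Gaifman-local, formulae $\tilde\phi_{\graphE},\tilde\phi_{\graphC}$ cannot distinguish structure-components that are far both from $P$ and from the source and sink layers, this yields that membership of a pair $(\tup a,\ell)$ in $X$ depends only on $\ell$, on the number-components of $\tup a$, and on a \emph{bounded local type} $\sigma$ of the structure-components of $\tup a$ (recording which slots coincide, which are specified vertices of $P$ or lie within the locality radius of $P$ or of an end layer, and the positional relations among the remaining ``generic'' slots).

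On this basis I will construct $\CInfB[\set E]$-formulae $\chi^{(k)}_\sigma$, for $k\in[0,K]$ and each of the boundedly many types $\sigma$, with free variables only the number-tuples $\tup m$ (the number-components) and $\tup\ell$ (the resource), expressing ``the vertex of type $\sigma$ with number-part $\tup m$ lies in $X$ at resource $\num{\tup\ell}$, assuming at most $k$ further changes of the vertex-set occur below it''. These are defined by downward recursion on $k$, and for each $k$ by an infinite disjunction $\chi^{(k)}_\sigma:=\biglor_{j\in\nat}\chi^{(k,j)}_\sigma$ over an unrolling parameter: $\chi^{(k,j+1)}_\sigma$ says that $\num{\tup\ell}>0$ and that $c\in\graphC(\cdot)$, where $c$ is the number of $X$-successors, obtained by summing over the boundedly many successor-types $\sigma'$ the contributions (a) for successors leaving the vertex-set unchanged: a count over the successor's \emph{number}-part of the pairs satisfying $\tilde\phi_{\graphE}$ for which $\chi^{(k,j)}_{\sigma'}$ holds at the decremented resource, and (b) for successors introducing a new vertex: the product of the $\CInf$-definable number of successors of type $\sigma'$ with the truth value of $\chi^{(k-1)}_{\sigma'}$; the base cases $j=0$ and $k=0$ contribute $0$. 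Finally $\tilde\phi:=\biglor_{\sigma_0}\bigl(\text{``$\tup w$ has type $\sigma_0$''}\wedge\chi^{(K)}_{\sigma_0}(\text{number-part of }\tup w,\ \tup r)\bigr)$, with the other free variables of $\phi$ treated as in the non-$\lrec$ cases.

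The step I expect to be the main obstacle is proving that all the $\chi^{(k,j)}_\sigma$ share a \emph{common} rank bound, so that $\tilde\phi\in\CInfB$; tied to this is the claim that $X$-membership really depends only on the bounded type $\sigma$, which will be a locality/pebble-game argument driven by Lemma~\ref{lem:elem-depth}. For the rank, three points must be checked: quantifying over number variables and counting over number-tuples ($\#\tup m'\,[\,\cdot\,]=q$) cost no rank in $\CInf$ --- the latter because ``exactly $c$ tuples satisfy $\psi$'' is expressible with $c$ existential and one universal \emph{number} quantifier, of the same rank as $\psi$, whereupon one takes $\biglor_{c\in\nat}$; counting over structure variables occurs only inside fixed subformulae of bounded rank (the type-cardinalities in (b)); and each recursive occurrence $\chi^{(k,j)}_{\sigma'}$ or $\chi^{(k-1)}_{\sigma'}$ sits inside a disjunction or conjunction over the boundedly many $\sigma'$ rather than inside a fresh structure-quantifier --- which is exactly why the $\chi$'s are kept with number free variables only. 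Consequently $\rank(\chi^{(k,j+1)}_\sigma)\le\max\bigl(\rank(\chi^{(k,j)}_\sigma),\,\max_{\sigma'}\rank(\chi^{(k-1)}_{\sigma'}),\,C\bigr)$ for a constant $C$, and since $K$ is a constant this stabilises at a bounded value. Correctness of $\tilde\phi$ then follows by a bookkeeping induction matching the recursive definition of $X$ against the unravelling, using Lemma~\ref{lem:elem-depth} for the bound on the number of segments and observations (i)--(ii) for the sufficiency of the type $\sigma$.
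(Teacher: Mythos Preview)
Your approach diverges from the paper's in an interesting way, but there are two gaps, one concrete and one structural.

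\textbf{The concrete error.} Your definition $\chi^{(k)}_\sigma := \biglor_j \chi^{(k,j)}_\sigma$ is wrong: the approximants $\chi^{(k,j)}$ are not monotone in $j$. Take a tuple $\tup a$ with a single same-set successor $\tup b$, $\graphC(\tup a)=\graphC(\tup b)=\{0\}$, and $\tup b$ again with a same-set successor. Then $\chi^{(k,0)}$ is false everywhere, so $\chi^{(k,1)}(\tup a,\ell)$ counts zero $X$-successors and is true; but $\chi^{(k,1)}(\tup b,\ell')$ is also true, so $\chi^{(k,2)}(\tup a,\ell)$ counts one $X$-successor and is false. Hence $\biglor_j\chi^{(k,j)}$ overapproximates $X$. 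The fix is to tie the unrolling depth to the resource: $\chi^{(k)}_\sigma(\tup m,\tup\ell):=\biglor_{\ell\in\nat}\bigl(\text{``}\num{\tup\ell}=\ell\text{''}\wedge\chi^{(k,\ell)}_\sigma(\tup m,\tup\ell)\bigr)$, which is correct since each step strictly decreases the resource. Your rank argument still goes through for this corrected version.

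\textbf{The structural gap.} Your central claim---that $X$-membership depends only on a \emph{bounded} local type $\sigma$ together with the number part and $\ell$---is much more delicate than you indicate. Automorphism-invariance alone does not give it: in $G_n$, two ``generic'' vertices in different layers are \emph{not} in the same $\text{Aut}(G_n,P)$-orbit (distance to the ends differs), so tuples of the same type need not be automorphic. You therefore cannot rely on (i). You need instead an argument that the unravelling trees from two same-type tuples are isomorphic as labelled trees, and this requires (a) taking the locality radius in the type definition large enough to absorb $K$ change-steps, not just the radius $r$ of $\tilde\phi_{\graphE},\tilde\phi_{\graphC}$, and (b) showing that at each change the distribution of successor-types is the same for any two same-type tuples---which is where locality does the work, but it has to be threaded through an induction on $k$. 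For instance with $\tup u_1=(x_1,x_2)$ and $\phi_{\graphE}((x_1,x_2),(y_1,y_2))=(y_1{=}x_1)\wedge E(x_2,y_2)$, the slot $x_2$ walks layer-by-layer while $x_1$ stays put; whether $x_2$ eventually lands near $x_1$ depends on their initial layer-distance, which your type (``both generic, far apart'') does not record. You would need to cap pairwise distances at roughly $K\cdot r$ rather than $r$.

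\textbf{How the paper avoids this.} The paper keeps the structure variables $\tup u_1$ \emph{free} in the formulae $\psi^t_{n,\ell}(\tup u_1,\dots)$ and, for each level $t\le t^*$, \emph{guesses} (via an infinite disjunction) a subset $R$ of the finite symbolic set $\Qc'=\{(\tup u',\ell')\}$ where each $\tup u'$ is a tuple of variables drawn from $\tup u_1$ together with number constants from $N(G_n)$. The guess encodes which same-set descendants lie in $X$; verification uses a single $\#\tup u''$ at each level, inside which $\psi^t_{n,\ell''}$ handles the change-steps. This adds $|\tup u_1|$ to the rank per level, but there are only $t^*$ levels. By never abstracting to types, the paper sidesteps the invariance lemma you would have to prove. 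Your route could be made to work, but the type-invariance argument is substantially more than the ``locality/pebble-game'' hand-wave you offer, and it is exactly where the difficulty lies.
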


\begin{proof}
  As mentioned above, we proceed by induction on the structure of $\phi$.
  The only interesting case is that of an $\LREC[\set{E}]$-formula
  of the form
  \[
    \phi =
    \lrecx{\tup{u}_1}{\tup{u}_2}{\tup{p}}{\phi_{\graphE}}{\phi_{\graphC}}(\tup{w},\tup{r}).
  \]
  Let $\tup{v}_{\graphE}$ be an enumeration of all variables in $\free(\phi_{\graphE})$
  that are not listed in $\tup{u}_1\tup{u}_2$,
  and let $\tup{v}_{\graphC}$ be an enumeration of all variables in $\free(\phi_{\graphC})$
  that are not listed in $\tup{u}_1\tup{p}$.

  We aim to construct,
  for all integers $n \geq 1$ and $\ell \leq \card{N(G_n)}^{\len{\tup{r}}}-1$,
  a $\CInfB[\set{E}]$-formula $\psi_{n,\ell}(\tup{u}_1,\tup{v}_{\graphE},\tup{v}_{\graphC})$
  such that for all assignments $\alpha$ in $G_n$,
  and all $\tup{a} \in \Dom{G_n}{\tup{u}_1}$,
  \[
    G_n \models \psi_{n,\ell}[\tup{a},\alpha(\tup{v}_{\graphE}),\alpha(\tup{v}_{\graphC})]
    \iff
    (\tup{a},\ell) \in X,
  \]
  where $X$ is the relation defined by $\phi$ in $(G_n,\alpha)$.
  Furthermore, the rank of each $\psi_{n,\ell}$ will be bounded by a constant
  that depends only on $\phi$,
  so that
  \[
    \tilde{\phi} :=
    \!\!\!\!
    \biglor_{\substack{n \geq 1\\\ell < (2n^2+1)^{\len{\tup{r}}}}}
    \!\!\!\!\!\!\!\!
    \left(
      \text{``the universe has size $2n^2$''}
      \land
      \text{``$\tup{r}$ represents the number $\ell$''}
      \land
      \psi_{n,\ell}(\tup{w},\tup{v}_{\graphE},\tup{v}_{\graphC})
    \right)
  \]
  is a $\CInfB[\set{E}]$-formula that is equivalent to $\phi$ on $\Cc$.

  \bigskip\noindent
  \textit{Construction of $\psi_{n,\ell}(\tup{u}_1,\tup{v}_{\graphE},\tup{v}_{\graphC})$:}
  Fix $n \geq 1$ and $\ell \leq \card{N(G_n)}^{\len{\tup{r}}}-1$.
  To simplify the presentation,
  we also fix an assignment $\alpha$ in $G_n$,
  and the graph $\graphG = (\graphV,\graphE)$ with $\graphV \isdef \Dom{G_n}{\tup{u}_1}$
  and $\graphE \isdef \phi_{\graphE}[G_n,\alpha;\tup{u}_1,\tup{u}_2]$;
  the formula $\psi_{n,\ell}(\tup{u}_1,\tup{v}_{\graphE},\tup{v}_{\graphC})$
  we are going to construct will however not depend on $\alpha$.
  For every $\tup{a} \in \graphV$,
  let
  \begin{align*}
    t_{n,\ell}(\tup{a}) :=
    \max\, \{t \in \nat \mid\
    & \text{there is a node $(\tup{a}_0,\ell_0),\dotsc,(\tup{a}_m,\ell_m)$
      in $\graphG^{(\tup{a},\ell)}$ such that $t$ equals} \\
    & \card{
        \set{
          i \in [m] \mid
          (\tilde{a}_{i-1} \union \alpha(\tilde{v}_{\graphE})) \intersect V(G_n)
          \neq
          (\tilde{a}_i \union \alpha(\tilde{v}_{\graphE})) \intersect V(G_n)
        }
      }\}.
  \end{align*}
  By Lemma~\ref{lem:elem-depth} there is a constant $t^*$
  that only depends on $\phi$
  such that
  \[
    t_{n,\ell}(\tup{a}) < t^* \quad \text{for all $\tup{a} \in \graphV$}.
  \]
  In what follows,
  we construct, for all $t \leq t^*$,
  a $\CInfB[\set{E}]$-formula $\psi_{n,\ell}^t(\tup{u}_1,\tup{v}_{\graphE},\tup{v}_{\graphC})$
  such that for all $\tup{a} \in \graphV$ with $t_{n,\ell}(\tup{a}) < t$,
  we have:
  \[
    G_n \models \psi_{n,\ell}^t[\tup{a},\alpha(\tup{v}_{\graphE}),\alpha(\tup{v}_{\graphC})]
    \iff
    (\tup{a},\ell) \in X,
  \]
  where $X$ is the relation defined by $\phi$ in $(G_n,\alpha)$.
  Furthermore, the rank of $\psi_{n,\ell}^t$ will not depend on $n$ or $\ell$.
  The desired formula $\psi_{n,\ell}$ can then be defined as:
  \[
    \psi_{n,\ell} := \psi_{n,\ell}^{t^*}.
  \]

  \bigskip\noindent
  \textit{Construction of $\psi_{n,\ell}^t(\tup{u}_1,\tup{v}_{\graphE},\tup{v}_{\graphC})$:}
  We construct the formulae $\psi_{n,\ell}^t(\tup{u}_1,\tup{v}_{\graphE},\tup{v}_{\graphC})$
  by induction on $t$.
  For $t = 0$,
  we define $\psi_{n,\ell}^0(\tup{u}_1,\tup{v}_{\graphE},\tup{v}_{\graphC})$ to be
  an arbitrary unsatisfiable formula.
  The idea for the construction
  of $\psi_{n,\ell}^{t+1}(\tup{u}_1,\tup{v}_{\graphE},\tup{v}_{\graphC})$ is as follows.
  Let $\tup{a} \in \graphV$, and
  \begin{align*}
    \Qc(\tup{a}) :=
    \{(\tup{a}_m,\ell_m) \mid\
    & ((\tup{a}_0,\ell_0),\dotsc,(\tup{a}_m,\ell_m)) \in V(\graphG^{(\tup{a},\ell)}),\
      \text{and for all $i \in [m]$ we have:} \\
    & (\tilde{a}_{i-1} \union \alpha(\tilde{v}_{\graphE})) \intersect V(G_n)
      =
      (\tilde{a}_{i} \union \alpha(\tilde{v}_{\graphE})) \intersect V(G_n)
    \}.
  \end{align*}
  To check whether $(\tup{a},\ell) \in X$,
  we ``guess'' the set $\hat{X} = \Qc(\tup{a}) \intersect X$,
  and then simply check whether $(\tup{a},\ell) \in \hat{X}$.
  To guess $\hat{X}$,
  we can use an infinite disjunction over all subsets $R$ of $\Qc(\tup{a})$.
  Then we only need to verify for each $R$
  whether $R$ indeed corresponds to $\hat{X}$.
  For the latter, we count, for each pair $(\tup{a}',\ell') \in \Qc(\tup{a})$,
  the number of pairs $(\tup{a}'',\ell'')$ such that $\tup{a}'\tup{a}'' \in \graphE$,
  $\ell'' = \lfloor (\ell'-1)/\card{\graphE \tup{a}''} \rfloor$
  and $(\tup{a}'',\ell'') \in X$,
  and check that $(\tup{a}',\ell') \in R$
  whenever this number belongs to the label of $\tup{a}'$ defined by $\phi_{\graphC}$.
  How do we check whether $(\tup{a}'',\ell'') \in X$?
  If
  $
    (\tilde{a}' \union \alpha(\tilde{v}_{\graphE})) \intersect V(G_n)
    =
    (\tilde{a}'' \union \alpha(\tilde{v}_{\graphE})) \intersect V(G_n),
  $
  that is, if $(\tup{a}'',\ell'') \in \Qc(\tup{a})$,
  then we simply check whether $(\tup{a}'',\ell'') \in R$.
  Otherwise, we use the formula $\psi_{n,\ell''}^t$.

  Let $\phi_{\graphE}'$ and $\phi_{\graphC}'$ be $\CInfB[\set{E}]$-formulae
  that are equivalent to $\phi_{\graphE}$ and $\phi_{\graphC}$, respectively.
  Such formulae exist by the induction hypothesis.
  Using $\phi_{\graphE}'$ it is easy to construct, for each $\ell' \in [0,\ell]$,
  an $\CInfB[\set{E}]$-formula $\chi_{\ell'}(\tup{u}_1,\tup{u}_1',\tup{v}_{\graphE})$
  such that for all $\tup{a},\tup{a}' \in \Dom{G_n}{\tup{u}_1}$,
  \[
    G_n \models \chi_{\ell'}[\tup{a},\tup{a}',\alpha(\tup{v}_{\graphE})]
    \iff
    (\tup{a}',\ell') \in \Qc(\tup{a}).
  \]
  Here, $\tup{u}_1'$ is a tuple of variables that is compatible with,
  but disjoint from $\tup{u}_1$.

  Let $\Qc'$ be the set of all pairs $(\tup{u}',\ell')$,
  where $\ell' \in [0,\ell]$,
  and $\tup{u}'$ is obtained from $\tup{u}_1'$
  by replacing each structure variable with a structure variable
  from $\tup{u}_1$
  and each number variable with an integer from $N(G_n) = [0,2n^2]$.
  Intuitively, each $R \subseteq \Qc'$ corresponds to a guess
  of $\Qc(\tup{a}) \intersect X$ as described above.
  For each $R \subseteq \Qc'$,
  let $\psi_{n,\ell,R}^{t+1}(\tup{u}_1,\tup{v}_{\graphE},\tup{v}_{\graphC})$ be
  \begin{align*}
    & \bigland_{(\tup{u}',\ell') \in R}
      \!\!\biggl(
        \chi_{\ell'}(\tup{u}_1,\tup{u}',\tup{v}_{\graphE})
        \limplies
        \exists \tup{p} \Bigl(
          \phi_{\graphC}'(\tup{u}',\tup{p},\tup{v}_{\graphC})
          \land
          \# \tup{u}'' \bigl(
            \phi_{\graphE}'(\tup{u}',\tup{u}'',\tup{v}_{\graphE})
            \land
            \vartheta_{R,\tup{u}',\ell'}(\tup{u}'')
          \bigr) = \tup{p}
        \Bigr)
      \biggr) \\
    \land
    & \bigland_{\substack{%
        (\tup{u}',\ell') \in \Qc' \\ (\tup{u}',\ell') \notin R}}
      \!\!\biggl(
        \chi_{\ell'}(\tup{u}_1,\tup{u}',\tup{v}_{\graphE})
        \limplies
        \exists \tup{p} \Bigl(
          \lnot \phi_{\graphC}'(\tup{u}',\tup{p},\tup{v}_{\graphC})
          \land
          \# \tup{u}'' \bigl(
            \phi_{\graphE}'(\tup{u}',\tup{u}'',\tup{v}_{\graphE})
            \land
            \vartheta_{R,\tup{u}',\ell'}(\tup{u}'')
          \bigr) = \tup{p}
        \Bigr)
      \biggr)
  \end{align*}
  where
  \begin{multline*}
    \vartheta_{R,\tup{u}',\ell'}(\tup{u}'')\ \isdef\ \\
    \biglor_{\ell'' \in [0,\ell']} \biggl(
      \text{``$\ell'' =
        \left\lfloor\frac{\ell'-1}{\card{\graphE \tup{u}''}}\right\rfloor$''}
      \land
      \Bigl(
        \bigr(
          \chi_{\ell''}(\tup{u}',\tup{u}'',\tup{v}_{\graphE})
          \land
          \text{``$(\tup{u}'',\ell'') \in R$''}
        \bigr)
        \lor
        \psi_{n,\ell''}^t(\tup{u}'',\tup{v}_{\graphE},\tup{v}_{\graphC})
      \Bigr)
    \biggr)
  \end{multline*}
  and ``$(\tup{u}'',\ell'') \in R$'' stands for
  $\biglor_{(\tup{u}^\star,\ell^\star) \in R,\, \ell^\star = \ell''}
   \tup{u}^\star = \tup{u}''$.
  Then it is not hard to see that the formula
  \[
    \psi_{n,\ell}^{t+1}(\tup{u}_1,\tup{v}_{\graphE},\tup{v}_{\graphC})
    \ \isdef\
    \biglor_{\substack{R \subseteq \Qc'\\(\tup{u}_1,\ell) \in R}}
    \psi_{n,\ell,R}^{t+1}(\tup{u}_1,\tup{v}_{\graphE},\tup{v}_{\graphC})
  \]
  is as desired.
  Clearly, the rank of $\psi_{n,\ell}^{t+1}$ does not depend on $n$ or $\ell$.
\end{proof}

To conclude this section,
note that Theorem~\ref{thm:reach-nondef} follows immediately
from Lemma~\ref{lem:LREC-Inf-translation} and Corollary~\ref{cor:locality}.

\section{An Extension of \texorpdfstring{$\LREC$}{LREC}}
\label{sec:lrec-eq}

The proof of the previous section's Theorem~\ref{thm:reach-nondef}
indicates that $\LREC$ is not closed under logical reductions,
not even under very simple first-order reductions.%
\footnote{We defer the definition of logical reductions and what it means to be
  closed under logical reductions to Definition~\ref{def:interpr}
  and Lemma~\ref{lem:transduction-lemma}.
  For first-order reduction, see also \cite{ebbflu95}.}
Indeed, it is easy to see that there is a first-order reduction
that maps a graph $G_n$, for $n \geq 3$,
as defined in Section~\ref{sec:reach-nondef}
to a disjoint union $\hat{G}_n$ of two directed paths on $n$ vertices each,
by identifying vertices in the same layer.
Reachability on the class of all graphs isomorphic to $\hat{G}_n$
for an $n \geq 3$ is easily seen to be $\LREC$-definable.
Hence, if $\LREC$ was closed under first-order reductions,
then reachability on the class of all graphs isomorphic to $G_n$ for some $n$
would be \LREC-definable, contradicting the previous section's results.

In this section, we introduce an extension $\LREC_=$ of $\LREC$
whose data complexity is still in $\LOGSPACE$,
and thus captures $\LOGSPACE$ on directed trees,
while being closed under logical reductions.
The idea is to admit a third formula $\phi_=$ in the $\lrec$-operator
that generates an equivalence relation on the vertices of the graph
defined by $\phi_{\graphE}$.

Let $\tau$ be a vocabulary.
The set of all $\LREC_=[\tau]$-formulae is obtained from $\LREC[\tau]$
by replacing the rule for the $\lrec$-operator from Section~\ref{sec:lrec}
as follows:
If $\bar{u},\bar{v},\bar{w}$ are compatible tuples of variables,
$\bar{p},\bar{r}$ are non-empty tuples of number variables, and
$\phi_{=}$, $\phi_{\graphE}$ and $\phi_{\graphC}$ are $\LREC_=$-formulae,
then the following is an $\LREC_=[\tau]$-formula:
\begin{equation}
  \label{eq:lrec-eq}
  \phi \,\isdef\,
  \lreceq{\tup{u}}{\tup{v}}{\tup{p}}{\phi_{=}}{\phi_{\graphE}}{\phi_{\graphC}}(\tup{w},\tup{r}).
\end{equation}
We let
$
  \free(\phi) \isdef
  \bigl(\free(\phi_{=}) \setminus (\tilde{u} \union \tilde{v})\bigr)
  \union
  \bigl(\free(\phi_{\graphE}) \setminus (\tilde{u} \union \tilde{v})\bigr)
  \union
  \bigl(\free(\phi_{\graphC}) \setminus (\tilde{u} \union \tilde{p})\bigr)
  \union
  \tilde{w} \union \tilde{r}.
$

To define the semantics of $\LREC_=[\tau]$-formulae $\phi$
of the form \eqref{eq:lrec-eq},
let $A$ be a $\tau$-structure and $\alpha$ an assignment in $A$.
Let $\graphV_0 \isdef \Dom{A}{\tup{u}}$
and $\graphE_0 \isdef \phi_{\graphE}[A,\alpha;\tup{u},\tup{v}]$.
We define $\sim$ to be the reflexive, symmetric, transitive closure
of the binary relation $\phi_{=}[A,\alpha;\tup{u},\tup{v}]$ over $\graphV_0$.
Now consider the graph $\graphG = (\graphV,\graphE)$ with
\begin{align*}
  \graphV\, :=\, \graphV_0/_\sim
  \quad \text{and} \quad
  \graphE\, :=\,
  \{
    (\tup{a}/_\sim,\tup{b}/_\sim) \in \graphV^2 \mid \tup{a}\tup{b} \in \graphE_0
  \}.
\end{align*}
To every $\tup{a}/_\sim \in \graphV$ we assign the set
\[
  \graphC(\tup{a}/_\sim)\, \isdef\,
  \set{\num{\tup{n}} \mid \text{there is an $\tup{a}' \in \tup{a}/_\sim$
      with $\tup{n} \in \phi_{\graphC}[A,\alpha[\tup{a}'/\tup{u}];\tup{p}]$}}
\]
of labels.
Then the definition of $X$ can be taken verbatim from Section~\ref{sec:lrec}.
We let $(A,\alpha) \models \phi$ if and only if
$\bigl(\alpha(\tup{w})/_\sim,\num{\alpha(\tup{r})}\bigr) \in X$.
As for $\LREC$, we have:

\begin{theorem}
  \label{theo:lrec=-complexity}
  For every vocabulary $\tau$, and every $\LREC_=[\tau]$-formula $\phi$
  there is a deterministic logspace Turing machine that,
  given a $\tau$-structure $A$ and an assignment $\alpha$ in $A$,
  decides whether $(A,\alpha) \models \phi$.
\end{theorem}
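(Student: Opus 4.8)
The plan is to follow the proof of Theorem~\ref{theo:lrec-complexity} closely, again proceeding by induction on the structure of $\phi$, with the only genuinely new case being that of a formula of the form~\eqref{eq:lrec-eq},
\[
  \phi = \lreceq{\tup{u}}{\tup{v}}{\tup{p}}{\phi_{=}}{\phi_{\graphE}}{\phi_{\graphC}}(\tup{w},\tup{r}).
\]
My first observation is that the whole difference between $\LREC$ and $\LREC_=$ is localised in the construction of the quotient graph $\graphG = (\graphV,\graphE)$ together with its labelling $\graphC(\cdot)$: once we have deterministic logspace subroutines that (i) decide whether a given tuple from $\Dom{A}{\tup u}$ is the canonical representative of its $\sim$-class, (ii) decide, for two such representatives $\tup a,\tup b$, whether $(\tup a/_\sim,\tup b/_\sim)\in\graphE$, and (iii) decide, for such a representative $\tup a$ and a number $m$, whether $m\in\graphC(\tup a/_\sim)$, then the remainder of the argument is literally that of Theorem~\ref{theo:lrec-complexity}. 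The machine $M_1$ unravels $\graphG$ starting at the class of $\alpha(\tup w)$ with resource $\num{\alpha(\tup r)}$ into the labelled tree $T$, the machine $M_2$ evaluates $T$, and Claim~\ref{cl:lrec-complexity/X-prop} together with the space analyses of Claims~\ref{cl:lrec-complexity/M1} and~\ref{cl:lrec-complexity/M2} carry over verbatim, since $\card{\graphV}\le\card{\Dom{A}{\tup u}}\le(\card{V(A)}+1)^{\len{\tup u}}$ still controls the branching and the depth of $T$.

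So the work is to supply those three subroutines. Put $\graphV_0\isdef\Dom{A}{\tup u}$ and $\graphE_0\isdef\phi_{\graphE}[A,\alpha;\tup u,\tup v]$; by the induction hypothesis, membership of tuples in $\graphE_0$, in $\phi_{=}[A,\alpha;\tup u,\tup v]$, and in $\phi_{\graphC}[A,\alpha[\tup a'/\tup u];\tup p]$ is logspace-decidable. The crucial step is to decide the equivalence relation $\sim$, i.e.\ the reflexive, symmetric, transitive closure of $\phi_{=}[A,\alpha;\tup u,\tup v]$ on $\graphV_0$. This is precisely undirected reachability in the graph on $\graphV_0$ whose edge set is the symmetrisation of $\phi_{=}[A,\alpha;\tup u,\tup v]$, and undirected reachability is decidable in $\LOGSPACE$ (this is the fact underlying the $\LOGSPACE$ data complexity of symmetric transitive closure logic $\STC$); hence $\sim$ is logspace-decidable. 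I would then choose as canonical representative of a $\sim$-class its least element under the order on $\Dom{A}{\tup u}$ induced by the input encoding: to test whether $\tup a$ is canonical, iterate over all smaller tuples $\tup b$ and reject if some $\tup b\sim\tup a$. For (ii), given canonical $\tup a,\tup b$, iterate over all $\tup a'\sim\tup a$ and $\tup b'\sim\tup b$ and accept iff some pair lies in $\graphE_0$; for (iii), given canonical $\tup a$ and $m$, iterate over all $\tup a'\sim\tup a$ and over all $\tup n\in N(A)^{\len{\tup p}}$ and accept iff $\num{\tup n}=m$ and $\tup n\in\phi_{\graphC}[A,\alpha[\tup a'/\tup u];\tup p]$ for some such $\tup a'$. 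Each of these loops adds only $O(\log\card{V(A)})$ work-tape cells on top of the logspace subroutines it invokes, so the whole computation stays in $\LOGSPACE$.

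Feeding $\graphG$ and $\graphC$ so described into the two-step construction of Theorem~\ref{theo:lrec-complexity} then yields the required deterministic logspace Turing machine. I expect the only real obstacle to be the computation of $\sim$: this is the single point at which the reduction to Theorem~\ref{theo:lrec-complexity} needs more than routine bookkeeping, namely the fact that the transitive closure of a symmetric relation --- equivalently, undirected reachability --- lies in $\LOGSPACE$. Everything else is a direct transcription of the $\LREC$ argument to the quotient graph.
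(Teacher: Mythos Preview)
Your proposal is correct and matches the paper's own proof, which is given only as a sketch: reduce the $\LREC_=$ case to Theorem~\ref{theo:lrec-complexity} by computing the quotient graph in logspace, with the single nontrivial ingredient being that the equivalence relation $\sim$ is computable via Reingold's logspace algorithm for undirected reachability. You have in fact spelled out more of the bookkeeping (canonical representatives, the loops for edge and label membership) than the paper does.
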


\begin{proof}[Sketch]
  The proof is a straightforward modification
  of the proof of Theorem~\ref{theo:lrec-complexity}.
  The only difference is that, when we deal with $\LREC_=$-formulae
  of form \eqref{eq:lrec-eq},
  we use the vertex set $\graphV$, the edge set $\graphE$, and the labels $\graphC(\cdot)$
  as defined above to compute the set $X$.
  It is easy to compute these sets by first computing the relation $\sim$
  from $\phi_=[A,\alpha;\tup{u},\tup{v}]$ using Reingold's logspace algorithm
  for undirected reachability \cite{rei05}.
  Note that once $\sim$ has been obtained,
  the equivalence class of every element $\tup{a} \in \Dom{A}{\tup{u}}$
  can be determined.
\end{proof}

The following example shows that undirected graph reachability is
definable in $\LREC_=$. This does not involve an implementation of
Reingold's algorithm in our logic, but just uses the observation that
the computation of the equivalence relation $\sim$ boils down to the
computation of undirected reachability.

\begin{example}[Undirected reachability]
  The following $\LREC_=$-formula defines undirected graph reachability:
  \begin{align*}
    \phi(s,t) \ \isdef\
    \lreceq{x}{y}{p}{\phi_{=}(x,y)} {\phi_{\graphE}(x,y)}{\phi_{\graphC}(x,p)}(s,1),
  \end{align*}
  where $\phi_=(x,y) \isdef E(x,y)$, $\phi_{\graphE}(x,y) \isdef \lnot x = x$
  and $\phi_{\graphC}(x,p) \isdef x = t$.
  To see this, let $G$ be an undirected graph and $\alpha$ an assignment in $G$.
  Define $\sim$, $\graphV$, $\graphE$, $\graphC$ and the set $X$ as above.
  Clearly, the set $\graphV$ consists of the connected components of $G$.
  Furthermore, the set $\graphE$ is empty since $\phi_E$ is unsatisfiable.
  Therefore, for all $v \in V(G)$ we have $(v/_\sim,1) \in X$
  iff $0 \in \graphC(v/_\sim)$.
  The latter is true precisely if $\alpha(t) \in v/_\sim$,
  i.e., if $v$ and $\alpha(t)$ are in the same connected component of $G$.
  It follows that for all $v,w \in V(G)$ we have $G \models \phi[v,w]$
  if and only if $v$ and $w$ are in the same connected component of $G$,
  that is, if there is a path from $v$ to $w$ in $G$.
  \varqed
\end{example}

\begin{remark}
  It follows immediately from the previous example that
  $\STCC\leq\LREC_=$.
  Actually, the containment is strict,
  because $\LREC\not\leq\STCC$ by Corollary~\ref{cor:LREC-vs-TCC}.
  Since in $\STCC$ (and actually in $\STC$)
  it is possible to transform trees into directed trees,
  the results from Section~\ref{sec:tree-canon} imply
  that $\LREC_=$ captures \LOGSPACE\ on the class of all trees,
  directed as well as undirected.
  Note also that $\LREC_= \leq \FPC$.
\end{remark}

To conclude this section,
we show that $\LREC_=$ is closed under logical reductions.
We first introduce \emph{$\logic{L}$-transductions}
(also known as \emph{$\logic{L}$-interpretations} \cite{ebbflu95}):

\begin{definition}[Transduction]
  \label{def:interpr}
  Let $\logic{L}$ be a logic, let $\tau_1,\tau_2$ be vocabularies
  and let $\ell \geq 1$.
  \begin{enumerate}[leftmargin=*]
  \item
    An \emph{$\ell$-ary $\logic{L}[\tau_1,\tau_2]$-transduction} is a tuple
    \[
      \Theta \,=\,
      \Bigl(
        \theta_V(\tup{u}),
        \theta_\approx(\tup{u},\tup{v}),
        \bigl(
          \theta_R(\tup{u}_{R,1},\dotsc,\tup{u}_{R,{\ar(R)}})
        \bigr)_{R \in \tau_2}
      \Bigr)
    \]
    of $\logic{L}[\tau_1]$-formulae,
    where $\tup{u},\tup{v}$ are compatible $\ell$-tuples of variables
    and for every $R \in \tau_2$ and $i \in [\ar(R)]$,
    $\tup{u}_{R,i}$ is an $\ell$-tuple of variables
    that is compatible to $\tup{u}$.
  \item
    Let $A$ be a $\tau_1$-structure
    such that $\theta_V[A;\tup{u}]$ is non-empty.
    We define a $\tau_2$-structure $\Theta[A]$ as follows.
    We let $\approx$ be the reflexive, symmetric, transitive closure
    of the binary relation  $\theta_\approx[A;\tup{u},\tup{v}]$,
    and call $\approx$ the equivalence relation \emph{generated}
    by $\theta_\approx[A;\tup{u},\tup{v}]$.
    Let
    \[
      V(\Theta[A]) \,\isdef\, \quot{\theta_V[A;\tup{u}]}{\approx},
    \]
    and for each $R \in \tau_2$, let
    \begin{align*}
      \hspace{4em}
      & R(\Theta[A]) \,\isdef\, \{
        (\eclass{\tup{a}_1}{\approx},\dotsc,\eclass{\tup{a}_{\ar(R)}}{\approx})
        \mid \\
      & \qquad\qquad\qquad\qquad\qquad
       \tup{a}_1,\dotsc,\tup{a}_{\ar(R)} \in \theta_V[A;\tup{u}],\,
       A \models \theta_R[\tup{a}_1,\dotsc,\tup{a}_{\ar(R)}]
      \}.
    \end{align*}
  \end{enumerate}
\end{definition}

So, informally, a $\logic{L}[\tau_1,\tau_2]$-transduction defines a mapping
from structures over the first vocabulary, $\tau_1$,
into structures over the second vocabulary, $\tau_2$,
via $\logic{L}[\tau_1]$-formulae.

\begin{example}
  \label{ex:transduct}
  Consider the $\FO[\set{E},\set{E}]$-transduction
  $\Theta = (\theta_V(x),\theta_\approx(x,y),\theta_E(x,y))$
  with $\theta_V(x) \isdef x = x$,
  $\theta_\approx(x,y) \isdef \forall z \bigl(E(x,z) \lequiv E(y,z)\bigr)$
  and $\theta_E(x,y) \isdef E(x,y)$.
  Recall the definition of the graphs $G_n$ from Section~\ref{sec:reach-nondef}.
  For $n > 3$,
  the equivalence relation $\approx$ generated by $\theta_\approx[G_n;x,y]$
  is $\theta_\approx[G_n;x,y]$ itself.
  It relates any two vertices that occur in the same layer of $G_n$.
  Hence, for $n > 3$,
  $\Theta[G_n]$ is the disjoint union of two paths of length $n$.
  \varqed
\end{example}

The following lemma shows that $\LREC_=$ is closed under $\LREC_=$-reductions.
Precisely, this means that:

\begin{lemma}
  \label{lem:transduction-lemma}
  Let $\tau_1,\tau_2$ be vocabularies, let $\ell \geq 1$,
  let
  \[
    \Theta \,=\,
    \Bigl(
      \theta_V(\tup{u}),
      \theta_\approx(\tup{u},\tup{v}),
      \bigl(
        \theta_R(\tup{u}_{R,1},\dotsc,\tup{u}_{R,\ar(R)})
      \bigr)_{R \in \tau_2}
    \Bigr)
  \]
  be an $\ell$-ary $\LREC_=[\tau_1,\tau_2]$-transduction,
  and let $\phi(x_1,\dotsc,x_\kappa,p_1,\dotsc,p_\lambda)$
  be an $\LREC_=[\tau_2]$-formula with $x_1,\dotsc,x_\kappa$ structure variables
  and $p_1,\dotsc,p_\lambda$ number variables.

  Then there is an $\LREC_=[\tau_1]$-formula
  $
    \phi^{-\Theta}(\tup{u}_1,\dotsc,\tup{u}_\kappa,
      \tup{q}_1,\dotsc,\tup{q}_\lambda),
  $
  where $\tup{u}_1,\dotsc,\tup{u}_\kappa$ are compatible with $\tup{u}$
  and $\tup{q}_1,\dotsc,\tup{q}_\lambda$ are $\ell$-tuples of number variables,
  such that for all $\tau_1$-structures $A$ where $\Theta[A]$ is defined,
  all $\tup{a}_1,\dotsc,\tup{a}_\kappa \in A^{\tup{u}}$
  and all $\tup{n}_1,\dotsc,\tup{n}_\lambda \in N(A)^\ell$,
  \begin{align*}
    A \models
    \phi^{-\Theta}[\tup{a}_1,\dotsc,\tup{a}_\kappa,
      \tup{n}_1,\dotsc,\tup{n}_\lambda]
    \iff\
    & \eclass{\tup{a}_1}{\approx},\dotsc,\eclass{\tup{a}_\kappa}{\approx}
      \in V(\Theta[A]), \\
    & \num[A]{\tup{n}_1},\dotsc,\num[A]{\tup{n}_\lambda}
      \in N(\Theta[A]),\ \text{and} \\
    & \Theta[A] \models
      \phi\bigl[
        \eclass{\tup{a}_1}{\approx},\dotsc,\eclass{\tup{a}_\kappa}{\approx},
        \num[A]{\tup{n}_1},\dotsc,\num[A]{\tup{n}_\lambda}
      \bigr],
  \end{align*}
  where $\approx$ is the equivalence relation as defined
  in Definition~\ref{def:interpr}.
\end{lemma}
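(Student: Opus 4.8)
The overall strategy is an induction on the structure of $\phi$, following the usual pattern of interpretation lemmas but with two complications specific to $\LREC_=$. First, a structure variable $x$ of $\phi$ ranges over $V(\Theta[A])$, which consists of $\approx$-classes of tuples in $\theta_V[A;\tup{u}]$; it must therefore be simulated by an $\ell$-tuple $\tup{u}_x$ of variables over $A$, and to test equality in $\Theta[A]$ we need a formula for $\approx$. Although $\approx$ is only the reflexive, symmetric, transitive closure of $\theta_\approx[A;\tup{u},\tup{v}]$ and hence not first-order over $A$, it \emph{is} $\LREC_=[\tau_1]$-definable: applying the $\lrec_=$-operator to $\phi_= \isdef \theta_\approx$, an unsatisfiable $\phi_{\graphE}$, and a reachability label, exactly as in the undirected-reachability example above, yields a formula $\phi_\approx(\tup{u},\tup{v})$ with $A \models \phi_\approx[\tup{a},\tup{b}]$ iff $\tup{a}\approx\tup{b}$. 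Second, a number variable $p$ of $\phi$ ranges over $N(\Theta[A]) = [0,\card{V(\Theta[A])}]$, whose top value can exceed $\card{V(A)}$; it is simulated by a tuple of number variables over $A$ read through $\num[A]{\cdot}$, and we carry along a formula $\mathrm{valid}(\cdot)$ (built from $\theta_V$, $\phi_\approx$, and a counting operator computing $\card{V(\Theta[A])}$ as the number of $\approx$-classes of $\theta_V$-tuples) asserting that a candidate encoding is legitimate. Finally, translating between the encoding $\num[A]{\cdot}$ on long $A$-tuples and the encoding $\num[{\Theta[A]}]{\cdot}$ used on number-variable tuples of $\phi$ is arithmetic on polynomially bounded numbers in base $\card{V(A)}+1$, which is \FOC-, hence $\LREC_=$-, expressible; I use this freely as ``arithmetic glue''.

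With these preliminaries, the routine inductive cases go as follows. An atomic formula $R(x_{i_1},\dots,x_{i_r})$ translates to $\exists \tup{u}_1' \cdots \exists \tup{u}_r'\,\bigl(\bigwedge_j(\mathrm{valid}(\tup{u}_j') \land \phi_\approx(\tup{u}_{x_{i_j}},\tup{u}_j')) \land \theta_R(\tup{u}_1',\dots,\tup{u}_r')\bigr)$, the existential picking the representatives used in the definition of $R(\Theta[A])$ in Definition~\ref{def:interpr}; an equality $x = y$ between structure variables becomes $\phi_\approx(\tup{u}_x,\tup{u}_y)$; equalities and $\leq$ between number terms become the corresponding comparisons of encoded values; Boolean connectives are passed through; $\exists x$ and $\forall x$ for a structure variable relativize to $\ell$-tuples satisfying $\theta_V$; $\exists p$ and $\forall p$ for a number variable relativize the simulating tuple to encode a value at most $\card{V(\Theta[A])}$. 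The only slightly delicate routine case is the counting operator $\#\tup{x}\,\psi = \tup{p}$: it must count \emph{$\approx$-classes} of $\theta_V$-tuples satisfying the already-translated $\psi^{-\Theta}$, not the tuples themselves. Since there is no order on $V(A)$ with which to pick canonical representatives, one counts, for each class size $s$, the $\theta_V$-tuples lying in classes of size exactly $s$ and satisfying $\psi^{-\Theta}$, divides by $s$, and sums over $s$ --- a composition of counting and polynomially bounded arithmetic, hence expressible. All of these preserve $\LREC_=[\tau_1]$-membership and boundedness.

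The substantive case is an $\LREC_=[\tau_2]$-subformula $\psi = \lreceq{\tup{y}_1}{\tup{y}_2}{\tup{p}'}{\psi_=}{\psi_{\graphE}}{\psi_{\graphC}}(\tup{w}',\tup{r}')$. By the induction hypothesis we have $\LREC_=[\tau_1]$-translations $\psi_=^{-\Theta}$, $\psi_{\graphE}^{-\Theta}$, $\psi_{\graphC}^{-\Theta}$, in which the compatible tuples $\tup{y}_1,\tup{y}_2$ have been unfolded to compatible tuples $\tup{z}_1,\tup{z}_2$ of $A$-variables, $\tup{w}'$ to $\tup{z}_w$, and the number tuple $\tup{p}'$ to a tuple $\tup{z}_{p'}$ of number variables. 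I then take $\phi^{-\Theta}$ to be the existential closure --- binding $\tup{z}_w$ and a resource tuple $\tup{z}_r$ of number variables, asserting that they encode the given arguments and that $\tup{z}_w$ is valid --- of a \emph{new} $\lrec_=$-operator $\Psi$ over $A$ with bound tuples $\tup{z}_1,\tup{z}_2$, whose three component formulas are: $\Phi_{\graphE} \isdef \mathrm{valid}(\tup{z}_1) \land \mathrm{valid}(\tup{z}_2) \land \psi_{\graphE}^{-\Theta}$ (edges only between valid encodings); $\Phi_= \isdef (\mathrm{valid}(\tup{z}_1) \land \mathrm{valid}(\tup{z}_2) \land \psi_=^{-\Theta}) \lor \mathrm{enceq}(\tup{z}_1,\tup{z}_2)$, where $\mathrm{enceq}$ holds iff $\tup{z}_1,\tup{z}_2$ are valid encodings of the same element of $\Dom{\Theta[A]}{\tup{y}_1}$ (structure blocks $\approx$-related via $\phi_\approx$, number blocks of equal value); and $\Phi_{\graphC}$, asserting that $\tup{z}_1$ is valid and that there is a valid number-encoding $\tup{z}_{p'}$ with $\psi_{\graphC}^{-\Theta}$ and with the new operator's number tuple denoting, via arithmetic glue, the same natural number as $\tup{z}_{p'}$; the resource tuple $\tup{z}_r$ is taken long enough and set, again via arithmetic glue, to the same natural number as $\tup{r}'$ names in $\Theta[A]$. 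The correctness claim to be checked is: valid encodings modulo $\mathrm{enceq}$ are in bijection with $\Dom{\Theta[A]}{\tup{y}_1}$; the reflexive--symmetric--transitive closure of $\Phi_=$, restricted to valid encodings, is exactly the pullback along this bijection of the relation $\sim$ from the $\lrec_=$-semantics (here one uses the correctness of $\phi_\approx$ and that invalid encodings are $\Phi_=$-isolated from valid ones); hence the quotient labelled graph $\graphG$ built by $\Psi$ over $A$ is isomorphic --- with matching in-degrees $\card{\graphE\,\tup{b}}$ and matching label sets --- to the graph built by $\psi$ over $\Theta[A]$; and since the set $X$ defined by an $\lrec_=$-operator depends only on this labelled graph with in-degrees and on the resource value, the two copies of $X$ correspond under the bijection. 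This yields the claimed equivalence, and boundedness of $\phi^{-\Theta}$ is clear since only finitely many formulas are combined.

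The main obstacle is precisely this last verification. One has to argue carefully that the graph-quotienting in the $\lrec_=$-semantics commutes with the pullback --- i.e.\ that feeding $\Psi$ the combined equivalence formula $\Phi_=$ reproduces the $\sim$-quotient of $\Theta[A]$ and not a strictly finer or coarser partition --- and that the quantities on which the recursion depends, the in-degrees and the resource value (which enter through the floor-division in the definition of $X$), are preserved when one passes between the number sort of $\Theta[A]$ and that of $A$. The remaining ingredients --- $\LREC_=$-definability of $\approx$, counting $\approx$-classes, and the arithmetic glue --- are standard, and that the resulting $\LREC_=[\tau_1]$-formula is well-behaved is already guaranteed by Theorem~\ref{theo:lrec=-complexity}.
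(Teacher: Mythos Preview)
Your proposal is correct and follows essentially the same route as the paper: both argue by induction on the syntax of $\phi$, handle the atomic, Boolean, quantifier, and counting cases in the standard interpretation-lemma style (the paper cites \cite[Lemma~2.4.3]{laubner11diss} for counting equivalence classes, whereas you spell out the divide-by-class-size trick), and translate an $\lrec_=$-subformula by wrapping the inductively translated component formulae into a fresh $\lrec_=$-operator over $A$, together with a base-change for the resource term.

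Where your write-up is more explicit than the paper's is precisely the $\lrec_=$ case. The paper simply plugs $\phi_=^{-\Theta}$, $\phi_{\graphE}^{-\Theta}$, $\phi_{\graphC}^{-\Theta}$ into the new operator and defers the check to ``straightforward, though tedious, verification''. Your added disjunct $\mathrm{enceq}(\tup{z}_1,\tup{z}_2)$ in $\Phi_=$, collapsing distinct $A$-encodings of the same element of $\Dom{\Theta[A]}{\tup{y}_1}$, and the explicit base conversion inside $\Phi_{\graphC}$ are exactly the refinements needed for that verification to succeed: without the $\mathrm{enceq}$ disjunct, an element lying outside both the domain and range of $\psi_=$ would split into several $\sim$-classes over $A$, changing in-degrees and hence the floor-division in the recursion; and without the label conversion, the set $\graphC(\cdot)$ computed via $\num[A]{\cdot}$ on the expanded number tuple need not coincide with the intended $\num[\Theta[A]]{\cdot}$-values. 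So your proof is the same argument, but with these two points made visible rather than absorbed into ``tedious''.
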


\begin{proof}
  The proof is by induction on the structure of $\phi$.
  Without loss of generality,
  we assume that $\phi$ neither contains implication ($\limplies$)
  nor biimplication ($\lequiv$).

  To simplify the presentation,
  we consider a fixed $\tau_1$-structure $A$ where $\Theta[A]$ is defined
  and let $\approx$ be the equivalence relation as defined
  in Definition~\ref{def:interpr}.
  We also consider fixed $\tup{a}_1,\dotsc,\tup{a}_\kappa \in A^{\tup{u}}$
  and $\tup{n}_1,\dotsc,\tup{n}_\lambda \in N(A)^\ell$.
  The reader should consider $A$ and these tuples to be universally quantified
  in the statements where they occur.

  From $\theta_\approx(\tup{u},\tup{v})$,
  it is easy to construct an $\LREC_=$-formula
  $\theta'_\approx(\tup{u},\tup{v})$
  such that $\theta'_\approx[A;\tup{u},\tup{v}]$ is the equivalence relation
  generated by $\theta_\approx[A;\tup{u},\tup{v}]$,
  that is, the reflexive, symmetric, and transitive closure
  of $\theta_\approx(\tup{u},\tup{v})$.
  Let
  $
    \chi_s(\tup{u}) \isdef
    \exists \tup{v}\, \bigl(
      \theta_V(\tup{v})
      \land
      \theta'_\approx(\tup{u},\tup{v})
    \bigr).
  $
  Then for all $\tup{a} \in A^{\tup{u}}$,
  \[
    A \models \chi_s[\tup{a}]
    \iff
    \eclass{\tup{a}}{\approx} \in V(\Theta[A]).
  \]
  Using the construction from the proof of \cite[Lemma~2.4.3]{laubner11diss},
  we can construct an $\LREC_=$-formula $\delta^\#_V(\tup{q})$
  such that for all $\tup{n} \in N(A)^\ell$ we have
  $A \models \delta^\#_V[\tup{n}]$ whenever
  $\num[A]{\tup{n}} = \card{V(\Theta[A])}$.
  Hence, for
  $
    \chi_n(\tup{q}) \isdef
    \exists \tup{q}' \bigl(
      \delta^\#_V(\tup{q}') \land \text{``$\tup{q} \leq \tup{q}'$''}
    \bigr)
  $
  and for all $\tup{n} \in N(A)^\ell$,
  \[
    A \models \chi_n[\tup{n}]
    \iff
    \num[A]{\tup{n}} \in N(\Theta[A]).
  \]
  Finally, let
  \[
    \chi \isdef
    \bigland_{i \in [\kappa]} \chi_s(\tup{u}_i) \land
    \bigland_{i \in [\lambda]} \chi_n(\tup{q}_i)
  \]
  Then,
  \begin{multline*}
    A \models
    \chi[\tup{a}_1,\dotsc,\tup{a}_\kappa,\tup{n}_1,\dotsc,\tup{n}_\lambda]
    \iff\ \\
    \eclass{\tup{a}_1}{\approx},\dotsc,\eclass{\tup{a}_\kappa}{\approx}
    \in V(\Theta[A])\ \ \text{and}\ \
    \num[A]{\tup{n}_1},\dotsc,\num[A]{\tup{n}_\lambda}
    \in N(\Theta[A]).
  \end{multline*}

  Given $\phi(x_1,\dotsc,x_\kappa,p_1,\dotsc,p_\lambda)$
  we now construct
  $
    \phi^{-\Theta}(
      \tup{u}_1,\dotsc,\tup{u}_\kappa,\tup{q}_1,\dotsc,\tup{q}_\lambda
    )
  $
  inductively as follows:
  \begin{enumerate}[leftmargin=*]
  \item
    Suppose that $\phi = R(x_{i_1},\dotsc,x_{i_k})$,
    where $i_1,\dotsc,i_k \in [\kappa]$.
    Let $\Ic \isdef \set{i_1,\dotsc,i_k}$.
    Then,
    \[
      \phi^{-\Theta} \,\isdef\,
      \chi \land
      \left(\exists \tup{v}_i\right)_{i \in \Ic}
      \left(
        \bigland_{i \in \Ic} \theta'_\approx(\tup{u}_i,\tup{v}_i)
        \land
        \bigland_{i \in \Ic} \theta_V(\tup{v}_i)
        \land
        \theta_R(\tup{v}_{i_1},\dotsc,\tup{v}_{i_k})
      \right).
    \]
  \item
    If $\phi = x_i = x_j$, where $i,j \in [\kappa]$, then
    $
      \phi^{-\Theta} \,\isdef\,
      \chi \land \theta'_\approx(\tup{u}_i,\tup{u}_j).
    $
  \item
    If $\phi = p_i \star p_j$,
    where $\star \in \set{=,\leq}$ and $i,j \in [\lambda]$,
    then,
    $
      \phi^{-\Theta} \,\isdef\,
      \chi \land \text{``$\tup{q}_i \star \tup{q}_j$''}.
    $
  \item
    If $\phi = \lnot \psi$, then
    $
      \phi^{-\Theta} \,\isdef\,
      \chi \land \lnot \psi^{-\Theta}.
    $
  \item
    If $\phi = \psi_1 \star \psi_2$, where $\star \in \set{\land,\lor}$,
    then
    $
      \phi^{-\Theta} \,\isdef\,
      \psi_1^{-\Theta} \star \psi_2^{-\Theta}.
    $
  \item
    Suppose that $\phi = Q u\, \psi$ with $Q \in \set{\forall,\exists}$
    and $u \in \set{x_1,\dotsc,x_\kappa,p_1,\dotsc,p_\lambda}$.
    In case that $Q = \forall$ and $u = x_i$,
    we let
    $
      \phi^{-\Theta} \,\isdef\,
      \chi \land
      \forall \tup{u}_i \bigl(\chi_s(\tup{u}_i) \limplies \psi^{-\Theta}\bigr).
    $
    The other cases can be dealt with similarly.
  \item
    Suppose that
    $
      \phi =
      \# (x_{i_1},\dotsc,x_{i_k},p_{i_{k+1}},\dotsc,p_{i_{k+m}})\, \psi
      =
      (p_{j_1},\dotsc,p_{j_{k'}})
    $.
    Based on the construction from the proof
    of \cite[Lemma~2.4.3]{laubner11diss},
    it is possible to construct
    an $\LREC_=$-formula $\delta(\tup{r}_1,\dotsc,\tup{r}_{k'})$
    such that for all $\tup{m}_1,\dotsc,\tup{m}_{k'} \in N(A)^\ell$,
    \begin{align*}
      \hspace{4em} & A \models \delta[\tup{m}_1,\dotsc,\tup{m}_{k'}] \\
      & \iff
      \Bigl\lvert
        \Bigl\{
          \bigl(
            \eclass{\tup{a}_{i_1}}{\approx},\dotsc,
            \eclass{\tup{a}_{i_k}}{\approx},
            \num[A]{\tup{n}_{i_{k+1}}},\dotsc,\num[A]{\tup{n}_{i_{k+m}}}
          \bigr)
          \mid \\
      & \qquad\qquad\qquad\qquad\qquad\qquad
          A \models
          \psi^{-\Theta}[\tup{a}_1,\dotsc,\tup{a}_\kappa,
            \tup{n}_1,\dotsc,\tup{n}_\lambda]
        \Bigr\}
      \Bigr\rvert
      =
      \num[A]{\tup{m}_1,\dotsc,\tup{m}_{k'}},
    \end{align*}
    where
    \[
      \num[A]{\tup{m}_1,\dotsc,\tup{m}_{k'}}
      \,\isdef\,
      \sum_{s=1}^{k'} \num[A]{\tup{m}_s} \cdot \card{N(\Theta[A])}^{s-1}.
    \]
    We then let
    $
      \phi^{-\Theta} \,\isdef\,
      \chi \land \delta(\tup{q}_{j_1},\dotsc,\tup{q}_{j_{k'}}).
    $
  \item
    Suppose that
    $
      \phi =
      \lreceq{\tup{u}'}{\tup{v}'}{\tup{p}'}{\phi_=}{\phi_{\graphE}}{\phi_{\graphC}}
      (\tup{w}',\tup{r}').
    $
    Then,
    \[
      \phi^{-\Theta} \,\isdef\,
      \chi \land \exists \tup{r} \bigl(
        \beta(\tup{r}'',\tup{r}) \land
        \lreceq
          {\tup{u}''}
          {\tup{v}''}
          {\tup{p}''}
          {\phi_=^{-\Theta}}
          {\phi_{\graphE}^{-\Theta}}
          {\phi_{\graphC}^{-\Theta}}
          (\tup{w}'',\tup{r})
        \bigr),
    \]
    where $\tup{u}'',\tup{v}'',\tup{w}'',\tup{p}'',\tup{r}''$
    are obtained from $\tup{u}',\tup{v}',\tup{w}',\tup{p}',\tup{r}'$
    by replacing, for each $i \in [\kappa]$, the variable $x_i$ by $\tup{u}_i$,
    and for each $i \in [\lambda]$, the variable $p_i$ by $\tup{q}_i$;
    $\tup{r}$ is a tuple of number variables
    of length $\ell \cdot \len{\tup{r}'}$;
    and $\beta$ is defined as follows.
    For simplicity, assume that $\tup{r}' = (p_1,\dotsc,p_k)$.
    Hence, $\tup{r}'' = (\tup{q}_1,\dotsc,\tup{q}_k)$.
    The formula $\beta(\tup{r}'',\tup{r})$ has the property
    that for all $\tup{m} \in N(A)^{\ell \cdot k}$,
    \[
      A \models \beta[\tup{n}_1,\dotsc,\tup{n}_k,\tup{m}]
      \iff
      \num[A]{\tup{m}}
      =
      \sum_{s=1}^k \num[A]{\tup{n}_s} \cdot \card{N(\Theta[A])}^{s-1}.
    \]
    Note that, since $\card{N(\Theta[A])} \leq \card{N(A)}^\ell$,
    the tuple $\tup{m}$ is long enough to hold the sum on the right hand side.
    Constructing $\beta$ as desired is a not too difficult exercise.
  \end{enumerate}
  It is straightforward, though tedious,
  to verify that $\phi^{-\Theta}$ is as desired.
\end{proof}

\section{Capturing Logspace on Interval Graphs}\label{sec:interval}
\label{sec:interval-eq}

With the added expressive power of $\LREC_=$,
it is not only possible to capture $\LOGSPACE$ on the class of all trees,
but also on the class of all interval graphs,
as we shall show in this section.
Basically, interval graphs are graphs whose vertices are closed intervals,
and whose edges join any two distinct intervals with a non-empty intersection.
They form a well-established and widely investigated class of graphs,
and it was recently shown \cite{koebler10interval}
(see also \cite{laubner11diss})
that interval graph canonisation is in $\LOGSPACE$.

To prove that $\LREC_=$ captures $\LOGSPACE$ on interval graphs,
we proceed as in the case of directed trees.
First, we describe an $\LREC_=$-definable canonisation procedure
for interval graphs,
and then we use the fact that $\DTC$ (and hence $\LREC_=$)
captures $\LOGSPACE$ on ordered structures.
Our canonisation procedure combines algorithmic techniques
from \cite{koebler10interval}
with the logical definability framework in \cite{lau10}.
Parts of this section can be found in more detail in \cite{laubner11diss}.

\subsection{Background on Interval Graphs}\label{subsec:defIntG}

In this section,
we define interval graphs and state some basic properties.
For a more detailed exposition,
we refer the reader to \cite{laubner11diss}.

\begin{definition}[Interval graph, interval representation]
  Given a finite collection $\mathcal I$
  of closed intervals $I_i = [a_i,b_i] \subset \nat$,
  let $G_{\mathcal I} = (V,E)$ be the graph with vertex set $V = \mathcal I$,
  joining two distinct intervals $I_i,I_j \in V$ by an edge whenever
  $I_i \cap I_j \neq \emptyset$.
  We call $\mathcal I$ an \emph{interval representation} of a graph $G$
  if $G \isomorphic G_{\mathcal I}$.
  A graph $G$ is an \emph{interval graph}
  if there is an interval representation of $G$.
\end{definition}

\noindent
Figure~\ref{fig:interval-graph} shows an interval graph $G$
together with an interval representation of $G$.
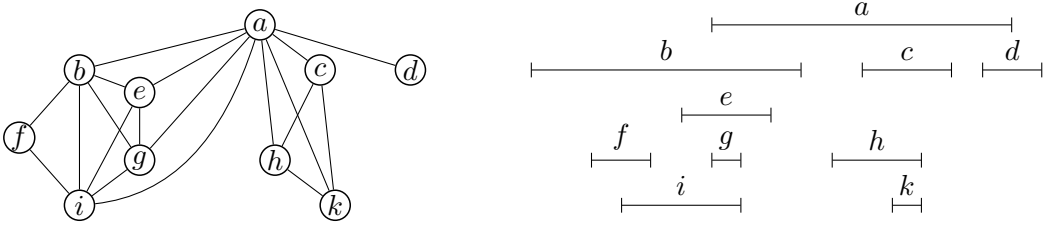
\begin{figure}
  \centering
  \begin{tikzpicture}
    \begin{scope}[xscale=0.4,yscale=0.6]
      \tikzstyle{interval}=[draw]
      \tikzstyle{ilabel}=[midway,above]

      \path[interval,|-|] (7,5) -- (17,5) node[ilabel] (a) {$a$};
      \path[interval,|-|] (1,4) -- (10,4) node[ilabel] (b) {$b$};
      \path[interval,|-|] (12,4) -- (15,4) node[ilabel] (c) {$c$};
      \path[interval,|-|] (16,4) -- (18,4) node[ilabel] (d) {$d$};
      \path[interval,|-|] (6,3) -- (9,3) node[ilabel] (e) {$e$};
      \path[interval,|-|] (3,2) -- (5,2) node[ilabel] (f) {$f$};
      \path[interval,|-|] (7,2) -- (8,2) node[ilabel] (g) {$g$};
      \path[interval,|-|] (11,2) -- (14,2) node[ilabel] (h) {$h$};
      \path[interval,|-|] (4,1) -- (8,1) node[ilabel] (i) {$i$};
      \path[interval,|-|] (13,1) -- (14,1) node[ilabel] (k) {$k$};
    \end{scope}
    \begin{scope}[xshift=-8cm,xscale=0.4,yscale=0.6]
      \node[vertex4] (a) at (7+5,5) {$a$};
      \node[vertex4] (b) at (1+4.5+0.5,4) {$b$};
      \node[vertex4] (c) at (12+1.5+0.5,4) {$c$};
      \node[vertex4] (d) at (16+1,4) {$d$};
      \node[vertex4] (e) at (6+1.5+0.5,3.5) {$e$};
      \node[vertex4] (f) at (3+1,2.5) {$f$};
      \node[vertex4] (g) at (7+0.5+0.5,2) {$g$};
      \node[vertex4] (h) at (11+1.5,2) {$h$};
      \node[vertex4] (i) at (4+2,1) {$i$};
      \node[vertex4] (k) at (13+0.5+1,1) {$k$};

      \path (a) edge (b) edge (c) edge (d) edge (e) edge (g) edge (h)
                edge[bend left=30] (i) edge (k);
      \path (b) edge (e) edge (f) edge (g) edge (i);
      \path (c) edge (h) edge (k);
      \path (e) edge (g) edge (i);
      \path (f) edge (i);
      \path (g) edge (i);
      \path (h) edge (k);
    \end{scope}
  \end{tikzpicture}
  \caption{An interval graph $G$ and an interval representation of $G$.}
  \label{fig:interval-graph}
\end{figure}

An interval representation $\mathcal I$ of a graph $G$ is called \emph{minimal}
if the set $\bigcup \mathcal I \subset \nat$ is of minimum size
among all interval representations of $G$.
Clearly, for any interval representation $\mathcal I$
there exists a minimal interval representation $\mathcal I_{\min}$
such that $G_{\mathcal I} \isomorphic G_{\mathcal I_{\min}}$.

Recall that a \emph{clique} of a graph $G = (V, E)$
is a set $C \subseteq V$ such that the subgraph of $G$ induced by $C$
is complete.
A maximal clique, or \emph{max clique}, of $G$
is a clique of $G$ that is not properly contained in another clique of $G$.
We denote the set of all max cliques of $G$ by $\mathcal M_G$.
Let $\mathcal{I}$ be a minimal interval representation of $G$
and $I_v$ denote the interval in $\mathcal{I}$ that corresponds
to vertex $v \in V$.
Then $M(k) = \{ v \bigmid k \in I_v\}$ is a max clique of $G$ for every $k$  for which $M(k)$  is non-empty.
Furthermore, for any max clique $M$ of $G$,
there is a $k \in \nat$ with $M = M(k)$.
Thus, any minimal interval representation of $G$
induces a linear order on $\mathcal{M}_G$
which has the property that each vertex is contained
in \emph{consecutive} max cliques.
It is known \cite{gilmore64characterization,moehring84algorithmic}
that a graph $G$ is an interval graph if and only if
its max cliques can be brought into a linear order,
so that each vertex of $G$ is contained in consecutive max cliques.

Thus, max cliques play an important role for the structure of interval graphs.
Our canonisation procedure essentially relies on bringing the max cliques
of an interval graph into a suitable order.

The maximal cliques of an interval graph $G = (V,E)$ can be handled
rather easily in our logic.
Let $\ngh(v)$ denote the closed neighbourhood of a vertex $v$ in $G$,
that is, the set containing $v$ and all vertices adjacent to $v$.
As shown in \cite{lau10},
the max cliques of $G$ can be identified by the vertex pairs $(u,v) \in V^2$
with the property that $\ngh(u) \cap \ngh(v)$ is a clique in $G$,
and for no other pair $(u',v') \in V^2$
where $\ngh(u') \cap \ngh(v')$ is a clique in $G$
it holds that $\ngh(u) \cap \ngh(v) \subsetneq \ngh(u') \cap \ngh(v')$:

\begin{lemma}[\cite{lau10}, Lemma IV.1]\label{lem:maxCliquesDefable}
Let $G$ be an interval graph and let $M$ be a max clique of $G$. Then there are vertices $u,v \in M$, not necessarily distinct, such that $M = \ngh(u)\cap \ngh(v)$.
\end{lemma}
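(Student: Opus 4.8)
The plan is to argue directly from an interval representation. First I would fix a minimal interval representation $\mathcal{I}$ of $G$ and write $I_v = [a_v,b_v]$ for the interval assigned to a vertex $v$. By the facts about the sets $M(k) = \{v : k \in I_v\}$ recalled just above, there is a point $k$ with $M = M(k)$. Among the vertices of $M$ I would pick $u$ with $b_u$ minimal and $v$ with $a_v$ maximal. Since every $I_x$ with $x\in M$ contains $k$, this gives $a_v \le k \le b_u$, so $I \isdef [a_v,b_u]$ is a nonempty interval; moreover $I = \bigcap_{x\in M} I_x$, and $I \subseteq I_x$ for every $x \in M$ by the choice of $u$ and $v$. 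Hence for every point $k' \in I$ we get $M \subseteq M(k')$, and since $M(k')$ is a clique and $M$ is a max clique, $M = M(k')$. The freedom to move $k'$ anywhere inside $I$ is exactly what the rest of the argument exploits.

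It then remains to show $M = \ngh(u)\intersect\ngh(v)$. The inclusion $M \subseteq \ngh(u)\intersect\ngh(v)$ is immediate, since $u,v\in M$ and $M$ is a clique, so every $x \in M$ equals or is adjacent to both $u$ and $v$. For the converse, given $w \in \ngh(u)\intersect\ngh(v)$, the definition of the edge relation of an interval graph yields $I_w\intersect I_u \neq \emptyset$ and $I_w\intersect I_v \neq \emptyset$, hence $a_w \le b_u$ and $a_v \le b_w$; combined with $a_v \le b_u$ this forces $\max(a_v,a_w) \le \min(b_u,b_w)$, i.e. $I_w \intersect I \neq \emptyset$. Choosing any $k' \in I_w \intersect I$ then gives $w \in M(k') = M$. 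Note that $u$ and $v$ may coincide, for instance when $\card{M}=1$, which is why the statement permits them to be equal.

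I do not expect a real obstacle here: the only place that calls for care is the endpoint bookkeeping in the second inclusion, together with the observation that $M$ equals $M(k')$ for \emph{every} $k'$ in the common intersection $\bigcap_{x\in M}I_x$, not merely for the particular point $k$ one starts from.
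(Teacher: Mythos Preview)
Your argument is correct. The choice of $u$ with minimal right endpoint and $v$ with maximal left endpoint among the intervals in $M$ is exactly the right one, and the endpoint arithmetic in the second inclusion is sound: from $I_w\cap I_u\neq\emptyset$ you get $a_w\le b_u$, from $I_w\cap I_v\neq\emptyset$ you get $a_v\le b_w$, and together with $a_v\le b_u$ and $a_w\le b_w$ all four pairwise inequalities needed for $\max(a_v,a_w)\le\min(b_u,b_w)$ hold. The observation that $M=M(k')$ for every $k'$ in $\bigcap_{x\in M}I_x$, not just for the original $k$, is indeed the point that makes the argument close.

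Note that the paper does not actually prove this lemma: it is quoted from \cite{lau10} without proof. Your argument is the standard one for this fact and is presumably what the cited paper does as well, since the interval-representation approach with extremal endpoints is the natural (and essentially only) way to exhibit such $u$ and $v$.
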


\noindent
In particular, the max cliques of $G$ as well as the equivalence relation
on vertex pairs defining the same max clique are first-order definable.

\subsection{Modular Decompositions}
\label{modular decomposition tree}

Our canonisation procedure relies on a specific decomposition of graphs,
known as \emph{modular decomposition},
which was first introduced by Gallai \cite{gallai67transitiv}.
The basic building blocks of modular decompositions are \emph{modules}.
Given a graph $G=(V,E)$,
a set $W \subseteq V$ is a \emph{module} of $G$
if for all vertices $v \in V \setminus W$
either $\{v\}\times W \subseteq E$ or $(\{v\}\times W) \cap E = \emptyset$.
Note that $V$ and all singleton vertex sets are modules of $G$,
called \emph{trivial modules}.
We call a module $W$ proper if $W\subsetneq V$.

Gallai's modular decomposition is based on the following:
If $G$ is not connected, then its connected components $W_1,\dotsc,W_k$
are clearly proper modules.
Similarly, if the complement graph $G^c$ of $G$ is not connected,
then the connected components $W_1,\dotsc,W_k$ of $G^c$ are
proper modules of $G$.
For graphs $G$ with more than one vertex where both $G$ and $G^c$ are connected,
Gallai  shows in \cite{gallai67transitiv} that the set of maximal proper modules of $G$ is a partition of $G$'s vertex set.
We base our modular decomposition on the same properties, only for the last one we use a slightly different partition
into modules  $W_1,\dotsc,W_k$, which we define in  Section~\ref{sec:DecompModulesLG}.%
\footnote{The main difference
between our decomposition and Gallai’s is that we do not bother to
create extra modules for sets of pairwise connected twins
since we can handle them perfectly well with our methods.}
Let  $\mathcal W_G$ be the set of modules $W_1,\dotsc,W_k$
and let $\sim_G$ be the equivalence relation on $V$ corresponding to
the partition $\mathcal W_G$
(i.e., $v \sim_G w$ whenever $v,w \in W_i$ for some $i \in [k]$).
Let us consider the graph
\[
  L_G := (V/_{\sim_{G}},E_{L_G}),
  \quad \text{where} \quad
  E_{L_G} := \set{(u/_{\sim_G},v/_{\sim_G}) \mid (u,v) \in E}.
\]
Intuitively, $L_G$ is the graph obtained from $G$
by collapsing all the modules in $\mathcal W_G$ into single vertices.
Since each pair of modules $W_i,W_j \in \mathcal W_G$, $i \neq j$,
is either completely connected or completely disconnected,
$G$ is completely determined by $L_G$ and the graphs $G[W_i]$, for $i \in [k]$,
where $G[W_i]$ denotes the subgraph of $G$ induced by the vertices in $W_i$.
By decomposing the $G[W_i]$, $i\in[k]$, inductively until we arrive at singleton sets everywhere,
we obtain $G$'s  modular decomposition.

We  define the \emph{modular decomposition tree} $T(G)$ of a graph $G$ recursively.
If $|V|=1$, then $T(G)$ is the rooted tree that consists of only one vertex, vertex $V\!$,
which is the root of $T(G)$.
Let $|V|>1$. Then, the modular decomposition tree
$T(G)$ is a rooted tree which consists of a vertex $V\!$, which is the root of $T(G)$,
and of subtrees $T(G[W])$ for all $W\in \mathcal W_G$.
We obtain $T(G)$ by adding an edge
from $V$ to the root of  $T(G[W])$ for all $W\in \mathcal W_G$.
This modular decomposition tree is uniquely determined for every graph $G$
\cite{gallai67transitiv}.

Notice that for an interval graph $G$ where $G^c$ is not connected,
all except one connected component of $G^c$ must contain only a single vertex.
Each of these single vertices is adjacent to all other vertices in $G$.
We call a vertex with that property an \emph{apex}.
Thus, if $G$ is an interval graph with $G^c$ disconnected,
then $\mathcal W_G= \bigcup_{a\in A} \{\{a\}\} \cup \{V\setminus A\}$ where $A$ is the set of apices, and
the graph $L_G$ is isomorphic to a clique.
Also, if $G$ contains an apex, then either $|V|=1$ or   $G^c$ is not connected.

\medskip
The following three sections
are about defining and canonising the graph $L_G$ for an interval graph $G$.
This is easy for unconnected graphs $G$ or graphs that have at least one apex.
Thus, we will consider connected graphs without any apices.

\subsection{Extracting Information About the Order of Maximal Cliques}
\label{subsec:orderingMaxCliques}

Throughout this section let $G$ be a connected interval graph without any apices.

We call a max clique $C$ a possible \emph{end} of $G$
if there is a minimal interval representation $\mathcal I$ of $G$ so that $C$
is minimal with respect to the order induced by $\mathcal I$.

Now we pick a max clique $M$ of $G$. We assume it to be a possible end of $G$, and give a
recursive procedure that turns out to recover all the information about the order of the max cliques
induced by choosing $M$ as an end of $G$.

Let $M\in \mathcal M_G$.
The binary relation $\prec_M$\index{ @$\prec_M$} is defined recursively on the elements of $\mathcal M_G$ as follows:
\begin{align*}
\mbox{Initialisation:}& \quad M \prec_M C \mbox{ for all } C \in \mathcal M_G \setminus \{M\}\\
  C\prec_M D & \quad\mbox{if }
  \begin{cases} \exists E \in \mathcal M_G \mbox{ with } E \prec_M D \mbox{ and }
  (E\cap C) \setminus D \neq \emptyset \quad\mbox{or}\\
   \exists E \in \mathcal M_G \mbox{ with } C \prec_M E \mbox{ and }
   (E\cap D) \setminus C \neq \emptyset. \tag{$\bigstar$} \end{cases}\label{eqn:indDef}
\end{align*}

By exploiting the definition's symmetry, $\prec_M$ can be defined through a
reachability query in the undirected graph $O_M$, which has pairs of max
cliques from $\mathcal M_G$ as its vertices, and in which two vertices $(A,B)$
and $(C,D)$ are connected by an edge whenever $A \prec_M B$ implies $C \prec_M
D$ with one application of (\ref{eqn:indDef}).
Hence:

\begin{lemma}\label{lem:maxcliqueOrderSTCdefinable}
There exists an \STC-formula that for any interval graph $G$ and for any max clique $M$ of $G$ defines the relation $\prec_M$.
\end{lemma}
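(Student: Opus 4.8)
The plan is to make the reduction sketched just above precise: first realise $\prec_M$ as plain undirected reachability in a first-order definable auxiliary graph, and then express that reachability with one application of the symmetric transitive closure operator.

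I would start from the fact, recalled after Lemma~\ref{lem:maxCliquesDefable}, that there is a first-order formula $\mu(x,y)$ stating that $\ngh(x)\intersect\ngh(y)$ is a max clique of $G$, and a first-order formula $\epsilon(x,y,x',y')$ stating that $\ngh(x)\intersect\ngh(y)$ and $\ngh(x')\intersect\ngh(y')$ are the same max clique. From these, ``$z$ belongs to the max clique represented by $(x,y)$'' and predicates of the form ``the cliques represented by $(x_1,x_2)$ and $(y_1,y_2)$ have a common vertex that does not lie in the clique represented by $(z_1,z_2)$'' are again first-order definable, since they only quantify over a single vertex of $G$. Thus I can represent a vertex of the graph $O_M$ (an ordered pair of max cliques) by a 4-tuple $\tup{x}=(c_1,c_2,d_1,d_2)$ of vertices of $G$, and write down a first-order formula $\eta_M(\tup{x},\tup{x}')$, with $M$ supplied through two extra free variables $m_1,m_2$, expressing that $(\tup{x},\tup{x}')$ is an edge of $O_M$. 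Unwinding $(\bigstar)$, this edge formula is: either the ``$D$''-cliques of $\tup{x}$ and $\tup{x}'$ are equal (via $\epsilon$) and their ``$C$''-cliques share a vertex outside that common ``$D$''-clique, or their ``$C$''-cliques are equal and their ``$D$''-cliques share a vertex outside that common ``$C$''-clique. As noted in the excerpt, $(\bigstar)$ is symmetric, which is visible here: swapping $\tup{x}$ and $\tup{x}'$ leaves $\eta_M$ unchanged, so $O_M$ is genuinely undirected.

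The desired formula is then
\[
  \psi(m_1,m_2,c_1,c_2,d_1,d_2)
  \ \isdef\
  \exists y_1\,\exists y_2\,\Bigl(
    \mu(y_1,y_2)\ \land\ \lnot\,\epsilon(m_1,m_2,y_1,y_2)\ \land\
    \bigl[\mathsf{stc}_{\tup{x},\tup{x}'}\,\eta_M(\tup{x},\tup{x}')\bigr]
      \bigl((m_1,m_2,y_1,y_2),(c_1,c_2,d_1,d_2)\bigr)
  \Bigr),
\]
possibly conjoined with $\mu(c_1,c_2)\land\mu(d_1,d_2)$; intuitively it says ``$(C,D)$ is reachable in $O_M$ from an initialisation pair $(M,C')$ with $C'\neq M$''. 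Since $\mu$, $\epsilon$, $\eta_M$ are first-order and only one $\mathsf{stc}$-operator is used, $\psi$ is an $\STC$-formula.

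For correctness I would prove that the set $R$ of ordered pairs of max cliques reachable in $O_M$ from some initialisation pair equals $\prec_M$. The inclusion $R\subseteq\prec_M$ follows by induction on the length of a shortest $O_M$-path, using that each $\eta_M$-edge is exactly one instance of $(\bigstar)$ applied in the appropriate direction. For $\prec_M\subseteq R$ I would observe that $R$ contains all initialisation pairs and is closed under $(\bigstar)$ — each clause of $(\bigstar)$ that would derive $(C,D)$ does so from a premise $(E,D)$ or $(C,E)$ that, once it is in $R$, is joined to $(C,D)$ by an $\eta_M$-edge — so $R$ contains the least relation with these properties, namely $\prec_M$. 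I would also note that the reflexive closure inherent in the $\mathsf{stc}$ operator adds nothing spurious, since no $\eta_M$-edge enters a ``diagonal'' vertex $(C,C)$ and no initialisation pair is diagonal, matching the irreflexivity of $\prec_M$. The step I expect to cost the most effort is the bookkeeping for $\eta_M$: getting exactly the two clauses of $(\bigstar)$, with the correct coordinate playing the role of the quantified $E$ in each, and verifying that the resulting formula is symmetric, so that an $\mathsf{stc}$ operator (rather than an asymmetric transitive closure) genuinely suffices.
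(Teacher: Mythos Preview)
Your proposal is correct and follows essentially the same route as the paper: the paragraph preceding the lemma already spells out that $\prec_M$ is undirected reachability in the first-order definable graph $O_M$ on pairs of max cliques, and you have simply made this explicit using the representation of max cliques from Lemma~\ref{lem:maxCliquesDefable} together with one application of $\mathsf{stc}$. Your verification that $\eta_M$ is symmetric, that diagonal vertices are isolated, and that the set reachable from the initialisation pairs coincides with $\prec_M$ is exactly the bookkeeping the paper leaves implicit.
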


We now state a few important properties of $\prec_M$.
Recall that a binary relation $R$ on a set $A$ is \emph{asymmetric} if $ab \in R$ implies $ba \not\in R$ for all $a,b\in A$.
In particular, asymmetric relations are irreflexive.

\begin{lemma}[\cite{lau10}, Lemma~IV.3, Corollary~IV.6, Lemma~IV.7] \label{lem:partialOrderEqEnd}
Let $M$ be a max clique of an interval graph $G$.
Then the following properties are equivalent:
\begin{itemize}[leftmargin=*]
	\item $\prec_M$ is asymmetric,
	\item $\prec_M$ is a strict weak order (that is, $\prec_M$ is irreflexive, transitive, and incomparability is an equivalence relation),
	\item $M$ is a possible end of $G$.
\end{itemize}
\end{lemma}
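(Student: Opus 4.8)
The plan is to prove the three statements equivalent by running the cycle $(3)\Rightarrow(2)\Rightarrow(1)\Rightarrow(3)$, where $(1)$, $(2)$, $(3)$ denote the three bullets in the order listed. The implication $(2)\Rightarrow(1)$ is immediate: a strict weak order is irreflexive and transitive, and an irreflexive transitive relation is asymmetric (if $C\prec_M D$ and $D\prec_M C$, transitivity gives $C\prec_M C$, contradicting irreflexivity). So the substance is in the two remaining implications.

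For $(3)\Rightarrow(2)$ I would fix a minimal interval representation of $G$ in which $M$ is the minimal max clique, and let $C_1=M\lessdot C_2\lessdot\dots\lessdot C_t$ be the induced linear order on $\mathcal M_G$; by the consecutive-ordering characterisation of interval graphs recalled in Section~\ref{subsec:defIntG}, every vertex $v$ occupies a block $\{C_i : \mathrm{lo}(v)\le i\le\mathrm{hi}(v)\}$ of this order. First, a soundness step: by induction on the derivation of a pair from the initialisation and rule~$(\bigstar)$ one shows that $C\prec_M D$ forces $C=C_i$, $D=C_j$ with $i<j$ — each application of $(\bigstar)$ supplies a vertex that, together with the block property, pushes the index of $D$ strictly past that of $C$. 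In particular $\prec_M$ is irreflexive and asymmetric. Second, I would identify $\prec_M$ with an explicit relation read off from the representation — roughly, $C_a\prec_M C_b$ holds exactly when $a<b$ and either $a=1$ or some vertex lies in $C_a$, starts strictly before $C_a$, and does not reach $C_b$ — and check both inclusions by induction on $(\bigstar)$. From this concrete description, transitivity of $\prec_M$ and transitivity of its incomparability relation are direct: if a vertex witnesses the condition for $(C_a,C_b)$ it also witnesses it for $(C_a,C_{b'})$ with $b'\ge b$, and similarly a vertex forcing incomparability of $C_a,C_b$ forces incomparability of $C_a,C_{b'}$.

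For $(1)\Rightarrow(3)$ I would assume $\prec_M$ is asymmetric and argue in two stages. The first stage bootstraps asymmetry to the full strict weak order: a witnessing triple for a failure of transitivity, or of transitivity of incomparability, can be traced back to the applications of $(\bigstar)$ that introduced the offending pairs, and closing those overlap conditions up under $(\bigstar)$ produces a pair $C,D$ with both $C\prec_M D$ and $D\prec_M C$ — contradicting asymmetry. Given that $\prec_M$ is a strict weak order, let $\approx$ be its incomparability relation; $\prec_M$ then linearly orders $\mathcal M_G/{\approx}$ with the class of $M$ as least element. The second stage turns this into a minimal interval representation with $M$ as an end: extend the order on classes to a linear order $\lessdot$ on $\mathcal M_G$ (classes ordered by $\prec_M$, $M$ first, and each class ordered internally by recursively applying the whole construction to the subgraph it spans), and show $\lessdot$ is a consecutive ordering. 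If it were not, some vertex $v$ would lie in cliques $A$ and $C$ but not in an intermediate clique $B$; then $v\in(A\cap C)\setminus B$, so if $A\prec_M B$ rule~$(\bigstar)$ immediately yields $C\prec_M B$, contradicting $B\lessdot C$ (as $\lessdot$ extends $\prec_M$), while the remaining possibility $A\approx B$ is excluded by the recursive choice of the intra-class order. A consecutive ordering of $\mathcal M_G$ with $M$ first is precisely a minimal interval representation in which $M$ is a possible end.

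I expect $(1)\Rightarrow(3)$ to be the main obstacle. Within it, the delicate points are the bootstrap ``asymmetric $\Rightarrow$ strict weak order'', which needs a careful case analysis of how pairs enter $\prec_M$ via $(\bigstar)$, and the intra-class ordering inside the linear extension, which amounts to reconstructing the internal combinatorics of the clique arrangement (a PQ-tree–style recursion) and is where most of the bookkeeping lives. By comparison, $(3)\Rightarrow(2)$ is essentially a soundness-and-completeness check of the closure rule against an already-given representation, and $(2)\Rightarrow(1)$ is a one-liner.
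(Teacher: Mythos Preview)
The paper does not prove this lemma: it is stated with an explicit citation to \cite{lau10} (Lemma~IV.3, Corollary~IV.6, Lemma~IV.7) and no proof is given here. There is therefore nothing in the present paper to compare your proposal against; the authors simply import the result.

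As a standalone sketch your plan is reasonable in outline, and the soundness direction in $(3)\Rightarrow(2)$ is essentially correct: an easy induction on the derivation via $(\bigstar)$ shows that $\prec_M$ refines the linear order coming from any representation with $M$ leftmost. Two places deserve more care, though. First, your ``explicit description'' of $\prec_M$ (``$a<b$ and either $a=1$ or some vertex lies in $C_a$, starts strictly before $C_a$, and does not reach $C_b$'') is not quite the right characterisation and will not directly yield that incomparability is transitive; the correct description of the incomparability classes is what Lemma~\ref{cor:incompImpModule} (also imported from \cite{lau10}) provides, and that is the cleaner route to the strict-weak-order conclusion. Second, the bootstrap ``asymmetric $\Rightarrow$ strict weak order'' in $(1)\Rightarrow(3)$ is the genuinely nontrivial step, and your sketch (``trace back the offending pairs and close up under $(\bigstar)$'') is too vague to be convincing as stated; in \cite{lau10} this is handled via a structural analysis that ties incomparability classes to modules, rather than by a direct contradiction argument on derivation chains. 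If you want to reconstruct the proof, I would suggest reading the cited lemmas in \cite{lau10} rather than reinventing them.
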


Since $\prec_M$ is \STC-definable  and
asymmetry of $\prec_M$ is \FO-definable, the preceding lemma gives us a way to define possible ends of interval graphs in \STCC.

\begin{lemma}\label{lem:moduleImpIncomparable}
Let $\mathcal C \subset \mathcal M_G$ be a set of max cliques with $M \not\in \mathcal C$.
Suppose that for all $B\in \mathcal M_G\setminus\mathcal C$ and any $C,C'\in\mathcal C$ it holds that $B\cap C = B\cap C'$.
Then the max cliques in $\mathcal C$ are mutually incomparable with respect to $\prec_M$.
\end{lemma}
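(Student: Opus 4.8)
The plan is to argue by induction on the number of applications of rule (\ref{eqn:indDef}) needed to put a pair into $\prec_M$. View $\prec_M$ as the least binary relation on $\mathcal M_G$ that contains all initialisation pairs $(M,C)$ with $C \in \mathcal M_G \setminus \{M\}$ and is closed under (\ref{eqn:indDef}), and write $\prec_M^{(n)}$ for the set of pairs derivable using at most $n$ applications of (\ref{eqn:indDef}), so that $\prec_M = \bigcup_{n \geq 0} \prec_M^{(n)}$ (the increasing sequence stabilises because $\mathcal M_G$ is finite). The invariant I would maintain is that no pair $(C,D)$ with both $C,D \in \mathcal C$ ever lies in $\prec_M^{(n)}$; letting $n \to \infty$, this gives precisely the assertion of the lemma, since it rules out both $C \prec_M D$ and $D \prec_M C$ for any $C,D \in \mathcal C$.

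For the base case $n = 0$ I would observe that $\prec_M^{(0)}$ consists exactly of the initialisation pairs $(M,C)$, none of which has both components in $\mathcal C$ because $M \notin \mathcal C$. For the inductive step I would assume the invariant for $n$ and suppose, for contradiction, that some $(C,D) \in \prec_M^{(n+1)}$ has $C,D \in \mathcal C$; by (\ref{eqn:indDef}) there is a witness $E \in \mathcal M_G$, and two cases arise. In the first, $(E,D) \in \prec_M^{(n)}$ and $(E \cap C) \setminus D \neq \emptyset$; since $D \in \mathcal C$ and $(E,D) \in \prec_M^{(n)}$, the inductive hypothesis forces $E \notin \mathcal C$, so the hypothesis of the lemma, applied with $B := E$ and the two cliques $C,D \in \mathcal C$, gives $E \cap C = E \cap D$, whence $(E \cap C) \setminus D = (E \cap D) \setminus D = \emptyset$ — a contradiction. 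In the second case, $(C,E) \in \prec_M^{(n)}$ and $(E \cap D) \setminus C \neq \emptyset$; now $C \in \mathcal C$ together with $(C,E) \in \prec_M^{(n)}$ and the inductive hypothesis again give $E \notin \mathcal C$, and the lemma's hypothesis (with $B := E$) gives $E \cap C = E \cap D$, so $(E \cap D) \setminus C = (E \cap C) \setminus C = \emptyset$ — again a contradiction. This would complete the induction, and with the base case the lemma follows.

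I do not expect a genuine obstacle here; the argument is essentially bookkeeping. The two points that need care are (i) presenting $\prec_M$ as a least fixed point so that induction on derivation length is legitimate, and (ii) in each of the two clauses of (\ref{eqn:indDef}), correctly identifying which of the two cliques in the already-derived pair ($(E,D)$ in the first clause, $(C,E)$ in the second) lies in $\mathcal C$ — it is exactly this that lets the inductive hypothesis conclude $E \notin \mathcal C$ and thereby invoke the hypothesis $B \cap C = B \cap C'$. Equivalently, one could phrase the same induction along a shortest path in the graph $O_M$ from an initialisation pair to $(C,D)$, which may fit the text's presentation more naturally, but the combinatorial core is identical.
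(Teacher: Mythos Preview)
Your proof is correct and essentially the same as the paper's. The paper phrases it via a \emph{derivation chain} $X_0\prec_M Y_0,\ldots,X_k\prec_M Y_k$ and looks at the largest index $i$ where $X_i$ or $Y_i$ lies outside $\mathcal C$, then derives the same contradiction from $B\cap C=B\cap C'$; your induction on the level $n$ of $\prec_M^{(n)}$ is just the contrapositive packaging of that boundary argument, and your two cases match the paper's two cases exactly.
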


\proof
By a derivation chain of length $k$ we mean a finite sequence $X_0 \prec_M Y_0$, $X_1 \prec_M Y_1$, $\ldots$, $X_k \prec_M Y_k$ such that $X_0 = M$ and for each $i \in [k]$, the relation $X_i \prec_M Y_i$ follows from $X_{i-1} \prec_M Y_{i-1}$ by one application of (\ref{eqn:indDef}). Clearly, whenever it holds that $X \prec_M Y$ there is a derivation chain that has $X \prec_M Y$ as its last element.

Suppose for contradiction that there are $C,C'\in\mathcal C$ with $C\prec_M C'$. Let $M \prec_M Y_0$, $X_1 \prec_M Y_1$, $\ldots$, $X_k \prec_M Y_k$ be a derivation chain for $C\prec_M C'$. Since $X_k = C$, $Y_k = C'$, and $M\not\in \mathcal C$, there is a largest index $i$ so that either $X_i$ or $Y_i$ is not contained in $\mathcal C$.

If $X_i \not\in \mathcal C$, then $X_{i+1}\in\mathcal C$ and $Y_i=Y_{i+1} \in\mathcal C$ and it holds that $X_i \cap X_{i+1}\setminus Y_{i+1} \neq \emptyset$. Consequently, $X_i \cap X_{i+1} \neq X_i \cap Y_{i+1}$, contradicting the assumption of the lemma. Similarly, if $Y_i \not\in \mathcal C$, then $Y_{i+1}\in\mathcal C$ and $X_i = X_{i+1}\in\mathcal C$ and it holds that $Y_i\cap Y_{i+1}\setminus X_{i+1} \neq \emptyset$. Thus, $Y_i \cap Y_{i+1} \neq Y_i \cap X_{i+1}$, again a contradiction.\qed

The \emph{span} of a vertex $v\in V$ in $G$, denoted $\spn(v)$,
is the number of max cliques of $G$ that $v$ is contained in.
Recall from Section~\ref{subsec:defIntG}
that the equivalence relation on vertex pairs defining the same max clique
is first-order definable.
Note that, since equivalence classes can be counted in $\STCC$ \cite[Lemma~II.7]{lau10},
the span of a vertex is \STCC-definable on the class of all interval graphs.

\begin{lemma}[\cite{lau10}, Lemma~IV.4, Corollary~IV.5]\label{cor:incompImpModule}
Suppose $M$ is a possible end of $G$
and $\mathcal C$ is a maximal set of $\prec_M$-incomparable max cliques. Then
\begin{itemize}[leftmargin=*]
\item $B\cap C= B\cap C'$ for all $C,C'\in  \mathcal C$,  $B\in  \mathcal M_G\setminus \mathcal C$,
 \item $S_{\mathcal C} := \bigcup_{C \in \mathcal C} C \setminus \bigcup_{B \in \mathcal M_G \setminus \mathcal C}B$ is a module of $G$, and
 \item $S_{\mathcal C} = \left\{ v \in \bigcup\mathcal C \;\big|\; \spn(v) \leq |\mathcal C| \right\}$.\qed
\end{itemize}
\end{lemma}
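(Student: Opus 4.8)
The plan is to prove the first item directly from the inductive definition~(\ref{eqn:indDef}) of $\prec_M$ and then obtain the other two items from it by short max-clique arguments. The only input needed beyond~(\ref{eqn:indDef}) is that, since $M$ is a possible end, Lemma~\ref{lem:partialOrderEqEnd} makes $\prec_M$ a strict weak order, so $\prec_M$-incomparability is an equivalence relation on $\mathcal M_G$. Hence a \emph{maximal} set $\mathcal C$ of pairwise $\prec_M$-incomparable max cliques is exactly one incomparability class, and in particular every $B \in \mathcal M_G \setminus \mathcal C$ is $\prec_M$-comparable to every $C \in \mathcal C$. (Because $M \prec_M C$ for every $C \neq M$, one has $M \in \mathcal C$ only when $\mathcal C = \{M\}$, a degenerate case that the arguments below also handle.)

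For the first item, I would fix $C,C' \in \mathcal C$ and $B \in \mathcal M_G \setminus \mathcal C$; by symmetry it suffices to show $B \cap C \subseteq B \cap C'$. Suppose some $v \in B \cap C$ satisfies $v \notin C'$, so $(B \cap C) \setminus C' \neq \emptyset$. As just noted, $B$ is $\prec_M$-comparable to $C'$: if $B \prec_M C'$, then the first alternative of~(\ref{eqn:indDef}) applied with witness $E := B$ yields $C \prec_M C'$; if $C' \prec_M B$, then the second alternative of~(\ref{eqn:indDef}) with $E := B$ yields $C' \prec_M C$. Either conclusion contradicts the incomparability of $C$ and $C'$. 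Hence every $v \in B \cap C$ lies in $C'$, which gives the first item.

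The remaining two items then follow, using that in $G$ two distinct vertices are adjacent iff they lie in a common max clique (cliques extend to max cliques), and that $v \in S_{\mathcal C}$ iff every max clique containing $v$ lies in $\mathcal C$. For the module claim, take $u \notin S_{\mathcal C}$, so $u$ lies in some $B \in \mathcal M_G \setminus \mathcal C$; if $u$ is adjacent to some $x \in S_{\mathcal C}$, then $u,x \in C$ for some $C \in \mathcal C$, and for an arbitrary $x' \in S_{\mathcal C}$ we may pick a max clique $C'$ of $x'$, which lies in $\mathcal C$, so by the first item $B \cap C = B \cap C'$, and $u \in B \cap C$ forces $u \in C' \ni x'$, i.e.\ $u$ is adjacent to $x'$; thus $u$ is adjacent to all of $S_{\mathcal C}$ or to none, so $S_{\mathcal C}$ is a module. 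For the span identity: if $v \in S_{\mathcal C}$ then all $\spn(v)$ max cliques of $v$ lie in $\mathcal C$, so $\spn(v) \leq |\mathcal C|$ and trivially $v \in \bigcup \mathcal C$; conversely, if $v \in \bigcup \mathcal C$, say $v \in C_0 \in \mathcal C$, and $v$ also lay in some $B \in \mathcal M_G \setminus \mathcal C$, then as in the module argument the first item forces $v \in C$ for every $C \in \mathcal C$, so $\spn(v) \geq |\mathcal C| + 1$; hence $\spn(v) \leq |\mathcal C|$ already yields $v \in S_{\mathcal C}$.

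I expect the first item to be the only real obstacle: the point is that the two alternatives in~(\ref{eqn:indDef}) are precisely what is needed to ``carry'' the vertex $v \in B \cap C$ over to $C'$ depending on whether $B$ sits $\prec_M$-before or $\prec_M$-after $C'$. The one subtlety to treat carefully is the identification of a maximal $\prec_M$-incomparable set with an incomparability class, which relies on $\prec_M$ being a strict weak order, i.e.\ on the hypothesis that $M$ is a possible end together with Lemma~\ref{lem:partialOrderEqEnd}.
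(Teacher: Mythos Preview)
Your proof is correct. Note, however, that the paper does not actually prove this lemma: the \qed\ at the end of the statement signals that it is quoted from \cite{lau10} (Lemma~IV.4, Corollary~IV.5) without proof. So there is no ``paper's own proof'' to compare against here---you have supplied a clean, self-contained argument where the paper merely cites one. Your use of Lemma~\ref{lem:partialOrderEqEnd} to identify a maximal $\prec_M$-incomparable set with an incomparability class, followed by a direct appeal to the two alternatives in~(\ref{eqn:indDef}), is exactly the right mechanism; the remaining two items follow from the first just as you say.
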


Finally, let $\sim^G_M$ be the equivalence relation on $V$ for which $x
\sim^G_M y$ if and only if $x=y$, or there is a maximal set $\mathcal C$ of
incomparable max cliques with respect to $\prec_M$ with $|\mathcal C|>1$
so that $x,y \in
S_{\mathcal C}$.
Let $G_M = G/_{\sim^G_M} := (V/_{\sim^G_M}, E_M)$,
where $E_M := \set{(u/_{\sim^G_M},v/_{\sim^G_M}) \mid (u,v) \in E}$.
It is easy to check that $\sim^G_M$ and the graph $G_M$ are $\STCC$-definable.

If $\mathcal C$ is a maximal set of $\prec_M$-incomparables in $G$ with $|\mathcal C|>1$,
then there is precisely one max clique $M_{\mathcal C}$ in $G_M$
which contains all the equivalence classes associated with $\mathcal C$, i.e.,
$M_{\mathcal C}=\{v/_{\sim^G_M} \mid v\in \bigcup\mathcal C\}$. We conclude:
\begin{lemma}\label{lem:GmodoutIntG}
$\prec_M$ induces a linear order on $G_M$'s max cliques.
In particular, $G_M$ is an interval graph. \qed
\end{lemma}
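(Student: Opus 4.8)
The plan is to identify the max cliques of $G_M$ with the incomparability classes of $\prec_M$, to transport to $G_M$ the linear order these classes carry, and then to invoke the characterisation of interval graphs recalled in Section~\ref{subsec:defIntG}: a graph is an interval graph precisely when its max cliques can be linearly ordered so that every vertex occurs in consecutive cliques.

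Since $M$ is a possible end, Lemma~\ref{lem:partialOrderEqEnd} gives that $\prec_M$ is a strict weak order on $\mathcal M_G$, so incomparability is an equivalence relation whose classes $\mathcal C_1<\dotsb<\mathcal C_r$ (with $\mathcal C_1=\{M\}$) are linearly ordered; these are exactly the maximal sets of $\prec_M$-incomparable max cliques occurring in the definition of $\sim^G_M$. First I would show that the quotient map $\pi\colon V\to V/_{\sim^G_M}$ sends all cliques of one class $\mathcal C_i$ to the same subset $M_{\mathcal C_i}$ of $V(G_M)$. For $C,C'\in\mathcal C_i$ with $|\mathcal C_i|>1$, Lemma~\ref{cor:incompImpModule} gives $B\cap C=B\cap C'$ for every $B\in\mathcal M_G\setminus\mathcal C_i$, and describes $S_{\mathcal C_i}$ as the vertices of $\bigcup\mathcal C_i$ of span at most $|\mathcal C_i|$; hence any $b\in C\setminus S_{\mathcal C_i}$ has $\spn(b)>|\mathcal C_i|$, so it lies in some $B\in\mathcal M_G\setminus\mathcal C_i$ and therefore in $B\cap C=B\cap C'\subseteq C'$, which yields $C\setminus S_{\mathcal C_i}=C'\setminus S_{\mathcal C_i}$, while the vertices of $C\cap S_{\mathcal C_i}$ and $C'\cap S_{\mathcal C_i}$ all collapse under $\pi$ to the single vertex coming from the module $S_{\mathcal C_i}$. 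Thus $\pi(C)=\pi(C')=M_{\mathcal C_i}$, the max clique exhibited just before the lemma; for $|\mathcal C_i|=1$ nothing is merged. One then checks, again using that each $S_{\mathcal C}$ is a module of $G$, that $M_{\mathcal C_1},\dotsc,M_{\mathcal C_r}$ are pairwise distinct and are precisely the max cliques of $G_M$.

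It remains to verify that each vertex of $G_M$ occurs in a consecutive segment of the list $M_{\mathcal C_1},\dotsc,M_{\mathcal C_r}$. A vertex of $G_M$ coming from a module $S_{\mathcal C}$ lies only in cliques of $\mathcal C$, hence in the single clique $M_{\mathcal C}$. For a vertex $v$ that is not collapsed, fix a consecutive clique arrangement of $G$ extending $\prec_M$, i.e.\ a linear order $\sqsubset$ on $\mathcal M_G$ in which every vertex occupies a $\sqsubset$-consecutive block of cliques and $C\prec_M D$ implies $C\sqsubset D$ (such an arrangement exists because $M$ is a possible end). If $v\in C_i\in\mathcal C_i$ and $v\in C_k\in\mathcal C_k$ with $i<k$, then for every $j$ with $i<j<k$ and every $D\in\mathcal C_j$ we have $C_i\sqsubset D\sqsubset C_k$; since $v$ occupies a $\sqsubset$-consecutive block containing $C_i$ and $C_k$, we get $v\in D$, hence $v\in M_{\mathcal C_j}$. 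Therefore $M_{\mathcal C_1}<\dotsb<M_{\mathcal C_r}$ is a consecutive clique arrangement of $G_M$, which by the characterisation above makes $G_M$ an interval graph and exhibits the linear order on its max cliques induced by $\prec_M$.

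The step I expect to be most delicate is checking that the sets $M_{\mathcal C_i}$ are pairwise distinct and exhaust the max cliques of $G_M$: here one must use carefully that contracting the modules $S_{\mathcal C}$ neither fuses two different max cliques of $G$ nor creates a max clique that is not the image of one, which relies on the module structure from Lemma~\ref{cor:incompImpModule}. Everything else is routine given the properties of $\prec_M$ already established (Lemmas~\ref{lem:partialOrderEqEnd} and~\ref{cor:incompImpModule}) together with the elementary fact that in every consecutive clique arrangement of an interval graph each vertex occupies a consecutive block of max cliques.
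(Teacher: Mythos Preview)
Your proposal is correct and follows the same line the paper takes: the paper states this lemma without proof (the \qed is in the statement), treating it as an immediate consequence of the observation just before it that each incomparability class $\mathcal C$ collapses to a single max clique $M_{\mathcal C}$ of $G_M$, together with the consecutive-clique characterisation of interval graphs. You have simply filled in the details the paper leaves implicit, in particular the verification that the $M_{\mathcal C_i}$ are distinct, exhaust the max cliques of $G_M$, and that consecutivity survives the quotient.
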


\subsection{Modules \texorpdfstring{$\mathcal W_G$}{W\_G} and the Graph \texorpdfstring{$L_G$}{L\_G}}\label{sec:DecompModulesLG}

We are now ready to give the definition of the set $\mathcal W_G$, which we
mentioned in Section~\ref{modular decomposition tree},
for connected interval graphs $G$ without an apex.
Furthermore, we show how to define graphs
that are isomorphic to the graph $L_G$ from Section~\ref{modular decomposition tree}
in $\STCC$.
In particular, this will enable us to prove, in Section~\ref{sec:L_G-can},
that an isomorphic copy of $L_G$ on the number sort is $\STCC$-definable.

Let $G = (V,E)$ be a connected interval graph without an apex.
Then $G$ contains more than one max clique.
Let $\mathfrak P_G$ be the set of all maximal proper subsets $\mathcal C$ of $\mathcal M_G$
with the property that for any $B\in\mathcal M_G \setminus \mathcal C$
we have $B\cap C = B\cap C'$ for all $C,C' \in \mathcal C$.
We must have $|\mathfrak P_G| \geq 3$ since $G$ is connected and no vertex may be
included in all max cliques of $G$.
Furthermore, if $\mathcal C, \mathcal C' \in \mathfrak P_G$ and $\mathcal C \neq \mathcal C'$,
then $\mathcal C \cap \mathcal C' = \emptyset$.
To see this, suppose that $D \in \mathcal C \cap \mathcal C'$.
Then $B\cap A = B\cap D = B\cap C$ for all $A,C \in \mathcal C\cup\mathcal C'$ and $B \not\in \mathcal C \cup \mathcal C'$.
So as $|{\mathfrak P_G}| \geq 3$, $\mathcal C \cup \mathcal C'$ is a proper subset of $\mathcal M_G$ satisfying the above property, which contradicts the maximality of $\mathcal C$ and $\mathcal C'$.
We conclude that ${\mathfrak P_G}$ is a partition of $\mathcal M_G$.

For each $\mathcal C\in {\mathfrak P_G}$ with $|\mathcal C|\geq2$ we define
$S_{\mathcal C} = \bigcup \mathcal C \,\setminus\, \bigcup (\mathcal M_G \setminus \mathcal C)$.
The correspondence in names to the modules $S_{ \mathcal C}$ as defined in
Lemma~\ref{cor:incompImpModule} is intended,
of course, and makes sense since the sets $ \mathcal C\in {\mathfrak P_G}$ enjoy the same interaction
properties with the rest of the graph as maximal sets of $\prec_M$-incomparable max cliques (cf.\
Lemma~\ref{cor:incompImpModule}).

We can now define the modules $\mathcal W_G$
mentioned in Section~\ref{modular decomposition tree}
for connected interval graphs $G$ without an apex.
We let $\mathcal S:=\{S_{ \mathcal C}\mid  \mathcal C\in {\mathfrak P_G}\text{ with }|\mathcal C|\geq2\}$, and define
\[
  \mathcal W_G:=\mathcal S\cup \bigcup_{v\in V\setminus \bigcup \mathcal S}\{\{v\}\}.
\]
From the fact that ${\mathfrak P_G}$ is a partition of $\mathcal M_G$,
we conclude that $\mathcal W_G$ forms a partition of $V$,
whereby inducing the equivalence relation $\sim_{G}$ on $V$.
In the following,
we call this equivalence relation alternatively $\sim_{{\mathfrak P_G}}$.

We are going to construct $\STCC$-definable graphs isomorphic to  $L_G$.
Let $Z_M$ be the max clique which is $\prec_M$-maximal in $G_M$. Now we forget
about $\prec_M$ and consider $\prec_{Z_M}$ on $G_M$. We write
\[
  L_M :=
  G_M/_{\sim_{Z_M}^{G_M}} =
  (V(G_M)/_{\sim_{Z_M}^{G_M}},E(G_M)/_{\sim_{Z_M}^{G_M}})
\]
with $E(G_M)/_{\sim_{Z_M}^{G_M}} =
\set{(u/_{\sim_{Z_M}^{G_M}},v/_{\sim_{Z_M}^{G_M}}) \mid (u,v) \in E(G_M)}$.
Lemma \ref{lem:GmodoutIntG} implies again that $\prec_{Z_M}$ induces a linear order on the max cliques of $L_M$.

\begin{lemma}\label{lem:doubleOrderingAbstraction}
Let $G$ be a connected interval graph that does not contain an apex,
and let $M_1,\ldots, M_k$ be its possible ends.
Then all of the graphs $L_{M_l}$, $l\in[k]$, are isomorphic to $L_G$ and we may partition $[k]$ into at most two sets $Q,Q'$ so that $(L_{M_i},\prec_{Z_{M_i}})$ and $(L_{M_j},\prec_{Z_{M_j}})$ are order isomorphic whenever $i,j \in Q$ or $i,j\in Q'$.
\end{lemma}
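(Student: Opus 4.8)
The plan is to unwind the two-stage definition of $L_M$ and compare it with the modular decomposition of $G$. Recall that $L_M$ is built from $G$ by two collapses: first $G/_{\sim^G_M}$, which identifies each maximal set $\mathcal C$ of $\prec_M$-incomparable max cliques of size $\geq 2$ with its module $S_{\mathcal C}$; then $L_M = G_M/_{\sim_{Z_M}^{G_M}}$, the analogous collapse inside $G_M$ using the $\prec_M$-maximal max clique $Z_M$ as a new ``end''. By Lemma~\ref{lem:partialOrderEqEnd} every possible end of $G$ (resp.\ of $G_M$) yields a strict weak order, and by Lemma~\ref{lem:GmodoutIntG} the order $\prec_M$ linearly orders the max cliques of $G_M$; its reversal is again a valid clique arrangement, so $Z_M$ really is a possible end of $G_M$ and the second collapse makes sense.

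For the first claim I would identify exactly which modules the composite removes. By Lemma~\ref{cor:incompImpModule}, every maximal $\prec_M$-incomparable set is a proper subset of $\mathcal M_G$ having the property defining the classes of $\mathfrak P_G$; conversely, by Lemma~\ref{lem:moduleImpIncomparable}, each $\mathcal C\in\mathfrak P_G$ with $M\notin\mathcal C$ is $\prec_M$-incomparable. Since $\mathfrak P_G$ is a partition into $\subseteq$-maximal such sets, these two facts together show that the maximal $\prec_M$-incomparable sets of size $\geq 2$ are exactly the classes of $\mathfrak P_G$ of size $\geq 2$ avoiding $M$, plus the maximal $\prec_M$-incomparable pieces of $\mathcal C_0\setminus\{M\}$, where $\mathcal C_0\in\mathfrak P_G$ is the class of $M$ (note $M$ is the $\prec_M$-minimum, hence a singleton incomparability class). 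So the first collapse turns $G$ into $L_G$ except that the block $S_{\mathcal C_0}$ is not yet merged; if $\mathcal C_0=\{M\}$, the first collapse already is $L_G$. Now $Z_M$ is $\prec_M$-maximal in $G_M$, i.e.\ it sits at the opposite end from the leftover block of $\mathcal C_0$, so applying the same analysis to $(G_M,Z_M)$ --- and using that each $\mathfrak P_G$-class occupies a consecutive range of max cliques in any arrangement --- the second collapse merges precisely that leftover block. Thus the composite collapses exactly the modules in $\mathcal W_G$, and $L_{M_l}\cong G/_{\sim_G}=L_G$ for all $l$. Alternatively and more robustly: $L_{M_l}$ is connected, has connected complement, and has no nontrivial proper module, since every maximal nontrivial module of $G$, and then of $G_M$, is collapsed; so $L_{M_l}$ is the unique prime graph at the root of $G$'s modular decomposition, namely $L_G$.

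For the second claim I would use rigidity of interval representations. By Lemma~\ref{lem:GmodoutIntG} applied inside $G_M$, the relation $\prec_{Z_{M_l}}$ induces a genuine \emph{linear} order on the max cliques of $L_{M_l}\cong L_G$, and this order, via the consecutive-cliques property, determines an interval representation of $L_G$. A connected interval graph with connected complement and without nontrivial proper modules has an interval representation that is unique up to reversal --- equivalently, the clique orders coming from its minimal interval representations form one reversal class; this rigidity is exactly what makes $L_G$ canonisable in \cite{lau10}. Hence the ordered graphs $(L_{M_l},\prec_{Z_{M_l}})$ realise at most two order-types, an arrangement and its reversal. Partitioning $[k]$ into $Q,Q'$ according to which of these (at most two) arrangements $(L_{M_l},\prec_{Z_{M_l}})$ realises (putting everything in $Q$ if only one occurs) gives the claim: for $i,j$ in the same block the identifications of $L_{M_i},L_{M_j}$ with $L_G$ carry one max-clique order to the other, hence extend to an order isomorphism $(L_{M_i},\prec_{Z_{M_i}})\to(L_{M_j},\prec_{Z_{M_j}})$.

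The main obstacle I expect is the bookkeeping in the first claim: showing that whatever fragmentation of $\mathcal C_0\setminus\{M\}$ happens in the first collapse is exactly undone, and merged with $M$'s private vertices, in the second, so that the composite quotient is literally $L_G$ and not something finer. This needs the fact that the $\mathfrak P_G$-classes lie in consecutive positions of every clique arrangement --- implicit in the module theory of interval graphs but requiring a precise statement --- together with a careful trace of how each max clique and module-block travels through the two successive $\sim$-collapses. The degenerate cases ($G$ having an apex, excluded here, or $\mathcal C_0$ a singleton so that the second stage is vacuous) are routine.
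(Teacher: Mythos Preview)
Your proposal is correct and follows essentially the same route as the paper: show that the composite of $\sim^G_M$ and $\sim^{G_M}_{Z_M}$ equals $\sim_{\mathfrak P_G}$, using Lemma~\ref{lem:moduleImpIncomparable} and Lemma~\ref{cor:incompImpModule} to match the $\mathfrak P_G$-cells avoiding $M$ with maximal $\prec_M$-incomparable sets, and then let the second collapse absorb the remaining cell $\mathcal C_0$; for the second part, both you and the paper use that $L_G$'s max cliques admit at most two linear orderings, one the reverse of the other.

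One organisational point worth noting: the ``bookkeeping obstacle'' you flag---tracking the possible internal fragmentation of $\mathcal C_0\setminus\{M\}$ under the first collapse and arguing it is undone by the second---is handled in the paper without ever describing that fragmentation. Instead the paper passes to $G_M$, observes that $\mathfrak P_G$ induces a partition $\mathfrak P_M$ of $\mathcal M_{G_M}$ in which every cell except the one $\mathcal C_M$ containing $M$ is already a singleton, checks that $\mathfrak P_M$ inherits the maximality property defining $\mathfrak P_{G_M}$, and then reruns the first-step argument with $Z_M$ (which lies outside $\mathcal C_M$ since $|\mathfrak P_M|\geq 3$) in place of $M$. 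This gives $\sim_{\mathfrak P_M}=\sim^{G_M}_{Z_M}$ directly, so the composite is $\sim_{\mathfrak P_G}$ without needing consecutiveness of $\mathfrak P_G$-classes or any explicit description of what the first stage did inside $\mathcal C_0$.
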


\proof

Equivalence relation $\sim_{ {\mathfrak P_G}}$ does the same as $\sim^G_M$, only that it is based on ${\mathfrak P_G}$ instead of the (finer) partition of max cliques induced by a strict weak ordering $\prec_M$.

Our goal is to show that each $L_{M}$ with $M\in \{M_1,\ldots,M_k\}$ is isomorphic to $G/_{\sim_{{\mathfrak P_G}}}$. For this it is enough to show that the concatenation of equivalence relation $\sim^G_{M}$ with $\sim^{G_{M}}_{Z_M}$ is equal to $\sim_{ {\mathfrak P_G}}$. Whenever $\mathcal C \in {\mathfrak P_G}$ and $M \not\in \mathcal C$, Lemma~\ref{lem:moduleImpIncomparable} implies that the max cliques in $\mathcal C$ are $\prec_M$-incomparable. As the sets in ${\mathfrak P_G}$ were chosen to be maximal, $\mathcal C$ is also a maximal set of $\prec_M$-incomparables (Lemma~\ref{cor:incompImpModule}).
It follows that $\sim_{ {\mathfrak P_G}}$ is equal to $\sim^G_M$ on $\bigcup_{M \not\in \mathcal C\in {\mathfrak P_G}} \mathcal C$.

When forming $G_M = G/_{\sim^G_M}$, each maximal set of $\prec_M$-incomparable max cliques $\mathcal C$ is replaced by the max clique $M_{\mathcal C} = \{v/_{\sim^G_M} \mid v \in \bigcup \mathcal C \}$. Note that this is also true when $\mathcal C$ consists of just one max clique. As a result, ${\mathfrak P_G}$ induces a partition $\mathfrak P_M$ of the max cliques of $G_M$. Also, if $\mathcal C_M$ is the cell of $\mathfrak P_M$ which contains $M$, then $\mathcal C_M$ is the only cell of $\mathfrak P_M$ which is possibly not a singleton. As $|\mathfrak P_M|\geq 3$, $Z_M \not\in \mathcal C_M$.

The final step is to show that $\sim_{ \mathfrak P_M}$ equals $\sim^{G_M}_{Z_M}$ on $G_M$. If $v/_{\sim^G_M}$ is a vertex of $G_M$ and $v/_{\sim^G_M}$ is an equivalence class of $\sim^G_{M}$ with  $|v/_{\sim^G_M}|>1$, then $v/_{\sim^G_M}$ is only contained in one max clique of $G_M$. Hence, $\mathfrak P_M$ inherits from ${\mathfrak P_G}$ the property that it partitions the max cliques $\mathcal M_{G_M}$ of $G_M$ into maximal sets $\mathcal C$ so that for any $B\in\mathcal M_{G_M} \setminus \mathcal C$ we have $B\cap C = B\cap C'$ for all $C,C' \in \mathcal C$. Arguing analogously as above, it follows that $\sim_{ \mathfrak P_M}$ equals $\sim^{G_M}_{Z_M}$. Therefore, $\sim_{ {\mathfrak P_G}}$ is equal to the concatenation of $\sim^G_{M}$ with $\sim^{G_{M}}_{Z_M}$ and $L_M$ is isomorphic to $L_G$. This proves the first part of the lemma.

To see the second part, observe that $\prec_{Z_M}$ induces a linear order on $L_M$'s max cliques. This is true for all $M\in \{M_1,\ldots, M_k\}$, so whenever $N$ is a possible end of $L_M$, then $\prec_N$ linearly orders the max cliques of $L_M$. Thus, $L_M$ has two possible ends which correspondingly induce two orders on the max cliques and vertices of $G/_{\sim_{ {\mathfrak P_G}}}$.\qed

\subsection{Canonising \texorpdfstring{$L_G$}{L\_G}}\label{sec:L_G-can}

Before showing how to use the
modular decomposition tree for canonising interval graphs $G = (V,E)$ in our logic,
let us take a look at how to define a canonical copy of $L_G$ in $\STCC$.

From the fact that $G$ is an interval graph,
it is not hard to see that $L_G$ is an interval graph, too.
Furthermore, notice that, if $A$ is a max clique of $G$,
then
\[
  A_{L_G} := \set{v/_{\sim_G} \mid v \in A}
\]
is a max clique of $L_G$,
and that all max cliques of $L_G$ are of this form.

\begin{lemma}\label{lem:LG-interval}\mbox{}
\begin{enumerate}[label=\emph{(\arabic*)},leftmargin=*]
\item\label{item1-lem:LG-interval}
  There are $\STCC$-formulae $\phi_{\sim}$, $\phi_{L}$
  such that for all interval graphs $G$,
  $\phi_{\sim}$ defines the equivalence relation $\sim_{G}$,
  and $\phi_{L}$ the edge relation of the graph $L_G$.
\item\label{item2-lem:LG-interval}
Let $G$ be a connected graph without any apices.
  If $L_G$ has $m>1$ max cliques,
  then there exist exactly two linear orderings of $L_G$'s max cliques,
  each the reverse of the other.
  There is an $\STCC$-formula
  that defines all pairs of tuples $(u,v),(u',v') \in V^2$
  such that $(u,v),(u',v')$ represent max cliques $A,M$ of $G$,
  $M$ is a possible end of $G$,
  and $A_{L_G}$ appears within the first $\lfloor \frac{m}{2}\rfloor$
  max cliques of $L_G$ with respect to $\prec_{Z_M}$.
\item\label{item3-lem:LG-interval}
  There is an  $\STCC$-formula that, for all interval graphs $G$,
  defines an isomorphic copy of $L_G$ on the number sort.
\end{enumerate}
\end{lemma}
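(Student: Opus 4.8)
The plan is to work throughout with the three cases of the modular decomposition of an interval graph $G$, which are separated by $\STCC$-recognisable conditions: either $G$ is disconnected, or $G$ is connected with at least one apex (equivalently $\card{V(G)}>1$ and $G^c$ is disconnected), or $G$ is connected without an apex. Connectivity of $G$ and of $G^c$ is $\STC$-definable via a symmetric transitive closure and the set of apices is $\FO$-definable, so a single formula may branch on these cases; moreover, by Lemma~\ref{lem:maxCliquesDefable} and the remark following it the max cliques of $G$ and the relation ``$(u,v)$ and $(u',v')$ represent the same max clique'' are $\FO$-definable, which I will use freely to speak about max cliques through representing vertex pairs.

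For part~\ref{item1-lem:LG-interval} I would proceed case by case. If $G$ is disconnected then $\sim_G$ is the ``lies in the same connected component'' relation, which is $\STC$-definable; if $G$ is connected with an apex then $\sim_G$ relates $v,w$ iff $v=w$ or neither of $v,w$ is an apex, which is $\FO$-definable. In the remaining case I would invoke Section~\ref{subsec:orderingMaxCliques}: by Lemma~\ref{lem:maxcliqueOrderSTCdefinable} the relation $\prec_M$ is $\STC$-definable, by Lemma~\ref{lem:partialOrderEqEnd} asymmetry of $\prec_M$ --- an $\FO$-property of $\prec_M$ --- characterises the possible ends of $G$, and, fixing one possible end $M$, the cells $\mathcal C$ of $\mathfrak P_G$ with $\card{\mathcal C}\ge 2$ are exactly the maximal $\prec_M$-incomparable sets of more than one max clique, whose module is $S_{\mathcal C}=\set{v\in\bigcup\mathcal C\mid\spn(v)\le\card{\mathcal C}}$ by Lemma~\ref{cor:incompImpModule}. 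Since the span is $\STCC$-definable and $\STCC$ can count equivalence classes, $\sim_G$ is $\STCC$-definable; and from $\phi_\sim$ the edge relation of $L_G$ is obtained as $\phi_L(x,y)\isdef\lnot\phi_\sim(x,y)\land\exists x'\exists y'\bigl(\phi_\sim(x,x')\land\phi_\sim(y,y')\land E(x',y')\bigr)$, using that modules are completely (dis)connected.

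For part~\ref{item2-lem:LG-interval} I would first obtain the combinatorial claim from Lemma~\ref{lem:GmodoutIntG} and the proof of Lemma~\ref{lem:doubleOrderingAbstraction}: $L_G$ has exactly two possible ends, and for each of them $N$ the relation $\prec_N$ is already a \emph{linear} order on $L_G$'s max cliques (no further collapsing occurs, by the maximality built into the cells of $\mathfrak P_G$), so these two ends induce exactly two consecutive-clique orderings of the max cliques, each the reverse of the other. For definability I would start from a pair $(u',v')$ representing a possible end $M$ of $G$; then $G_M$, the $\prec_M$-maximum max clique $Z_M$ of $G_M$ ($\FO$-definable over the linear order $\prec_M$ on $\mathcal M_{G_M}$ furnished by Lemma~\ref{lem:GmodoutIntG}), the relation $\prec_{Z_M}$ on $G_M$ (Lemma~\ref{lem:maxcliqueOrderSTCdefinable} applied inside $G_M$), and $L_M=G_M/_{\sim_{Z_M}^{G_M}}\isomorphic L_G$ are all $\STCC$-definable over $G$, since $\STCC$ is closed under relativisation and under quotients by $\STCC$-definable equivalence relations, as used repeatedly above. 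Given a pair $(u,v)$ representing a max clique $A$ of $G$, its image $A_{L_M}$ is a max clique of $L_M$; letting $m$ be the $\STCC$-countable number of max cliques of $L_M$, the formula then asks that $\card{\set{N\in\mathcal M_{L_M}\mid N\preceq_{Z_M}A_{L_M}}}\le\floor{m/2}$, conjoined with the requirements that $(u,v),(u',v')$ represent max cliques and that $M$ is a possible end; this gives an $\STCC[\set{E}]$-formula as desired.

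For part~\ref{item3-lem:LG-interval} I would again branch on the three cases. If $\card{V(G)}=1$ then $L_G$ is a single vertex; if $G$ is disconnected then $L_G$ is edgeless on $k$ vertices, where $k$ is the $\STCC$-definable number of connected components, so $([k],\emptyset)$ is an isomorphic copy; if $G$ is connected with apex set $A$ then $L_G$ is a clique on $k$ vertices, where $k=\card{A}+1$ if $V(G)\setminus A\neq\emptyset$ and $k=\card{A}$ otherwise, so $([k],\set{(i,j)\mid i\neq j})$ works. In the remaining case I would use part~\ref{item2-lem:LG-interval} to put one of the two linear orders $1,\dots,m$ on $L_G$'s max cliques; each vertex $x$ of $L_G$ then occupies a consecutive block $[\ell(x),r(x)]$ of max cliques, and two vertices with the same block lie in a common max clique and are hence adjacent, so $L_G$ is determined up to isomorphism by $m$ together with the numbers $\sigma^{<}(i,j)\isdef\card{\set{x\in V(L_G)\mid(\ell(x),r(x))<_{\mathrm{lex}}(i,j)}}$ and $\sigma(i,j)\isdef\card{\set{x\in V(L_G)\mid(\ell(x),r(x))\le_{\mathrm{lex}}(i,j)}}$, all $\STCC$-definable via part~\ref{item2-lem:LG-interval} and $\STCC$-counting. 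Placing $n$ into the block for $[i,j]$ exactly when $\sigma^{<}(i,j)<n\le\sigma(i,j)$, and joining $n_1,n_2$ exactly when $n_1\neq n_2$ and their blocks intersect, defines over the ordered number sort an isomorphic copy of $L_G$ on $[\card{V(L_G)}]$.

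The step I expect to be the main obstacle is part~\ref{item2-lem:LG-interval}: both justifying the rigidity of $L_G$ up to reversal, and --- above all --- carrying out the $\STCC$-definability \emph{over $G$} of the two-level construction $G\rightsquigarrow G_M\rightsquigarrow L_M$ while consistently tracking max cliques (as vertex pairs) and their positions through the successive quotients. Everything else is routine bookkeeping over the results of the preceding sections.
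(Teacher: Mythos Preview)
Your overall architecture matches the paper's: the same three-way case split, the same use of Lemmas~\ref{lem:maxcliqueOrderSTCdefinable}--\ref{cor:incompImpModule} and~\ref{lem:doubleOrderingAbstraction}, and for part~\ref{item3-lem:LG-interval} you spell out explicitly what the paper imports from~\cite{lau10}. Parts~\ref{item2-lem:LG-interval} and~\ref{item3-lem:LG-interval} are essentially the paper's argument.

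There is, however, a genuine gap in your treatment of part~\ref{item1-lem:LG-interval} in the connected, apex-free case. You claim that, for a single fixed possible end $M$, ``the cells $\mathcal C$ of $\mathfrak P_G$ with $\card{\mathcal C}\ge 2$ are exactly the maximal $\prec_M$-incomparable sets of more than one max clique.'' This is false: the cell $\mathcal C^\ast\in\mathfrak P_G$ that contains $M$ itself can have size $\ge 2$, and it is \emph{never} a $\prec_M$-incomparable set, because $M\prec_M C$ for every $C\neq M$ by the initialisation clause. Lemma~\ref{lem:moduleImpIncomparable} only gives you the cells $\mathcal C$ with $M\notin\mathcal C$. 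Concretely, take five max cliques $M_1,\dotsc,M_5$ with $\mathfrak P_G=\{\{M_1,M_2\},\{M_3,M_4\},\{M_5\}\}$ and $M=M_1$: then $\{M_1,M_2\}$ is a cell of $\mathfrak P_G$ of size~$2$, but $M_1\prec_{M_1}M_2$, so your one-step procedure misses the module $S_{\{M_1,M_2\}}$, and the relation you define is strictly finer than $\sim_G$.

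The paper's fix --- which you already describe for part~\ref{item2-lem:LG-interval} --- is to use the \emph{two}-level construction also here: compute $\sim^G_M$, pass to $G_M$, take the $\prec_M$-maximum max clique $Z_M$ in $G_M$, and compose with $\sim^{G_M}_{Z_M}$. Lemma~\ref{lem:doubleOrderingAbstraction} is precisely the statement that this composition equals $\sim_{\mathfrak P_G}=\sim_G$; the second pass from the other end is what picks up the cell containing $M$. So the obstacle you flagged for part~\ref{item2-lem:LG-interval} is in fact already needed in part~\ref{item1-lem:LG-interval}, and once you invoke it there, the rest of your argument goes through.
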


\begin{proof}
Let us start by showing property~\ref{item1-lem:LG-interval}.
If $G$ is not connected or $G$
contains an apex, then $\sim_{G}$ is $\STC$-definable.
If $G$ is connected and does not contain an apex, then  for each possible end $M$ of $G$ the concatenation of equivalence relation $\sim^G_{M}$ with
$\sim^{G_{M}}_{Z_M}$ is equal to $\sim_{G}$ (Lemma~\ref{lem:doubleOrderingAbstraction}).
The $\STCC$-definability of equivalence relation  $\sim_{G}$ is a direct consequence of the \STCC-definability of the possible ends $M$ and
the equivalence relation $\sim^G_{M}$, Lemma~\ref{lem:GmodoutIntG}, which allows us to define max clique $Z_M$,
and the \STCC-definability of  $\sim^{G_M}_{Z_M}$.

We do not define the graph $L_G$ explicitly,
but rather implicitly within $G$.
That is, we do not single out a representative of each equivalence class
$v/_{\sim_G}$ of $\sim_G$,
but treat all vertices in $v/_{\sim_G}$ as representatives of $v/_{\sim_G}$.
Notice that, since all equivalence classes of $\sim_G$ are modules of $G$,
the edge relation of $L_G$ can be defined as the set of all edges of $G$
between vertices in different equivalence classes.

To show Property~\ref{item2-lem:LG-interval},
recall that by Lemma~\ref{lem:doubleOrderingAbstraction}
there are exactly two linear orderings of $L_G$'s max cliques,
each the reverse of the other.
By Property~\ref{item1-lem:LG-interval}, we can define $L_G$,
and for a possible end $M$ of $L_G$ we can define the linear order $\prec_{Z_M}$
(Lemma~\ref{lem:maxcliqueOrderSTCdefinable}).
Hence, given max clique $A$,
we can define $A_{L_G}$ and associate the linear order with $A$
where $A_{L_G}$ appears within the first $\lfloor \frac{m}{2}\rfloor$
max cliques of $L_G$.

Property~\ref{item3-lem:LG-interval}
is easy to see for graphs that are not connected or contain an apex.
For connected interval graphs $G$ that do not have any apices, Property~\ref{item3-lem:LG-interval}
follows directly
from Section~IV.B in \cite{lau10},
where the author shows that there is an \STCC-formula that
defines an ordered copy of $G$ on the number sort
if there is a max clique $M$ of $G$
such that $\prec_M$ is a linear order on $G$'s max cliques.
\end{proof}

According to the preceding lemma we can define an isomorphic copy of $L_G$
on the number sort.
In the following, we denote this copy by $\mathcal K(L_G)$.

\subsection{The Coloured Modular Decomposition Tree}
\label{sec:col-mod-dec-tree}

To obtain a complete invariant of an interval graph $G = (V,E)$,
we construct a refinement of the modular decomposition tree,
the coloured modular decomposition tree, in this section.

Let us consider the modular decomposition tree $T(G)$ of an interval graph $G$.
We call a module $W\in V(T(G))$ a \emph{decomposition module}
if $W=V$, or $|W|>1$ and $G[W^*]$ is a connected  graph, where $W^*$ is the parent of $W$ in $T(G)$.
All modules $W$ where $G[W^*]$ is not connected are called \emph{component modules}.
We let $\mathcal W^{dec}_G$ be the set of all decomposition modules
and $\mathcal W^{con}_G$ be the set of all component modules
occurring in the modular decomposition tree of $G$.

Let $P' := \set{(M,n) \mid M \in \mathcal M_G,\, n \in [\card{V}]}$.
Recall the definition of the \emph{span} of a vertex
from Section~\ref{subsec:orderingMaxCliques},
and that it is $\STCC$-definable.
For each $(M,n) \in P'$,
define $V_{M,n}$ as the set of vertices of the connected component of
$G[\set{v \in V \mid \spn(v) \leq n}]$ which intersects with $M$ (if non-empty),
and let $G_{M,n} := G[V_{M,n}]$.
Now let $P$ be the set of those $(M,n)\in P'$ for which the following properties are satisfied:
\begin{enumerate}[leftmargin=*]
\item\label{prop:Mn1} The number $n$ is maximal among those $n'$ with the property that $V_{M,n'}=V_{M,n}$.
\item\label{prop:Mn3} For all $m'>n$ where $V_{M,m'}$ is a module,
$V_{M,n}$ is a subset of an equivalence class of $\sim_{G_{M,m'}}$
with more than one vertex, or
  there exists a vertex $a \in V_{M,m'}\setminus V_{M,n}$ that is an apex of $G_{M,m'}$.
\end{enumerate}

\begin{lemma}\label{lem:ModDecompSTCC}
  If $(M,n) \in P\!$,
  then $V_{M,n}$ is a connected component of a decomposition module in $\mathcal W_G^{dec}$.
  Moreover, if $D$ is a connected component of a decomposition module
  in the modular decomposition tree of $G$,
  then there is an $(M,n) \in P$ with $V_{M,n} = D$.
\end{lemma}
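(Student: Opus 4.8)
The plan is to prove both implications by a single induction on $|V|$, structured according to the four cases governing the modular decomposition: (i) $|V|=1$; (ii) $G$ disconnected; (iii) $G$ connected with $G^c$ disconnected (so $G$ has apices); (iv) $G$ connected and apex-free. Case (i) is trivial. In the remaining cases I would reduce to the induction hypothesis applied to strictly smaller interval graphs: the connected components of $G$ in case (ii); $G[V\setminus A]$ and then, recursively, its connected components in case (iii); and $G[D]$ for a connected component $D$ of one of the root's children $S_{\mathcal C}$ in case (iv). In each case the connected components of decomposition modules of $G$ decompose accordingly (e.g.\ in (ii) the root $V$ is not itself such a component since $G[V]$ is disconnected, and every decomposition module other than $V$ lives inside a component of $G$), so it suffices to match the family $\{V_{M,n}:(M,n)\in P\}$ with the corresponding family of the smaller graph, plus the top piece $V$ itself.

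The technical heart is a span-transfer statement: if $H$ is one of the subgraphs just listed, then $\spn_G$ restricted to $V(H)$ is obtained from $\spn_H$ by a monotone reparametrisation that is constant on $V(H)$. For $H$ a component of $G$, for $H=G[V\setminus A]$, or for $H$ a component of $G[V\setminus A]$, this reparametrisation is the identity, because every apex of $G$ lies in every max clique of $G$, so $M\mapsto M\cap V(H)$ is a bijection from the max cliques of $G$ meeting $V(H)$ onto the max cliques of $H$. For $H=G[D]$ with $D$ a component of $S_{\mathcal C}$ (a module of $G$), writing $N_D$ for the set of vertices outside $D$ adjacent to $D$, the graph $G[D\cup N_D]$ is the join of $G[D]$ and $G[N_D]$, so its max cliques are exactly the sets $C_1\cup C_2$ with $C_1,C_2$ max cliques of $G[D]$, resp.\ $G[N_D]$; hence $\spn_G(v)=c\cdot\spn_{G[D]}(v)$ for $v\in D$, where $c:=|\mathcal M_{G[N_D]}|$ is independent of $v$. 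In all cases the sets $\{v\in V(H):\spn_G(v)\le k\}$ ($k\in\nat$) are exactly the sets $\{v\in V(H):\spn_H(v)\le j\}$ ($j\in\nat$), and for the relevant range of $k$ a connected component of $G[\{v\in V:\spn_G(v)\le k\}]$ meeting a max clique contained in $V(H)$ stays inside $V(H)$: outside the apex cases because outside neighbours are apices (of maximal span) or lie in other components, and for $D$ by Lemma~\ref{cor:incompImpModule}, which gives $S_{\mathcal C}=\{v\in\bigcup\mathcal C:\spn_G(v)\le|\mathcal C|\}$, so a low-span neighbour of a vertex of $S_{\mathcal C}$ stays in $S_{\mathcal C}$ while distinct components of $G[S_{\mathcal C}]$ are non-adjacent. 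Thus the family $\{V_{M,n}:(M,n)\in P'\}$ of subsets of $V(H)$ agrees, up to reparametrising $n$, with the corresponding family for $H$; a monotone reparametrisation constant on $V(H)$ preserves property~(1), and since $G_{M,m'}=G[V_{M,m'}]$ together with $\sim_{G_{M,m'}}$ and its apices depends only on the \emph{set} $V_{M,m'}$, property~(2) is preserved too.

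Granting span transfer, the backward direction reduces to explicit checks: $V=V_{M,|V|}$ for every max clique $M$ and $(M,|V|)\in P$ (property~(2) vacuous); in case (iv) the components $D$ of the root's children $S_{\mathcal C}$ are, by Lemma~\ref{cor:incompImpModule}, precisely the components of $G[\{v:\spn_G(v)\le|\mathcal C|\}]$, so $D=V_{M,n}$ for a suitable $M$ and the maximal admissible $n$, and property~(2) holds because the next larger module containing $D$ is either $V$ --- with $D$ inside the $\sim_G$-class $S_{\mathcal C}$, which has size $\ge 2$ --- or a strictly smaller connected module handled by induction; case (iii) is analogous, with $V\setminus A=\{v:\spn_G(v)\le|\mathcal M_G|-1\}$ and the apices witnessing the second alternative of~(2). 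For the forward direction, given $(M,n)\in P$ with $V_{M,n}\ne V$, take the least $m'>n$ with $V_{M,m'}$ a module; property~(2) yields either an apex of $G_{M,m'}$ outside $V_{M,n}$ (recurse into $G[V_{M,m'}\setminus A]$) or a $\sim_{G_{M,m'}}$-class $W$ of size $>1$ with $V_{M,n}\subseteq W$ (recurse into $G[W]$); in the smaller interval graph so obtained, span transfer shows that (the image of) $(M,n)$ still satisfies~(1) and~(2), so by the induction hypothesis $V_{M,n}$ is a connected component of a decomposition module there, hence also of one in $G$.

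The step I expect to be the main obstacle is making property~(2) behave correctly across these reductions in \emph{both} directions at once. Since~(2) quantifies over \emph{all} thresholds $m'>n$, one must handle the $m'$ for which $V_{M,m'}$ has already grown past the subgraph $H$ being recursed into, arguing uniformly that such a ``large'' module either engulfs an apex of $G$ or contains $V_{M,n}$ properly inside one of its $\sim$-classes. Conversely, one must check that~(2) eliminates every spurious candidate --- notably singleton components coming from prime- or series-node children of $T(G)$ --- and the cleanest route is to observe that such a singleton $\{v\}$ is never of the form $V_{M,n}$ at all: if $v$ is an apex of some node $W$, then every vertex of $W$ has span at most $\spn_G(v)$, so $v$'s span-sublevel component is never just $\{v\}$; and if $v$ is a child of a prime node $W$, then at the first threshold merging $v$ into $W$'s piece that piece is a connected module all of whose $\sim$-classes around $v$ are singletons and which has no apex, so~(2) fails. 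Combined with the observation that~(1) and~(2) together force $V_{M,n}$ to be exactly one layer of the modular decomposition rather than a partial or merged region, this closes the correspondence.
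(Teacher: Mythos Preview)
Your approach differs substantially from the paper's: you run a single induction on $|V|$ with a case split on the top of the modular decomposition, whereas the paper does two separate inductions --- on the depth in $T(G)$ for the backward direction, and downward on $n$ for the forward direction. The span-transfer idea is sound and is essentially a repackaging of the paper's Property~(A): for every module $D$ occurring in the decomposition tree of an interval graph one has $\spn_G|_D=\spn_{G[D]}$, because $N_D=N_{S_{\mathcal C}}\subseteq\bigcap\mathcal C$ is a clique and so your constant $c=|\mathcal M_{G[N_D]}|$ is always $1$. The general scaling you set up never materialises; span is literally preserved on the subgraphs you recurse into.

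The forward direction, however, has a genuine gap. You take the \emph{least} $m'>n$ with $V_{M,m'}$ a module, apply property~(2), and recurse into $H:=G[V_{M,m'}\setminus A]$ or $H:=G[W]$. The induction hypothesis then tells you $V_{M,n}$ is a component of a decomposition module of $H$, but to conclude ``hence also of one in $G$'' you need $T(H)$ to sit as a subtree of $T(G)$, which in turn requires $V_{M,m'}$ (and hence $V(H)$) to be a node of $T(G)$. Nothing in your argument establishes this: ``least $m'$ with $V_{M,m'}$ a module'' produces a module of $G$, not a node of the (slightly non-standard) modular decomposition tree, so decomposition modules of $H$ need not be decomposition modules of $G$. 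The paper sidesteps this by its downward induction on $n$: choosing the least $p>n$ with $(M,p)\in P$ and invoking the inductive hypothesis for $p$ to get that $V_{M,p}$ is already known to be a component of a decomposition module, hence a node of $T(G)$. Within your framework the natural fix is to apply property~(2) not with the least $m'$ but with the $m'$ determined by the current case --- e.g.\ in case~(iv) take $m'$ with $V_{M,m'}=V$; since $G$ has no apex, property~(2) then forces $V_{M,n}$ into a nontrivial $\sim_G$-class $S_{\mathcal C}$ and hence (by connectedness) into one of your case-structured subgraphs $G[D]$, which \emph{is} a node of $T(G)$. Then span transfer plus the induction hypothesis on $G[D]$ finishes cleanly.
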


\begin{proof}
Notice that for all modules $W$ of $G$ and all max cliques $C$ of $G$ with $C\cap W\not=\emptyset$
the set $W\cap C$ is a max clique of $G[W]$, and every max clique of $G[W]$ is of that form.
Further, an easy induction shows
	that for all modules $W\in \mathcal W^{dec}_G\cup \mathcal W^{con}_G$ the following properties are satisfied:
	\begin{enumerate}[label=(\Alph*),ref=\Alph*,leftmargin=*]
		\item\label{enum:modulepropA}
			Let $C,C'\in \mathcal M_G$ be max cliques of $G$ with  $C\not = C'$ where
			$C\cap W\not=\emptyset$ and $C'\cap W\not=\emptyset$.
			Then for max cliques  $C\cap W\!$, $C'\cap W$ of $G[W]$ we have $C\cap W\not=C'\cap W\!$.
		\item\label{enum:modulepropB}
			Let $\mathcal C:=\{C\in \mathcal M_G\mid C\cap W\not=\emptyset\}$.
			Then for all $B\in \mathcal M_G\setminus \mathcal C$ and all $C,C'\in \mathcal C$
			we have $B\cap C= B\cap C'\!$.
		\item\label{enum:modulepropC}
			For the set $\mathcal C$ from (\ref{enum:modulepropB}),
			$W=\bigcup_{C\in \mathcal C} V_{C,c}$ where $c:=|\mathcal C|$ if $W$ contains an apex
			and $c:=|\mathcal C|-1$ if $W$ has no apices,
			and for each $C\in \mathcal C$ the set $V_{C,c}$ is a connected component of $W\!$.
	\end{enumerate}

  In order to show Lemma~\ref{lem:ModDecompSTCC},
  we also need the following properties:

  \begin{sclaim}\label{clm:ModDecompSTCC}
    If $V_{M,k}$ is a connected component of a decomposition module $W$ of $G$,
    and $V_{M,k}\subsetneq V_{M,l}$ for an $l>k$,
    then $W \subsetneq V_{M,l}$.
  \end{sclaim}

  \begin{proofofclaim}
  Let $k'$ be the maximum span of a vertex in $W\!$.
  Since $V_{M,k}\in \mathcal W^{dec}_G\cup \mathcal W^{con}_G$, we have
	$V_{M,k} = V_{M,k'}$ as a direct consequence of Property~\ref{enum:modulepropC}.
	Thus, 	we can assume that $k\geq k'\!$.
    Further,
    we have $V_{M,l}\not\subseteq W$ as  $V_{M,l}\subseteq W$ leads to a contradiction, because
	$V_{M,l}$ is connected and $V_{M,k}\subsetneq V_{M,l}$ is a connected component of $W\!$.
	Thus, $V_{M,l}\setminus W$ is non-empty.
	Since $V_{M,l}$ is connected and $V_{M,k}\subseteq V_{M,l}\cap W\!$,
	there must exist a vertex $v\in V_{M,l}\setminus W$ that is adjacent to a vertex in the non-empty set $V_{M,l}\cap W\!$.
	As $W$ is a module, $v$ is adjacent to all vertices in $W\!$.
	Therefore, $W\subseteq V_{M,l}$, because $\spn(w)\leq k'\leq l$ for all vertices $w\in W\!$.
    \varqed
  \end{proofofclaim}
  \begin{sclaim}\label{clm:ModDecompCliq}
    Let $(M,d)\in P'$ and $V_{M,d}$ be a module in $\mathcal W^{dec}_G\cup \mathcal W^{con}_G$.
    If $V_{M,d}$ is a clique, then there exists only one max clique $C\in \mathcal M_G$
    with $C\cap V_{M,d}\not=\emptyset$.
  \end{sclaim}

  \begin{proofofclaim}
	Since $V_{M,d}$ is a clique, there must exist a max clique $B\in \mathcal M_G$ with $V_{M,d}\subseteq B$.
	Let us assume, there exists a max clique $B'\in \mathcal M_G$ different from $B$ with $B'\cap V_{M,d}\not=\emptyset$.
	According to Property~\ref{enum:modulepropA} we have  $B\cap V_{M,d}\not= B'\cap V_{M,d}$ and therefore
	$B'\cap V_{M,d}\subsetneq V_{M,d}$.
	Since  $V_{M,d}$ is a module, $B'\cup V_{M,d}$ is a clique, a contradiction to $B'$ being a max clique.
  	 \varqed
  \end{proofofclaim}
\begin{sclaim}\label{clm:ModDecompApices}
    Let $(M,d)\in P'$ and $V_{M,d}$ be a module in $\mathcal W^{dec}_G\cup \mathcal W^{con}_G$.
    Further, let $0<d'<d$ be such that $V_{M,d'}\subsetneq V_{M,d}$, and let $A\not=\emptyset$
    be the set of apices of $G_{M,d}$.
    Then $V_{M,d'}\subseteq V_{M,d}\setminus A$.
\end{sclaim}
  \begin{proofofclaim}
	If $V_{M,d}$ is a clique, then according to Claim~\ref{clm:ModDecompCliq} max clique $M$ is the only
	max clique in $\mathcal M_G$ with $M\cap V_{M,d}\not=\emptyset$.
	Thus, $V_{M,d}=V_{M,1}$ and there does not exist a $d'$ with $0<d'<d$ such that $V_{M,d'}\subsetneq V_{M,d}$.

Now let $V_{M,d}$ be not a clique.
Further, let $\mathcal C$ be the set of max cliques $C\in \mathcal M_G$ with $C\cap V_{M,d}\not=\emptyset$
and $c:=|\mathcal C|$.
In the following we show that $a\in V_{M,d}$ is an apex of $G_{M,d}$ if and only if $\spn(a)=c$.
If  $a\in V_{M,d}$ and $\spn(a)=c$, then $a$ is contained in every max clique of $G$
that has a non-empty intersection with  $V_{M,d}$.
As every vertex in $V_{M,d}$ is contained in at least one max clique of $G$,
which of course has a non-empty intersection with  $V_{M,d}$, $a$ is an apex of $G_{M,d}$.
Now let $a$ be an apex of $G_{M,d}$ and let us assume that there exists a max clique $C\in \mathcal M_G$
with $C\cap V_{M,d}\not=\emptyset$ and $a\not\in C$.
Apex $a$ is adjacent to all vertices in $C\cap V_{M,d}$, and since $V_{M,d}$ is a module, $a$ is also
adjacent to all vertices in $C\setminus V_{M,d}$. Therefore. $C\cup \{a\}$ is a clique, which is a contradiction to
$C$ being a maximal clique of $G$.

From $\spn(v)=c$ for all vertices $v \in A$, $\spn(v)<c$ for all $v \in  V_{M,d}\setminus A$  and $V_{M,d'}\subsetneq V_{M,d}$ it follows that $d'<c$.
Consequently, $V_{M,d'}\subseteq V_{M,d}\setminus A$.
  	 \varqed
  \end{proofofclaim}

To proceed with the proof of Lemma~\ref{lem:ModDecompSTCC},
we first show that if $D$ is a connected component of a decomposition module $W\in \mathcal W_G^{dec}$
and $M\in \mathcal M_G$ with $M\cap D\not=\emptyset$,
then there is an $n\in \nat$ such that $(M,n) \in P$ and $V_{M,n} = D$.

We proof this by induction on the depth of the modular decomposition tree:
Clearly, if $D$ is a connected component of decomposition module $V$
(i.e., a connected component of $G$),
then $D=V_{M,|V|}$
for a max clique $M$ with $M\cap D\not= \emptyset$, and $(M,|V|)\in P\!$.

Now, let $D$ be a component of module $W\in \mathcal W_G^{dec}$  with $W\not= V\!$.
Let $c$ be the number $c'$ of max cliques of $G$ intersecting with $W$
if $W$ contains an apex and $c'-1$ if $W$ has no apices.
According to Property~\ref{enum:modulepropC}, $V_{M,c}=D$.
Let $n$ be maximal with $V_{M,n}=V_{M,c}$.
Then $(M,n)\in P'$ and $D= V_{M,n}$.
Choosing $(M,n)$ like that ensures that Property~\ref{prop:Mn1} is satisfied for $(M,n)$.

It remains to show Property~\ref{prop:Mn3}.
Let $m'> n$ and let $V_{M,m'}$ be a module.
According to Property~\ref{prop:Mn1} we have   $V_{M,n}\subsetneq V_{M,m'}$.
Thus, Claim~\ref{clm:ModDecompSTCC} implies $W\subsetneq V_{M,m'}$.

First, let us assume there exists an apex $a$ of $G_{M,m'}$.
If there exists an apex of $G_{M,m'}$ in $V_{M,m'}\setminus W\!$, we have shown Property~\ref{prop:Mn3}.
Thus, let us assume all apices of $G_{M,m'}$, in particular $a$, are in $W\!$.
Since $W$ is a module and $a\in W$, the vertex sets $V_{M,m'}\setminus W$ and $W$ must be completely connected.
If $W$ contains two vertices $w,w'$ that are not adjacent,
then in the minimal interval representation the interval of each vertex in $V_{M,m'}\setminus W$
has to intersect the intervals of both $w$ and $w'$.
Thus, the intervals of all vertices in $V_{M,m'}\setminus W$ intersect with each other
and each vertex in $V_{M,m'}\setminus W$ is an apex, a contradiction.
Let us assume $W$ is a clique.
Let $W^*$ be the parent module of $W$ in the modular decomposition tree of $G$.
Since $W$ is a decomposition module, $|W|>1$ and
$W^*$ contains either an apex, or is connected and contains no apices.
$W^*$ cannot contain an apex, because then all vertices in $W^*$ form a clique and $W$ is not in $\mathcal W_{G[W^*]}$.
If $W^*$ is connected and contains no apices,
then $W=S_{\mathcal C}$ for $\mathcal C\in \mathfrak P_{G[W^*]}$
where $\mathcal C$ is a set of
max cliques of $G[W^*]$
with  $|\mathcal C|\geq 2$ (see Section \ref{sec:DecompModulesLG}).
As $W$ is connected, $W=V_{M,n}$.
According to Claim~\ref{clm:ModDecompCliq} there exists only one max clique $C$ of $G$ with $C\cap W\not=\emptyset$.
Consequently, $C':=C\cap W^*$ is the only max clique in $G[W^*]$ with $C'\cap W\not=\emptyset$, a contradiction.
Hence, $W$ cannot be a clique.

Now let us assume that there does not exist an apex of $G_{M,m'}$.
Thus, $\sim_{G_{M,m'}}$ is constructed as described in Section \ref{sec:DecompModulesLG}.
Let $W'$ be the parent module $W^*$ of $W$ in the modular decomposition tree of $G$
if $W^*$ is a decomposition module,
or if $W^*$ is a component module, let $W'$ be the parent of module $W^*\!$.
Then $W'$ is a decomposition module.
Further, let $D'$ be the component of $W'$ that contains $W\!$.
Notice that no matter what set we chose for $W'\!$, we have $D'=W^*\!$.
According to Property~\ref{enum:modulepropC}, there exists an $n'\in [|V|]$ such that $D'=V_{M,n'}$.
Let $n'$ be maximal with that property.
Therefore, $W^*=V_{M,n'}$ and $W^*$ is a component of a decomposition module.
By inductive assumption we have $(M,n')\in P\!$.
If $V_{M,n'}=V_{M,m'}$, then $V_{M,n}$ is a subset of equivalence class $W$ of $\sim_{G_{M,m'}}$
with more than one vertex and we are done.
Therefore, let us assume $V_{M,n'}\not =V_{M,m'}$.
If $n'<m'\!$, then $V_{M,n}\subseteq W\subsetneq W^*=V_{M,n'}\subsetneq V_{M,m'}$.
As $(M,n')$ satisfies Property~\ref{prop:Mn3} and there does not exist an apex in $G_{M,m'}$,
the set $V_{M,n'}$, and therefore also the set $V_{M,n}\subsetneq V_{M,n'}$, is a subset of an equivalence class of
$\sim_{G_{M,m'}}$ with more than one vertex.

It remains to consider $m'<n'$ where $V_{M,n'}\not =V_{M,m'}$.
Then $V_{M,n}\subseteq W\subsetneq V_{M,m'}\subsetneq V_{M,n'}=W^*\!$.
If $W^*=V_{M,n'}$ contains an apex,
then $W=V_{M,n'}\setminus A$ where $A$ is the set of apices of $G_{M,n'}$.
According to Claim~\ref{clm:ModDecompApices},
$V_{M,m'}\subseteq V_{M,n'}\setminus A$.
But this implies $V_{M,m'} \subseteq W$,
a contradiction.

Finally, let us assume $W^*=V_{M,n'}$ is connected and does not contain an apex.
Then $W=S_{\mathcal C}$ for a $\mathcal C\in \mathfrak P_{G[W^*]}$ with $|\mathcal C|\geq 2$ where
$\mathfrak P_{G[W^*]}$ is the set of all maximal proper subsets $\mathcal C'$ of $\mathcal M_{G[W^*]}$,
the set of max cliques of $G[W^*]$, with the property that
for any $B\in \mathcal M_{G[W^*]}\setminus \mathcal C'$ we have $C\cap B=C'\cap B$ for all $C,C'\in  \mathcal C'$.
For all $C \in \mathcal M_{G[W^*]}$ with $C\cap V_{M,m'}\not=\emptyset$
let $f(C)$ be the set $C\cap V_{M,m'}$.
As $V_{M,m'}$ is a module, the set $\{f(C)\mid C\in \mathcal M_{G[W^*]}, C\cap V_{M,m'}\not=\emptyset\}$
is the set $\mathcal M_{G_{M,m'}}$ of max cliques of $G_{M,m'}$.
Let $f(\mathcal C)$ be the set $\{f(C)\mid C\in \mathcal C\}$.
Then $f(\mathcal C)$ is exactly the set of max cliques of  $G_{M,m'}$ that have a non-empty intersection with $W$.
Let $f(C),f(C')\in f(\mathcal C)$ and $f(B)\in \mathcal M_{G_{M,m'}}\setminus f(\mathcal C)$.
Then $f(C)\cap f(B)=f(C')\cap f(B)$, because $C\cap B=C'\cap B$ and therefore
$(C\cap V_{M,m'})\cap (B\cap V_{M,m'})=(C'\cap V_{M,m'})\cap (B\cap V_{M,m'})$.
Further, $|f(\mathcal C)|>1$, since $|\mathcal C|>1$ and for $C,C'\in \mathcal C\subseteq \mathcal M_{G[W^*]}$
with $C\not= C'$ we have $C\cap W\not= C'\cap W$ according to Property~\ref{enum:modulepropA}.
Consequently, $(C\cap V_{M,m'})\cap W\not= (C'\cap V_{M,m'})\cap W$ and
$f(C)\not= f(C')$ for max cliques  $f(C),f(C')\in f(\mathcal C)$.
We obtain that there exists a subset $f(\mathcal C')$ of  $\mathcal M_{G_{M,m'}}$ with
$f(\mathcal C)\subseteq f(\mathcal C')$ such that $f(\mathcal C')\in \mathfrak P_{G_{M,m'}}$.
As there exists no max clique $f(B)\in \mathcal M_{G_{M,m'}}\setminus f(\mathcal C')$ with $f(B)\cap W\not=\emptyset$,
$W\subseteq S_{f(\mathcal C')}$ and we have shown that $V_{M,n}$ is a subset of equivalence class
$S_{f(\mathcal C')}$ of $\sim_{G_{M,m'}}$ with more than one vertex.

\bigskip

For the other direction,
let $(M,n)\in P$, we need to show that
$V_{M,n}$ is a component of a decomposition module.
We prove this by induction on $n$.
Clearly, this holds for $n=|V(G)|$, so let $n<|V(G)|$.
Let $p$ be minimal such that $p>n$ and  $(M,p)\in P$.
Since $(M,|V|)\in P$ such a number exists.
By inductive assumption we know that  $V_{M,p}$ is a component of a decomposition module.
Thus,  $V_{M,p}$ is a module occurring in  $V(T(G))$, the vertices of the modular decomposition tree of $G$.

Since $(M,n)\in P$, $(M,n)$ satisfies Property~\ref{prop:Mn3}.
Thus, $V_{M,n}$ is a subset of an equivalence class of $\sim_{G_{M,p}}$ with more than one vertex
or there exists an apex of $G_{M,p}$ in $V_{M,p}\setminus V_{M,n}$.

Let $V_{M,n}$ be a subset of an equivalence class $W$ of $\sim_{G_{M,p}}$ with more than one vertex.
As $V_{M,p}$ is connected, the equivalence class $W$ is a decomposition module.
Let $D$ be the connected component of $W$ that contains $V_{M,n}$.
If $V_{M,n}=D$, then $V_{M,n}$ is a component of a decomposition module and we are done.
If $V_{M,n}$ is a proper subset of $D$, we obtain a contradiction to the choice of $p$,
since we have already shown that for component $D$ of decomposition module $W$
there must exist an $m \in [|V|]$ such that $(M,m)\in P$ and $V_{M,m}=D$, and $n<m<p$.

Now let there be a vertex $a\in V_{M,p}\setminus V_{M,n}$ that is an apex of $G_{M,p}$.
Let $A$ be the set of apices of $G_{M,p}$.
According to Claim~\ref{clm:ModDecompApices} we have $V_{M,n}\subseteq V_{M,p}\setminus A$.
Further, $|V_{M,p}\setminus A|=1$ implies that $v\in V_{M,p}\setminus A$ is also an apex. Consequently,
 $|V_{M,p}\setminus A|>1$.
Therefore, we have either shown that $V_{M,n}$ is a component of equivalence class $V_{M,p}\setminus A$ of
$\sim_{G_{M,p}}$ with more than one vertex or obtain a contradiction to the choice of $p$.
\end{proof}

\begin{corollary}
  There is an $\STCC$-formula $\phi(x,y,z)$
  such that
  for all interval graphs $G = (V,E)$, all $v,w \in V$,
  and all $n \in [\card{V}]$,
  we have $G \models \phi[v,w,n]$ iff
  $M = \ngh(v) \cap \ngh(w)$ is a max clique of $G$ and $(M,n) \in P$.
\end{corollary}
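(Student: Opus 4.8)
The plan is to reduce the corollary to a definability statement about the set $P$ itself: I will show that the ternary relation $\set{(M,n)\mid (M,n)\in P}$ — with each max clique $M$ encoded by a pair $(x,y)$ such that $M=\ngh(x)\cap\ngh(y)$, which is legitimate by Lemma~\ref{lem:maxCliquesDefable} and the surrounding discussion — is $\STCC$-definable over the class of interval graphs. Then $\phi(x,y,z)$ is obtained by conjoining this formula with a first-order guard asserting that $\ngh(x)\cap\ngh(y)$ is a max clique of $G$ and that $z$ denotes a number in $[\card V]$. All pieces occurring in the definition of $P$ are in fact $\STCC$-definable, so the real content is assembly plus one relativisation argument.

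First I would record which relations are already $\STCC$-definable, uniformly over all interval graphs, when $M$ is passed as a pair: the max-clique predicate and the pair-equivalence identifying two representations of the same max clique (first-order, by Lemma~\ref{lem:maxCliquesDefable}); the span $\spn(v)$ of a vertex (stated in Section~\ref{subsec:orderingMaxCliques}); and, since $\STC$ defines undirected reachability and we may substitute $\STCC$-formulae into the symmetric transitive closure operator, the ternary relation ``$u\in V_{M,n}$'', i.e.\ $u$ lies in the connected component of $G[\set{v\mid\spn(v)\le n}]$ that meets $M$. Given this last relation, set equality and containment between $V_{M,n}$ and $V_{M,n'}$, emptiness of $V_{M,n}$, and the property ``$V_{M,m'}$ is a module of $G$'' are all first-order. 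Hence Property~\ref{prop:Mn1} (``$n$ is maximal with $V_{M,n'}=V_{M,n}$'') is first-order over the number sort, which carries its natural linear order.

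For Property~\ref{prop:Mn3} I would use that $G_{M,m'}=G[V_{M,m'}]$ is a first-order interpretable — with parameters $x,y,m'$ — induced subgraph of $G$, and that interval graphs are hereditary (restrict the interval representation), so $G_{M,m'}$ is again an interval graph. Applying the $\STCC$-formula $\phi_\sim$ from Lemma~\ref{lem:LG-interval}\ref{item1-lem:LG-interval} inside this interpretation — using the standard closure of $\STCC$ under parameterised first-order interpretations, the same phenomenon that Lemma~\ref{lem:transduction-lemma} records for $\LREC_=$ — yields an $\STCC$-formula that defines $\sim_{G_{M,m'}}$ as a function of $x,y,m'$. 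Being an apex of $G_{M,m'}$ is then just adjacency to every other vertex of $V_{M,m'}$, which is first-order in the $V_{M,\cdot}$-relation; and ``$V_{M,n}$ lies inside a $\sim_{G_{M,m'}}$-class with more than one vertex'' is first-order in the definable relations for $\sim_{G_{M,m'}}$ and $V_{M,\cdot}$. A universal number quantifier over $m'>n$ turns the disjunction of these two conditions into a single $\STCC$-formula for Property~\ref{prop:Mn3}. Conjoining Properties~\ref{prop:Mn1} and \ref{prop:Mn3} with the guards gives $\phi(x,y,z)$; correctness is immediate, since the formula literally transcribes the definition of $P$, and Lemma~\ref{lem:ModDecompSTCC} is precisely what makes $P$ the relevant object.

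I expect the main obstacle to be the relativisation step: setting up the parameterised first-order interpretation of $G_{M,m'}$ in $G$ carefully enough that $\phi_\sim$ genuinely computes $\sim_{G_{M,m'}}$ — in particular checking that the two-sorted, counting machinery of $\STCC$ transfers correctly through the interpretation, and that heredity of interval graphs really does license applying Lemma~\ref{lem:LG-interval} to the subgraph rather than to $G$. Everything else is routine first-order bookkeeping layered on top of the definable relations for span, reachability, and $\sim$.
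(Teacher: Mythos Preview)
Your proposal is correct and is precisely the argument the paper leaves implicit: the corollary is stated without proof, relying on the reader to unfold the definition of $P$ and observe that each ingredient (max-clique representation, span, the sets $V_{M,n}$ via undirected reachability, the module test, apices, and $\sim_{G_{M,m'}}$ via relativising Lemma~\ref{lem:LG-interval}\ref{item1-lem:LG-interval}) is $\STCC$-definable. Your identification of the relativisation to $G_{M,m'}$ as the one non-trivial step is apt, and your justification---heredity of interval graphs plus closure of $\STCC$ under parameterised first-order interpretations---is exactly what is needed.
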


We are now ready to define the coloured modular decomposition tree.
An illustration of the tree can be found in Figure~\ref{fig:modularTree}.

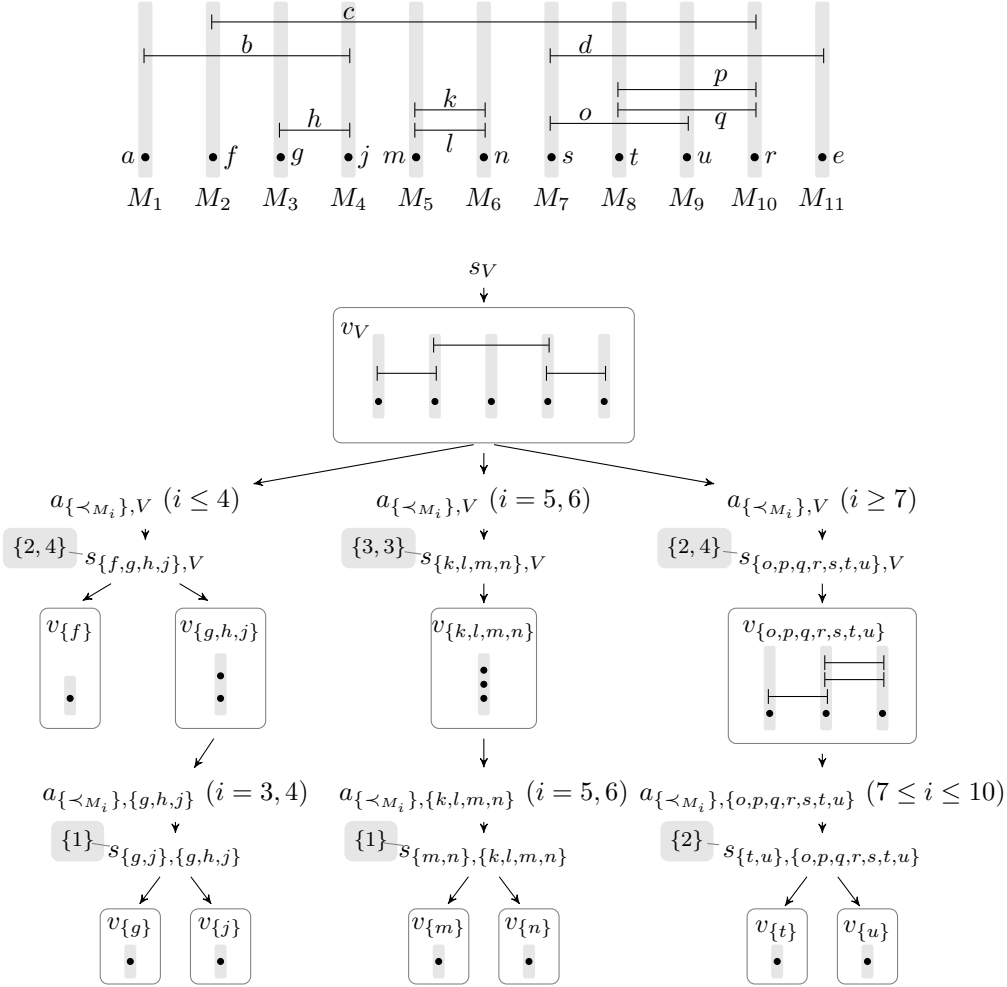
\begin{figure}
\centering
\begin{tikzpicture}[intvl/.style={|-|,shorten >=-.7pt,shorten <=-.7pt},
                    v/.style={circle,inner sep=0pt,minimum size=4pt},
        			e/.style={->, shorten >=1pt}]

\begin{scope}[xshift=-4.5cm,scale=.9]

  \foreach \x in {1,...,11}
 {
  \fill[xshift=\x cm-1cm,rounded corners=1pt,fill=gray!20!white] (-3pt,-.3cm) rectangle (3pt,2.3cm);
  \draw[xshift=\x cm-1cm] (0,-.3cm) node[anchor=north] {\small $M_{\x}$};
 }

\filldraw (0,0) node[anchor=east] {\small $a$} circle (1.5pt)
    (10,0) node[anchor=west] {\small $e$} circle (1.5pt)
    (1,0) node[anchor=west] {\small $f$} circle (1.5pt)
    (2,0) node[anchor=west] {\small $g$} circle (1.5pt)
    (3,0) node[anchor=west] {\small $j$} circle (1.5pt)
    (4,0) node[anchor=east] {\small $m$} circle (1.5pt)
    (5,0) node[anchor=west] {\small $n$} circle (1.5pt)
    (6,0) node[anchor=west] {\small $s$} circle (1.5pt)
    (7,0) node[anchor=west] {\small $t$} circle (1.5pt)
    (8,0) node[anchor=west] {\small $u$} circle (1.5pt)
    (9,0) node[anchor=west] {\small $r$} circle (1.5pt);

\draw[intvl] (0,1.5) -- (3,1.5) node[midway,above,inner sep=1pt] {\small $b$};
\draw[intvl] (1,2) -- (9,2) node[near start,above,inner sep=1pt] {\small $c$};
\draw[intvl] (6,1.5) -- (10,1.5) node[very near start,above,inner sep=1pt] {\small $d$};
\draw[intvl] (2,.4) -- (3,.4) node[midway,above,inner sep=1pt] {\small $h$};
\draw[intvl] (4,.4) -- (5,.4) node[midway,below,inner sep=1pt] {\small $l$};
\draw[intvl] (4,.7) -- (5,.7) node[midway,above,inner sep=1pt] {\small $k$};
\draw[intvl] (6,.5) -- (8,.5) node[near start,above,inner sep=1pt] {\small $o$};
\draw[intvl] (7,.7) -- (9,.7) node[near end,below,inner sep=1pt] {\small $q$};
\draw[intvl] (7,1) -- (9,1) node[near end,above,inner sep=1pt] {\small $p$};

\end{scope}

\begin{scope}[yshift=-3cm]

\node (root) at (0,1.5) {\small $s_V$};
\node (vVentry) at (0,1) {};
\node (vV) at (0,-.8) {};
\node (vVlabel) at (-1.7,.7) {\small $v_V$};
\draw[draw=gray,rounded corners = 3pt] (-2,-.8) rectangle (2,1);
\node (lM16) at (-4.5,-1.6) {\small $a_{{\{\prec_{M_i}\},V}}\ {\footnotesize (i\leq 4)}$};
\node (lM56) at (0,-1.6) {\small $a_{{\{\prec_{M_i}\},V}}\ (i=5,6)$};
\node (lM66) at (4.5,-1.6) {\small $a_{{\{\prec_{M_i}\},V}}\ (i\geq 7)$};

\draw[->,shorten >=-2pt] (root) -- (vVentry);
\draw[e] (vV) -- (lM16);
\draw[e] (vV) -- (lM56);
\draw[e] (vV) -- (lM66);

\node (sfghi) at (-4.5,-2.4) {\small $s_{\{f,g,h,j\},V}$};
\node (vf) at (-5.5,-3) {}; 
\node (vghi) at (-3.5,-3) {}; 
\node (vfLabel) at (-5.5,-3.3) {\small $v_{\{f\}}$};
\node (vghiLabel) at (-3.5,-3.3) {\small $v_{\{g,h,j\}}$};
\draw[rounded corners=3pt,draw=gray] (-5.9,-4.6) rectangle (-5.1,-3);
\draw[rounded corners=3pt,draw=gray] (-4.1,-4.6) rectangle (-2.9,-3);

\draw[e] (lM16) -- (sfghi);
\draw[e] (sfghi) -- (vf);
\draw[e] (sfghi) -- (vghi);

\node (vghiOut) at (-3.5,-4.6) {};
\node (aghi) at (-4.1,-5.5) {\small $a_{\{\prec_{M_i  }\},\{g,h,j \} }\ (i=3,4)$};
\node (sgi) at (-4.1,-6.3) {\small $s_{\{g,j\},\{g,h,j \}}$};
\node (vg) at (-4.7,-7.1) {};
\node (vi) at (-3.5,-7.1) {};
\draw[rounded corners=3pt,draw=gray] (-5.1,-8) rectangle (-4.3,-7);
\draw[rounded corners=3pt,draw=gray] (-3.9,-8) rectangle (-3.1,-7);
\draw[e] (vghiOut) -- (aghi);
\draw[e] (aghi) -- (sgi);
\draw[e] (sgi) -- (vg);
\draw[e] (sgi) -- (vi);

\node (sjkl) at (0,-2.4) {\small $s_{\{k,l,m,n\},V}$};
\node (vjkl) at (0,-3) {}; 
\node (vjklLabel) at (0,-3.3) {\small $v_{\{k,l,m,n\}}$};
\draw[rounded corners=3pt,draw=gray] (-.7,-4.6) rectangle (.7,-3);
\draw[e] (lM56) -- (sjkl);
\draw[->,shorten >=-2pt] (sjkl) -- (vjkl);

\node (vjklOut) at (0,-4.6) {};
\node (ajklm) at (0,-5.5) {\small $a_{\{\prec_{M_i}\}, \{k,l,m,n \}  }\ (i=5,6)$};
\node (sjm) at (0,-6.3) {\small $s_{\{m,n\},\{k,l,m,n\}}$};
\node (vj) at (-.6,-7.1) {};
\node (vm) at (.6,-7.1) {};
\draw[rounded corners=3pt,draw=gray] (-1,-8) rectangle (-.2,-7);
\draw[rounded corners=3pt,draw=gray] (.2,-8) rectangle (1,-7);
\draw[e] (vjklOut) -- (ajklm);
\draw[e] (ajklm) -- (sjm);
\draw[e] (sjm) -- (vj);
\draw[e] (sjm) -- (vm);

\node (smnopqrs) at (4.5,-2.4) {\small $s_{\{o,p,q,r,s,t,u\},V}$};
\node (vmnopqrs) at (4.5,-3) {};
\node (vmnopqrsExit) at (4.5,-4.8) {};
\node at (4.4,-3.3) {\small $v_{\{o,p,q,r,s,t,u\}}$};
\draw[rounded corners=3pt,draw=gray] (3.25,-4.8) rectangle (5.75,-3);
\node (amnopqrs) at (4.5,-5.5) {\small $a_{\{\prec_{M_i}\},\{o,p,q,r,s,t,u\}}\ (7\leq i\leq 10)$};
\node (srs) at (4.5,-6.3) {\small $s_{\{t,u\},\{o,p,q,r,s,t,u\}}$};
\node (vr) at (3.9,-7.1) {}; 
\node (vs) at (5.1,-7.1) {}; 
\node (vrLabel) at (3.9,-7.3) {\small $v_{\{t\}}$};
\node (vsLabel) at (5.1,-7.3) {\small $v_{\{u\}}$};
\draw[rounded corners=3pt,draw=gray] (3.5,-8) rectangle (4.3,-7);
\draw[rounded corners=3pt,draw=gray] (4.7,-8) rectangle (5.5,-7);

\draw[e] (lM66) -- (smnopqrs);
\draw[->,shorten >=-2pt] (smnopqrs) -- (vmnopqrs);
\draw[e] (vmnopqrsExit) -- (amnopqrs);
\draw[e] (amnopqrs) -- (srs);
\draw[->] (srs) -- (vr);
\draw[->] (srs) -- (vs);

\node[fill=gray!20!white,rounded corners=3pt] (sfghiColor) at (-5.9,-2.2) {\scriptsize $\{2,4\}$};
\draw[gray,shorten >=-2pt,shorten <=-2pt] (sfghi) -- (sfghiColor);
\node[fill=gray!20!white,rounded corners=3pt] (sjklColor) at (-1.4,-2.2) {\scriptsize $\{3,3\}$};
\draw[gray,shorten >=-2pt,shorten <=-2pt] (sjkl) -- (sjklColor);
\node[fill=gray!20!white,rounded corners=3pt] (smnopqrsColor) at (2.8,-2.2) {\scriptsize $\{2,4\}$};
\draw[gray,shorten >=-2pt,shorten <=-2pt] (smnopqrs) -- (smnopqrsColor);
\node[fill=gray!20!white,rounded corners=3pt] (srsColor) at (2.7,-6.1) {\scriptsize $\{2\}$};
\draw[gray,shorten >=-2pt,shorten <=-2pt] (srs) -- (srsColor);

\node[fill=gray!20!white,rounded corners=3pt] (sgiColor) at (-5.4,-6.1) {\scriptsize $\{1\}$};
\draw[gray,shorten >=-2pt,shorten <=-2pt] (sgi) -- (sgiColor);
\node[fill=gray!20!white,rounded corners=3pt] (sjmColor) at (-1.5,-6.1) {\scriptsize $\{1\}$};
\draw[gray,shorten >=-2pt,shorten <=-2pt] (sjm) -- (sjmColor);

\begin{scope}[xshift=3.8cm,yshift=-4.4cm,scale=.75]


\foreach \x in {1,...,3}
 {
  \fill[xshift=\x cm-1cm,rounded corners=1pt,fill=gray!20!white] (-3pt,-.3cm) rectangle (3pt,1.2cm);
 }

\filldraw (0,0) node[anchor=east] {} circle (1.5pt)
            (1,0) node[anchor=east] {} circle (1.5pt)
            (2,0) node[anchor=east] {} circle (1.5pt);

\draw[intvl] (0,.3) -- (1,.3) node[midway,above,inner sep=1pt] {};
\draw[intvl] (1,.6) -- (2,.6) node[midway,above,inner sep=1pt] {};
\draw[intvl] (1,.9) -- (2,.9) node[midway,above,inner sep=1pt] {};

\end{scope}

\begin{scope}[xshift=0cm,yshift=-4.2cm,scale=.75]


\foreach \x in {1,...,1}
 {
  \fill[xshift=\x cm-1cm,rounded corners=1pt,fill=gray!20!white] (-3pt,-.3cm) rectangle (3pt,.8cm);
 }

\filldraw (0,0) node[anchor=east] {} circle (1.5pt)
            (0,.25) node[anchor=east] {} circle (1.5pt)
		(0,.5) node[anchor=east] {} circle (1.5pt);

\end{scope}

\begin{scope}[xshift=-3.5cm,yshift=-4.2cm,scale=.75]


\foreach \x in {1,...,1}
 {
  \fill[xshift=\x cm-1cm,rounded corners=1pt,fill=gray!20!white] (-3pt,-.3cm) rectangle (3pt,.8cm);
 }

\filldraw (0,0) node[anchor=east] {} circle (1.5pt)
            (0,.4) node[anchor=east] {} circle (1.5pt);

\end{scope}

\begin{scope}[xshift=-5.5cm,yshift=-4.2cm,scale=.75]


\foreach \x in {1,...,1}
 {
  \fill[xshift=\x cm-1cm,rounded corners=1pt,fill=gray!20!white] (-3pt,-.3cm) rectangle (3pt,.4cm);
 }
\filldraw (0,0) node[anchor=east] {} circle (1.5pt);

\end{scope}

\begin{scope}[xshift=3.9cm,yshift=-7.7cm,scale=.75]


\foreach \x in {1,...,1}
 {
  \fill[xshift=\x cm-1cm,rounded corners=1pt,fill=gray!20!white] (-3pt,-.3cm) rectangle (3pt,.3cm);
 }
\filldraw (0,0) node[anchor=east] {} circle (1.5pt);

\end{scope}

\begin{scope}[xshift=5.1cm,yshift=-7.7cm,scale=.75]


\foreach \x in {1,...,1}
 {
  \fill[xshift=\x cm-1cm,rounded corners=1pt,fill=gray!20!white] (-3pt,-.3cm) rectangle (3pt,.3cm);
 }
\filldraw (0,0) node[anchor=east] {} circle (1.5pt);

\end{scope}

\begin{scope}[xshift=-4.7cm,yshift=-7.7cm,scale=.75]


\node at (0,.6) {\small $v_{\{g\}}$};

\foreach \x in {1,...,1}
 {
  \fill[xshift=\x cm-1cm,rounded corners=1pt,fill=gray!20!white] (-3pt,-.3cm) rectangle (3pt,.3cm);
 }
\filldraw (0,0) node[anchor=east] {} circle (1.5pt);

\end{scope}

\begin{scope}[xshift=-3.5cm,yshift=-7.7cm,scale=.75]


\node at (0,.6) {\small $v_{\{j\}}$};

\foreach \x in {1,...,1}
 {
  \fill[xshift=\x cm-1cm,rounded corners=1pt,fill=gray!20!white] (-3pt,-.3cm) rectangle (3pt,.3cm);
 }
\filldraw (0,0) node[anchor=east] {} circle (1.5pt);

\end{scope}

\begin{scope}[xshift=-.6cm,yshift=-7.7cm,scale=.75]


\node at (0,.6) {\small $v_{\{m\}}$};

\foreach \x in {1,...,1}
 {
  \fill[xshift=\x cm-1cm,rounded corners=1pt,fill=gray!20!white] (-3pt,-.3cm) rectangle (3pt,.3cm);
 }
\filldraw (0,0) node[anchor=east] {} circle (1.5pt);

\end{scope}

\begin{scope}[xshift=.6cm,yshift=-7.7cm,scale=.75]


\node at (0,.6) {\small $v_{\{n\}}$};

\foreach \x in {1,...,1}
 {
  \fill[xshift=\x cm-1cm,rounded corners=1pt,fill=gray!20!white] (-3pt,-.3cm) rectangle (3pt,.3cm);
 }
\filldraw (0,0) node[anchor=east] {} circle (1.5pt);

\end{scope}

\begin{scope}[xshift=-1.4cm,yshift=-.25cm,scale=.75]


\foreach \x in {1,...,5}
 {
  \fill[xshift=\x cm-1cm,rounded corners=1pt,fill=gray!20!white] (-3pt,-.3cm) rectangle (3pt,1.2cm);
 }

\filldraw (0,0) node[anchor=east] {} circle (1.5pt)
            (1,0) node[anchor=east] {} circle (1.5pt)
            (2,0) node[anchor=east] {} circle (1.5pt)
            (3,0) node[anchor=east] {} circle (1.5pt)
            (4,0) node[anchor=east] {} circle (1.5pt);

\draw[intvl] (0,.5) -- (1,.5) node[midway,above,inner sep=1pt] {};
\draw[intvl] (1,1) -- (3,1) node[midway,above,inner sep=1pt] {};
\draw[intvl] (3,.5) -- (4,.5) node[midway,above,inner sep=1pt] {};

\end{scope}

\end{scope}

\end{tikzpicture}
\caption{An interval graph and its coloured modular decomposition tree. Component vertices $v_U$ are represented together with the interval graph $L_U$ labeling them. The colours of module vertices are indicated in the gray fields next to them.}\label{fig:modularTree}
\end{figure}
Formally, the coloured modular decomposition tree is defined as
$\mathcal T = \mathcal T_G = (V_{\mathcal T},E_{\mathcal T})$,
where the set $V_{\mathcal T}$ of nodes and the set $E_{\mathcal T}$ of edges
of $\mathcal T$ is defined as follows.
$V_{\mathcal T}$ is the union of the following sets:
\begin{itemize}[leftmargin=*]
\item the set $\mathcal V$ of \emph{component vertices} $v_{V_{M,n}}$,
		one for each set $V_{M,n}$ with $(M,n) \in P$,
\item the set $\mathcal A$ of \emph{arrangement vertices} $a_{\{\prec_Q\} ,V_{M,n}}$
       where $\{\prec_Q\}$ is the singleton set of the distinguished minimal order on $L_{G_{M,n}}$'s max cliques
       if $\mathcal K(L_{G_{M,n}})$ is not order isomorphic under its two
       linear orderings
       (recall the definition of $\mathcal K(L_{G_{M,n}})$ from Section~\ref{sec:L_G-can}).
      If $\mathcal K(L_{G_{M,n}})$ is order isomorphic under its two linear orderings, then max clique $Q$
		identifies an order $\prec_Q$, namely, the order where
                $Q_{L_{G_{M,n}}}$ occurs first (see Section~\ref{sec:L_G-can}
                for the definition of $Q_{L_{G_{M,n}}}$).
		$Q$ defines both orders if $Q_{L_{G_{M,n}}}$ is located in the middle.
	Thus, for each $Q$ the set $\{\prec_Q\}$ is the set of orders containing either only one of the isomorphic orders
	or both. Consequently, for each set $V_{M,n}$ there are at most three arrangement vertices of the form $a_{\{\prec_Q\} ,V_{M,n}}$.
 \item the set $\mathcal S$ of \emph{module vertices} $s_{W_A,V_{M,n}}$
			for which $A$ is a max clique of $G$,
and $W_A$ is the vertex in $L_{G_{M,n}}$ ($W_A$ is a module of $V_{M,n}$ with more than one vertex)
that contains vertices of $A$,
and
 \item $\{ s_{V} \}$, where $s_V$ is a special vertex acting as the root of $\mathcal T$.
\end{itemize}
We colour the vertices in $\mathcal V$ by assigning to each $v_{V_{M,n}}\in\mathcal V$ the ordered graph $\mathcal K (L_{G_{M,n}})$.
The vertices in $\mathcal A$ remain uncoloured and may
therefore be exchanged by an automorphism of $\mathcal T$
whenever their subtrees are isomorphic.
Each $s_{W_A,V_{M,n}}\in\mathcal S$ is coloured with the multiset of integers corresponding to the positions
that the max clique $A_{L_{G_{M,n}}}$ takes in the orders of $L_{G_{M,n}}$.
The edge relation $E_{\mathcal T}$ of $\mathcal T$ is now defined in a straight-forward manner, with all edges directed away from the root $s_V$.
\begin{itemize}[leftmargin=*]
 \item $s_V$ is connected to all $v_{V_{M,n}} \in \mathcal V$ with  $n=|V|$.
\item Each $v_{V_{M,n}} \in \mathcal V$ is connected to all vertices in $\mathcal A$ of the form $a_{\{\prec_Q\},V_{M,n}}$ with
		$Q\cap V_{M,n}\not=\emptyset$. Therefore,  $v_{V_{M,n}}$ is connected to at most three vertices.
\item Each $a_{\{\prec_Q\},V_{M,n}} \in\mathcal A$ is connected to all those $s_{W_A,V_{M,n}} \in \mathcal S$
	so that $\{\prec_Q\}$ is the set of orders of $L_{V_{M,n}}$ under which module
	$W_A\in V(L_{G_{M,n}})$ attains its minimal position, that is,
		for every max clique $Q$ that intersects with a module $W$ of $V_{M,n}$ with $|W|>1$,
		vertex~$a_{\{\prec_Q\},V_{M,n}} \in\mathcal A$ is connected to $s_{W_Q,V_{M,n}} \in \mathcal S$.
\item Every $s_{W_A,V_{M,n}} \in \mathcal S$ is connected to those $v_{V_{M'\!,n'}} \in \mathcal V$
	for which $V_{M'\!,n'}$ is a connected component of the module $W_A$, that is, for each max clique $A$ the vertex
	$s_{W_A,V_{M,n}} \in \mathcal S$ is connected to $v_{V_{A,n'}} \in \mathcal V$ with $n'= \max\{ m<n\mid (V_{A,m})\in P\}$.
\end{itemize}

The point of the arrangement vertices $\mathcal A$ is to ensure that the order of submodules is properly accounted for.
If our modular tree did not have such a safeguard, exchanging modules in symmetric positions might give rise to a non-isomorphic graph,
but it would not change the tree, so $\mathcal T$ would be useless for the task of distinguishing between these~two~graphs.

We will later need \STCC-definability of this coloured tree.
Thus, notice that the tree's vertices are equivalence classes, which are $\STCC$ definable. Also the edge relation
and the colours are \STCC-definable (Lemma~\ref{lem:LG-interval}).

Lemma~\ref{lem:modularTreeIsomImpIntGisom} below shows that our modular trees are a complete invariant of interval graphs,
so modular trees can be used to tell whether two interval graphs are isomorphic.

\begin{lemma}[\cite{koebler10interval},\cite{laubner11diss}]\label{lem:modularTreeIsomImpIntGisom}
Let $G$ and $H$ be interval graphs. If their modular trees are isomorphic, then so are $G$ and $H$.\qed
\end{lemma}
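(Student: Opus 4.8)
The plan is to prove the stronger statement that $G$ can be recovered, up to isomorphism, from its coloured modular decomposition tree $\mathcal{T}_G$; the lemma then follows at once, since an isomorphism $\mathcal{T}_G \to \mathcal{T}_H$ makes the two reconstructions isomorphic. Concretely, I would argue by induction on $\card{V_{M,n}}$ that whenever the subtree of $\mathcal{T}_G$ rooted at a component vertex $v_{V_{M,n}}$ is isomorphic, as a coloured rooted tree, to the subtree of $\mathcal{T}_H$ rooted at a component vertex $v_{V'_{M',n'}}$, the induced subgraphs $G_{M,n}$ and $H_{M',n'}$ are isomorphic. Applying this to the children of the roots $s_V$ then yields $G \isomorphic H$: an isomorphism $\mathcal{T}_G \to \mathcal{T}_H$ maps the component-vertex children of $s_V$ bijectively while preserving subtree isomorphism type, and $G$ is the disjoint union of the graphs attached to those children. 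The base case $\card{V_{M,n}} = 1$ is immediate.

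For the inductive step I would first treat the easy cases in which $G_{M,n}$ is disconnected or contains an apex: then $L_{G_{M,n}}$ is edgeless or a clique, and $G_{M,n}$ is the disjoint union, respectively the complete join, of the subgraphs induced on the modules in $\mathcal{W}_{G_{M,n}}$. Each non-singleton such module $W$ appears as a module vertex $s_{W,V_{M,n}}$, and by Lemma~\ref{lem:ModDecompSTCC} (together with Properties (A)--(C) established in its proof) the component-vertex successors of $s_{W,V_{M,n}}$ are exactly the connected components of $W$; hence $G_{M,n}[W]$ is their disjoint union, and the induction hypothesis reconstructs each component. The colour $\mathcal{K}(L_{G_{M,n}})$ of $v_{V_{M,n}}$ and the edge relation of $\mathcal{T}_G$ tell us how to glue these modules back together.

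The main case is that $G_{M,n}$ is connected and apex-free, so that $\mathcal{W}_{G_{M,n}}$ consists of the sets $S_{\mathcal{C}}$ and singletons, and $G_{M,n}$ is completely determined by the interval graph $L_{G_{M,n}}$ together with the induced subgraphs $G_{M,n}[W]$ on the non-singleton modules $W$: two modules are completely connected exactly when the corresponding vertices of $L_{G_{M,n}}$ are adjacent, and completely disconnected otherwise. As before, the induction hypothesis applied to the component-vertex successors of $s_{W,V_{M,n}}$ recovers $G_{M,n}[W]$ up to isomorphism, so what remains is to locate each non-singleton module $W$ inside the concrete ordered copy $\mathcal{K}(L_{G_{M,n}})$ recorded by the colour of $v_{V_{M,n}}$; this uses the position colour of $s_{W,V_{M,n}}$ (the multiset of positions occupied by the corresponding max clique of $L_{G_{M,n}}$) together with the arrangement vertices above it.

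The step I expect to be the crux is showing that this last identification is well defined, i.e.\ independent of the chosen isomorphism $\mathcal{T}_G \to \mathcal{T}_H$. By Lemma~\ref{lem:doubleOrderingAbstraction} and Lemma~\ref{lem:LG-interval}(2) the max cliques of $L_{G_{M,n}}$ admit exactly two orderings, each the reverse of the other. If $\mathcal{K}(L_{G_{M,n}})$ is not order-isomorphic under these two orderings, its distinguished ordering already pins every non-singleton module to a unique vertex of $\mathcal{K}(L_{G_{M,n}})$ via its position colour, and we are done. The delicate situation is a reversal symmetry of $\mathcal{K}(L_{G_{M,n}})$: then two modules in ``mirror'' positions carry the same position colour, and na\"ively swapping them would alter $G_{M,n}$ without altering $\mathcal{T}_G$. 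This is exactly what the arrangement vertices $a_{\{\prec_Q\},V_{M,n}}$ prevent---they partition the module vertices according to the set of orderings under which the module attains its minimal position, and I would verify that within each of the (at most three) resulting groups the position colours are pairwise distinct, so that every module vertex is assigned unambiguously to a vertex of $\mathcal{K}(L_{G_{M,n}})$. Carrying this case analysis through the three arrangement vertices (one per ordering, plus one for centrally placed modules) is the only genuinely technical part; the rest is routine bookkeeping, and reassembling the recovered modules into $L_{G_{M,n}}$ closes the induction.
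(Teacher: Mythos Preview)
The paper does not actually prove this lemma: it is stated with a \qed and attributed to \cite{koebler10interval} and \cite{laubner11diss}, with the remark that ``a detailed proof \ldots\ can be found in \cite{laubner11diss}.'' So there is no in-paper proof to compare your proposal against.

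That said, your sketch is the natural reconstruction argument and matches the structure the paper sets up for exactly this purpose. You correctly identify the crux --- the reversal symmetry of $\mathcal{K}(L_{G_{M,n}})$ --- and the role of the arrangement vertices in breaking it; the paper explicitly says ``the point of the arrangement vertices $\mathcal{A}$ is to ensure that the order of submodules is properly accounted for,'' which is precisely your point. One small refinement: in the symmetric case where the two arrangement-vertex subtrees are isomorphic (equivalent under $\preceq'$), you need not pin each module to a unique side --- rather, the subtree isomorphism guarantees that the two possible placements yield isomorphic results, so either choice works. With that adjustment, the case analysis you outline goes through.
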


The graphs $L_{G_{M,n}}$ resemble the concept of \emph{overlap components} used in \cite{koebler10interval}
for the definition of a similar kind of modular tree. Overlap components are connected components of the subgraph of $G$
in which only those edges are present for which the neighbourhood of neither endpoint is contained in the neighbourhood of the other
(intuitively, their intervals \textit{overlap}).
It can be checked that overlap components and graphs $L_{G_{M,n}}$ only differ in the way they treat vertices
that are contained in just one max clique: overlap components treat them as further modules (which they trivially are),
the $L_{G_{M,n}}$ graphs directly put them into their unambiguous places.
In \cite{koebler10interval} the authors show Lemma~\ref{lem:modularTreeIsomImpIntGisom} for this similar kind of modular tree.
A detailed proof of Lemma~\ref{lem:modularTreeIsomImpIntGisom} can be found in \cite{laubner11diss}.

\subsection{Total Preorder on Coloured Directed Trees}
\label{sec:preorder-on-coloured-trees}

We can make use of the $\STCC$-definable modular decomposition tree,
and define a \emph{total preorder} on the vertices of  $\mathcal T_G$, that is,
	a linear order on the isomorphism classes of the (coloured) subtrees of $\mathcal T_G$ identified by its root vertices.

For our purposes,
we define a \emph{coloured directed tree} as a tuple $T=(V,E,L)$,
where $(V,E)$ is a directed tree and $L\subseteq V \times N(V)^2$
is a relation that assigns to each vertex $a \in V$
a colour $L_a := \{(m,n)\mid (a,m,n)\in L\}$.
It is easy to bring the coloured modular decomposition tree into this form.
For example, if $a$ is a component vertex, say $v_{V_{M,n}}$,
then $L_a$ consists of all tuples $(m,n)$,
where $(m,n)$ is an edge in the colour of $a$
(i.e., an edge in the canon of $L_{V_{M,n}}$ by which $a$ is coloured
 in $\mathcal T_G$).
Furthermore, if $a$ is a module vertex, say $s_{W_A,V_{M,n}}$,
then $L_a$ consists of all tuples $(m,n)$,
where $m$ occurs $n$ times in the colour of $a$.
In all other cases, we simply leave $L_a$ empty.

We let $\phi_\trianglelefteq(x,y)$ be the formula such that for all coloured directed trees $T$, assignments $\alpha$ and $a,b\in V(T)$:
 \[
(T,\alpha)\models \phi_\trianglelefteq[a,b]\ \iff\
\text{$L_a $ is lexicographically less than or equal to $L_b$.
}
\]
Then $\phi_\trianglelefteq$ defines a total preorder $\trianglelefteq$ on the vertices of any coloured directed tree.

Let  $\phi_{\prec}(x,y)$ and $\phi_{\cong}(x,y)$ be as defined in Section~\ref{sec:tree-order} and Section~\ref{sec:tree-iso}, respectively.
If we identify each subtree of a directed tree with its root vertex,
then the $\LREC_{=}$-formula $\phi_{\preceq}(x,y):= \phi_{\prec}(x,y) \lor \phi_{\cong}(x,y)$
defines a linear order $\preceq$ on the isomorphism classes of the subtrees of a directed tree.

We define a refinement $\preceq'$ of $\preceq$
  by letting $v \prec' w$ whenever $v \vartriangleleft w$,
  or: $v \trianglelefteq w$ and $w \trianglelefteq v$ and $v \prec w$.
  It should be obvious how to modify the formula $\psi_{\preceq}(x,y)$
  to an $\LREC_{=}$-formula $\psi_{\preceq'}$
  defining $\preceq'$.

\subsection{Canonisation}
This section deals with the canonisation of interval graphs, that is,
how to construct an $\LREC_=$-formula $\kappa'(p,q)$ such that for each interval graph $G$
we have $G\cong([|V(G)|],\kappa'[G;p,q])$. As a result we obtain the following:

\begin{theorem}
	$\LREC_=$ captures $\LOGSPACE$ on the class of all interval graphs.
\end{theorem}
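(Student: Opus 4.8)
The plan is to follow the same route as for directed trees (Theorem~\ref{thm:LREC-and-dtrees}): since the data complexity of $\LREC_=$ is in $\LOGSPACE$ by Theorem~\ref{theo:lrec=-complexity}, it suffices to produce an $\LREC_=[\set{E}]$-formula $\kappa'(p,q)$ with $G \cong ([\card{V(G)}],\kappa'[G;p,q])$ for every interval graph $G$. Once $\kappa'$ is available, let $Q$ be any $\LOGSPACE$-decidable, isomorphism-closed class of interval graphs; because $\DTC$ captures $\LOGSPACE$ on ordered structures \cite{imm87}, there is a $\DTC$-sentence $\psi_Q$ that holds in the ordered copy $([\card{V(G)}],\kappa'[G;p,q])$ iff $G \in Q$, and since $\DTC \leq \LREC_=$ and $\LREC_=$ is closed under $\LREC_=$-reductions (Lemma~\ref{lem:transduction-lemma}), the composition ``build the canonical copy via $\kappa'$, then evaluate $\psi_Q$'' is an $\LREC_=$-sentence defining $Q$ on interval graphs. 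So the entire task is to construct $\kappa'$, which I would do in three transduction steps and compose.

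\emph{Step 1: define the coloured modular tree.} By the development of Sections~\ref{subsec:defIntG}--\ref{sec:col-mod-dec-tree}, every ingredient of the coloured modular decomposition tree $\mathcal{T}_G$ of an interval graph $G$ is $\STCC$-definable: the max cliques and the relation identifying pairs of vertices with the same max clique (Lemma~\ref{lem:maxCliquesDefable}), the orders $\prec_M$ and the possible ends, the modules $\mathcal{W}_G$ and the graph $L_G$ together with its ordered number-sort copy $\mathcal{K}(L_G)$ (Lemma~\ref{lem:LG-interval}), the set $P$ and the sets $V_{M,n}$ (Lemma~\ref{lem:ModDecompSTCC}), and finally the vertex set, edge relation and colouring of $\mathcal{T}_G$ as described in Section~\ref{sec:col-mod-dec-tree}. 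Viewing $\mathcal{T}_G$ as a coloured directed tree $(V_{\mathcal T},E_{\mathcal T},L)$ in the sense of Section~\ref{sec:preorder-on-coloured-trees} --- where $L \subseteq V_{\mathcal T} \times N^2$ stores the canon $\mathcal{K}(L_{G_{M,n}})$ at each component vertex and the position multisets at the module vertices --- and using $\STCC \leq \LREC_=$, this gives an $\LREC_=[\set{E},\sigma_{\mathcal T}]$-transduction $\Theta_1$ of constant arity with $\Theta_1[G] \cong \mathcal{T}_G$, where $\sigma_{\mathcal T}$ is the vocabulary of coloured directed trees; note $\card{V_{\mathcal T}}$ is polynomial in $\card{V(G)}$.

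\emph{Step 2 (canonise the tree) and Step 3 (decode).} Section~\ref{sec:tree-canon}, extended to colours exactly as in the remark following Lemma~\ref{lem:tree-canon} and using the linear preorder $\phi_\trianglelefteq$ on colours and the refined order $\preceq'$ from Section~\ref{sec:preorder-on-coloured-trees}, yields an $\LREC_=[\sigma_{\mathcal T}]$-transduction $\Theta_2$ producing an ordered copy $\mathcal{K}(\mathcal{T}_G) := \Theta_2[\mathcal{T}_G]$ of $\mathcal{T}_G$ on the number sort that carries over the colour relation. For Step 3 one reconstructs $G$ from $\mathcal{K}(\mathcal{T}_G)$: number the leaves of the tree --- which are exactly the vertices of $G$ --- by a canonical depth-first traversal respecting $\preceq'$, and declare two leaves adjacent iff, at their least common ancestor that is a component vertex $v_{V_{M,n}}$, the two submodules of $V_{M,n}$ on the respective paths are joined by an edge of the ordered colour graph $\mathcal{K}(L_{G_{M,n}})$ sitting at that vertex. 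Since $\mathcal{K}(\mathcal{T}_G)$ is ordered, this decoding is a $\LOGSPACE$-computable map between ordered structures, hence $\DTC$-definable and a fortiori $\LREC_=$-definable, so it is realised by an $\LREC_=$-transduction $\Theta_3$ with $\Theta_3[\mathcal{K}(\mathcal{T}_G)] \cong ([\card{V(G)}],E(G))$. Composing $\Theta_3 \circ \Theta_2 \circ \Theta_1$ via Lemma~\ref{lem:transduction-lemma} produces the desired $\kappa'$, the re-indexing of the universe onto $[\card{V(G)}]$ being part of this composition.

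\emph{Main obstacle.} Keeping every step inside $\LREC_=$ is not the hard part: it reduces to $\STCC \leq \LREC_=$, $\DTC \leq \LREC_=$, the coloured-tree canonisation of Section~\ref{sec:tree-canon}, and closure under reductions. The substantive content is the graph-theoretic work behind Steps~1 and~3: (i) that the coloured modular decomposition tree is a \emph{complete} invariant of interval graphs and can be effectively decoded back to $G$ up to isomorphism --- this is Lemma~\ref{lem:modularTreeIsomImpIntGisom} together with the recursive structure of the modular decomposition and the role of the arrangement vertices in recording the order of submodules, carried out in detail in \cite{koebler10interval,laubner11diss}; and (ii) the number-sort arithmetic, since $V_{\mathcal T}$ and the colours of $\mathcal{T}_G$ range over numbers polynomially larger than $\card{V(G)}$, so $\Theta_1,\Theta_2,\Theta_3$ must be phrased with tuples of number variables and the final structure re-indexed onto $[\card{V(G)}]$ --- tedious but routine, and precisely the kind of tuple manipulation that Lemma~\ref{lem:transduction-lemma} is built to absorb.
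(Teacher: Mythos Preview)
Your high-level strategy --- build $\mathcal{T}_G$, canonise it as a coloured directed tree via Section~\ref{sec:tree-canon}, then decode $G$ from the ordered tree --- is a viable alternative to what the paper actually does. The paper does \emph{not} first canonise $\mathcal{T}_G$; instead it runs a single $\lrec$-recursion directly on $\mathcal{T}_G$ that computes, bottom-up, at each component vertex $v_{V_{M,n}}$ the edge relation of a canonical copy of $G_{M,n}$ on the number sort. The edges of the $\lrec$-graph are engineered separately at module, arrangement and component vertices so that submodule canons are inserted at the correct positions of $\mathcal{K}(L_{G_{M,n}})$, and the resource analysis shows a binary resource term suffices. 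Your route is more modular (it reuses tree canonisation wholesale and pushes the graph-theoretic work into a $\DTC$ decoding on an ordered structure); the paper's is more direct and avoids an intermediate ordered copy of the tree.

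That said, your Step~3 contains a concrete error: the leaves of $\mathcal{T}_G$ are \emph{not} the vertices of $G$. Module vertices $s_{W_A,V_{M,n}}$ are created only for modules $W_A$ with more than one element, so a vertex $v \in V_{M,n}$ that forms a singleton module of $G_{M,n}$ never becomes a leaf --- it is encoded solely inside the colour $\mathcal{K}(L_{G_{M,n}})$ at the component vertex $v_{V_{M,n}}$. (In Figure~\ref{fig:modularTree}, for example, $h$ is not a leaf of $\mathcal{T}_G$; it lives in the colour of $v_{\{g,h,j\}}$, and likewise $a,b,c,d,e$ live in the colour of $v_V$.) Consequently your leaf enumeration and your LCA adjacency rule miss all such vertices. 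The repair is not deep --- on the ordered tree one enumerates, at each component vertex, the singleton-module positions in its colour graph together with recursive blocks for the non-singleton submodules, and the adjacency test looks up the edge in $\mathcal{K}(L_{G_{M,n}})$ between the two modules (singleton or not) at the relevant component vertex --- but as written the step is wrong. You should also make explicit that the coloured-tree canonisation carries the full colour relation $L$ (not merely the preorder $\trianglelefteq$ of the remark after Lemma~\ref{lem:tree-canon}) to the number-sort copy, since decoding requires the actual graphs $\mathcal{K}(L_{G_{M,n}})$, not just their relative order.
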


We use the modular decomposition tree and the total preorder on its vertices for canonisation.
We apply l-recursion on the modular decomposition tree,
and as we have done for canonising trees we build the canon from the leaves to the root of the tree.
Recursively, we construct the canon by first building the disjoint union of the canons of the components of submodules,
then use the arrangement vertices to insert all submodules at the
correct side and build the canon of the corresponding component of a module.

In the following we explain the canonisation procedure in more detail.
The following lemma shows that it suffices to give
an $\LREC_=$-formula $\kappa(p,q)$ such that for every interval graph $G$
we have $G\cong([|V(G)|],\kappa[\mathcal T_G;p,q])$.
It follows from Lemma~\ref{lem:transduction-lemma}
and the fact that the coloured modular decomposition tree of an interval graph
is $\STCC$-definable.

\begin{lemma}\label{lem:modtree-total-order}
  If there exists an $\LREC_{=}$-formula $\kappa(p,q)$
  such that for all interval graphs $G$
  we have $G\cong([|V(G)|],\kappa[\mathcal T_G;p,q])$
  and $\kappa[\mathcal T_G;p,q]\subseteq [|V(G)|]^2$,
  then there also exists an $\LREC_{=}$-formula $\kappa'(p',q')$
  such that for all interval graphs $G$, $G\cong([|V(G)|],\kappa'[G;p',q'])$.
\end{lemma}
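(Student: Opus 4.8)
The plan is to reduce the canonisation of an interval graph $G$ to the canonisation of its coloured modular decomposition tree $\mathcal{T}_G$, using two ingredients established earlier: first, $\mathcal{T}_G$ is $\STCC$-definable from $G$ (as noted at the end of Section~\ref{sec:col-mod-dec-tree}, via Lemma~\ref{lem:LG-interval}); second, $\LREC_=$ is closed under $\LREC_=$-reductions (Lemma~\ref{lem:transduction-lemma}). Concretely, the $\STCC$-definability of $\mathcal{T}_G$ provides an $\ell$-ary $\STCC[\set{E},\tau_{\mathcal{T}}]$-transduction $\Theta = (\theta_V,\theta_\approx,\theta_E,\theta_L)$, where $\tau_{\mathcal{T}}$ is the vocabulary of coloured directed trees and $\theta_L$ defines the colour relation, such that $\Theta[G] \cong \mathcal{T}_G$ for every interval graph $G$; since $\STCC \leq \LREC_=$, we may regard $\Theta$ as an $\LREC_=[\set{E},\tau_{\mathcal{T}}]$-transduction. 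As the colour relation relates universe elements with numbers, this uses the routine extension of Definition~\ref{def:interpr} and Lemma~\ref{lem:transduction-lemma} to two-sorted target vocabularies; this is harmless, since the colours of $\mathcal{T}_G$ only ever mention numbers at most $\card{V(G)}$.

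First I would apply Lemma~\ref{lem:transduction-lemma} to the given formula $\kappa(p,q)$ --- which has only the two free number variables $p,q$ and no free structure variables --- together with $\Theta$. This yields an $\LREC_=[\set{E}]$-formula $\kappa^{-\Theta}(\tup{q}_1,\tup{q}_2)$, where $\tup{q}_1,\tup{q}_2$ are $\ell$-tuples of number variables, such that for every interval graph $G$ and all $\tup{n}_1,\tup{n}_2 \in N(G)^\ell$,
\[
  G \models \kappa^{-\Theta}[\tup{n}_1,\tup{n}_2]
  \iff
  \num[G]{\tup{n}_1},\ \num[G]{\tup{n}_2} \in N(\mathcal{T}_G)
  \ \text{and}\
  \mathcal{T}_G \models \kappa[\num[G]{\tup{n}_1},\num[G]{\tup{n}_2}],
\]
where we have used $\Theta[G] \cong \mathcal{T}_G$ and that $\kappa$ is isomorphism-invariant. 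It remains to collapse the tuples $\tup{q}_1,\tup{q}_2$ back to single number variables of $G$. This is possible exactly because, by hypothesis, $\kappa[\mathcal{T}_G;p,q] \subseteq [\card{V(G)}]^2$: every number occurring in the canon lies in $[0,\card{V(G)}] = N(G)$, hence is coded by the unique $\ell$-tuple of the form $(p',0,\dotsc,0)$. Accordingly I would set
\[
  \kappa'(p',q') \isdef
  \exists \tup{q}_1 \exists \tup{q}_2 \bigl(
    \chi(\tup{q}_1,p') \land \chi(\tup{q}_2,q') \land \kappa^{-\Theta}(\tup{q}_1,\tup{q}_2)
  \bigr),
\]
where, for an $\ell$-tuple $\tup{q} = (q_1,\dotsc,q_\ell)$ of number variables, $\chi(\tup{q},p') \isdef (q_1 = p') \land \bigland_{2 \leq j \leq \ell}(q_j = 0)$ is the $\FOC$-formula asserting $\num{\tup{q}} = p'$.

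Finally I would verify correctness. Fix an interval graph $G$. By the two displays, a pair $(m,n) \in N(G)^2$ lies in $\kappa'[G;p',q']$ iff $m,n \in N(\mathcal{T}_G)$ and $\mathcal{T}_G \models \kappa[m,n]$; since $\kappa[\mathcal{T}_G;p,q] \subseteq N(\mathcal{T}_G)^2$ holds automatically, this reduces to $(m,n) \in \kappa[\mathcal{T}_G;p,q]$ and $(m,n) \in N(G)^2$, so $\kappa'[G;p',q'] = \kappa[\mathcal{T}_G;p,q] \cap N(G)^2$. By the hypothesis $\kappa[\mathcal{T}_G;p,q] \subseteq [\card{V(G)}]^2 \subseteq N(G)^2$, this equals $\kappa[\mathcal{T}_G;p,q]$, so $([\card{V(G)}],\kappa'[G;p',q']) = ([\card{V(G)}],\kappa[\mathcal{T}_G;p,q]) \cong G$ by assumption, as desired.

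The main obstacle is not conceptual but lies in the bookkeeping between the two number sorts: Lemma~\ref{lem:transduction-lemma} naturally produces a formula whose numerical arguments are $\ell$-tuples interpreted over the (in general larger) number sort of $\mathcal{T}_G$, and one must check that restricting attention to the number sort of $G$ loses nothing --- which is precisely what the hypothesis $\kappa[\mathcal{T}_G;p,q] \subseteq [\card{V(G)}]^2$ guarantees. A secondary, purely routine, point is to spell out the transduction $\Theta$ defining $\mathcal{T}_G$ (including its colour relation) from the $\STCC$-definability results of Sections~\ref{sec:L_G-can}--\ref{sec:col-mod-dec-tree}, and to confirm that the two-sorted version of the transduction lemma applies.
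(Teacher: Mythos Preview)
Your proposal is correct and follows essentially the same route as the paper: both define the coloured modular decomposition tree via an $\LREC_=$-transduction $\Theta$ (using $\STCC \leq \LREC_=$), apply Lemma~\ref{lem:transduction-lemma} to pull $\kappa$ back to a formula $\kappa^{-\Theta}$ over $G$, and then use the hypothesis $\kappa[\mathcal T_G;p,q]\subseteq[\card{V(G)}]^2$ to collapse the resulting $\ell$-tuples of number variables to single number variables. Your write-up is in fact somewhat more explicit about the final collapsing step and the two-sorted bookkeeping than the paper's own proof.
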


\begin{proof}
  As pointed out at the end of Section~\ref{sec:col-mod-dec-tree},
  the coloured modular decomposition tree of an interval graph $G$
  is definable in $\STCC$, and thus in $\LREC_=$.
  That is, there are $\LREC_=$-formulae $\theta_{V}(\tup{u})$,
  $\theta_{\approx}(\tup{u},\tup{v})$, $\theta_{E}(\tup{u},\tup{v})$
  and $\theta_{L}(\tup{u},\tup{q})$,
  where $\tup{u},\tup{v}$ are compatible tuples and
  $\tup{q}$ is a tuple of number variables,
  such that for all interval graphs $G$ and all assignments $\alpha$,
  \begin{itemize}[leftmargin=*]
  \item
    $\theta_{\approx}[G,\alpha;\tup{u},\tup{v}]$ is an equivalence
    relation $\approx$,
  \item
    $\theta_{V}[G,\alpha;\tup{u}]/_{\approx}$ is the set of vertices
    of $\mathcal T_G$,
  \item
    $\theta_{E}[G,\alpha;\tup{u},\tup{v}]/_{\approx} :=
     \set{(\tup{a}/_\approx,\tup{b}/_\approx) \mid
       (\tup{a},\tup{b}) \in \theta_{E}[G,\alpha;\tup{u},\tup{v}]}$
    is the edge relation of $\mathcal T_G$,
  \item
    and $\theta_{L}[G,\alpha;\tup{u},\tup{q}]/_{\approx}$ is the
    colour-relation of the modular decomposition tree $\mathcal T_G$.
  \end{itemize}
  We now apply Lemma~\ref{lem:transduction-lemma} with the transduction
  $
    \Theta =
    (\theta_{V}(\tup{u}), \theta_{\approx}(\tup{u},\tup{v}),
     \theta_{E}(\tup{u},\tup{v}),\theta_{L}(\tup{u},\tup{q}))
  $
  to obtain an $\LREC_=$-formula $\kappa^{-\Theta}(\tup{p}',\tup{q}')$
  such that for all $\tup{m},\tup{n} \in N(G)^{\len{\tup{u}}}$,
  $G \models \kappa^{-\Theta}[\tup{m},\tup{n}]$ iff
  $\num[G]{\tup{m}},\num[G]{\tup{n}} \in N(\Theta[G])$ and
  $\Theta[G] \models \kappa[\num[G]{\tup{m}},\num[G]{\tup{n}}]$.
  Note that $\Theta[G] = \mathcal T_G$.
  As $\kappa[\mathcal T_G;p,q]\subseteq [|V(G)|]^2$,
  the condition $\num[G]{\tup{m}},\num[G]{\tup{n}} \in N(\Theta[G])$
  can be replaced by $\num[G]{\tup{m}},\num[G]{\tup{n}} \in N(G)$.
  Hence, the tuples $\tup{p}',\tup{q}'$ of number variables
  in $\kappa^{-\Theta}$ can be identified with single number variables $p',q'$,
  which yields the desired formula $\kappa'(p',q')$.
\end{proof}

In general, the canonisation procedure is similar to the one of directed trees.
To apply l-recursion we use a graph $\graphG=(\graphV,\graphE)$ with
labels $\graphC(v)\subseteq \nat$ for all $v\in \graphV$.
We let $\graphV:=V(\mathcal T_G)\times N(\mathcal T_G)^2$ be the vertices of $\graphG$
and
for all component vertices $v_{V_{M,n}}\in\mathcal V$,
$(v_{V_{M,n}},p,q)\in V$ stands for  ``$(p,q)\in X_{v_{V_{M,n}}}\text{?}$'', where $X_{v_{V_{M,n}}}$
is the edge relation of an isomorphic copy $([|V_{M,n}|],X_{v_{V_{M,n}}})$ of $G_{M,n}$.

In the following we explain the edge relation $\graphE$ and labels $\graphC$ of graph $\graphG$.

\bigskip\noindent\textit{Edges introduced by module vertices.}\\
In $\mathcal T_G$, each vertex  $s_{W_A,V_{M,n}}\in \mathcal S$ is connected to those $v_{V_{M'\!,n'}}\in \mathcal V$ for which $V_{M'\!,n'}$
	is a connected component of the module $W_A$. Thus,
	we can use the available total preorder $\prec'$ on the children of
        $s_{W_A,V_{M,n}}$ (cf.\ Section~\ref{sec:preorder-on-coloured-trees})
	to construct the canon of the disjoint union of the children's canons from the canons of the children.
	For a vertex $s\in \mathcal S$ and a child $v:=v_{V_{M,n}}\in \mathcal V$ of $s$, let $D_{v}$ be the set of all children $v'$
	of $s$ with $v'\prec' v$, and $e_{v}$ be the number of children $v'$ of $s$
        defining modules isomorphic to $V_{M,n}$ (i.e., $v' \preceq' v$ and $v
        \preceq' v'$).
	For all $p,q\in N(\mathcal T_G)^2$ and all $i\in[0,e_{v}-1]$, we let $\tup{a}:=(s,p_{v,i}+p,p_{v,i}+q)$ have an edge to $(v,p,q)$
	where $p_{v,i}:=\sum_{v_{V_{M'\!,n'}}\in D_{v}}|V_{M'\!,n'}|+i\cdot|V_{M,n}|$ and define $\graphC(\tup{a})=\{e_{v}\}$.
	Notice that here we can have an in-degree greater than 1.

\bigskip\noindent\textit{Edges introduced by arrangement vertices.}\\
Let us consider a vertex $a_{\{\prec_Q\},V_{M,n}} \in\mathcal A$.
	Its children in $\mathcal T_G$ are vertices $s_{W_A,V_{M,n}}$ for specific submodules of the module $V_{M,n}$,
	and we need to integrate the canons of them into the canon $\mathcal K(L_{V_{M,n}})$ of $L_{V_{M,n}}$.
	The canon $\mathcal K(L_{V_{M,n}})$ is $\STCC$-definable (Lemma~\ref{lem:LG-interval}) and we assume it to be
	assigned to the first part $[1,|V(L_{V_{M,n}})|]$ of the number sort.
	Notice that on the number sort we have a distinguished ordering $<_N$ of the max cliques.

	If $a_{\{\prec_Q\},V_{M,n}} \in\mathcal A$ has no sibling, then
	we have a distinguished order of the max cliques of $L_{V_{M,n}}$,
	and we can integrate each canon of a submodule into $\mathcal K(L_{V_{M,n}})$ according to the colour of its vertex $s$.
	By integrating a submodule,
	we mean the following:
	We first sum up the size of $\mathcal K(L_{V_{M,n}})$ and the sizes of all submodules defined by children of  $a_{\{\prec_Q\},V_{M,n}}$
	with smaller colours,
	and increase each vertex of the canon of the submodule by this number.
	Further, in the canon  $\mathcal K(L_{V_{M,n}})$ we want to replace the smallest vertex $z$ that
	lies in the max clique that is at the position defined by the colour of $s$ and in no other max clique
	by the modified canon of the submodule.
	In order to do that, we add an edge between all vertices that are adjacent to $z$ and all vertices of the modified canon of the submodule.
	For $a_{\{\prec_Q\},V_{M,n}}$,
        we define the out-going edges of $a_{\{\prec_Q\},V_{M,n}}$ in $\graphG$ such that,
        if $X$ denotes the relation defined by the final $\LREC_=$-formula,
        the graph with edge relation $\{(p,q)\in N(T_G)^2\mid
        ((a_{\{\prec_Q\},V_{M,n}},p,q),\ell)\in X\ \text{for large enough $\ell$}\}$ consists of
	the modified canons of the submodules and all new edges.
	Note that we have not yet removed the replaced vertices.

	If $a_{\{\prec_Q\},V_{M,n}}$ has siblings, a single child, and the colour of the single child contains two equal positions,
	we know we have to insert the canon of its child in the middle
	(regarding the ordering of the max cliques) of  $\mathcal K(L_{V_{M,n}})$.
	For such a  vertex $a_{\{\prec_Q\},V_{M,n}}$  we construct the edges of $\graphG$ so that we obtain
	the following graph on the number sort:
	We add the size of $\mathcal K(L_{V_{M,n}})$ to each vertex of the canon of the submodule,
	and add all edges that would be generated if we inserted the modified canon
	into the canon  $\mathcal K(L_{V_{M,n}})$ replacing the smallest vertex in the middle max clique.

	Now, let $a_{\{\prec_{Q_1}\},V_{M,n}}$ and $a_{\{\prec_{Q_2}\},V_{M,n}}$
	be siblings where the colour of at least one child contains different positions.
	We determine their order with respect to the total preordering.
	Say, $a_{\{\prec_{Q_1}\},V_{M,n}}\prec' a_{\{\prec_{Q_2}\},V_{M,n}}$.
	Then we want to integrate the submodules of $a_{\{\prec_{Q_1}\},V_{M,n}}$ all
	into the first half (regarding $<_N$) of the max cliques of canon $\mathcal K(L_{V_{M,n}})$, and
	the submodules of $a_{\{\prec_{Q_2}\},V_{M,n}}$ into the second half.
	Therefore, we create the edges of $\graphG$ in such a way that the graph on the number sort at vertex $a_{\{\prec_{Q_1}\},V_{M,n}}$
	is as follows:
	Each child of vertex $a_{\{\prec_{Q_1}\},V_{M,n}}$ represents a certain submodule of $V_{M,n}$.
	We
	sum up the size of $\mathcal K(L_{V_{M,n}})$, the size of the submodule in the middle if it exists, and
	the sizes of all submodules defined by children of  $a_{\{\prec_{Q_1}\},V_{M,n}}$
	with smaller colours, and add this value to
	each vertex of the canon of this certain submodule.
	Finally, we insert each of these modified canons into the max clique specified
	by the smaller value contained in the colour of
	the corresponding vertex, in the same way we did before,
	and add all newly created edges to the modified canons of the submodules.
	For  vertex $a_{\{\prec_{Q_2}\},V_{M,n}}$ we construct the graph on the number sort equivalently,
	only that we additionally add the sum of the sizes of all submodules defined by children of  $a_{\{\prec_{Q_1}\},V_{M,n}}$
	to the vertices of the canons of the submodules.

	If $a_{\{\prec_{Q_1}\},V_{M,n}}$ and $a_{\{\prec_{Q_2}\},V_{M,n}}$ are equivalent with respect to the total preorder, we
	insert the submodules of $a_{\{\prec_{Q_i}\},V_{M,n}}$  for $i=1,2$ at both sides.
	We position the submodules according to both values that are contained in their colours.
	Thus, if there is no submodule that belongs in the middle at vertex $a_{\{\prec_{Q_i}\},V_{M,n}}$, for $i=1,2$,
	the edge relation of $\graphG$ almost enables us to define
	the canon of module $V_{M,n}$,
	except that we still need to remove the vertices that were replaced.

	For each  $a\in \mathcal A$ that fits in the last case, we let $\graphC(a,p,q)=\set{2}$,
	otherwise $\graphC(a,p,q)=\set{1}$, for all  $p,q\in N(T_G)$.
	Note, that only in the last case, we obtain in-degrees larger than 1, that is, there the in-degree is 2.

\bigskip\noindent\textit{Edges introduced by component vertices.}\\
  Let $v=v_{V_{M,n}} \in \mathcal V$.
  In the preceding step, we introduced edges for arrangement vertices
  $a_{\set{\prec_Q},V_{M,n}}$ so that,
  if $X$ denotes the relation defined by the final $\LREC_=$-formula
  in an interval graph whose coloured modular decomposition tree is $\mathcal T_G$,
  the graph with edge relation
  $\{(p,q)\in N(T_G)^2\mid ((a_{\{\prec_Q\},V_{M,n}},p,q),\ell)\in X\ \text{for
    large enough $\ell$}\}$
  is almost a canon of $V_{M,n}$;
  we still need to insert $\mathcal K(L_{V_{M,n}})$,
  and remove the vertices of $\mathcal K(L_{V_{M,n}})$
  that correspond to the submodules of $V_{M,n}$.

  Recall from Lemma~\ref{lem:LG-interval} that
  the canon $\mathcal K(L_{V_{M,n}})$ is $\STCC$-definable.
  The set of vertices of $\mathcal K(L_{V_{M,n}})$ is $[1,|V(L_{V_{M,n}})|]$.
  Let $R$ be the set of vertices that have to be removed
  from $\mathcal{K}(V_{M,n})$,
  so that the resulting graph plus the edges from $\{(p,q)\in N(T_G)^2\mid ((a_{\{\prec_Q\},V_{M,n}},p,q),\ell)\in X\ \text{for large enough $\ell$}\}$
  is isomorphic to $V_{M,n}$.
  It is easy to define $R$ by considering the different cases as we did above.

  Let $f(r):=r-d_r$, where $d_r=|\{s\in R\mid s<r\}|$.
  Then, the contracted canon $\mathcal Q:=\{(f(p),f(q))\mid (p,q)\in\mathcal K(L_{V_{M,n}})\}$
	is $\STCC$-definable,
        and we assign it to the first part $[1,|V(L_{V_{M,n}})|-|R|]$ of the number sort.
	Thus, we set $\graphC(v,f(p),f(q))=N(\mathcal T_G)$ for all $(p,q)\in\mathcal K(L_{V_{M,n}})$.
	Furthermore, for each child $a\in \mathcal A$ of $v$, we include all edges from $(v,f(p),f(q))$ to $(a,p,q)$ for all $p,q\in N(T_G)\setminus R$.
	Finally, for all $(p,q)\not \in\mathcal K(L_{V_{M,n}})$ we set $\graphC(v,f(p),f(q))=\set{1}$.

\bigskip\noindent\textit{Finishing the construction.}\\
In order to actually  perform l-recursion we need sufficient ``resources''.
Taking a look at the in-degrees, we notice that they are only larger than one
when we treat isomorphic connected components while building the disjoint union, or when the graph $V_{M,n}$ is symmetric and we insert the submodules twice at both sides.
Either way, an incoming degree of $d$ means that we insert $d$ disjoint isomorphic copies into the graph on the number sort.
Hence, it suffices to use a binary resource term.

\medskip

\begin{remark}
\label{rem:interval-canon-and-LREC}
  It is possible to show that there is no $\LREC{+}\TC[\set{E}]$-sentence $\phi$
  such that for all connected interval graphs $G_1,G_2$
  we have $G_1 \disjunion G_2 \models \phi$
  if and only if $G_1 \isomorphic G_2$.
  The proof is based on similar ideas as the proof of Theorem~\ref{thm:reach-nondef}.
\end{remark}

\section{Conclusion}
\label{sec:concl}

We introduce the new logics $\LREC$ and $\LREC_=$, extending first-order logic
with counting by a recursion operator that can be evaluated in logarithmic
space. By capturing \LOGSPACE\ on trees and interval graphs, we obtain the
first nontrivial descriptive characterisations of \LOGSPACE\ on natural classes of unordered
structures. It would be interesting to extend our results to further classes
of structures such as the class of planar graphs or classes of graphs of
bounded tree width.

The expressive power of $\LREC_=$ is not yet well-understood. For example, it
is an open question whether directed graph reachability is expressible in
$\LREC_=$, and even whether $\LREC_=$ has the same expressive power as
$\FPC$. (Of course assumptions from complexity theory indicate that the answer
to both questions is negative.) It is also an open question whether
reachability on undirected trees is expressible in plain \LREC.

It is obvious that our capturing results can be transferred to
nondeterministic logarithmic space \NL\ by adding a transitive closure
operator to the logic. However, it would be much nicer to have a natural
``nondeterministic'' variant of our limited recursion operator that allows it
to express directed graph reachability and thus yields a logic that contains
\TC. We leave it as an open problem to find such an operator.

\bibliographystyle{plain}
\bibliography{bibliography}

\end{document}